\newcommandx{\unsure}[2][1=]{\todo[linecolor=red,backgroundcolor=red!25,bordercolor=red,#1]{#2}}
\newcommandx{\change}[2][1=]{\todo[linecolor=blue,backgroundcolor=blue!25,bordercolor=blue,#1]{#2}}
\newcommandx{\info}[2][1=]{\todo[linecolor=OliveGreen,backgroundcolor=OliveGreen!25,bordercolor=OliveGreen,#1]{#2}}
\newcommandx{\improvement}[2][1=]{\todo[linecolor=Plum,backgroundcolor=Plum!25,bordercolor=Plum,#1]{#2}}
\newtheorem{theorem}{Theorem}
\newtheorem{lemma}{Lemma}
\newtheorem{assumption}{Assumption}
\theoremstyle{definition}
\newtheorem{definition}{Definition}
\theoremstyle{remark}
\newtheorem*{rep@theorem}{\rep@title}
\newcommand{\newreptheorem}[2]{%
\newenvironment{rep#1}[1]{%
 \def\rep@title{#2 \ref{##1}}%
 \begin{rep@theorem}}%
 {\end{rep@theorem}}}
\DeclareMathOperator{\diag}{diag}
\DeclareMathOperator{\supp}{support}
\DeclareMathOperator*{\argmin}{arg\,min}
\newcommand{\One}[1]{{\mathbbm{1}}\left\{{#1}\right\}}
\newcommand{\norm}[1]{\lVert{#1}\rVert}
\newcommand{\Norm}[1]{\left\lVert{#1}\right\rVert}
\newcommand{\PP}[1]{\mathbb{P}\left\{{#1}\right\}} 
\newcommand{\EE}[1]{\mathbb{E}\left[{#1}\right]} 
\newcommand{\EEn}[1]{\mathbb{S}_n\left[{#1}\right]} 
\newcommand{\EEst}[2]{\mathbb{E}\left[{#1}\ \middle| \ {#2}\right]} 
\newcommand{\PPst}[2]{\mathbb{P}\left\{{#1}\ \middle| \ {#2}\right\}} 
\def\R{\mathbb{R}}
\newcommand{\ignore}[1]{}
\newcommand{\rbr}[1]{\left(#1\right)}
\newcommand{\sbr}[1]{\left[#1\right]}
\newcommand{\cbr}[1]{\left\{#1\right\}}
\newcommand{\abr}[1]{\left|#1\right|}
\DeclareMathOperator*{\ind}{1{\hskip -2.5 pt}\hbox{$\rm{I}$}}  
\newcommand{\Gn}[1]{{\rbr{\mathbb{S}_n - \mathbb{ES}_n}\left[{#1}\right]}} 
\newcommand{\bE}[1]{\mathbb{ES}_n\left[{#1}\right]} 
\newcommand{\theauthor}{}
\newcommand{\thetitle}{}
\date{}
\author{\theauthor}
\title{\thetitle}
\newcommand{\as}{{a^\star}}
\newcommand{\bs}{{b^\star}}
\newcommand{\cs}{{c^\star}}
\newcommand{\ah}{\widehat{a}}
\newcommand{\at}{\widetilde{a}}
\newcommand{\bh}{\widehat{b}}
\newcommand{\bc}{\check{b}}
\newcommand{\ac}{\check{a}}
\newcommand{\bt}{\widetilde{b}}
\newcommand{\ch}{\widehat{c}}
\newcommand{\ct}{\widetilde{c}}
\newcommand{\dv}{\delta_v}
\newcommand{\db}{\delta_b}
\newcommand{\Loss}{\mathcal{L}}
\newcommand{\cK}{{\cal K}}
\newcommand{\cE}{{\cal E}}
\newcommand{\cF}{{\cal F}}
\newcommand{\cG}{{\cal G}}
\newcommand{\cU}{{\cal U}}
\newcommand{\DTay}{\Delta_{\textnormal{Taylor}}}
\newcommand{\qt}{\tilde{q}}
\newcommand{\fq}{{q_i}}
\newcommand{\fql}{{\tilde {q}_i}}
\newcommand{\fhq}{{\hat {q}_i}}
\newcommand{\ve}{\varepsilon}
\newcommand{\ES}{\mathbf{E}S}
\newcommand{\Hh}{\widehat{H}}
\newcommand{\Vs}{V^\star}
\newcommand{\Vh}{\widehat{V}}
\newcommand{\tr}{\textnormal{trace}}
\title{Post-selection inference on high-dimensional varying-coefficient quantile regression model}
\author{Ran Dai\thanks{Department of Biostatistics, University of Nebraska Medical Center, ran.dai@unmc.edu},  Mladen Kolar\thanks{Booth School of Business, University of Chicago, mkolar@chicagobooth.edu}}
\begin{document}

\maketitle

\begin{abstract}
Quantile regression has been successfully used to study heterogeneous
and heavy-tailed data. Varying-coefficient models 
are frequently used to capture changes in the effect of input 
variables on the response  as a function of an
index or time. In this work, we study high-dimensional
varying-coefficient quantile regression models and
develop new tools for statistical inference. We focus 
on development of valid confidence intervals and honest
tests for nonparametric coefficients at a fixed time point and quantile,
while allowing for a high-dimensional setting
where the number of input variables exceeds
the sample size. Performing statistical inference in this regime is
challenging due to the usage of model selection techniques in
estimation. Nevertheless, we can develop valid inferential
tools that are applicable to a wide range of data generating processes
and do not suffer from biases introduced by model selection. 
We performed numerical simulations to demonstrate the
finite sample performance of our method, and we also illustrated the
application with a real data example.

\end{abstract}

\section{Introduction}

Most statistical work on regression problems has centered on the
problem of modeling the mean of a response variable $Y\in\R$ as a
function of a feature vector $X\in\R^p$. Under some assumptions, for
instance assuming homoscedastic Gaussian noise, modeling the mean is
sufficient to capture the entire distribution of $Y$ conditioned on
the observed features $X=x$. In many applications, however, where
these types of assumptions may not be appropriate, it is often far
more meaningful to model the median (or some other specified quantile)
of $Y$ given the observed feature vector $X$. In particular, in
applications where we are interested in extreme events---for instance,
modeling changes in stock prices, or modeling birth weight of
infants---modeling, e.g.,~the 90\% quantile may be far more informative
than modeling the mean. In other settings, the mean is overly
sensitive to outliers, while the median or some other quantile does
not have this disadvantage. Fixing $\tau$ to be the desired quantile
(e.g.,~$\tau=0.5$ for the median), we write $q(x;\tau)$ to be the
$\tau$th quantile for the variable $Y$ conditional on observing $X=x$,
that is, $q(x;\tau)$ is the function that satisfies
\[q(x;\tau) = \inf_{q\in \R} \{\PPst{Y\leq q}{X=x}\geq \tau\}.\]

In this paper, we are interested in a high-dimensional setting, where
the vector $X$ includes an extremely large number of measured
features---perhaps larger than the sample size itself. A linear model,
$q(x;\tau) = x^\top \beta(\tau)$, may be considered to be a reasonable
approximation in many settings, but if the measurements are gathered
across different points in time, the effect of the features on the
response $Y$ may not be stationary. To achieve broader applicability
of our model, we are furthermore interested in models with
time-varying coefficients for the $\tau$th quantile for the 
variable $Y$ conditional on observing $X=x$ at index $U=u$,
\[q(x;\tau,u) = \inf_{q\in \R} \{\PPst{Y\leq q}{X=x,U=u}\geq \tau\},\] 
where $x\in\R^p$ is the
feature vector as before, $\tau\in(0,1)$ is the desired quantile, and
$u\in\cU$ represents the time of the measurement or any other index
variable that captures non-stationary effects of the features---for
example, $u$ may be used to encode spatial location. 
We assume that $q(x;\tau,u)$ approximately follows a 
linear model $x^\top \beta(\tau,u)$.

Fixing a quantile $\tau$ and a time point (or index value) $u$, we are
interested in performing inference on a
low-dimensional subset of coefficients of interest,
$\beta_{\mathcal{A}}(\tau,u)$ for some fixed subset 
$\mathcal{A} \subset \{1,\dots,p\}$.
Specifically, we want to construct
confidence intervals for these
parameters or test null hypotheses such as
$H_0:\beta_j(\tau,u)=0, \forall j \in \mathcal{A}$. 
In practice, we may have in mind some
particular features of interest, and the other features are
confounding variables that we need to control for; or, we may be
interested in testing each of the $p$ features individually, cycling
through them in turn and treating the others as confounders.

\paragraph{Prior work}
Our work is related to the literature on high-dimensional inference, 
varying-coefficient models, and quantile regression.
Statistical inference for parameters in high-dimensional models
has received a lot of attention recently. For
example, in 
the $\ell_1$-regularized linear regression
model (LASSO) \citet{tibshirani96regression}
one can quantify the uncertainty about the unknown parameters
by debiasing 
the estimator 
\citep{Zhang2011Confidence,Geer2013penalized,Javanmard2013Nearly,Javanmard2013Hypothesis}
or using a 
double LASSO selection procedure \citep{Belloni2013Least}.
Extensions to generalized linear models 
were investigated in
\citet{Belloni2013Honest},
\citet{Geer2013asymptotically},
and \citet{Farrell2013Robust}.
\citet{Meinshausen2013Assumption} studied
construction of one-sided confidence intervals for groups of variables
under weak assumptions on the design matrix.
\citet{Lockhart2013significance} studied significance of the input
variables that enter the model along the lasso path.
\citet{Lee2013Exact} and \citet{Taylor2014Post} performed post-selection
inference conditional on the selected model.
\citet{Kozbur2013Inference} extended approach developed in
\citet{Belloni2012Inference} to a nonparametric regression setting,
where a pointwise confidence interval is obtained based on the
penalized series estimator, while
\citet{Lu2019Kernel} studied a  kernel-sieve hybrid estimator
for inference in sparse additive models.
\citet{Yu2019Constrained} considered testing in high-dimensional
parametric models with cone constraints. 
Hypothesis testing 
and confidence intervals for low-dimensional parameters
in graphical models were studied in
\citep{Ren2013Asymptotic, Wang2014Inference, Jankova2014Confidence, Jankova2017Honest},
elliptical copula models \citep{Barber2015ROCKET, Lu2015Posta}, 
Markov networks \citep{Wang2016Inference, Yu2016Statistical, Yu2019Simultaneous},
differential networks \citep{Xia2015Testing, Belilovsky2016Testing, Liu2017Structural, Kim2019Two},
and networks of point processes \citep{Wang2020Statistical}.
Varying-coefficient
models were introduced as a general framework that tied
together generalized additive models and dynamic generalized linear
models \citet{Hastie1993Varying}. Estimation and inference for
varying coefficient models in the mean have been widely studied.
See, for example, \citet{Fan2000Simultaneous}, 
\citet{Hoover1998Nonparametric}, 
\citet{Zhang2002Local}, 
\citet{Huang2004Polynomial},
\citet{Na2019High}, and
\citet{Na2018High}.
Quantile regression was studied in the presence of outliers and
non-normal errors in \citet{Koenker1984note}, while 
quantile regression with time-varying coefficient models 
was studied in, for example, \citet{Kim2007Quantile} and \citet{Kai2011New}.
Statistical inference for high-dimensional linear quantile regression
was studied in \citet{Belloni2013Robust, Belloni2013Uniform, Belloni2016Quantile, Bradic2017Uniform}
and a closely related problem of inference in composite quantile
regression was investigated in \citet{Zhao2014General}.
\citet{Tang2013Variable} studied estimation of quantile 
varying-coefficient models in a high-dimensional setting. 
However, how to perform statistical inference for high-dimensional 
varying coefficient models remains an open question.  

\paragraph{Our contribution} Below, we summarize the main contributions of this work.

\begin{itemize}
    \item We propose several approaches for constructing valid post-selection 
    confidence intervals for the varying-coefficient quantile regression model. 
    These approaches are asymptotically equivalent and rely on finding an 
    approximate root of the 
    decorrelated score.
    To make the construction computationally feasible with the non-differentiable 
    loss that is used in quantile regression, 
    we rely on a one-step approximation and reparameterization. 
    
    \item We provide the asymptotic normality results for the
    proposed estimators. Establishing this results requires 
    a novel analysis that generalizes the existing techniques. 
    Specifically, we carefully overcome the challenges that arise 
    from the non-differentiable loss used in quantile regression, 
    the bias from the penalized regression to handle the high 
    dimensionality, and the bias from linear approximation to
    handle the nonparametric component in appearing in varying coefficient
    models.
    
    \item We use extensive simulation studies 
    and real data analysis to demonstrate the finite sample performance
    of our proposed estimators.
    
\end{itemize}


\section{Preliminaries}

In this section, we carefully develop background 
necessary to understand the algorithms that are presented
in the subsequent section.
In Section \ref{sec:VCQR_background} we provide 
a brief overview of estimation in the
varying-coefficient quantile regression model.
Next, we  illustrate the challenges in the high-dimensional inference 
in Section \ref{sec:2.2}. In
Section \ref{sec:2.3}, we describe 
the decorrelated score method that can be used for high-dimensional
inference when the loss function is twice differentiable. 
Finally, we modify
the decorrelated score method to suit the non-differentiable setting of
varying-coefficient quantile regression and sketch the main steps of
the analysis in Section \ref{sec:2.4}.
Note that the results in Section \ref{sec:VCQR_background}-\ref{sec:2.3}
are not new and are presented for ease of readability.

\subsection{Varying-coefficient quantile regression}\label{sec:VCQR_background}

For a random variable $Y \in \R$, its $\tau$-quantile can be equivalently described as the
value $q$ that minimizes the expectation
$\EE{\tau\cdot (Y-q)_+ + (1-\tau) \cdot (Y-q)_-}$ (for any $t\in\mathbb{R}$, 
we write $t_+=\max\{t,0\}$ and $t_- = \max\{-t,0\}$).  For a linear
quantile regression problem, at a particular value of the index
variable $u\in\cU \subseteq \R$, we are therefore interested in estimating
\begin{equation}\label{eqn:VCQR_E}\beta(\tau,u) = \argmin_{b\in\R^p} \EEst{\tau\cdot (Y-X^\top b)_+ + (1-\tau) \cdot (Y-X^\top b)_-}{U=u},\end{equation}
where the expectation is taken over a draw of the random pair $(X,Y)$
when the index variable is equal to $U=u$ (in other words, we can
think of drawing the triplet $(X,Y,U)$ and conditioning on the event
$U=u$).

Of course, we cannot compute this expected value or even obtain an
unbiased estimate, unless by some chance our training data contains
many data points $(x_i,y_i,u_i)$ with $u_i=u$. Instead, by assuming
that $\beta(\tau,u)$ is reasonably smooth with respect to the index
variable $u\in\cU$, we can use a kernel method, and approximate the
expected value in~\eqref{eqn:VCQR_E}
with
\[\sum_{i=1}^n w_i \cdot \Big[\tau\cdot (y_i-x_i^\top b)_+ + (1-\tau) \cdot (y_i-x_i^\top b)_-\Big],\]
where the weights are given as
$w_i = (nh)^{-1} K\rbr{h^{-1}\rbr{u_i-u}}$,
the function $K(\cdot)$ is
the kernel function, and $h$ is the bandwidth. This approximation can be
interpreted as assuming that $\beta(\tau,u)$ is locally approximately constant for
values $u_i\approx u$, and thus defines a loss function on the sampled
data that would hopefully be minimized at some
$b(\tau, u) \approx \beta(\tau,u)$, but would suffer bias from the error in this
approximation. We can reduce the approximation bias by instead treating
$\beta(\tau,u)$ as locally approximately linear for values $u_i\approx u$, that is,
\[
  x_i^\top \beta(\tau,u_i) \approx x_i^\top \beta(\tau,u) + (u_i -u)\cdot x_i^\top \nabla_u\beta(\tau,u).
\]
Defining $\Gamma_i = (x_i^\top, (u_i-u)\cdot x_i^\top)^\top \in\R^{2p}$ 
for each observation $i=1,\dots,n$,
this yields a new loss function,
\begin{equation}\label{eqn:VCQR_Loss}
  \Loss(b)
  = \sum_{i=1}^n w_i \cdot \Big[\tau\cdot (y_i-\Gamma_i^\top b)_+ + (1-\tau) \cdot (y_i-\Gamma_i^\top b)_-\Big]
  = \sum_{i=1}^n w_i \cdot \rho_{\tau} \rbr{y_i-\Gamma_i^\top b},
\end{equation}
where the function $\rho_{\tau}(v) = v(\tau-\One{v<0})$.
We are now interested in
minimizing \eqref{eqn:VCQR_Loss} over a larger parameter vector,
$b=(b_0^\top,b_1^\top)^\top \in\R^{2p}$, where $b_0, b_1 \in \R^p$.
We would expect the minimum to be attained at some
$\bs = (b^{\star\top}_0,b^{\star\top}_1)^\top\approx (\beta(\tau,u)^\top ,\nabla_u \beta(\tau,u)^\top)^\top$ 
if the local linear approximation is sufficiently accurate. 
Note that we omit the indices $(\tau,u)$ to simplify the notation,
as they are fixed.

In a high dimensional setting where the dimension of the covariates
$X$, $p$, is growing faster than the sample size $n$,
we use a group $\ell_1$-penalty to estimate $\bs$ under the assumption
that the coefficient functions are approximately sparse.
In particular, we minimize the following optimization program
\begin{equation}\label{eqn:VCQR_bh}
  \bh = \argmin_{b \in \R^{2p}} \sum_{i=1}^n w_i \cdot \rho_{\tau} \rbr{y_i-\Gamma_i^\top b} + \lambda_b \norm{b}_{1,2},
\end{equation}
where $\norm{b}_{1,2} = \sum_{j=1}^p \sqrt{b_j^2 + b_{j+p}^2}$
is the $\ell_{1,2}$ group norm that simultaneously shrinks 
the coefficients $b_j$ and $b_{j+p}$, $j=1,\ldots,p$, to zero. 
Consistency results for $\bh$ have not been established in the existing literature
as it is challenging to deal with both the non-differentiable loss function
and a nonparametric model. Analysis for this model 
is more challenging compared to the partially
linear varying-coefficient model \citep{Wang2009Quantile},
where the nonparametric part is low-dimensional. 
Furthermore, the model in \eqref{eqn:VCQR_E} is strictly more general
than the partially linear varying-coefficient model.

\subsection{High dimensional inference}\label{sec:2.2}

We describe the challenges that arise in high-dimensional inference. 
Suppose first that we are interested in performing inference on a
low-dimensional parameter $\bs\in\R^p$, where the dimension $p$ is fixed
as the sample size $n$ tends to infinity.  After observing data, we
can estimate $\bs$ by minimizing some loss function
$\Loss(b) =\Loss(b;\text{data})$. For instance, in a regression
problem with features $x_i$ and response $y_i$, $i=1,\dots,n$, 
typically we would have 
$\Loss(b) ={n^{-1}} \sum_{i=1}^n\ell(b;x_i,y_i)$, 
where $\ell(\cdot)$ is the
negative log-likelihood under some assumed model.

In this classical setting, we can derive the well-known
asymptotically normal distribution of the estimator $\bh$ around the
true parameter value $\bs$, by considering the score
$\nabla\Loss(b)$. Namely, assuming that the loss is twice
differentiable, by taking a Taylor expansion, we can see that the
estimator $\bh$ satisfies
\begin{equation}\label{eqn:a_only_zero_score}
  0 = \nabla\Loss(\bh) = \nabla \Loss(\bs) + \nabla^2\Loss(\bs)\cdot(\bh - \bs) + \DTay,
\end{equation}
where $\DTay$ is the error in the Taylor expansion, equal to
\[
  \DTay = \rbr{\int_0^1 \nabla^2\Loss\left((1-t)\bs + t\bh\right)dt - \nabla^2\Loss(\bs) }\cdot(\bh - \bs).
\]
Then by solving for $\bh$, we have
\[\bh = \bs + \Big(-\nabla^2\Loss(\bs)\Big)^{-1} \cdot \Big(\nabla \Loss(\bs) + \DTay\Big).\]
Asymptotic normality of the error $\bh - \bs$ then follows from two
required properties: first, that the $\sqrt{n}$-score at the true parameter,
$\sqrt{n}\nabla \Loss(\bs)$, should be asymptotically normal via a central
limit theorem argument, while the Taylor expansion error
$\DTay$ is vanishing at some appropriately fast rate; and second, that the term $\nabla^2\Loss(\bs)$
should converge in probability to some fixed and invertible matrix
(specifically, to its expectation).

In high dimensions, however, the above analysis fails. If $b\in\R^p$
where the dimension $p$ grows faster than the sample size $n$, then
$\nabla^2\Loss(\bs)$ will likely not converge in probability, and in
general will not even be invertible. We can instead frame the argument
in terms of a low-dimensional parameter of interest combined with a
high-dimensional nuisance parameter. We write
$b = (a^\top,c^\top)^\top$, where $a\in\R^k$ is the low-dimensional parameter of interests,
while $c\in\R^{p-k}$ is the high-dimensional nuisance parameter. For
example, if we are working in a regression model, where the loss takes
the form $\Loss(b) = \sum_i \tilde{\ell}(y_i; x_i^\top b)$ for some loss
function $\tilde{\ell}$ (e.g.,~squared loss for a linear regression), then we
might decompose the high-dimensional parameter vector as
$b=(a^\top,c^\top)^\top$ to separate the coefficients on $k$
features of interest (without loss of generality, the first $k$
coordinates of the feature vectors $X_i$) and the remaining $p-k$
features, which we think of as potential confounders that need to be
controlled for in the regression.

Suppose that our estimate of the low-dimensional parameter vector of interest, $a$,
is obtained by solving
\[
  \ah = \argmin_a \Loss(a,\ct),
\]
where $\ct$ is some preliminary estimator of $c$. For example, in a
high-dimensional regression problem, we may run an $\ell_1$-penalized
regression first to obtain an initial sparse estimate of the
parameters. Once an initial estimate is obtained, we can 
refit the low-dimensional vector $a$ without a
penalty to remove the shrinkage bias. 
In this setting, we have
\begin{multline}\label{eqn:ac_zero_score}
0 = \nabla_a\Loss(\ah,\ct) = \nabla_a \Loss(\as,\cs) + \nabla^2_{aa}\Loss(\as,\cs)\cdot(\ah - \as) + \\\nabla^2_{ac}\Loss(\as,\cs)\cdot(\ct - \cs) + \Delta_{\text{Taylor2}}.\end{multline}
where 
\begin{multline}
 \Delta_{\text{Taylor2}}\\
 = \rbr{\int_0^1 \nabla^2_{aa}\Loss\left((1-t)\as + t\ah,(1-t)\cs + t\ct\right)dt - \nabla^2_{aa}\Loss(\as,\cs) }\cdot(\ah - \as)\\
  +\rbr{\int_0^1 \nabla^2_{ac}\Loss\left((1-t)\as + t\ah,(1-t)\cs + t\ct\right)dt - \nabla^2_{ac}\Loss(\as,\cs) }\cdot(\ct - \cs).
\end{multline}
Therefore,
\begin{equation*}\ah = \as + \Big(-\nabla^2_{aa}\Loss(\as,\cs)\Big)^{-1} \Big( \nabla_a \Loss(\as,\cs)+\nabla^2_{ac}\Loss(\as,\cs)(\ct - \cs)  +\Delta_{\text{Taylor2}}\Big).\end{equation*}
Let 
$S = (S_a^\top,S_c^\top)^\top = (\nabla_a \Loss^\top,\nabla_c\Loss^\top)^\top$ 
denote the score vector
and the negative Hessian matrix is
\[
  H = \left(\begin{array}{cc}H_{aa}&H_{ac}\\H_{ca} & H_{cc}\end{array}\right) = -\left(\begin{array}{cc}\nabla^2_{aa}\Loss&\nabla^2_{ac}\Loss\\\nabla^2_{ca}\Loss & \nabla^2_{cc}\Loss\end{array}\right).
\]
With this notation, we have
\[\ah = \as + \Big(H_{aa}(\as,\cs)\Big)^{-1}\cdot \Big(S_a(\as,\cs) - H_{ac}(\as,\cs)\cdot(\ct - \cs)+\Delta_{\text{Taylor2}}\Big).\]
To assure the asymptotic normality of the error $\ah-\as$,
we need to handle the following four terms:
\begin{itemize}
\item Asymptotic normality of $\sqrt{n} S_a(\as,\cs)$, which will hold by a central limit theorem argument as before;
\item Convergence in probability of $H_{aa}(\as,\cs)$ to a fixed invertible matrix, which holds since $a\in\R^k$ is low-dimensional;
\item Some control on the distribution of the term $H_{ac}(\as,\cs)\cdot (\ct - \cs)$;
\item Sufficiently small bound on $\Delta_{\text{Taylor2}}$, 
which will hold as long as we assume that $(\ah,\ct)$ is sufficiently close to $(\as,\cs)$.
\end{itemize}
The third term, $H_{ac}(\as,\cs)\cdot (\ct - \cs)$, is the main
challenge --- since $c$ is high-dimensional, in general it will not be
possible to explicitly characterize the distribution of the error
$\ct-\cs$ in its estimate. Therefore, we note that a naive 
refitting does not result in an asymptotically normal estimator 
and a different strategy is needed for high-dimensional inference.

One strategy to solve this problem is to modify the score method.
Specifically, we want the term
$H_{ac}(\as,\cs)\cdot (\ct - \cs)$ to vanish
at a sufficiently fast rate, 
so that it is smaller than the asymptotically normal term
$S_a(\as,\cs)$. The decorrelated score method, described next,
provides such a result.

\subsection{The decorrelated score method}\label{sec:2.3}

When $\ah$ is defined as the minimizer of the objective function
at some fixed estimator $\ct$ for the nuisance parameter,
$\ah = \argmin_a \Loss(a,\ct)$, we can equivalently obtain
$\ah$ as the solution to the score equation
$0 = \nabla_a\Loss(a,\ct)$. To decorrelate the score equations, 
we will instead define $\ah$ as the solution to $0 = \nabla_a\Loss(a,\ct) - V^\top  \nabla_c\Loss(a,\ct)$,
where $V \in \R^{(p-k) \times k}$ is a carefully chosen matrix,
whose choice will be discusses in detail shortly.
The Taylor expansion around the true parameter then gives us
\begin{multline} \label{eqn:ac_zero_score:modified} 
0 = S_a(\ah,\ct) - V^\top  S_c(\ah,\ct) = S_a(\as,\cs) - V^\top S_c(\as,\cs)\\
- \Big(H_{aa}(\as,\cs) - V^\top H_{ca}(\as,\cs)\Big)\cdot (\ah - \as)\\
- \Big(H_{ac}(\as,\cs) - V^\top H_{cc}(\as,\cs)\Big)\cdot (\ct - \cs) + {\rm Rem}.
\end{multline}
Solving for $\ah$, we then obtain
\begin{multline}\label{eqn:solve_decorr}
  \ah = \as +
  \Big(\underbrace{H_{aa}(\as,\cs) - V^\top H_{ca}(\as,\cs)}_{\text{Term 1}}\Big)^{-1}\cdot\\
  \bigg(\underbrace{\Big( S_a(\as,\cs) - V^\top S_c (\as,\cs)\Big)}_{\text{Term 2}} - \\
  \underbrace{\Big(H_{ac}(\as,\cs) - V^\top H_{cc}(\as,\cs)\Big)\cdot (\ct - \cs)}_{\text{Term 3}}  + {\rm Rem} \bigg).
\end{multline}
In order to show that $\ah$ is asymptotically normal,
we would like to show that Term 1 converges in probability to a
fixed (and invertible) matrix; Term 2 converges to a normal
distribution via a central limit theorem argument; and Term 3 is
vanishing (relative to Term 2). The role of the matrix $V$ is
precisely to make Term 3 of smaller order 
compared to Term 2. Specifically, the matrix $V$ is chosen so that
$\nabla^2_{ac}\Loss(\as,\cs) \approx V^\top \nabla^2_{cc}\Loss(\as,\cs)$,
enabling us to show that Term 3 is vanishing without obtaining a
limiting distribution for the high-dimensional estimator $\ct$.  In
general, the matrix $V$ cannot be known in advance and is therefore
data-dependent rather than fixed. However,
in applications we will have that $V$ converges to 
some fixed matrix sufficiently fast and all the
statements above still hold. 

Finding the roots of the score equation may be numerically
difficult. We present two methods that can be used in order to obtain
$\ah$ that approximately satisfies the score equation next.

The first method is the one-step correction method. Define
\[
  W = \left(\begin{array}{cc}I_k \\ -V \end{array}\right) \cdot \Big(H_{aa}(\as,\cs) - V^\top H_{ca}(\as,\cs)\Big)^{-\top}.
\]
Expanding $W^\top S(\at,\ct)$ at $(\as,\cs)$ and reorganizing the terms, we obtain
\[
  \at + W^\top S(\at,\ct) = \as + W^\top S(\as,\cs) - 
  W^\top H_{\cdot c}(\as,\cs) \cdot (\ct - \cs)
  + {\rm Rem},
\]
where $H_{\cdot c} = \left(\begin{array}{cc} H_{ac} \\ H_{cc} \end{array}\right)$
and $\bt = (\at^\top, \ct^\top)^\top$ is a preliminary, consistent estimator of $\bs$.
Note that the form of the equation above is the same
as in  \eqref{eqn:solve_decorr}.
This motivates
us to define the one-step corrected estimator
\[
  \ac^{OS} = \at + W^\top S(\at,\ct).
\]
Similar to the earlier discussion after \eqref{eqn:solve_decorr}, 
the normality of $\ac^{OS}$ will follow if we
choose the matrix $W$ so that $W^\top H_{\cdot c} \approx 0_{k,p-k}$ and $W$ 
itself converges to some fixed matrix sufficiently fast.

The second method for constructing $\ah$ relies on the reparametrization
of the loss function. In the method sketched above, 
$\ah$ is defined as the minimizer of the objective function  
at a fixed preliminary estimate $\ct$ of the nuisance parameter, 
i.e.,~$\ah = \argmin_a \Loss(a,\ct)$. We saw above that the
bottleneck in this analysis is the nonzero off-diagonal block of the Hessian matrix, 
$H_{ac}(\as,\cs)$. To avoid the problematic term in the Taylor expansion,
we can reparametrize the loss in such a 
way that the new off-diagonal block will become close to zero. 
Specifically, consider defining $\ah$ as the solution to a different optimization problem,
\begin{equation}\label{eqn:ah_decorr}
\ah = \argmin_a \Loss\big(a, \ct - V(a - \at)\big),\end{equation}
where $\at,\ct$ are preliminary estimates of $\as,\cs$.

To better understand the approach in \eqref{eqn:ah_decorr}, 
consider again a regression setting where the distribution 
of each response variable $y_i$ is modeled as a function of 
$x_i^\top b  = (x_{i,A}^\top,x_{i,A^c}^\top) (a^\top,c^\top)^\top$, 
where the subset $A\subset\{1,\dots,p\}$ indexes the $k$ features
of interest corresponding to the subvector $a$ of the regression coefficients.
In this setting, the negative Hessian matrix $H_{ac}(\as,\cs)$ 
will be nonzero whenever features in $A$ are correlated with 
features in $A^c$; thus, to set this block of the Hessian matrix 
to be (close to) zero, we can think of modifying the features of 
interest in the set $A$ by regressing out the confounding features in $A^c$. 
Specifically, let $v_j\in\R^{p-k}$ be the coefficient vector when 
regressing the feature $j\in A$ on all features in $A^c$. Then 
\[x_i^\top b  =(x_{i,A}^\top,x_{i,A^c}^\top) (a^\top,c^\top)^\top =  \big(x_{i,A} - V^\top x_{i,A^c}\big)^\top a + x_{i,A^c}  ^\top\big(c + V a\big),\]
where $V\in\R^{(p-k)\times k}$ is the matrix with columns $v_j$.
Note that, in this rearranged expression, 
the features of interest have been modified to be approximately 
orthogonal to, or approximately independent from, the nuisance features. 
Suppose we take $\ct + V \at$ as the preliminary estimate of
the coefficients $c+V a$ on the confounding features in this new model.
If we then re-estimate the parameter vector of interest $a$, 
obtaining a new estimate $\ah$, then the final fitted regression is given by
\[\big(x_{i,A} - V^\top x_{i,A^c}\big)^\top \ah + x_{i,A^c}^\top \big(\ct + V \at\big) = (x_{i,A},x_{i,A^c})^\top \big(\ah,\ct  - V (\ah - \at)\big),\]
thus motivating the form of the optimization problem given above in~\eqref{eqn:ah_decorr}.

Defining $\ah$ as the solution to the
decorrelated optimization problem~\eqref{eqn:ah_decorr}, 
the Taylor expansion then gives us
\begin{align*}
0 &=\nabla_a \Loss\big(a, \ct - V (a-\at)\big)\Big|_{a = \ah}\\
&=S_a(\ah,\ct - V(\ah - \at)) - V^\top  S_c(\ah,\ct - V(\ah - \at))\\
&= S_a(\as,\cs) - V^\top S_c(\as,\cs)
- \Big(H_{aa}(\as,\cs) - V^\top H_{ca}(\as,\cs)\Big)\cdot (\ah - \as)\\
&\hspace{.5in}-\Big(H_{ac}(\as,\cs) - V^\top H_{cc}(\as,\cs)\Big)\cdot (\ct  - V(\ah - \at)- \cs) + {\rm Rem},
\end{align*}
where ${\rm Rem}$ is redefined appropriately as 
the error term in this new expansion. Solving for $\ah$, we then obtain
\begin{multline}\label{eqn:solve_reparam}
\ah = \as + \Big(\underbrace{H_{aa}(\as,\cs) - V^\top H_{ca}(\as,\cs)}_{\text{Term 1}}\Big)^{-1}\cdot \bigg(\underbrace{\Big( S_a(\as,\cs) - V^\top S_c(\as,\cs)\Big)}_{\text{Term 2}}\\ - \underbrace{\Big(H_{ac}(\as,\cs) - V^\top H_{cc}(\as,\cs)\Big)\cdot (\ct  - V(\ah - \at)- \cs)}_{\text{Term 3}} + {\rm Rem}\bigg).\end{multline}
Therefore, $\ah$ is going to be asymptotically normal if 
Term 1 converges in probability to a fixed (and invertible) matrix;
$\sqrt{n}\cdot$Term 2 converges to a mean-zero normal distribution via a central limit theorem argument; 
and Term 3 and the remaining error ${\rm Rem}$ are vanishing (relative to Term 2).
As before, the role of the matrix $V$ is in controlling Term 3.
Specifically, the matrix $V$ is chosen so that 
$H_{ac}(\as,\cs) \approx V^\top H_{cc}(\as,\cs)$, 
enabling us to show that Term 3 is vanishing without obtaining a
limiting distribution for the high-dimensional initial estimates $\at,\ct$.
In general, the matrix $V$ cannot be known in advance 
and is therefore data-dependent rather than fixed, but in our analysis we will see 
that as long as $V$ itself is sufficiently close to some fixed matrix, all the statements above will still hold.

\subsection{Non-differentiable loss in quantile regression}\label{sec:2.4}

When the loss function $\Loss$ is non-differentiable, 
which is the case in quantile regression, 
approaches based on the decorrelated score method 
cannot be directly applied. However, 
a simple modification allows us to proceed in a similar
way as before. Assuming that the loss is nondifferentiable
and convex, we let $S(a,c)$ denote the subdifferential of the loss.
While $S(\cdot)$ might be highly nondifferentiable, 
its expected value is smooth in many problems.
Therefore, we can compute the Hessian as 
the gradient of the expected value of $S(\cdot)$. 
In particular, we define the expected score function $\ES(a,c)$ 
as the expectation of the score $S(\cdot)$
at any {\em fixed} parameter choice $(a,c)$.
Here it is important to note that, 
for a random parameter vector $(\ah,\ct)$, 
the expected score function $\ES(\ah,\ct)$ is {\em not} equal to 
$\EE{S(\ah,\ct)}$, since this second 
quantity would evaluate its expectation 
with respect to the random values of $\ah$ and $\ct$ as well.
With the expected score function defined, we let 
\[H(a,c) = -\nabla \ES(a,c)\]
be the negative gradient of the expected score.

We specialize the discussion so for to the quantile regression problem at hand.
We will base the inference procedures 
on
the local linear formulation of the estimation problem 
for the varying-coefficient quantile regression model \eqref{eqn:VCQR_bh}.
Suppose $A=\{1,\dots,k\}$ is the index set for the parameters 
of interest. Let $Y \in \R^n$ be the response 
and $U \in \R^n$ be the index for the varying coefficient. 
The matrix of input variables is denoted as 
$X = (X_A,X_{A^c}) \in \R^{n \times p}$, 
where $X_A \in \R^{n\times k}$ represents 
the features of interest and $X_{A^c} \in \R^{n\times (p-k)}$ the other features. 
Let 
\[\Gamma(u) = \left(X_A, X_{A^c},\diag (U-u)X_A, \diag (U-u)X_{A^c} \right) \in \R^{n \times 2p},\] 
and 
$\Gamma_i^\top(u) = (x_{i, A}^\top, x_{i, A^c}^\top, (u_i - u)\cdot x_{i, A}^\top,  (u_i - u)\cdot x_{i, A^c}^\top)$ represents the $i$th row vector of $\Gamma(u)$. 
The score function for quantile regression is given as 
\begin{multline*}
S(a_0, a_1, c_0, c_1) = 
\sum_{i\in[n]} w_i \cdot \Gamma_i(u) \cdot \Psi_\tau
\left(
y_i - x_{i,A}^{\top} a_0 - x_{i, A^c}^\top c_0 
\right.
\\
\left.
- (u_i - u)\cdot x_{i, A}^{\top} a_1 -  (u_i-u)\cdot x_{i, A^c}^\top c_1\right),
\end{multline*}
where $\Psi_\tau(u) = \tau - \ind(u < 0)$.
Let $b_0 = (a_0^\top, c_0^\top)^\top \in \R^{p}$, $b_1 = (a_1^\top, c_1^\top)^\top \in \R^p$,
and $b = (b_0^\top,b_1^\top)^\top \in \R^{2p}$. 
Then the above score function can be
written as 
\begin{equation}\label{eqn:score}
S(b) = 
\sum_{i\in[n]} w_i \cdot \Gamma_i(u) \cdot \Psi_\tau\rbr{y_i - \Gamma_i^\top(u)\cdot b}\;.
\end{equation}
Let $\bs = \bs(\tau, u)$ be defined as a solution to 
$0 = \EE{S(b)}$ when $h\rightarrow 0$. 
Let $\qt_i (\tau, u) = \Gamma_i^\top(u)\bs(\tau, u)$ be a
local linear approximation to $q_i(\tau)=q(x_i;\tau,u_i)$.
Since $(\tau, u)$ is fixed, 
we write $\Gamma_i(u) = \Gamma_i$, $\qt_i (\tau, u) = \qt_i$ and $q_i (\tau) = q_i$
for notational simplicity. 
Finally, we use 
$\Delta_i = \Delta_i(\tau, u) = \tilde{q}_i - q_i$
to denote the approximation error from using the local linear 
model for the conditional quantile.

An approximate negative Hessian corresponding to the expected score function is given as 
\begin{equation}\label{eqn:Hstar}
H^{\star} = H(\bs; \tau, u) = \sum_{i\in[n]} w_i \cdot f_i(q_i + \Delta_i) \cdot \Gamma_i\Gamma_i^\top.
\end{equation}
Let $\Vs \in \R^{2k \times 2p}$ be the rows related 
to $X_A$, $X_A(U-u)$ of an approximate inverse of $H$ such that 
\[ 
\norm{\Vs H^\star - E_a}_{\infty,F} \leq \lambda^{\star},
\] 
where 
\[
\norm{\Vs}_{\infty,F} =\sup_{i \in [k], j \in [p]} \norm{\Vs_{(i,i+k),(j,j+p)}}_F,
\]
$E_a = (e_1, \cdots, e_{2k})^{\top} \in \R^{2k\times 2p}$, and 
$\lambda^{\star}$ is a  parameter 
that will be precisely given in Section \ref{main}.

With these preliminaries,
we define the one-step correction estimator $\ac^{OS}$ as $\ac^{OS} = \ah - S_d (\bh, \Vh)$, where \begin{equation}\label{eqn:S_d}
S_d(b, V) := \sum_{i \in [n]} S_{di}(b,V) = \sum_{i \in [n]} -w_iV\Gamma_i\Psi_{\tau} (y_i - \Gamma_i^{\top}b),\end{equation} 
and $\Vh, ~\bh $ are plug-in estimators of $\Vs,~ \bs$ to be defined later, and $\ah = \bh_{1:2k}$.


\section{Algorithm} \label{algorithm}

We provide computational details for the three proposed 
estimators. The first estimator is  based on finding the root of the decorrelated-score;
the second estimator is based on the one-step correction; and
the third estimator is based on the reparametrization of the loss function. 
As discussed in the previous section, all these estimators are 
asymptotically equivalent.
Estimation proceeds in three steps with the first
two steps being the same for all three estimators.
In the first step we obtain a pilot estimator of $\bs$,
while in the second step we obtain $\hat V$.
We provide details next.

\noindent {\bf Step 1.} Obtain 
the initial estimator $\bh^{\rm ini}$ by minimizing the 
optimization program \eqref{eqn:VCQR_bh}.
The kernel weights are given as
$w_i = (nh)^{-1}K\rbr{h^{-1}{U_i-u}}$,
while the penalty parameter $\lambda_b$ is defined 
in a data dependent fashion
as
\begin{equation}
\lambda_b =c_b\sqrt{\tau(1-\tau)\log (nhp)}\cdot \rbr{\max_{j \in [p]}\EEn{w_i^2x_{ij}^2}}^{1/2},
\end{equation}
where $\EEn{z_i}$ denotes the summation,
$\EEn{z_i} = \sum_{i \in [n]} z_i$,
and $c_b$ is a data independent constant. 
We subsequently threshold elements of $\bh^{\rm ini}$
to obtain $\bh$ with
\[
\bh_j = 
\left\{
\begin{array}{ll}
\bh^{\rm ini}_j \cdot \One{\rbr{ \bh^{\rm ini}_j }^2+\rbr{\bh^{\rm ini }_{j+p}}^2 > \lambda_b^2}, & j = 1, \ldots, p,\\
\bh^{\rm ini}_j \cdot \One{\rbr{ \bh^{\rm ini 2}_j}^2+\rbr{\bh^{\rm ini }_{j-p}}^2 > \lambda_b^2}, & j = p+1, \ldots, 2p,
\end{array}
\right.
\]
to ensure the sparsity of
the estimator for Theorem \ref{thm:b:consistent&sparse}.

\noindent {\bf Step 2.}
Obtain $\Vh$ by
\begin{equation}
 \label{eq:lasso_fixed:opt}
 \Vh = \argmin_{V \in \R^{2k\times 2p}} \cbr{\text{trace}\rbr{\frac{1}{2}V\Hh V^{\top} - E_aV^{\top} } + \lambda_V \norm{V}_{1,F}},
\end{equation}
where $\lambda_V = n^{-1}c_v\sqrt{nh}\Phi^{-1}\rbr{1-\frac{0.05}{2nhp}}$
with $c_v$ being a data independent constant; 
and $\Hh = \sum_{i \in [n]}w_i\hat{f}_i\Gamma_i \Gamma_i^{\top}$
with $w_i$ being the kernel weight as defined in Step 1 
and $\hat{f}_i$ is computed with a data adaptive procedure as
\[\hat{f}_i = \frac{\One{|y_i - \Gamma_i^{\top}\bh| \leq h_f}}{2h_f}\]
with
\[
h_f=(\Phi^{-1}(\tau+h_p)-\Phi^{-1}(\tau-h_p))
\min\cbr{\sqrt{\text{Var}(\hat{e})}, \frac{Q_{0.75}(\hat{e})-Q_{0.25}(\hat{e})}{1.34}},
\]
where
\begin{gather*}
\hat{e}_i=y_i-\Gamma_i^{\top}\bh, \quad
Q_{\alpha}(\hat{e})=\inf_q \cbr{q: \frac{\sum_i w_i\One{\hat{e}_i\leq q}}{\sum_i w_i}\geq \alpha},\\
\text{Var}(\hat{e})=\frac{\sum_i w_i \rbr{\hat{e}_i-\frac{\sum_j w_j\hat{e}_j}{\sum_j w_j}}^2}{\sum_i w_i},
\end{gather*}
and
\[
h_p=n^{-1/3}\{\Phi^{-1}(0.975)\}^{2/3}\cbr{\frac{1.5[\phi(\Phi^{-1}(\tau))]^2}{2[\Phi^{-1}(\tau)]^2+1}}^{1/3}
\]
is  the Powell bandwidth defined in \citet[][Section 3.4.4]{Koenker2005Quantile}.

\noindent {\bf Step 3.}
Here we obtain our final estimator $\check{a}$ using one of the three procedures. 

\begin{enumerate}
    \item Finding the root of the decorrelated score. We would like to construct
    $\check{a}$ by solving for $\sum_i \widetilde{S}_i(a) = 0$ where 
    $\widetilde{S}_i(a)= S_{di}((a^\top, \ch^\top)^\top, \Vh)$
    with $S_{di}$ defined in \eqref{eqn:S_d}.
    Because $\sum_i \widetilde{S}_i(a)$ is not continuous,
    we can approximately solve the equation by 
    \begin{equation}\label{eqn:DS est} 
    \ac^{DS}=\argmin [\sum_i \widetilde{S}_i(a)]^\top[\sum_i \widetilde{S}_i(a)\widetilde{S}_i^\top(a)]^{-1} [\sum_i \widetilde{S}_i(a)].
    \end{equation}
    Minimizing the above problem is not computationally simple. The following two strategies
    might be preferred, as discussed in Section \ref{sec:2.3}.
    
    \item The one step correction estimator. We compute the estimator as
    \begin{equation}\label{eqn:one step est}
    \ac^{OS} = \ah + \sum_{i \in [n]} w_i  \Vh\Gamma_i^\top \Psi_{\tau} (y_i - \Gamma_i^{\top}\bh).
    \end{equation} 
 
    \item The reparametrization estimator. We first obtain $\tilde{\Gamma}_i, \tilde{y}_i$ as
    $\tilde{\Gamma}_i=\Gamma_{i, A}- \Vh_2\Gamma_{i, A^c}$ and 
    $\tilde{y}_i=y_i-\Gamma_{i, A^c}^\top(\ch +\Vh_2^\top\ah)$, 
    where $\Vh_2 = \Vh_{11}^{-1} \Vh_{12}$, $\Vh_{11} = \Vh_{1:2k} \in \R^{2k \times 2k}$ 
    and $\Vh_{12} = \Vh_{(2k+1):2p} \in \R^{2k \times 2(p-k)}$. Then the estimators is 
    computed as 
    \begin{equation}\label{eqn:RP est}
    \ac^{RP}=\argmin_{a} \sum_i w_i \cdot \rho_{\tau} \rbr{\tilde{y}_i-\tilde{\Gamma}_i^\top a}.
    \end{equation}
    
\end{enumerate}

With the estimator $\ac$, being $\ac^{DS}$, $\ac^{OS}$, or $\ac^{PR}$,
we can perform statistical inference about 
the parameter of interests, $a$. For any one of the three estimamors, 
we have that 
\[
\widehat{\Sigma}_a^{-1/2}(\ac -a) \sim \mathcal{N}(\mathbf{0}, I_k),
\]
where the covariance matrix is computed as 
\[ 
\widehat{\Sigma}_a = nh\sum_i w_i^2 \Vh\Gamma_i^\top \Psi_{\tau} ^2(y_i - \Gamma_i^{\top}\bh)  \Gamma_i \Vh^{\top}.
\]

We end this section with some remarks on the computation.
The kernel weights we chose in our simulation studies 
are given as 
\[w_i= (nh)^{-1}\One{|U_i-u|/h<0.5}\]
with $h=c_h n^{-1/3}$
and $c_h = 4$. However, we note that any
kernel function 
that
satisfies Assumption \ref{assumption:kernel} presented later in Section \ref{assumptions}
can be used. 
Many frequently used kernels, such as 
the Gaussian kernel and box kernel, satisfy this assumption. 
We also set $c_b=0.4$ and
$c_v = 0.02$  in our numerical studies. 
The performance of the algorithm is not very sensitive to the choice of these parameters.
Both in Step 1 and Step 2 of the algorithm, one can perform
optional refitting of the selected coefficients to
improve finite sample performance. For example, in Step 1,
let $S \in \R^s$
be the support of the covariates in $\Gamma$ corresponding to $\bh$, 
and $\Gamma_{i, S}$ and $b_{S}$ are the corresponding entries in $\Gamma_i$ and $b$, 
then \[
\bh_{post} = \left\{ \argmin_{b \in \R^{p}} \sum_{i=1}^n w_i \cdot \rho_{\tau} \rbr{y_i-\Gamma_{i, S}^\top b_S}: b_j=0 ~ \forall j\in S^c\right\}
\] 
can be used to replace $\bh$.


\section{Main results}\label{main}

In this section, we present our main results.
We start by detailing the assumptions in Section~\ref{assumptions}.
Results on estimation consistency are presented in Section~\ref{sec:con}.
Finally, we give results on the asymptotic normality of the estimator in 
Section~\ref{sec:norm}.

\subsection{Assumptions}\label{assumptions}

We state the assumptions needed to establish our results.

\begin{assumption}
\label{assumption:kernel} \textnormal{(Kernel assumptions)} The kernel
function $K(\cdot)$ satisfies 
\begin{gather*}
K(t) \leq \nu_0 < \infty \quad \text{for all $t$,} \quad
\int K(u)du = \nu_1 < \infty, \\
\int K^2(u)du = \nu_2 < \infty,\quad
\int K(u)u^2du = \mu_2 < \infty,\\ \int K(u)u^4du = \mu_4 < \infty.
\end{gather*}
\end{assumption}

The kernel is chosen by a statistician, 
so the above assumption does not put restrictions on the data generating process. A number of
standard kernels such as the Gaussian kernel, box kernel, and Epanechnikov kernel,
all satisfy the above assumption. 

\begin{assumption}
\label{assumption:u} \textnormal{(Assumptions on $U$)}
We assume $U$ has bounded support.
Without loss of generality, we assume $U \in [0,1]$.
Let $f_U(u)$ be the density of $U$.
There exists $\bar{f}$ such that $f_U(u)\leq \bar{f}$. 
\end{assumption}

From Assumptions \ref{assumption:kernel} and \ref{assumption:u}, 
the kernel weights $w_i$'s satisfy the following
with high probability
\begin{gather*}\label{eqn:w_assumptions}
    \norm{w_i}_{\infty} \leq B_w = \frac{\nu_0}{nh} \leq \frac{B_K}{nh},
    \quad
    \sum_i w_i \leq B_K, \\
    \sum_i w_i \rbr{\frac{u_i - u}{h}}^2 \leq B_K,
    \text{ and } 
    \sum_i w_i \rbr{\frac{u_i - u}{h}}^4 \leq B_K 
\end{gather*}
for some constant $B_K > 0$.

\begin{assumption}\textnormal{(Assumptions on the distribution of $Y$)}
\label{assumption:density}
Let $f_i(y)$ be the conditional density of $Y_i$ given $X_i = x_i, U_i = u_i$.
We assume that there exist constants $\underline{f}, \bar{f}, \bar f'$
such that  
  \[0<\underline{f} \leq f_i(y) \leq \bar{f}, \quad \textnormal{and} \quad |f_i'(y)| \leq \bar f' \quad \text{for all}~y.\] 
\textnormal{This type of assumption on the conditional distribution of $Y$ is commonly used in the literature on quantile regression, for example, see \cite{Belloni2016Quantile}.}
\end{assumption}


\begin{assumption}\textnormal{(Approximate linear sparsity and smoothness of $q(x;\tau,u)$)} \label{assumption:qr}
  Assume there exists a smooth and sparse $\beta^{\star}(\tau, u)$ such that:
  \begin{itemize}
      \item  $u \mapsto \beta^{\star}(\tau, u)$ is differentiable for all $\tau \in [\ve, 1-\ve]$ and 
    \[
    \norm{\beta^{\star}(\tau, u') - \beta^{\star}(\tau, u) - (u' - u)\cdot \nabla_u \beta^{\star}(\tau, u) }_2 \leq B_\beta(u'-u)^2;
    \]
    \item the supports of $\beta^{\star}(u, \tau)$ and
    $\partial_u \beta^{\star}(\tau, u)$ are sparse;
    that is, for the sets
    \[
S := \cbr{j \in [p] \mid \beta_j^{\star}(\tau, u) \neq 0}
\]
and
\[
S' := \cbr{j \in [p] \mid \beta^{\star}_j(\tau, u) \neq 0 \text{ or }  \partial_u \beta^{\star}_j(\tau, u) \neq 0},
\]
we have $s := |S| \ll n$ and $ |S'|\leq s_1 := c_1 s$ for some constant $c_1$.
  \end{itemize}
We assume 
that the quantile function $q_i = q(x_i; \tau, u_i)$ can be well approximated by
a linear function $x_i^\top \beta^{\star}(\tau, u_i)$; specifically, 
   \[   \PPst{Y \leq x_i^\top \beta^{\star}(\tau, u_i)}{X=x_i, U = u_i} = \tau + R_{i},\]
    where 
    \[
    \sqrt{\sum_i w_iR_{i}^2} = \epsilon_R = O\rbr{\sqrt{\frac{\log (np)}{nh}} }.
    \]
\end{assumption}

\noindent
This assumption requires that the
conditional quantiles of $Y$ approximately 
follow a linear varying-coefficient model and 
the approximation error is vanishing as 
$n \rightarrow \infty$. In addition, the varying-time
coefficient $\beta^\star(\tau, u)$ is H\"{o}lder smooth,
sparse and has sparse first derivatives.

For the case when $u$ and $\tau$ are fixed, we will write $\beta^{\star} = \beta^{\star}(\tau, u)$. Let $\bs =\bs(\tau,u) = \rbr{\beta^{\star}(\tau,u)^\top, \nabla_u^\top \beta^{\star}(\tau,u)}^\top $
and write $\fq = q(x_i;\tau,u_i)$, $ \fql = \Gamma_i^\top \bs$ as a local linear approximation to $\fq$. Let 
\[H =\sum_{i \in [n]} w_i \cdot f_i(\fq) \cdot \Gamma_i\Gamma_i^\top \text{ and } H^{\star} =\sum_{i \in [n]} w_i \cdot f_i(\fql) \cdot \Gamma_i\Gamma_i^\top.\]

\begin{assumption} 
\label{assumption: lam}
\textnormal{(Assumptions on the Hessian)}
Let $\Vs$ be the rows related to $[X_A, X_A (U-u)]$ of an approximate inverse of $H^\star$. 
We assume that 
\[\norm{H^\star\Vs - E_a}_{\infty, F} \leq \lambda^{\star}=O\rbr{ B_V \sqrt{ \frac{\log p}{nhh_f} }},
\qquad
\norm{\Vs}_{0,F} \leq s_2 = c_2 s,\] 
where $\norm{V}_{0,F} := |
\{(i,j):i \in [k], j \in [p], \Vs_{(i,i+k),(j,j+p)}\neq 0\}
|$, 
$\norm{V}_{\infty,F} := \max_{i \in [k], j \in [p]} \norm{\Vs_{(i,i+k),(j,j+p)}}_F$ 
and 
$\max_{i \in [n]}\|\Vs \Gamma_i\|_2 = B_V \asymp \log p$.
Furthermore, we have 
\[\text{$(nh)^{-1}sB_V^2\log p = o(1)$ ~\textnormal{and}~ $\log(B_V^2h_fh) = o(\log p)$.}\]
\end{assumption}

Assumption \ref{assumption: lam} holds when $X_A$ follows a 
multivariate approximately sparse linear model
with respect to $X_{A^c}$, where we require the coefficients
to be approximately linear, sparse and smooth (see Appendix \ref{app:c}).
For example, when the distribution of $X$ does not depend
on $U$ and the exact sparse linear model holds, 
Assumption \ref{assumption: lam} obviously holds. 

\begin{assumption}
\label{assumption:X}
  \textnormal{(Assumptions on X)} 
  We make the following assumptions on the covariate $X$: 
  \begin{itemize}
 \item Boundedness: there exists a constant $B_X$ such that with high probability,
 
   \[\max_i \norm{x_i}_{\infty} \leq B_X 
   \text{ and }
   \max_{j \in [p]} \sum_i w_i^2 x_{ij}^2 \leq \frac{B_X^2 B_K^2}{nh}.
   \]
  
\item Restricted eigenvalues: Consider the following cones
\begin{align*}\mathbb{C}(s_1) &= \{\theta:\norm{\theta}_0 \leq s_1 ~\text{and}~ \norm{\theta}_2 = 1 \}, ~\text{and} \\
\mathbb{C}(S_2) &= \{\Theta \in \R^{2k\times 2p}: \norm{\Theta_{S_2^c}}_{1,F} \leq 6 \norm{\Theta_{S_2}}_{1,F} ~\text{and}~\norm{\Theta}_{1,F} = 1\},\end{align*} 
where $S_2 = \left\{(i,j): \norm{\Vs_{(i,i+k),(j,j+p)}}_F > 0 \right\}$ is the support of $\Vs$
By Assumption \ref{assumption: lam}, $|S_2|\leq s_2$. 
We assume there exist $0 < \kappa_-^2 \leq \kappa_+^2 < \infty$
such that
\begin{align}\label{eqn:growth}
\kappa_-^2 &\leq \sum_{i \in [n]}w_i (\Gamma_i^\top \theta)^2 \leq \kappa_+^2 \quad \textnormal{ for all } \theta \in \mathbb{C}(s_1) ~\textnormal{and} \\
\kappa_{-}^2 &\leq \sum_{i \in [n]}  w_i\textnormal{trace}(\Theta^\top\Gamma_i\Gamma_i^\top \Theta) \leq \kappa_{+}^2 \quad \textnormal{ for all } \Theta \in \mathbb{C}(S_2).\end{align}

\item For some constant $\kappa_q > 0$,
\begin{equation}\label{eqn:growth2}
    \inf_{
\substack{\norm{\delta}_{1,2} \leq \frac{7|S'| \cdot \sqrt{\log p}}{\kappa_- \sqrt{nh}} \\ \EEn{w_i(\Gamma_i^\top\delta)^2} = \frac{|S'|\log p}{nh} }
}
      \frac{\rbr{ \underline{f} \cdot \sum_{i\in [n]}w_i\cdot\rbr{\Gamma_i^\top \delta}^2}^{3/2}}{\bar{f}' \cdot \sum_{i \in [n]}w_i\cdot\rbr{\Gamma_i^\top \delta}^3} \geq \kappa_q.
  \end{equation} 
  
\end{itemize}
\end{assumption}

\noindent The assumptions 
on the design $X$ are mild and commonly used in the literature on high-dimensional 
estimation and inference.
For example, boundedness and restricted eigenvalue condition 
was used in \cite{negahban2010unified}. 
The condition \eqref{eqn:growth2} is a mild growth condition, 
which is satisfied for many design matrices $X$, 
see \cite{Belloni2016Quantile} and \cite{Belloni2011penalized}.

Finally, we need the following growth condition.
\begin{assumption}\label{assumption:growth}
\textnormal{(Growth conditions)}
We assume 
\[
h \asymp O(n^{-1/3}), \quad
h_f \asymp O(n^{-1/3}), \quad\text{and}\quad
\rbr{nhh_f}^{-1/2}{s \log p \log (np)} = o(1).
\]
\end{assumption}

With these assumptions, we are ready to present our main results next.

\subsection{Consistency and sparsity results of the initial estimators}
\label{sec:con}

We establish
the asymptotic properties of the initial estimators in 
Step 1 and Step 2 from Section \ref{algorithm}.

\begin{theorem}
  \label{thm:b:consistent&sparse}
 Under Assumptions \ref{assumption:kernel}, \ref{assumption:u}, \ref{assumption:density}, \ref{assumption:qr}, \ref{assumption:X} and \ref{assumption:growth},  the estimator $\bh$ from Step 1 in Section \ref{algorithm} satisfies
\begin{align}
\EEn{w_i \cdot \rbr{\Gamma_i^\top \rbr{\bh - \bs}}^2} 
& \leq O_p\left(\frac{s\log (np)}{nh}\right), \label{eqn:thm_b_1} \\ 
\norm{\bh-\bs}_{1,2} & \leq O_p\rbr{s\sqrt{\frac{\log (np)}{nh}}}, ~\textnormal{and} \label{eqn:thm_b_2} \\
\norm{\bh}_{0,2} &\leq O_p(s),  \label{eqn:thm_b_3}
\end{align} 
where $\EEn{z_i} := \sum_{i \in [n]} z_i$, $\norm{b}_{1,2} := \sum_{i= 1}^p \sqrt{b_i^2+b_{i+p}^2} $, and $\norm{b}_{0,2}$ is defined as the $\ell_0$ norm of the vector $\left(\sqrt{b_1^2+b_{1+p}^2},\cdots, \sqrt{b_p^2+b_{2p}^2}\right)$.
\end{theorem}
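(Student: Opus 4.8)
The plan is to follow the standard two-step template for analyzing $\ell_1$-type penalized M-estimators with a convex loss, adapted to the group-penalized quantile objective with kernel weights. I would work with the empirical process $\mathbb{S}_n[\cdot]=\sum_{i\in[n]}(\cdot)$ throughout, and set $\delta=\bh^{\rm ini}-\bs$. The first key step is a \emph{basic inequality}: since $\bh^{\rm ini}$ minimizes \eqref{eqn:VCQR_bh}, comparing its objective value to that of $\bs$ yields
\[
\mathbb{S}_n\!\big[w_i\rho_\tau(y_i-\Gamma_i^\top\bh^{\rm ini})\big]+\lambda_b\norm{\bh^{\rm ini}}_{1,2}\le \mathbb{S}_n\!\big[w_i\rho_\tau(y_i-\Gamma_i^\top\bs)\big]+\lambda_b\norm{\bs}_{1,2}.
\]
Subtracting the linear term in the subgradient of the loss at $\bs$ and using convexity, this becomes a bound on the symmetrized Bregman divergence of the loss in terms of $\lambda_b\norm{\delta_{S'}}_{1,2}$ plus a stochastic term. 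Controlling the group-norm side requires the penalty to dominate the empirical score, i.e.\ $\lambda_b\ge 2\norm{\mathbb{S}_n[w_i\Gamma_i\Psi_\tau(y_i-\Gamma_i^\top\bs)]}_{\infty,2}$ with high probability; this is where the choice of $\lambda_b$ enters, and it follows from a union bound over the $p$ groups together with a Bernstein/self-normalized inequality for the weighted sums, using the boundedness of $w_i$, $x_{ij}$ (Assumption~\ref{assumption:X}), and $|\Psi_\tau|\le 1$, plus the approximation-error control $\epsilon_R$ from Assumption~\ref{assumption:qr} to handle that $\bs$ is only an approximate root of the expected score. On that event we obtain the cone condition $\norm{\delta_{S'^c}}_{1,2}\le 3\norm{\delta_{S'}}_{1,2}$ (up to lower-order terms from $\epsilon_R$ and the Taylor bias $\Delta_i$).

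The second key step is to \emph{lower bound} the symmetrized Bregman divergence by a quadratic form. For the quantile loss this is not automatic because $\rho_\tau$ is piecewise linear; the standard device is to pass to the expected loss, whose curvature at $\bs$ is governed by $H^\star=\sum w_i f_i(\fql)\Gamma_i\Gamma_i^\top$, and to control the deviation between the empirical and expected Bregman divergence by an empirical-process (localization) argument — essentially Knight's identity plus a one-sided uniform concentration bound over the localized set $\{\delta:\norm{\delta}_{1,2}\lesssim s\sqrt{\log(np)/(nh)}\}$. The resulting lower bound has the form $c\,\kappa_-^2\,\mathbb{S}_n[w_i(\Gamma_i^\top\delta)^2]$ minus a cubic correction term; the cubic term is exactly what Assumption~\ref{assumption:X} equation~\eqref{eqn:growth2} is designed to absorb, ensuring the quadratic term dominates on the relevant scale. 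Combining with the restricted eigenvalue condition \eqref{eqn:growth} (valid on the cone, since the cone condition reduces the effective dimension to $O(s_1)=O(s)$) converts the prediction-norm bound into the $\ell_{1,2}$ bound \eqref{eqn:thm_b_2}, and \eqref{eqn:thm_b_1} is read off directly.

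Finally, \eqref{eqn:thm_b_3} follows from the explicit group-thresholding in Step~1: a group $j$ survives only if $(\bh^{\rm ini}_j)^2+(\bh^{\rm ini}_{j+p})^2>\lambda_b^2$, so the number of surviving groups is at most $\norm{\bh^{\rm ini}-\bs}_{1,2}^2/\lambda_b^2$ plus $|S'|$; plugging in \eqref{eqn:thm_b_2} and the order of $\lambda_b$ gives $O_p(s)$ after using the growth condition in Assumption~\ref{assumption:growth} to control the $\log(np)$ factors. I expect the \textbf{main obstacle} to be the curvature lower bound in the second step: the combination of a non-differentiable loss, kernel localization (which shrinks the effective sample size to $nh$ and makes the weights $w_i$ highly non-uniform), and the linear-approximation bias $\Delta_i$ all enter the Bregman divergence simultaneously, so the localized empirical-process bound has to be carried out carefully and the cubic-term control via \eqref{eqn:growth2} has to be tracked with the correct $\delta$-dependence. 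The score-domination step is more routine but still needs care because the weighted score is not a sum of i.i.d.\ terms of comparable magnitude, forcing the use of a self-normalized or truncated Bernstein bound rather than a plain one.
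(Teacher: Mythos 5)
Your analysis of the un-thresholded minimizer $\bh^{\rm ini}$ follows essentially the same route as the paper: the basic inequality, a score-domination event justifying $\lambda_b$, a cone condition carrying an extra approximation-error term (from $\Delta_i$ and $\epsilon_R$), Knight's identity to split the loss increment, a curvature lower bound for the expected part whose cubic correction is absorbed exactly by \eqref{eqn:growth2}, a localized empirical-process bound for the fluctuation term, and the restricted eigenvalue condition to convert the prediction rate into the $\ell_{1,2}$ rate (the paper implements the localization as a proof by contradiction at a fixed radius $r_b$, which is the same device). So the core of the proposal is sound and matches the paper's Theorem in the appendix on $\bh^{\rm ini}$.

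The gaps are at the thresholding stage, i.e., in passing from $\bh^{\rm ini}$ to the estimator $\bh$ that the theorem actually concerns. First, your sparsity count is written incorrectly: a group $j\notin S'$ survives thresholding only if $\norm{\bh^{\rm ini}_{(j,j+p)}}_2>\lambda_b$ while $\bs_{(j,j+p)}=0$, so Markov's inequality with the group-$\ell_1$ norm gives at most $\norm{(\bh^{\rm ini}-\bs)_{S'^c}}_{1,2}/\lambda_b=O_p(s)$ such groups; your expression $\norm{\bh^{\rm ini}-\bs}_{1,2}^2/\lambda_b^2$ evaluates to $O_p(s^2)$, and no growth condition repairs that (the fix is trivial, but the formula as stated fails). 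Second, you never address \eqref{eqn:thm_b_1} and \eqref{eqn:thm_b_2} for the thresholded $\bh$ itself. The $\ell_{1,2}$ transfer only costs an extra term of order $s\lambda_b$, but the prediction-norm transfer is not immediate: $\bh-\bh^{\rm ini}$ is supported on the zeroed-out groups, each of norm at most $\lambda_b$ but possibly numbering up to $p$, so it is neither $s_1$-sparse nor in a cone, and the quadratic form cannot be bounded by a direct appeal to Assumption \ref{assumption:X}. The paper closes this with a block decomposition: sort the zeroed groups by magnitude into blocks of size $cs$, apply the sparse-direction bound $\kappa_+$ over $\mathbb{C}(s_1)$ to each block, and bound the trailing blocks by $\norm{\bh-\bh^{\rm ini}}_{1,2}/\sqrt{cs}$, yielding the rate $O_p\rbr{\sqrt{s\log(np)/(nh)}}$ for the increment. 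Some argument of this kind is needed and is missing from your sketch.
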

 
Theorem \ref{thm:b:consistent&sparse} gives us convergence results 
regarding the $\ell_{1,2}$-penalized quantile regression estimator 
from \eqref{eqn:VCQR_bh}. In particular, \eqref{eqn:thm_b_1} gives
the rate of convergence rate of the prediction $\ell_2$-norm, 
\eqref{eqn:thm_b_2} gives the $\ell_{1,2}$-norm of the error, 
and \eqref{eqn:thm_b_3} gives the sparsity of $\bh$ in Step 1 of Section \ref{algorithm}. 
Both Theorem \ref{lem:qr_fixed:lasso:rate_v} and 
Theorem \ref{thm:normality} rely on these conditions. 
The extra growth condition in Assumption \ref{assumption:X} 
is mild. Specifically, with the penalty parameter
$\lambda_b \asymp O\rbr{\sqrt{\frac{\log p}{nh}}}$
the assumption is satisfied. The sparsity here is achieved
by truncating the small components in $\bh^{\rm ini}$ 
to zero, while maintaining the same rate of convergence.    
 
 \begin{theorem}
  \label{lem:qr_fixed:lasso:rate_v}
 Suppose the assumptions for Theorem \ref{thm:b:consistent&sparse} hold 
 and the estimator $\bh$ obtained in Step  1
 satisfies \eqref{eqn:thm_b_1}, \eqref{eqn:thm_b_2}, and \eqref{eqn:thm_b_3}.
 Furthermore, suppose that $\lambda_V \geq 2\lambda^{\star}$
 and Assumption \ref{assumption: lam} holds for $\lambda^{\star}$. 
 Then $\Vh$ from Step 2 satisfies 
\begin{align}
     \norm{\Vh -\Vs}_F &\leq O_p\rbr{B_V\sqrt{\frac{s\log (np)}{nhh_f}}}
     \quad \text{and} \label{eqn:thm_v_1} \\
     \norm{\Vh -\Vs}_{1,F} &\leq O_p\rbr{sB_V\sqrt{\frac{\log(np)}{nhh_f}}},
     \label{eqn:thm_v_2}
 \end{align} 
 where $\norm{V}_{1,F} := \sum_{i \in [k], j \in [p]} \norm{V_{(i,i+k),(j,j+p)}}_F$.
\end{theorem}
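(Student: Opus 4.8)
\textbf{Proof plan for Theorem~\ref{lem:qr_fixed:lasso:rate_v}.}
The plan is to read \eqref{eq:lasso_fixed:opt} as a group-Lasso problem: $\Vh$ minimizes a smooth convex quadratic in $V$ plus the block-Frobenius penalty $\norm{V}_{1,F}$, whose $kp$ blocks of size $2\times 2$ partition the $2k\times 2p$ matrix, so that $\norm{\cdot}_{\infty,F}$ is the dual of $\norm{\cdot}_{1,F}$ under the trace inner product. I would then run the standard oracle-inequality argument. Writing $\Delta=\Vh-\Vs$, optimality of $\Vh$ together with $\Hh\succeq 0$ and the block H\"older inequality $|\trace(M^{\top}N)|\le\norm{M}_{\infty,F}\norm{N}_{1,F}$ gives the basic inequality
\[
\tfrac12\,\trace\big(\Delta\Hh\Delta^{\top}\big)\le\norm{\Vs\Hh-E_a}_{\infty,F}\,\norm{\Delta}_{1,F}+\lambda_V\big(\norm{\Vs}_{1,F}-\norm{\Vh}_{1,F}\big).
\]
The whole argument reduces to showing the effective noise satisfies $\norm{\Vs\Hh-E_a}_{\infty,F}\le\lambda_V/2$ with probability tending to one; once that holds, decomposing $\norm{\cdot}_{1,F}$ over the support $S_2$ of $\Vs$ (with $|S_2|\le s_2=c_2 s$ by Assumption~\ref{assumption: lam}) forces $\Delta$ into the cone $\mathbb{C}(S_2)$ of Assumption~\ref{assumption:X}, i.e.\ $\norm{\Delta_{S_2^c}}_{1,F}\le 3\norm{\Delta_{S_2}}_{1,F}$, whence $\norm{\Delta}_{1,F}\le 4\sqrt{s_2}\,\norm{\Delta}_F$.

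The crux is the noise bound $\norm{\Vs\Hh-E_a}_{\infty,F}=O_p\big(B_V\sqrt{\log(np)/(nhh_f)}\big)$. First I would split $\Vs\Hh-E_a=(\Vs H^{\star}-E_a)+\Vs(\Hh-H^{\star})$: the first summand is at most $\lambda^{\star}=O\big(B_V\sqrt{\log p/(nhh_f)}\big)$ deterministically by Assumption~\ref{assumption: lam}, hence at most $\lambda_V/2$ since $\lambda_V\ge 2\lambda^{\star}$. For the second, using $\Hh-H^{\star}=\sum_i w_i\big(\hat{f}_i-f_i(\qt_i)\big)\Gamma_i\Gamma_i^{\top}$, I would write $\hat{f}_i-f_i(\qt_i)=\big(\hat{f}_i-\EEst{\hat{f}_i}{X_i,U_i,\bh}\big)+\big(\EEst{\hat{f}_i}{X_i,U_i,\bh}-f_i(\Gamma_i^{\top}\bh)\big)+\big(f_i(\Gamma_i^{\top}\bh)-f_i(\qt_i)\big)$: the middle term is a finite-bandwidth smoothing bias of order $\bar{f}'h_f$ (Assumption~\ref{assumption:density}, $h_f\asymp n^{-1/3}$); the last is at most $\bar{f}'|\Gamma_i^{\top}(\bh-\bs)|$, and $\sum_i w_i|\Gamma_i^{\top}(\bh-\bs)|$ is controlled via Cauchy--Schwarz by the prediction bound \eqref{eqn:thm_b_1} of Theorem~\ref{thm:b:consistent&sparse}; and the first is a weighted sum of conditionally mean-zero bounded variables to which I would apply Bernstein's inequality blockwise, made uniform over the small $\ell_{1,2}$-ball containing $\bh$ (by \eqref{eqn:thm_b_2}) through a chaining argument that exploits that the band indicators $\One{|y-\Gamma^{\top}b|\le h_f}$ form a bounded-complexity (VC) class, with variance proxy of order $B_V^2/(nhh_f)$ and per-term magnitude of order $B_V/(nhh_f)$ (using $\max_i\norm{\Vs\Gamma_i}_2\le B_V$, $\max_i\norm{x_i}_{\infty}\le B_X$ and $\max_j\sum_i w_i^2 x_{ij}^2\le B_X^2B_K^2/(nh)$ from Assumption~\ref{assumption:X}), followed by a union bound over the $\le kp$ blocks. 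The growth conditions $(nh)^{-1}sB_V^2\log p=o(1)$ and $\log(B_V^2h_fh)=o(\log p)$ of Assumption~\ref{assumption: lam}, together with $(nhh_f)^{-1/2}s\log p\log(np)=o(1)$ of Assumption~\ref{assumption:growth}, are exactly what keeps all three pieces $o(\lambda_V)$ and accounts for the $\log(np)$ factor replacing $\log p$.

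Granting the noise bound, I would finish deterministically. The same estimate $\norm{\Hh-H^{\star}}_{\infty,F}=o(1)$ transfers the restricted eigenvalue condition from $H^{\star}$ --- which inherits it from the design condition of Assumption~\ref{assumption:X} since $H^{\star}\succeq\underline{f}\sum_i w_i\Gamma_i\Gamma_i^{\top}$ --- to $\Hh$ on $\mathbb{C}(S_2)$, giving $\trace(\Delta\Hh\Delta^{\top})\gtrsim\underline{f}\kappa_-^2\norm{\Delta}_F^2$ for $\Delta$ in the cone direction. Combining this with the basic inequality and $\norm{\Delta}_{1,F}\le 4\sqrt{s_2}\,\norm{\Delta}_F$ yields $\underline{f}\kappa_-^2\norm{\Delta}_F^2\lesssim\lambda_V\sqrt{s_2}\,\norm{\Delta}_F$, so that $\norm{\Delta}_F=O_p(\sqrt{s_2}\,\lambda_V)$ and, feeding this back, $\norm{\Delta}_{1,F}=O_p(s_2\lambda_V)$; with $\lambda_V$ of the effective-noise order $B_V\sqrt{\log(np)/(nhh_f)}$ these are exactly \eqref{eqn:thm_v_1} and \eqref{eqn:thm_v_2}.

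\textbf{Main obstacle.} The delicate step is controlling $\norm{\Vs(\Hh-H^{\star})}_{\infty,F}$. The quantity $\hat{f}_i=\One{|y_i-\Gamma_i^{\top}\bh|\le h_f}/(2h_f)$ is two-valued per observation, so there is no pointwise consistency and the entire bound rests on the concentration of a weighted sum of indicators that additionally depend on the random estimate $\bh$. Coupling the empirical-process control over $\bh$ (via Theorem~\ref{thm:b:consistent&sparse}) with a Bernstein/chaining bound that carries both the $B_V$ factor from $\norm{\Vs\Gamma_i}_2$ and the $1/h_f$ blow-up, while still landing at the rate $B_V\sqrt{\log(np)/(nhh_f)}$, is where essentially all of the work lies; the remaining manipulations are routine Lasso algebra.
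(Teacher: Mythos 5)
Your plan is correct in outline and shares the backbone of the paper's argument: the basic inequality from optimality of $\Vh$ in \eqref{eq:lasso_fixed:opt}, the decomposition of the effective noise into the deterministic part $\norm{H^\star\Vs-E_a}_{\infty,F}\le\lambda^{\star}$ from Assumption \ref{assumption: lam}, an $O(\bar f' h_f)$ smoothing bias, an estimation bias controlled through \eqref{eqn:thm_b_1}--\eqref{eqn:thm_b_2}, and a Bernstein/VC empirical-process bound for the indicator-based $\hat f_i$ made uniform over sparse $\db=\bh-\bs$ (this is exactly the content of the paper's Lemmas \ref{lem:V:qr_fixed:bound_approx}--\ref{lem:5}), followed by cone membership and restricted strong convexity to reach $\norm{\dv}_F\lesssim\sqrt{s_2}\,\lambda_V$ and $\norm{\dv}_{1,F}\lesssim s_2\lambda_V$. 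You deviate in two places. First, you fold everything into a single dual-norm bound $\norm{\Vs\Hh-E_a}_{\infty,F}\le\lambda_V/2$ and run the one-shot cone argument; this implicitly requires $\lambda_V$ to dominate the stochastic term $\norm{\Vs(\Hh-H^\star)}_{\infty,F}$ as well, whereas the theorem only assumes $\lambda_V\ge2\lambda^{\star}$, and the paper's two-case analysis (comparing $\tfrac{3\lambda_V}{2}\norm{(\dv)_{S}}_{1,F}$ with $\norm{\dv}_F\cdot D$, where $D\asymp\lambda^{\star}$) is precisely the device that avoids strengthening that hypothesis; since $D$ is of the same order as $\lambda^{\star}$ your shortcut costs nothing in rate, but you should either state the extra requirement on $\lambda_V$ or reproduce the case split. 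Second, for restricted strong convexity of $\Hh$ you transfer the RE condition from $H^\star$ via a max-block-norm perturbation, which needs $s_2\,\max_{j,l}\norm{(\Hh-H^\star)_{(j,l)\text{-block}}}_F=o_p(1)$ uniformly over the sparse $\ell_{1,2}$-ball containing $\bh$ (remember $\Hh$ is random through $\bh$, so the same uniformity you invoke for the noise is needed here too); this does go through under Assumptions \ref{assumption: lam} and \ref{assumption:growth}, but the paper instead lower-bounds $\trace\rbr{\dv^\top\Hh(\db)\dv}$ directly by a uniform deviation bound over sparse directions (Lemma \ref{lem:qr_fixed:lasso:quadratic_lower_bound:deviation}) combined with the Sun--Zhang transfer principle (Lemma \ref{Cunhui Zhang}), which is sharper in its dependence on $s$ and avoids the union bound over all $p^2$ blocks. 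Neither difference is a gap, but the uniformity over $\bh$ in the RE step and the implicit lower bound on $\lambda_V$ are the two points you must make explicit to turn the plan into a complete proof.
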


Theorem \ref{lem:qr_fixed:lasso:rate_v} gives the convergence rate of 
the $\ell_{1,2}$-norm and prediction $\ell_2$-norm of $\Vh$ in Step 2
of Section \ref{algorithm}. Because the $\widehat{H}$ in 
the objective function relies on the estimator $\bh$, 
both the convergence and sparsity results 
from Theorem \ref{thm:b:consistent&sparse} are needed.
 
\subsection{Normality result of the final estimators}
\label{sec:norm}

We state the asymptotic normality result for the one step estimator.

\begin{theorem}
  \label{thm:normality}
\textbf{(Normality for the one-step estimator)}
Assume that Assumptions \ref{assumption:kernel}~--~\ref{assumption:growth} hold and \eqref{eqn:thm_b_1}, \eqref{eqn:thm_b_2}, \eqref{eqn:thm_b_3}, \eqref{eqn:thm_v_1}, \eqref{eqn:thm_v_2} hold. Then the one step estimator defined in (\ref{eqn:one step est}) satisfies 
\begin{align*}
\sqrt{nh}\widehat{\Sigma}_a^{-1/2}(\check{a}^{OS}-a^{\star})\rightarrow_d \mathcal{N}(0, I_{2k}),
\end{align*}
where the covariance matrix $\widehat{\Sigma}_a$
is estimated as 
either 
\begin{align}
\widehat{\Sigma}_a 
&=nh\Vh\left\{\sum_i w_i^2  \Gamma_i\Psi_{\tau}(y_i-\Gamma_i^\top \bh)\Psi_{\tau}(y_i-\Gamma_i^\top\bh)^\top \Gamma_i^\top\right\} \Vh^{\top} \label{eqn:cov1} \\
\intertext{or}
\widehat{\Sigma}_a &=\tau(1-\tau)\nu_2 \Vh\left\{\sum_j w_j\Gamma_j\Gamma_j^\top\right\}\Vh^\top.
\label{eqn:cov2}
\end{align}
\end{theorem}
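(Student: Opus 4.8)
The one‑step estimator is $\ac^{OS}=\ah-S_d(\bh,\Vh)=\ah+\sum_{i\in[n]}w_i\Vh\Gamma_i\Psi_\tau(y_i-\Gamma_i^\top\bh)$, where $\ah$ is the subvector of the pilot $\bh$ indexing the parameter of interest, so $E_a(\bh-\bs)=\ah-\as$. The plan is to linearize the correction term about the population quantity $\bs$, exploiting that the decorrelation direction $\Vs$ is an approximate (row‑)inverse of $H^\star$, so that the first‑order nuisance contribution cancels the pilot error. Introduce the centered score contribution $g_i(b):=\Psi_\tau(y_i-\Gamma_i^\top b)-\rbr{\tau-F_i(\Gamma_i^\top b)}$, where $F_i$ is the conditional c.d.f.\ of $Y_i$ given $(x_i,u_i)$, so that $\EE{g_i(b)\mid x_i,u_i}=0$ for each fixed $b$ and $\ES(b)=\sum_iw_i\Gamma_i\rbr{\tau-F_i(\Gamma_i^\top b)}$. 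Adding and subtracting $g_i(\bs)$, Taylor‑expanding the smooth part $\tau-F_i(\Gamma_i^\top\bh)$ about $\Gamma_i^\top\bs=\qt_i$, and using $\sum_iw_if_i(\qt_i)\Gamma_i\Gamma_i^\top=H^\star$ together with $\Vh H^\star=E_a+(\Vs H^\star-E_a)+(\Vh-\Vs)H^\star$, one gets
\[
\ac^{OS}-\as=\sum_{i\in[n]}w_i\,\Vs\Gamma_i\,g_i(\bs)\;+\;R_1+R_2+R_3+R_4+R_5+R_6,
\]
with $R_1=-(\Vs H^\star-E_a)(\bh-\bs)$, $R_2=-(\Vh-\Vs)H^\star(\bh-\bs)$, $R_3=\sum_iw_i(\Vh-\Vs)\Gamma_i g_i(\bs)$, $R_4=-\tfrac12\sum_iw_i\Vh\Gamma_i f_i'(\xi_i)\rbr{\Gamma_i^\top(\bh-\bs)}^2$ (with $\xi_i$ between $\qt_i$ and $\Gamma_i^\top\bh$), $R_5=\Vh\,\ES(\bs)$ (the systematic bias of the local‑linear quantile fit), and $R_6=\sum_iw_i\Vh\Gamma_i\rbr{g_i(\bh)-g_i(\bs)}$ (the stochastic‑equicontinuity term).

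\textbf{Controlling the remainders.} $R_1$ is $O_p\rbr{\lambda^{\star}\norm{\bh-\bs}_{1,2}}$ by Assumption~\ref{assumption: lam} and~\eqref{eqn:thm_b_2}; $R_2$ and $R_3$ are products of the small factor $\norm{\Vh-\Vs}_{1,F}=O_p\rbr{sB_V\sqrt{\log(np)/(nhh_f)}}$ from~\eqref{eqn:thm_v_2} with, respectively, a bound built from~\eqref{eqn:thm_b_1} and a Bernstein bound for the weighted mean‑zero vector $\sum_iw_i\Gamma_i g_i(\bs)$ (whose entrywise size is $O_p\rbr{B_XB_K\sqrt{\log p/(nh)}}$ under Assumption~\ref{assumption:X}); $R_4$ is $O_p\rbr{\bar f'B_V\,\EEn{w_i(\Gamma_i^\top(\bh-\bs))^2}}=O_p\rbr{\bar f'B_V s\log(np)/(nh)}$ by~\eqref{eqn:thm_b_1} and $\max_i\norm{\Vh\Gamma_i}_2\lesssim B_V$; and $R_5$ is controlled through the defining property of $\bs$ together with the smoothness constant $B_\beta$ and approximation error $\epsilon_R$ of Assumption~\ref{assumption:qr} and the kernel moments of Assumption~\ref{assumption:kernel}. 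Multiplying through by $\sqrt{nh}$, each of these is $o_p(1)$ under the growth conditions $(nh)^{-1}sB_V^2\log p=o(1)$ of Assumption~\ref{assumption: lam} and $(nhh_f)^{-1/2}s\log p\log(np)=o(1)$ of Assumption~\ref{assumption:growth}, together with $h\asymp n^{-1/3}$ (which makes $\sqrt{nh}\,h^2=o(1)$, so the local‑linear bias is negligible).

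\textbf{CLT and covariance consistency.} For the leading term I apply a Lindeberg--Feller central limit theorem, conditionally on $\cbr{(x_i,u_i)}_{i\in[n]}$, to the triangular array $\cbr{\sqrt{nh}\,w_i\Vs\Gamma_i g_i(\bs)}_{i\in[n]}$: the summands are conditionally independent, mean zero, and uniformly bounded by $2\nu_0B_V/\sqrt{nh}=O\rbr{(nh)^{-1/2}\log p}\to0$, so the Lindeberg condition holds; the conditional covariances sum to $nh\sum_iw_i^2\,F_i(\qt_i)(1-F_i(\qt_i))\,\Vs\Gamma_i\Gamma_i^\top\Vs^\top$, which by $F_i(\qt_i)=\tau+O(\Delta_i)\to\tau$ and a Nadaraya--Watson‑type law of large numbers ($nh\sum_iw_i^2\phi(x_i,u_i)\to\nu_2f_U(u)\EE{\phi(X,U)\mid U=u}$) converges to $\Sigma_a:=\tau(1-\tau)\nu_2f_U(u)\EE{\Vs\Gamma\Gamma^\top\Vs^\top\mid U=u}$, which is positive definite with bounded eigenvalues by Assumption~\ref{assumption:X}. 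Hence $\sqrt{nh}(\ac^{OS}-\as)\rightarrow_d\mathcal N(0,\Sigma_a)$. Consistency of the plug‑in estimators~\eqref{eqn:cov1} and~\eqref{eqn:cov2} for $\Sigma_a$ follows from $\bh\rightarrow_p\bs$ and $\Vh\rightarrow_p\Vs$ (Theorems~\ref{thm:b:consistent&sparse} and~\ref{lem:qr_fixed:lasso:rate_v}) and the same weighted law of large numbers, with~\eqref{eqn:cov1} the robust sandwich form and~\eqref{eqn:cov2} its simplification using that $\Psi_\tau^2(y_i-\qt_i)$ has conditional mean $\approx\tau(1-\tau)$; a Slutsky argument then yields $\sqrt{nh}\,\widehat\Sigma_a^{-1/2}(\ac^{OS}-\as)\rightarrow_d\mathcal N(0,I_{2k})$.

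\textbf{Main obstacle.} The crux is the stochastic‑equicontinuity term $R_6$. Since $\Psi_\tau$ is non‑differentiable, $g_i(b)$ cannot be expanded in $b$; instead I would show $\sup_{b\in\mathcal B}\Norm{\sum_iw_i\Vs\Gamma_i\rbr{g_i(b)-g_i(\bs)}}=o_p\rbr{(nh)^{-1/2}}$ uniformly over the effective neighborhood $\mathcal B$ of $\bs$ — a sparse $\ell_{1,2}$‑ball of radius $\asymp s\sqrt{\log(np)/(nh)}$ intersected with the prediction‑norm ball of radius $\asymp\sqrt{s\log(np)/(nh)}$ guaranteed by Theorem~\ref{thm:b:consistent&sparse} — via a maximal inequality (peeling and chaining, in the spirit of the high‑dimensional quantile regression literature) that exploits $\mathrm{Var}\rbr{g_i(b)-g_i(\bs)\mid x_i,u_i}\lesssim\bar f\,\abr{\Gamma_i^\top(b-\bs)}$, the contribution with $\Vh$ in place of $\Vs$ being then absorbed using~\eqref{eqn:thm_v_2}. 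Adapting this empirical‑process bound to the kernel‑weighted, $2p$‑dimensional $\Gamma_i$ design, and bookkeeping the interaction of the three bias sources (the non‑smooth loss through $R_6$, the group‑lasso bias through $\norm{\bh-\bs}_{1,2}$, and the local‑linear bias through $R_5$) so that each is genuinely $o_p\rbr{(nh)^{-1/2}}$ rather than merely $o_p(1)$, is the technical heart of the argument.
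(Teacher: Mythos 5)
Your proposal follows essentially the same route as the paper's proof: the paper likewise expands the one-step correction around $(\bs,\Vs)$, uses the approximate-inverse property $\norm{\Vs H^\star - E_a}_{\infty,F}\leq\lambda^\star$ together with \eqref{eqn:thm_b_2} to cancel the pilot error (your $R_1$, $R_4$), controls the replacement of $\Vs$ by $\Vh$ via H\"older and \eqref{eqn:thm_v_2} (your $R_2$, $R_3$), proves exactly the stochastic-equicontinuity bound you sketch for $R_6$ by a VC-class maximal inequality over sparse $\ell_{1,2}$-neighborhoods (its Lemma \ref{lem:os11}), shows the local-linear/approximation bias is negligible (its Lemma \ref{lem:os3}, your $R_5$), and finishes with a Lindeberg CLT, consistency of both covariance estimators, and Slutsky. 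The only differences are bookkeeping: the paper telescopes through differences of $S_d$ and applies the CLT to the score centered at the true quantile $q_i$, whereas you center at $\tilde q_i$ via $g_i(\bs)$ and absorb the systematic bias into $R_5$ — an equivalent reorganization.
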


Theorem \ref{thm:normality} tells us that the one 
step estimator is $\sqrt{nh}$-consistent. 
The covariance 
\eqref{eqn:cov2} is the expected version of \eqref{eqn:cov1},
where \eqref{eqn:cov1} comes from the central limit theorem. 
The estimators from the decorrelated score ($\ac^{DS}$) and 
reparameterization ($\ac^{RP}$) 
are both asymptotically equivalent to $\ac^{OS}$; 
the detailed proof is in the appendix.


\section{Numerical studies}

Through an empirical study,
we investigate the finite sample performance 
of our confidence interval construction approach
and show that it works under high-dimensional 
settings and is robust to different error distributions.

For each individual, the data is generated 
independently and identically distributed from the 
following distribution of $\{U,X_1, X_{-1},\epsilon,Y\}$. 
First, we generate the index variables as
$U\sim \text{Unif} [0,2]$ and
the confounding variables as
\[
X_{-1} \mid U\sim \mathcal{N}(\mu(U),\Sigma(U)),
\]
where $\mu_j(U)=a_0 \cdot j \cdot (U^{a_1}-1)$ and 
$\Sigma(U)$ is an autoregressive (AR) covariance
with elements $\Sigma(U)_{i,j} = \rho(U)^{|i-j|}$ and
the parameter 
$\rho(U)=\rho^{1+b_0 (U^{b_1}-1)}$. Note that 
when $a_1=b_1=0$ we have 
a model where the nuisance covariates $X_{-1} \mid U$ are homogeneous
and do not depend on the index variable $U$.
We then generate $X_1$ and $Y$.
Let $\nu \in \R^{p-1}$ 
with 
$\nu_{j-1} = 1/j^2$, $j = 2,\cdots, p$,
and $\beta = (\frac{1}{2}, c_y \nu^{\top})^{\top}$. 
Then 
\begin{align*} X_1 &=X_{-1} (c_x \nu) + \epsilon_{x}, ~\text{where}~ \epsilon_x\sim \mathcal{N}(0,1) ~\text{and is independent 
of}~ (X_{-1},U),\\
Y &= X \beta(U) + \epsilon, \quad \text{where}\quad \beta(U) = \beta(c_0U^{c_1} + 1-c_0),\\
\text{and}~
& \epsilon \mid X,U\sim \sigma_e(U) \cdot F_e \cdot\sqrt{(2-\gamma+\gamma \cdot X_1^2)/2}.
\end{align*}
Note that $\epsilon$ is allowed to depend on $X$. 
In particular, $\gamma=0$ leads to a homogeneous setting 
and $\gamma=1$ leads to a heterogeneous setting. 
Here $\sigma_e(U)=\sigma_e(1+d_0 (U^{d_1}-1))$ 
and $F_e$ is either the standard Gaussian or 
$t$ distribution with 3 degrees of freedom ($t(3)$). 

Specially, this data generation process leads to the following quantiles:
\[ 
q(x;\tau,u)  = x\beta(u)+\sigma_e(u)\cdot\sqrt{(2-\gamma+\gamma \cdot x_1^2)/2}\cdot q_e(\tau),
\] 
where $q_e(\tau)$ is the $\tau$-th quantile of the distribution $F_e$.
In this simulation, we are interested in the inference for
$\beta_1(\tau,u) \in \R^1$ at the point $(\tau,u) = (0.5,1)$. 
At this point, $q(x;0.5,1)=x\beta(1)$ satisfy Assumption \ref{assumption: lam}. 
 
The coefficients $c_x$ and $c_y$ are used to control the $R^2$ in different regression equations. 
We use $R^2_y$ to denote the  $R^2$ in
the equation
$Y-X_1\beta_1(U) = X_{-1} \beta_{-1}(U) + \epsilon$,
while $R^2_x$ denotes the  $R^2$ in
the equation $X_1 = X_{-1} (c_x \nu) + \epsilon_{x}$.
We vary the parameters and choose $c_y$, $c_x$ to
form different combinations of $(R^2_y, R^2_x)$. 
Details can be found in Appendix \ref{app:sim}. 

We evaluate the performance of our algorithms described in 
Section \ref{algorithm} (DS \eqref{eqn:DS est}, OS \eqref{eqn:one step est} and RP \eqref{eqn:RP est}) and
compare them with the Oracle and the Naive methods. 
For the oracle method, we assume that the true (low dimensional) set of predictors is known 
in advance and our inference is based on the kernel weighted quantile
regression on the true set of variables. For the Naive method, 
we fit the kernel weighted penalized regression as in Step 1. 
Then we fit the post-regularized regression and do the inference treating 
the set $\hat S = \cbr{ j : \hat{\beta}_j\neq 0}$ as fixed. We compare their 
performance from $M=100$ simulations in terms of the bias, 
empirical standard deviation (SD), 
the expected estimated standard error  (ESE), and 
coverage rate for the 95\% nominal confidence intervals (CR).

\begin{table} 
\begin{center}
\begin{tabular}{c|cccccc}
\hline
$\epsilon$ distribution &$\gamma$&Method & Bias & SD & ESE & CR\\
\hline
\multirow{10}{*}{Normal}&\multirow{5}{*}{0}&One Step & 0.022 & 0.063 & 0.076 & 0.97\\
&&Decorrelated score& 0.057 & 0.139 & 0.124 & 0.92\\
&&Reparameterization & 0.045 & 0.055 & 0.076 & 0.97\\
&&Naive & 0.274 & 0.047 & 0.045 & 0.41\\
&&Oracle & -0.004 & 0.062 & 0.067 & 0.95\\
\cline{2-7}
&\multirow{5}{*}{1}&One Step & 0.093 & 0.149 & 0.190 & 0.96\\
& &Decorrelated score& 0.174 & 0.134 & 0.297& 0.99\\
& &Reparameterization & 0.148 & 0.142 & 0.190 & 0.97\\
& &Naive & 0.246& 0.120 & 0.150 & 0.62\\
& &Oracle & -0.034 & 0.133 & 0.140 & 0.94\\
\hline
\multirow{10}{*}{$t(3)$}&\multirow{5}{*}{0}&One Step & 0.022 & 0.083 & 0.084 & 0.96\\
& &Decorrelated score& 0.077 & 0.112 & 0.145& 0.95\\
& &Reparameterization & 0.054 & 0.071 & 0.084 & 0.93\\
& &Naive & 0.249 & 0.075 & 0.057 & 0.10\\
& &Oracle & -0.020 & 0.042 & 0.051 & 0.95\\
\cline{2-7}
&\multirow{5}{*}{1}&One Step & 0.090 & 0.193 & 0.212 & 0.95\\
& &Decorrelated score& 0.185 & 0.163 & 0.313& 0.98\\
& &Reparameterization & 0.167 & 0.173 & 0.212 & 0.94\\
& &Naive & 0.234& 0.156 & 0.186 & 0.73\\
& &Oracle & -0.01 & 0.126 & 0.164 & 0.98\\
\hline
\end{tabular}
\end{center}
\caption{Simulation results for the correlation setting $(R^2_x, R^2_y) = (0.7,0.3)$.}
\label{table1}
\end{table}

\begin{figure}\label{fig:sim-CR}
\includegraphics[scale=0.47]{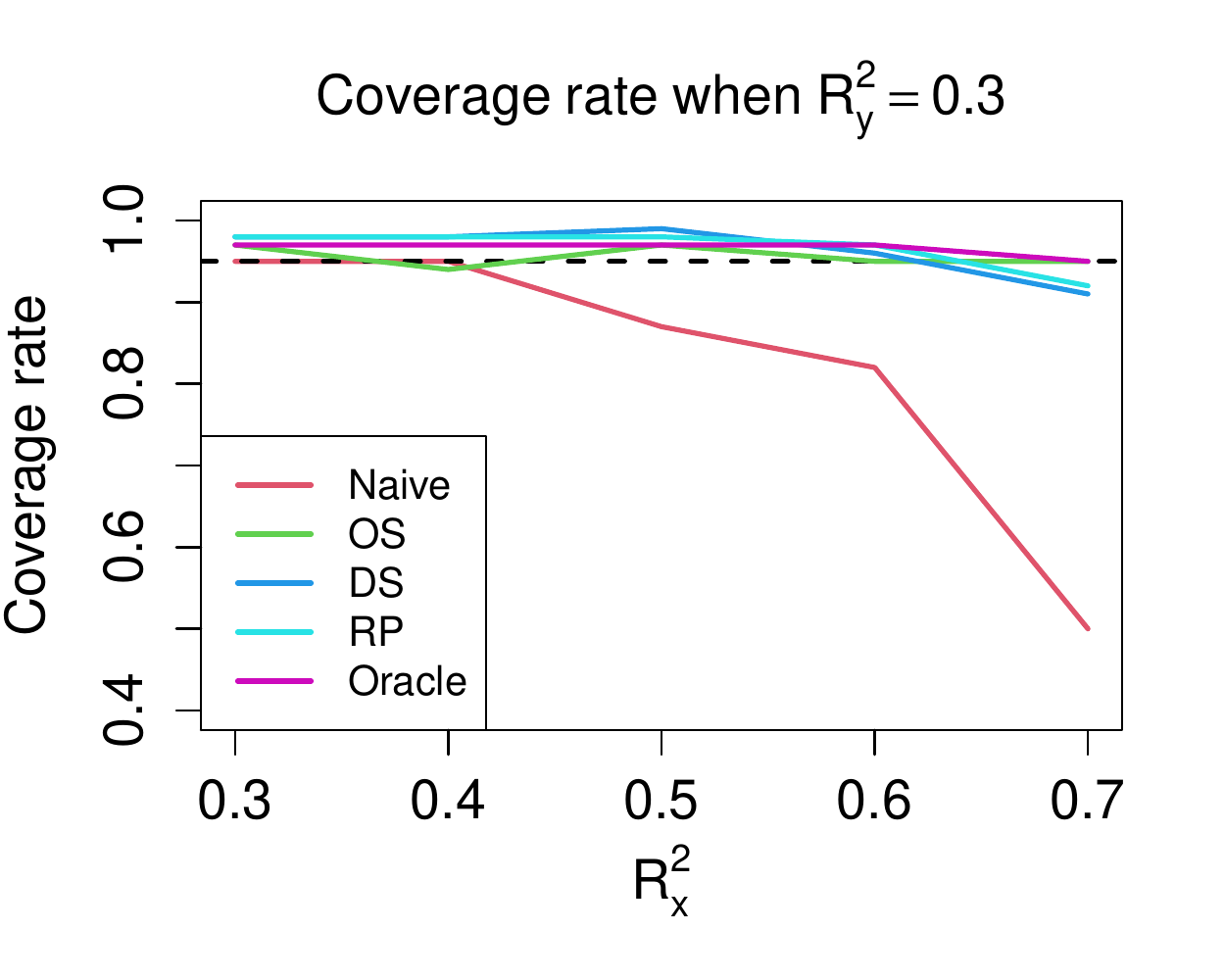}
\includegraphics[scale=0.47]{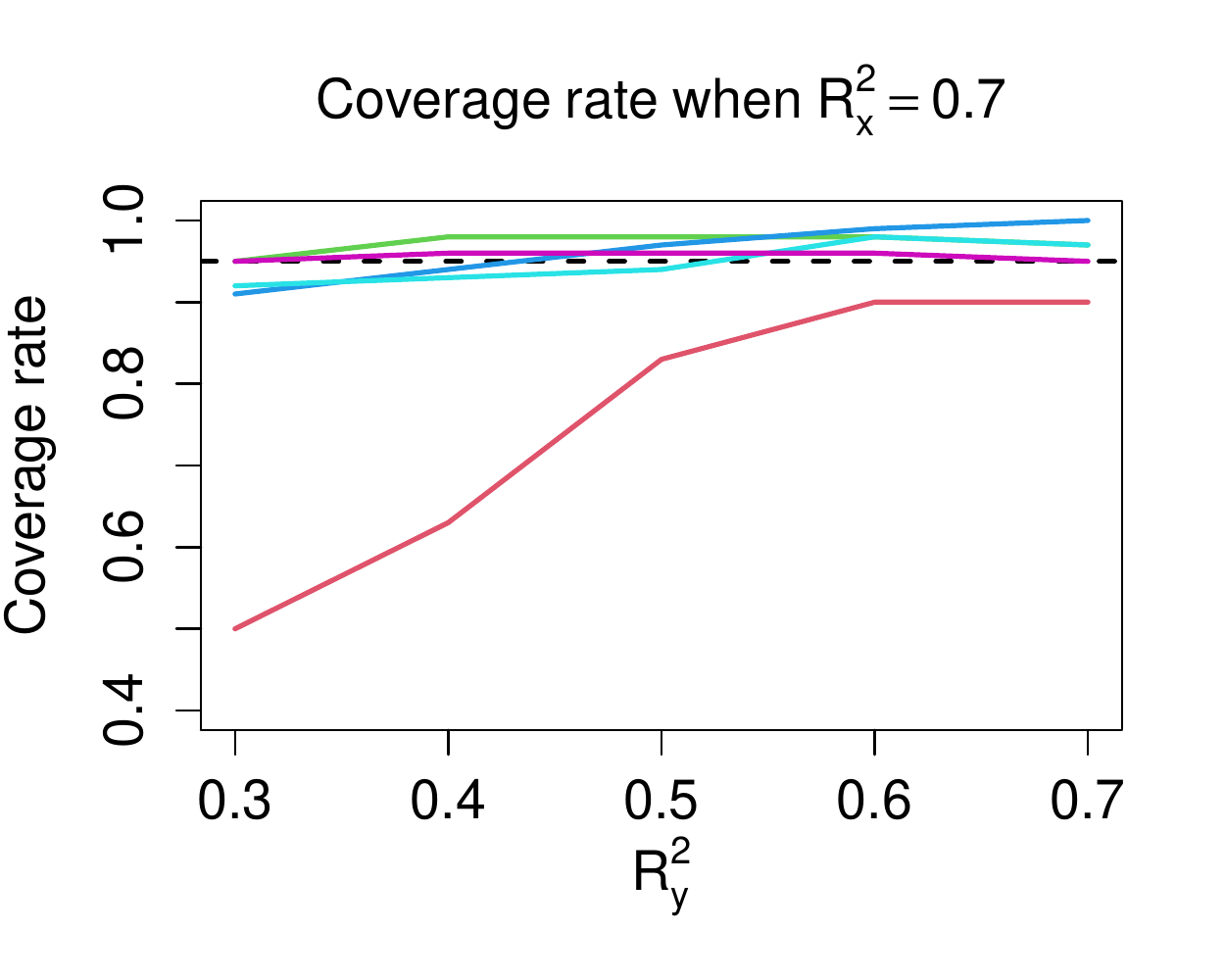}
\caption{Left: CR of different methods with $R^2_y$ fixed at 0.3 and 
changing $R^2_x$. Right: CR of different methods with $R^2_x$ fixed at 0.7 and 
changing $R^2_y$. In the simulation, the error term $\epsilon$ 
is normally distributed and $\gamma=0$. }
\label{fig:simulation}
\end{figure}

The simulation results for two settings
with normally distributed and $t(3)$-distributed $\epsilon$'s are listed in 
Table \ref{table1}. Additional simulation results are presented in 
Table \ref{table2} in Appendix \ref{app:sim}. 
From the simulation, the oracle method consistently 
produce confidence intervals with coverage rate close to
the nominal value $95\%$ in all simulation settings. 
The Naive estimator has some significant bias when $R^2_y$ 
is small and $R^2_x$ is large. Furthermore, without 
any correction, the confidence intervals tend to have significantly 
lower coverage than the nominal value.
The OS, DS, and PR estimators have relatively low bias 
compared to the naive method in all settings; also,
their coverage rates are closer to the nominal value than the naive method. 

We plot the trend of coverage rate
for all methods with the change of $R^2_y$ and $R^2_x$ in 
Figure \ref{fig:simulation} to better understand the performance
of the different methods as the data generating distribution changes.
We find that the confidence intervals from the
Naive method significantly undercover when the 
response $Y$ has low correlation with the covariates $X$ and
$X_1$ has high correlation with the confounding variables $X_{-1}$.
On the other hand, the proposed methods provide
satisfactory coverage across all settings. 

Regarding the widths of the confidence intervals, 
the Naive method underestimates the standard error 
in some data settings, resulting in low coverage rates. 
The OS, DS, and PR methods provide CI's with the correct coverage rate
and the widths of the CI's are slightly larger than those of the Oracle method. 
Among the three proposed methods, the OS method has the
best finite sample performance in terms of stability and computational cost.


\section{Real data example} 

\begin{figure}
\includegraphics[scale=0.46]{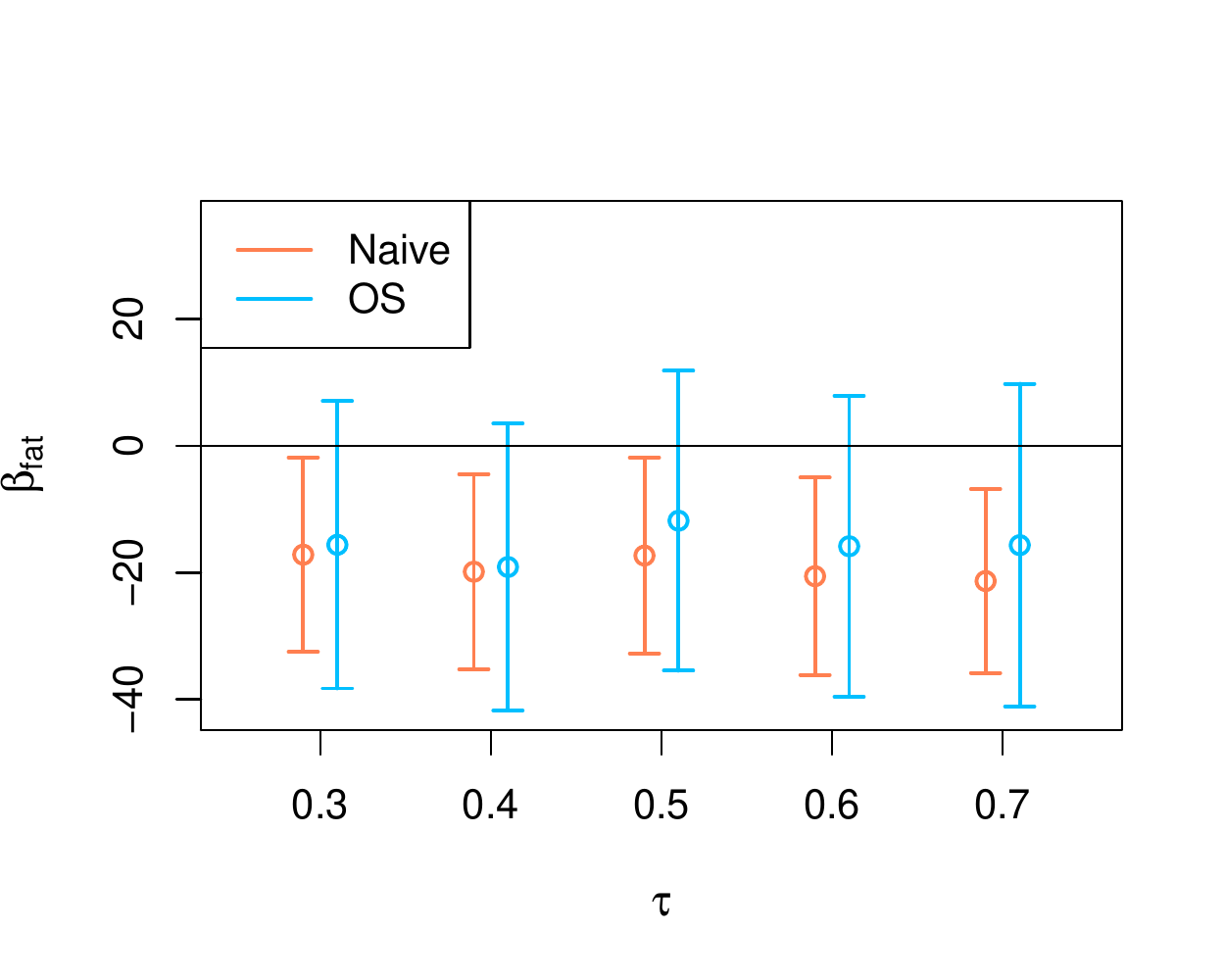}
\includegraphics[scale=0.46]{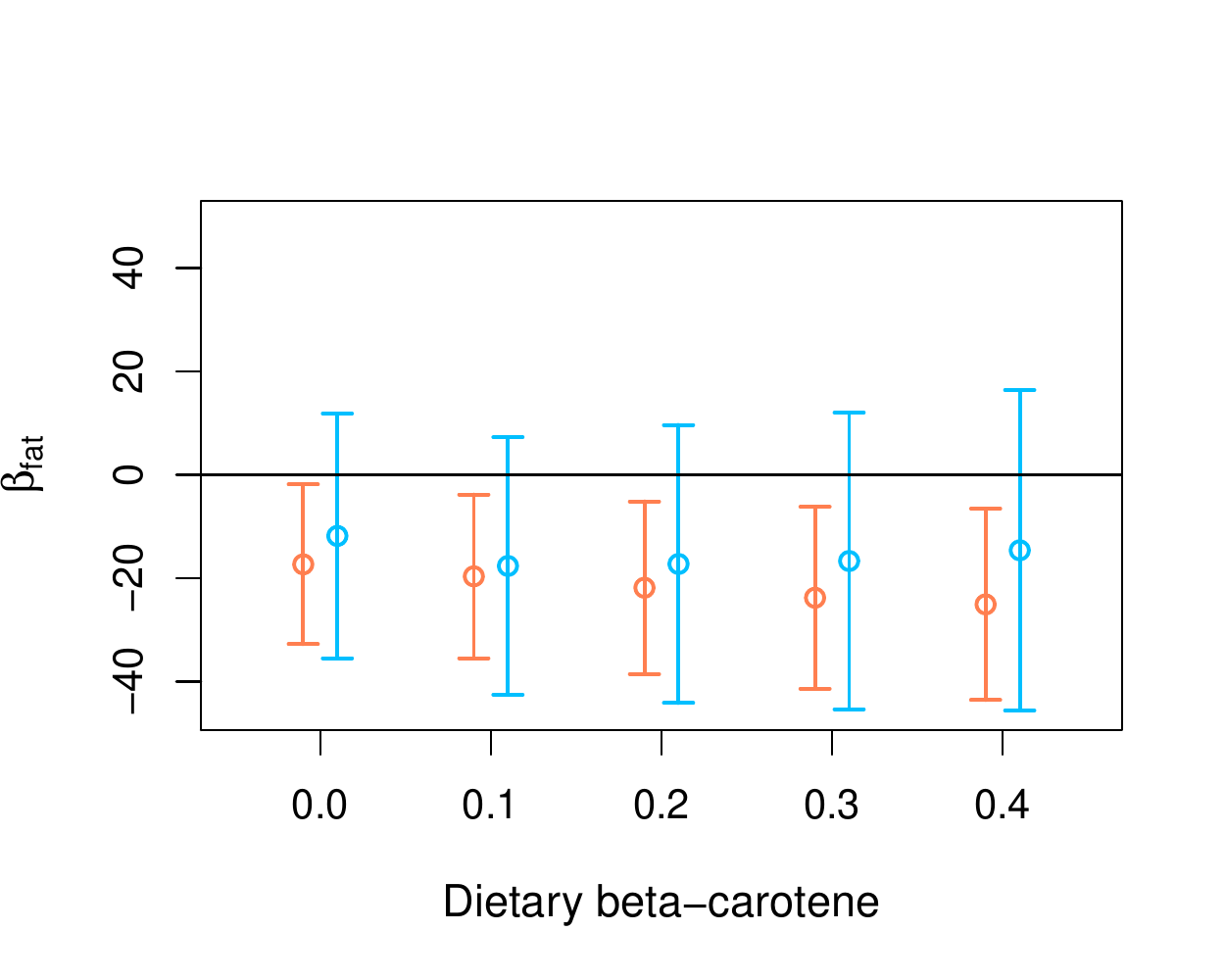}
\caption{Inference for fat. Left: fixing scaled dietary beta-carotene level at 0, $95\%$ confidence intervals for different $\tau$. Right: fixing $\tau =0.5$, $95\%$ confidence intervals for different beta-carotene levels. }
\label{fig:eg2-fat}
\end{figure}

\begin{figure}
\includegraphics[scale=0.46]{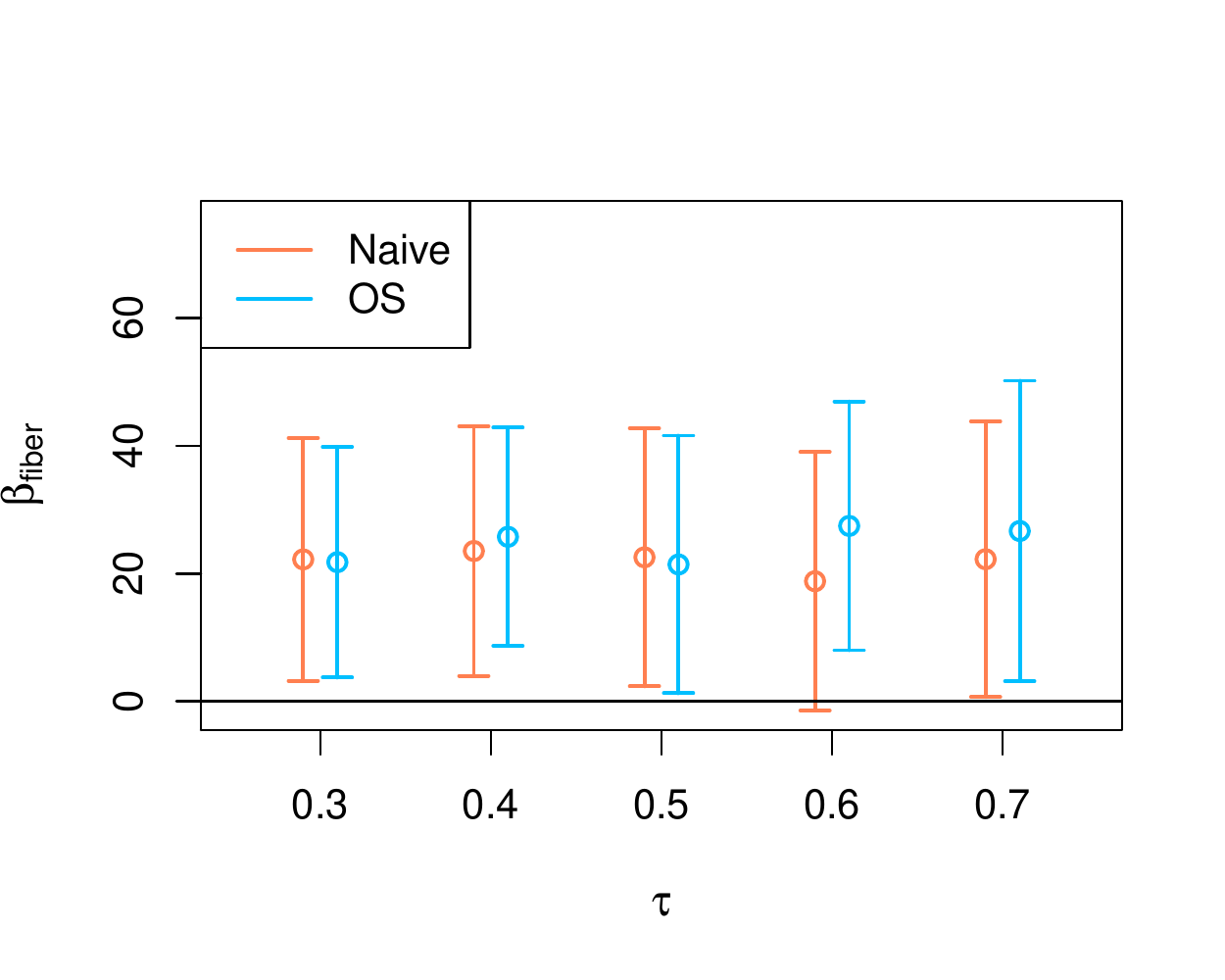}
\includegraphics[scale=0.46]{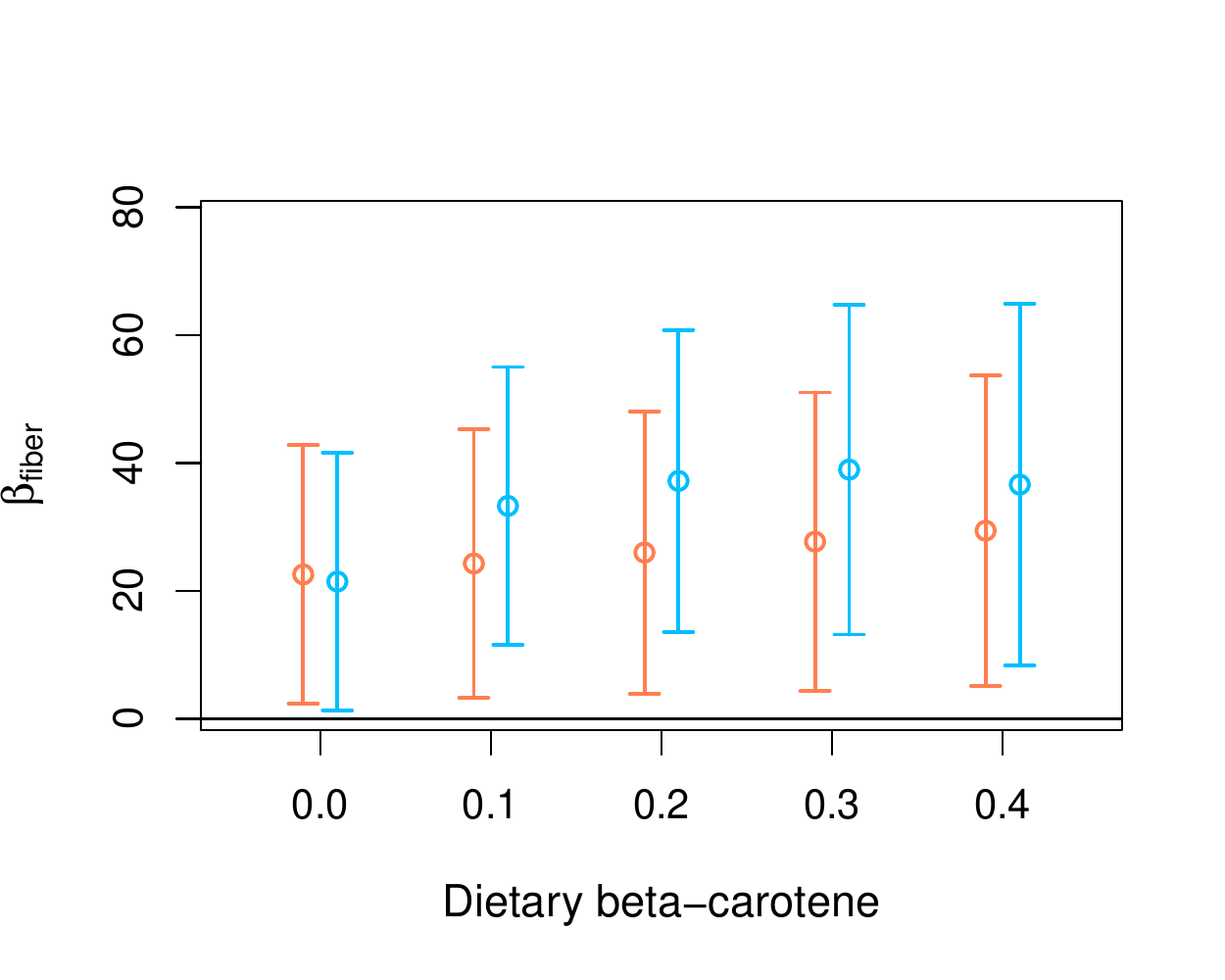}
\caption{Inference for fiber. Left: fixing scaled dietary beta-carotene level at 0, $95\%$ confidence intervals for different $\tau$. Right: fixing $\tau =0.5$, $95\%$ confidence intervals for different beta-carotene levels.}
\label{fig:eg2-fiber}
\end{figure}

As an illustration of our method, we apply our methods to analyze the 
plasma beta-carotene level data set collected by a cross-sectional study \citep{Nierenberg1989Determinants}. 
This dataset consists of 315 observations on 14 variables. Our interest is to study the
relationship between the plasma beta-carotene level and the following variables: age, sex, 
smoking status, quetelet (BMI), vitamin use, number of calories consumed per day, grams of
fiber consumed per day, number of alcoholic drinks consumed per week, cholesterol consumed
per day, dietary beta-carotene consumed per day and dietary retinol consumed per day. 

We fit our varying coefficient model by using dietary beta-carotene consumption
as the index $U$. We replace all categorical variables with dummy variables and 
standardize all variables. Then we include all two-way interactions in our model, 
so we have 116 confounding variables in total. We take the plasma beta-carotene level
as the outcome $Y$, the fat intake (in grams) or the fiber intake (in grams) as the treatment
effect $X_A$ respectively, and the remaining variables as the confounding variables. 
We use our model to make inference on $\beta(\tau, u)$ at different beta-carotene
consumption level $u$ and different quantiles $\tau$. 

Our results are shown in Figures \ref{fig:eg2-fat} and \ref{fig:eg2-fiber}. The Naive method 
is shown in red and we compare it with the one-step correction (OS) method. 
From Figure \ref{fig:eg2-fat}, the result of the naive method suggests that
the fat intake is significantly negatively correlated with the plasma beta-carotene level;
however, the OS method suggests that this negative effect is not significant. For fiber,
the Naive method underestimated the positive effect of fiber intake on the plasma beta-carotene level, 
whereas the OS method showed that this positive relationship is significant.
Furthermore, from Figure \ref{fig:eg2-fiber} (right plot), we can see an increasing trend 
of the effect of fiber intake with the increasing level of dietary beta-carotene.  

\section{Discussion}

We studied high-dimensional quantile regression
model with varying coefficients that allows us to
capture non-stationary effects of the input variables across time. 
Despite the importance in practical 
applications, no valid statistical inferential tools were previously available
for this problem. We addressed this issue by developing
new tools for statistical inference, allowing us to construct
valid confidence bands and honest tests for nonparametric coefficient
functions of time and quantile.
Performing statistical inference in this regime is
challenging due to the usage of model selection techniques in
estimation. Our inferential results do not rely on correct 
model selection and are valid for a range of data generating 
procedures, where one cannot expect for perfect model recovery.
The statistical framework allows us to construct a confidence interval at
a fixed point in time and a fixed quantile based on a normal
approximation, as well as a uniform confidence band for the
nonparametric coefficient function based on a Gaussian process
approximation. We perform numerical simulations to demonstrate the
finite sample performance of our method.
In addition, we also illustrate the performance of the methods through an 
application to a real data example.

\section*{Acknowledgment}

We thank Rina Foygel Barber for numerous suggestions and detailed
advice, as well as careful reading of various versions of the manuscript. 
This work is partially supported by the William S.~Fishman 
Faculty Research Fund at the University of Chicago
Booth School of Business. This work was completed in
part with resources supported by the University of Chicago
Research Computing Center.


\appendix

{
\bibliographystyle{my-plainnat}
\bibliography{paper}
}
\newpage

\section{Technical details}

\subsection{Notations}

We summarize the additional notation used throughout the appendix.
We 
let $\Psi_{\tau}(u) = \tau - \One{u < 0}$, 
$\rho_{\tau}(u) = u\Psi_{\tau}(u)$, 
and use
\begin{align*}
   W_i(\delta) &= \rho_\tau\rbr{y_i - (\fq + \delta)} - \rho_\tau\rbr{y_i - \fq}; \\ W_i^{\#}(\delta) &=-\delta \Psi_\tau(y_i - \fq); ~\text{and }\\
   W_i^{\natural}(\delta) &= \int_0^{\delta}\sbr{\One{y_i\leq \fq + z} - \One{y_i\leq \fq}} dz \\
   &= (y_i-(\fq+\delta))\sbr{\One{\fq + \delta \leq y_i < \fq} - \One{\fq \leq y_i < \fq + \delta}}.
\end{align*}
We denote the sum as $\EEn{\cdot} = \sum_{i \in [n]}\cdot$
and $\bE{\cdot} = \EE{\EEn{\cdot}}$. 
Denote the negative Hessian 
\begin{align*}
H &= \EEn{w_i  f_i(\fq) \cdot \Gamma_i\Gamma_i^\top}, \\
H^\star &=\EEn{w_i  f_i(\fql) \cdot \Gamma_i\Gamma_i^\top},\\
H(\delta) &= \bE{w_i \hat f_i(\delta) \cdot \Gamma_i\Gamma_i^\top}  = \EEn{w_i \cdot \EE{\hat{f}_i(\delta)} \cdot \Gamma_i\Gamma_i^\top}, ~\text{and}\\
\hat H(\delta) &= \EEn{w_i \hat f_i(\delta) \cdot \Gamma_i\Gamma_i^\top}, 
\end{align*}
where
\[
\hat f_i(\delta) = \frac{\One{\abr{y_i - \Gamma_i^\top\rbr{\bs + \delta}}\leq h_f}}{2h_f}.
\]
Let $\Delta_i =\tilde{q}_i - q_i$.
Recall that $\Vs \in \R^{2k\times 2p}$ are the rows related to $X_A$,
$X_{A}(U-u)$ of an approximate inverse of $H^\star$ such 
that $\norm{H^\star \Vs - E_a}_{\infty,F} \leq \lambda^{\star}$ and $\norm{\Vs}_{F,0} \leq s_2 = c_2s.$ 
Its estimator $\Vh$ is as defined in \eqref{eq:lasso_fixed:opt}. 
The one step correction estimator
$\check{a}^{OS}=\ah-S(\ah,\ch,\Vh)$, where
 $S(a,c,V)=-\sum_i w_i V \Gamma_i\Psi_{\tau}(y_i-\Gamma_i^\top (a^\top,c^\top)^\top)$.

\subsection{Proof of Theorem \ref{thm:normality}}\label{pf:normal}

Recall the definitions of $\check{a}^{OS}$ in \eqref{eqn:one step est} and $S_d(b,V)$ in \eqref{eqn:S_d}.
We have
\begin{multline*}
\check{a}^{OS}-\as = \ah - \as -  \left\{S_d((\ah^\top,\ch^\top)^\top,\Vs) - S_d((a^{\star\top},c^{\star\top})^\top,\Vs)\right\}\\ 
-\left\{S_d((\ah^\top,\ch^\top)^\top,\Vh) - S_d((\ah^\top,\ch^\top)^\top,\Vs)\right\} 
- S_d((a^{\star\top},c^{\star\top})^\top,\Vs).
\end{multline*}
By Lemma \ref{lem:os1} and Lemma \ref{lem:os2} 
(presented later in Section \ref{pf:norm-lem}), we have
\[\Norm{\ah - \as -  \left\{S_d((\ah^\top,\ch^\top)^\top,\Vs) - S_d((a^{\star\top},c^{\star\top})^\top,\Vs)\right\}}_2= o_p\left(\sqrt{\frac{1}{nh}}\right)\]
and 
\[\norm{S_d((\ah^\top,\ch^\top)^\top,\Vh) - S_d((\ah^\top,\ch^\top)^\top,\Vs)}_2 = o_p\left(\sqrt{\frac{1}{nh}}\right). \]
Therefore,
\begin{align*}
\check{a}^{OS}-\as &= - S_d((a^{\star\top},c^{\star\top}),\Vs)+o_p\left(\sqrt{\frac{1}{nh}}\right)\\
&=\sum_i w_i \Vs \Gamma_i\Psi_{\tau}(y_i-q_i)\\
&\qquad-\sum_i w_i \Vs \Gamma_i\left\{\Psi_{\tau}(y_i-q_i)-\Psi_{\tau}(y_i-\fql)\right\}+o_p\left(\sqrt{\frac{1}{nh}}\right)\\
&=\sum_i w_i \Vs \Gamma_i\Psi_{\tau}(y_i-q_i)+o_p\left(\sqrt{\frac{1}{nh}}\right).
\end{align*}
The last equality holds by Lemma \ref{lem:os3}, which we present later in Section \ref{pf:norm-lem}.

Because Assumptions \ref{assumption:density} and \ref{assumption:X} hold, by Lindeberg CLT, 
we have 
\[
\sqrt{nh}\sum_i w_i \Vs \Gamma_i\Psi_{\tau}(y_i-q_i)\rightarrow N(0,\Sigma)
\]
and, therefore,
\[
\sqrt{nh}\Sigma^{-1/2}\left(\check{a}^{OS}-\as\right)\xrightarrow{d} N(0, I_{2k}),
\]
where 
$\Sigma=\tau(1-\tau)\nu_2 \lim_{n\rightarrow \infty}\EEst{\left\{V^{\star}\Gamma\Gamma^\top V^{\star \top}\right\}}{U=u}$.
By Lemma \ref{lem:os4} (presented later in Section \ref{pf:norm-lem}), for both forms of $\widehat{\Sigma}_a$, we have \[\widehat{\Sigma}_a \xrightarrow{p} \Sigma.\] Therefore by Slutsky's Theorem,
\[
\sqrt{nh}\widehat{\Sigma}_a^{-1/2}\left(\check{a}^{OS}-\as\right)\xrightarrow{d} N(0, I_{2k}).
\]

\subsection{Asymptotic equivalence of decorrelated score, one-step and reparameterization estimators}

Denote
\begin{align*}
H_{ac}(f,V)&=\sum_i w_i f_i V \Gamma_i\Gamma_i^\top\left[\begin{array}{cc}0_{2(p-k)\times 2k} &I_{2(p-k)} \end{array}\right]^\top, \text{ and}\\
H_{aa}(f,V)&=\sum_i w_i f_i V \Gamma_i\Gamma_i^\top\left[\begin{array}{cc}I_{2k}& 0_{2k\times 2(p-k)} \end{array}\right]^\top.
\end{align*}
The decorrelated score estimator $\check{a}^{DS}$ in \eqref{eqn:DS est} that minimizes 
\begin{equation*}
S_d((a^\top,\ch^\top)^\top,\Vh)^\top 
\Psi^{-1}
S_d((a^\top,\ch^\top)^\top,\Vh),
\end{equation*}
where 
\[
\Psi = \sum_i w_i^2 \Vh \Gamma_i\Psi_{\tau}(y_i-\Gamma_i^\top \bh)\Psi_{\tau}(y_i-\Gamma_i^\top \bh)^\top \Gamma_i^\top \Vh^{\top},
\]
is asymptotically equivalent to the one-step estimator. 

To show this, 
given the optimization range 
\[A_{\tau}=\{a:\norm{a-a^{\star}}_2<\frac{C}{\log n}\},\]
we have $$\norm{S_d((\check{a}^{DS\top},\ch^{\top})^\top,\Vh)}_2=o_p\rbr{\sqrt{\frac{1}{nh}}}$$
and
\begin{align}
S_d&((\check{a}^{DS\top},\ch^\top)^\top,\Vh) \nonumber\\
&=S_d((a^{\star\top},c^{\star\top})^\top,\Vs)+\underbrace{\left( S_d((a^{\star\top},c^{\star\top})^\top,\Vh) - S_d((a^{\star\top},c^{\star\top})^\top,\Vs) \right)}_{=o_p\left(\sqrt{\frac{1}{nh}}\right)} \nonumber\\
&\quad +\underbrace{\left( S_d((a^{\star\top},\ch^{\top})^\top,\Vh) - S_d((a^{\star\top},c^{\star\top})^\top,\Vh) \right)}_{=o_p\left(\sqrt{\frac{1}{nh}}\right)} \nonumber\\
&\quad +\left( S_d((\check{a}^{DS\top},\ch^\top)^\top,\Vh) - S_d((a^{\star\top},\ch^{\top})^\top,\Vh) \right)\nonumber\\
&= S_d((a^{\star\top},c^{\star\top})^\top,\Vs)+H_{aa}(f,V^{\star})(\check{a}^{DS}-a^{\star})\nonumber\\
&\quad +o(\norm{\check{a}^{DS}-a^{\star}}_2)+o_p\left(\sqrt{\frac{1}{nh}}\right).
\end{align}
Therefore, we have 
\[
\check{a}^{DS}-a^{\star}=-H_{aa}(f,V^{\star})^{-1}S_d((a^{\star\top},c^{\star\top})^\top,V^{\star})+o_p\left(\sqrt{\frac{1}{nh}}\right),
\]
which is asymptotic equivalent to the one-step estimator.

For the reparameterization estimator in \eqref{eqn:RP est}, 
we need to assume that $\Vh$ can be decomposed as
$\Vh=\Vh_{11}\left[\begin{array}{cc} I_{2k} &-\widehat{v}\end{array}\right]$ 
where $\Vh_{11}$ is invertible with high probability,
and $\widehat{v}=-\Vh_{11}^{-1}\Vh_{12}$. Similarly we have
$V^{\star}=V_{11}^{\star}\left[\begin{array}{cc} I_{2k} &-v^{\star}\end{array}\right]$ where $v^{\star}=-[V_{11}^{\star}]^{-1}V^{\star}_{12}$. 
Let
\begin{align*}
s(a,c,v)&=\sum_i w_i \left[\begin{array}{cc}I_{2k}& -v \end{array}\right] \Gamma_i\Psi_{\tau}(y_i-\Gamma_i^\top (a,c)),\\
h_{ac}(f,v)&=\sum_i w_i f_i \left[\begin{array}{cc}I_{2k}& -v \end{array}\right]\Gamma_i\Gamma_i^\top\left[\begin{array}{cc}0_{2(p-k)\times 2k} &I_{2(p-k)} \end{array}\right]^\top, 
\text{ and}\\
h_{aa}(f,v)&=\sum_i w_i f_i \left[\begin{array}{cc}I_{2k}& -v \end{array}\right]\Gamma_i\Gamma_i^\top\left[\begin{array}{cc}I_{2k}& 0_{2k\times 2(p-k)} \end{array}\right]^\top.
\end{align*}
We have $\check{a}^{RP}$ minimizing $\mathcal{L}(a,\ch+\widehat{v}^\top(\ah-a))$ as defined
in \eqref{eqn:solve_decorr}. The optimization of the low-dimension 
quantile regression will approximately solve the following score 
\begin{multline*}
s(a,\ch+\widehat{v}^\top(\ah-a),\widehat{v})\\
=s(a,\ch,\widehat{v})+h_{ac}(\hat{f},\widehat{v})\widehat{v}^\top(\ah-a)+o(\norm{\widehat{v}^\top(\ah-a)}_2) + o_p\left(\sqrt{\frac{1}{nh}}\right)
\end{multline*}
in the sense that
\[s(\check{a}^{RP},\ch+\widehat{v}^\top(\ah-\check{a}),\Vh)=o_p\rbr{\sqrt{\frac{1}{nh}}}.\]
Since
\[
h_{ac}(\hat{f},\widehat{v})\widehat{v}^\top(\ah-a)=o_p\rbr{\sqrt{\frac{1}{nh}}},
\]
the equivalence of $\widehat{v}$ and the
lasso estimator from a regression of $\Gamma_{1:2k}$ on $\Gamma_{(2k+1):2p}$
implies that a similar expansion as decorrelated score \eqref{eqn:S_d} holds.
Therefore, we have 
\begin{multline*}
s(\check{a}^{RP},\ch,\widehat{v})\\
=s(a^{\star},c^{\star},v^{\star})+h_{aa}(f,v^{\star})(\check{a}^{RP}-a^{\star})+o(\norm{\check{a}^{RP}-a^{\star}}_2)+o_p\rbr{\sqrt{\frac{1}{nh}}}
\end{multline*}
and
$\check{a}^{RP}-a^{\star}$ is asymptotically equivalent to 
$[h_{aa}(f,v^{\star})]^{-1}s(a^{\star},c^{\star},v^{\star})$, 
which converges to a normal distribution.

\subsection{Lemmas for the normality results}\label{pf:norm-lem}
\begin{lemma}
\label{lem:os1}
Suppose that
Assumptions \ref{assumption:kernel}-\ref{assumption:X}
and the conditions \eqref{eqn:thm_b_1}-\eqref{eqn:thm_v_2} hold.
Then
\begin{equation*}
   \norm{\ah - \as -  \left\{S_d((\ah^\top,\ch^\top)^\top,\Vs) - S_d((a^{\star\top},c^{\star\top})^\top,\Vs)\right\}}_2= o_p\left(\sqrt{\frac{1}{nh}}\right). 
\end{equation*}
\end{lemma}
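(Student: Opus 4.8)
The plan is to Taylor-expand the increment $S_d(\bh,\Vs)-S_d(\bs,\Vs)$ around $\bs$ through its smoothed (conditionally expected) version, isolate the leading linear term $\Vs H^\star(\bh-\bs)$ --- which is essentially $E_a(\bh-\bs)=\ah-\as$ because $\Vs H^\star\approx E_a$ by Assumption~\ref{assumption: lam} --- and then show that every other piece is $o_p\bigl((nh)^{-1/2}\bigr)$. Throughout I abbreviate $\bh=(\ah^\top,\ch^\top)^\top$ and $\bs=(\as^\top,\cs^\top)^\top$, write $F_i$ for the conditional c.d.f.\ of $Y_i$ given $(x_i,u_i)$ (so $F_i'=f_i$), and recall that $\Gamma_i^\top\bs=\tilde q_i$ and $\Psi_\tau(t)=\tau-\One{t<0}$. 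Let $\bar S_d(b,\Vs):=-\sum_i w_i\Vs\Gamma_i\bigl(\tau-F_i(\Gamma_i^\top b)\bigr)=\EEst{S_d(b,\Vs)}{X,U}$ be the conditional mean of $S_d(b,\Vs)$ at a \emph{fixed} $b$. Then $S_d(\bh,\Vs)-S_d(\bs,\Vs)=\bigl(\bar S_d(\bh,\Vs)-\bar S_d(\bs,\Vs)\bigr)+\mathcal R_n(\bh)$, where for $b\in\R^{2p}$
\[
\mathcal R_n(b):=\sum_{i\in[n]}w_i\Vs\Gamma_i\Bigl[\bigl(\One{y_i<\Gamma_i^\top b}-\One{y_i<\tilde q_i}\bigr)-\bigl(F_i(\Gamma_i^\top b)-F_i(\tilde q_i)\bigr)\Bigr]
\]
is the centred empirical-process increment (so $\mathcal R_n(\bs)=0$). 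A second-order Taylor expansion of $t\mapsto F_i(\tilde q_i+t)$, using $|f_i'|\le\bar f'$, yields $\bar S_d(\bh,\Vs)-\bar S_d(\bs,\Vs)=\Vs H^\star(\bh-\bs)+\mathrm{rem}_1$ with $\mathrm{rem}_1=\tfrac12\sum_i w_if_i'(\xi_i)\Vs\Gamma_i\bigl(\Gamma_i^\top(\bh-\bs)\bigr)^2$ for intermediate points $\xi_i$. Since $E_a(\bh-\bs)=\ah-\as$, we obtain the identity
\[
\ah-\as-\bigl\{S_d(\bh,\Vs)-S_d(\bs,\Vs)\bigr\}=(E_a-\Vs H^\star)(\bh-\bs)-\mathrm{rem}_1-\mathcal R_n(\bh),
\]
so the lemma reduces to bounding these three terms.

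The first two are deterministic-type bounds that follow directly from Theorem~\ref{thm:b:consistent&sparse}. For the bias term, a blockwise Cauchy--Schwarz inequality gives $\norm{(E_a-\Vs H^\star)(\bh-\bs)}_2\lesssim\norm{\Vs H^\star-E_a}_{\infty,F}\cdot\norm{\bh-\bs}_{1,2}\le\lambda^\star\cdot O_p\!\bigl(s\sqrt{\log(np)/(nh)}\bigr)$ by Assumption~\ref{assumption: lam} and \eqref{eqn:thm_b_2}; substituting $\lambda^\star=O\bigl(B_V\sqrt{\log p/(nhh_f)}\bigr)$ and $B_V\asymp\log p$, the growth conditions in Assumptions~\ref{assumption: lam} and \ref{assumption:growth} make this $o_p\bigl((nh)^{-1/2}\bigr)$. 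For the Taylor remainder, bounding $\norm{\Vs\Gamma_i}_2\le B_V$ (Assumption~\ref{assumption: lam}) and using $w_i\ge0$ gives $\norm{\mathrm{rem}_1}_2\le\tfrac12\bar f'B_V\,\EEn{w_i(\Gamma_i^\top(\bh-\bs))^2}=\tfrac12\bar f'B_V\cdot O_p\bigl(s\log(np)/(nh)\bigr)$ by \eqref{eqn:thm_b_1}, which is again $o_p\bigl((nh)^{-1/2}\bigr)$ under the same conditions.

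The empirical-process term $\mathcal R_n(\bh)$ is the real obstacle. Conditionally on $(X,U)$ the summands of $\mathcal R_n(b)$ are bounded, $\norm{w_i\Vs\Gamma_i}_2\le B_wB_V$, and for each fixed $b$ the $i$th centred increment has variance at most $\abr{F_i(\Gamma_i^\top b)-F_i(\tilde q_i)}\le\bar f\,\abr{\Gamma_i^\top(b-\bs)}$, so both the envelope and the variance proxy of $\mathcal R_n(b)$ are themselves small and shrink with the estimation error. The plan is to localise: by Theorem~\ref{thm:b:consistent&sparse}, with probability tending to one $\bh$ lies in the set $\mathcal B$ of vectors that are $O(s)$ group-sparse, have $\ell_{1,2}$-error $O\bigl(s\sqrt{\log(np)/(nh)}\bigr)$, and prediction error $\EEn{w_i(\Gamma_i^\top(b-\bs))^2}=O\bigl(s\log(np)/(nh)\bigr)$, so on that event $\norm{\mathcal R_n(\bh)}_2\le\sup_{b\in\mathcal B}\norm{\mathcal R_n(b)}_2$. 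I would then control this supremum by a local maximal inequality --- symmetrisation and contraction, followed by a union bound over the at most $\binom{p}{Cs}$ admissible supports for a constant $C$ (or, alternatively, a direct chaining bound over the sparse $\ell_{1,2}$-ball), as in the analysis of $\ell_1$-penalised quantile regression. Feeding the envelope and variance estimates above into this argument and invoking the growth conditions of Assumptions~\ref{assumption: lam} and \ref{assumption:growth} yields $\sup_{b\in\mathcal B}\norm{\mathcal R_n(b)}_2=o_p\bigl((nh)^{-1/2}\bigr)$; combined with the two bounds above, this proves the lemma. The main difficulty is precisely this uniform control of $\mathcal R_n(b)$ over the data-dependent neighbourhood $\mathcal B$ --- the other steps are routine once the rates from Theorem~\ref{thm:b:consistent&sparse} are in hand.
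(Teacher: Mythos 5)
Your proposal is correct and follows essentially the same route as the paper: the same decomposition into the bias term $(\Vs H^\star-E_a)(\bh-\bs)$, the second-order Taylor remainder controlled via $\bar f'$, $B_V$ and \eqref{eqn:thm_b_1}--\eqref{eqn:thm_b_2}, and a centred indicator empirical-process increment evaluated at $\bh$. The uniform control of that last term over the sparse local neighbourhood (envelope $B_wB_V$, variance proxy $\bar f\,|\Gamma_i^\top(b-\bs)|$, union bound over supports plus a maximal inequality) is exactly what the paper isolates as Lemma~\ref{lem:os11}, so your sketch fills in the same way.
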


\begin{proof}
From the definition of
$S_d(b,V)$ in \eqref{eqn:S_d}, 
we can rewrite the objective as
\begin{align*}
\ah &- \as -\left\{S_d((\ah^\top,\ch^\top)^\top,\Vs) - S_d((a^{\star\top},c^{\star\top})^\top,\Vs)\right\} \\
 &= \ah - \as+\sum_i w_i \Vs \Gamma_i\Psi_{\tau}(y_i-\Gamma_i^\top (\ah^\top,\ch^\top)^\top) \\
 & \  - \sum_i w_i \Vs \Gamma_i\Psi_{\tau}(y_i-\Gamma_i^\top (\as^\top,\cs^\top)^\top)\\
 &= \ah - \as+\sum_i w_i \Vs \Gamma_i \left[\One{y_i\leq \Gamma_i \bs}- \One{y_i\leq \Gamma_i \bh} \right]\\
 &= \underbrace{\ah - \as+\sum_i w_i \Vs \Gamma_i \left[F_i(y_i\leq \Gamma_i \bs)- F_i(y_i\leq \Gamma_i \bh) \right] }_{I}\\
 &\  +\underbrace{ \sum_i w_i \Vs \Gamma_i \left[\left(\One{y_i\leq \Gamma_i \bs}-F_i(y_i\leq \Gamma_i \bs)\right) -
\left(\One{y_i\leq \Gamma_i \bh}-F_i(y_i\leq \Gamma_i \bh)\right)\right]}_{II}.
\end{align*}
For the term $I$, we use Taylor expansion and have
\begin{align*}
I &=\ah - \as+\sum_i w_i \Vs \Gamma_i \rbr{F_i(y_i\leq \Gamma_i \bs)- F_i(y_i\leq \Gamma_i \bh) } \\
&=\ah - \as+ \sum_i w_i \Vs \Gamma_i \rbr{ f_i(\Gamma_i \bs) \Gamma_i (\bs-\bh) +\frac {f_i'(\Gamma_i \bt_i)}{2}(\bs - \bh)^{\top}\Gamma_i^{\top} \Gamma_i (\bs - \bh) }\\
&=\ah - \as+ \Vs H^{\star} (\bs-\bh) + R_1 \\
&=\ah - \as+ E_a(\bs-\bh) + (\Vs H^{\star} -E_a) (\bs-\bh) + R_1\\
&= (\Vs H^{\star} -E_a) (\bs-\bh) + R_1,  
\end{align*}
where $\bt_i = t_i\bs + (1-t_i)\bh$
and 
\[
R_1=\sum_i w_i \Vs \Gamma_i\frac {f_i'(\Gamma_i \bt_i)}{2}(\bs - \bh)^{\top}\Gamma_i^{\top} \Gamma_i (\bs - \bh).
\]
By \eqref{eqn:thm_b_1}, 
\[\norm{R_1}_2\leq 2k B_V\bar{f'}\sum_i w_i |\Gamma_i (\bs - \bh)|^2 = O_p\left(\frac{sB_V\log (np)}{nh}\right).\]
We also have
\begin{multline*}
\norm{(\Vs H^{\star}-E_a) (\bs-\bh)}_2\\
\leq \norm{(\Vs H^{\star} -E_a)}_{\infty,F} \norm{\bs-\bh}_{1,2} 
=  O_p\rbr{\lambda^{\star} \cdot s\sqrt{\frac{\log (np)}{nh}}},
\end{multline*}
where the norm $\norm{\cdot}_{\infty,F}$ is defined as 
\[
\norm{V}_{\infty, F} = \sup_{i \in [k],j \in [p]} \norm{V_{(i,i+k),(j,j+p)}}_F
\]
and the second inequality is because of 
Assumption \ref{assumption: lam}  and \eqref{eqn:thm_b_1}. 
Furthermore, since $\lambda^{\star} = O\rbr{ B_V \sqrt{ \frac{\log p}{nhh_f} }}$ 
by Assumption \ref{assumption: lam}, we have 
\[
\norm{I}_2 =  O_p\rbr{\lambda^{\star}\cdot s\sqrt{\frac{\log (np)}{nh}}} = O_p\rbr{\frac{B_Vs\log (np)}{nh\sqrt{h_f}}} = o_p\rbr{\sqrt{\frac{1}{nh}}}.
\]

For the term $II$, by Lemma \ref{lem:os11} (presented later in Section \ref{pf:norm-lem}) we have
$$\norm{II}_2 = O_p\rbr{ B_K B_V \sqrt{\bar{f}B_X\frac{mr_b \log p}{nh}} }.$$ 
Plugging the rate for $r_b$ 
from condition \eqref{eqn:thm_b_2}, 
$\norm{II}_2 = o_p\rbr{\sqrt{\frac{1}{nh}}}$,
which completes the proof.
\end{proof}

\begin{lemma}
\label{lem:os2}
Suppose Assumptions \ref{assumption:kernel}-\ref{assumption:growth}
and conditions \eqref{eqn:thm_b_1}-\eqref{eqn:thm_v_2} hold. Then
\[
\norm{S_d((\ah^\top,\ch^\top)^\top,\Vh)-S_d((\ah^\top,\ch^\top)^\top,\Vs)}_2 = O_p\rbr{\frac{sB_V\log (np)}{nh\sqrt{h_f}}} = o_p\rbr{\sqrt{\frac{1}{nh}}}. 
\]
\end{lemma}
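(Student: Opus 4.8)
The plan is to exploit that the quantity is linear in $\Vh-\Vs$. From the definition \eqref{eqn:S_d},
\[
S_d(\bh,\Vh)-S_d(\bh,\Vs)=-(\Vh-\Vs)\,g,\qquad g:=\sum_{i\in[n]}w_i\,\Gamma_i\,\Psi_\tau\!\rbr{y_i-\Gamma_i^\top\bh},
\]
and, by the block structure of the group norm, $\Norm{(\Vh-\Vs)v}_2\le\norm{\Vh-\Vs}_{1,F}\cdot\norm{v}_{\infty,F}$ for every $v\in\R^{2p}$, where $\norm{v}_{\infty,F}:=\max_{j\in[p]}\Norm{(v_j,v_{j+p})}_2$. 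Since $\norm{\Vh-\Vs}_{1,F}=O_p\rbr{sB_V\sqrt{\log(np)/(nhh_f)}}$ by \eqref{eqn:thm_v_2}, it would suffice to pair this against a vector of block-sup norm $O_p(\sqrt{\log(np)/(nh)})$; this cannot be done for $g$ as a whole, however, because $g$ contains a linearization term $-H^\star(\bh-\bs)$ that is not coordinatewise of the right order and must instead be absorbed using that $\Vh-\Vs$ is nearly annihilated by $H^\star$ on the left. Accordingly I would first split, centering the residuals at the local-linear fit $\fql=\Gamma_i^\top\bs$:
\[
g=g^{a}+g^{b},\qquad g^{a}=\sum_i w_i\Gamma_i\Psi_\tau(y_i-\fql),\qquad g^{b}=\sum_i w_i\Gamma_i\sbr{\Psi_\tau(y_i-\Gamma_i^\top\bh)-\Psi_\tau(y_i-\fql)}.
\]

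The contribution $(\Vh-\Vs)g^{a}$ is the one realizing the stated rate, and is controlled by concentration. The summands $w_i\Gamma_i\Psi_\tau(y_i-\fql)$ are i.i.d.\ in $i$, bounded coordinatewise by $B_w B_X\asymp(nh)^{-1}$, have coordinatewise variance $O((nh)^{-1})$, and have conditional mean $w_i\Gamma_i(\tau-F_i(\fql))$ with $|\tau-F_i(\fql)|=|F_i(q_i)-F_i(\fql)|\le\bar f|\Delta_i|$; using $\sum_i w_i\Delta_i^2=O(\log(np)/(nh))$ (from Assumption \ref{assumption:qr}, the sparsity of $S'$, and the kernel moment bounds), $\Norm{\EE{g^{a}}}_{\infty,F}=O(\sqrt{\log(np)/(nh)})$. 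Bernstein's inequality and a union bound over the $2p$ coordinates then give $\norm{g^{a}}_{\infty,F}=O_p(\sqrt{\log(np)/(nh)})$, so $\Norm{(\Vh-\Vs)g^{a}}_2\le\norm{\Vh-\Vs}_{1,F}\norm{g^{a}}_{\infty,F}=O_p\rbr{sB_V\log(np)/(nh\sqrt{h_f})}$.

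It remains to show $\Norm{(\Vh-\Vs)g^{b}}_2=o_p\rbr{(nh)^{-1/2}}$. Decompose $g^{b}$ into its conditional mean $-\sum_i w_i\Gamma_i[F_i(\Gamma_i^\top\bh)-F_i(\fql)]$ plus a centered empirical-process remainder. Since $\bh$ lies with high probability in the consistency ball of Theorem \ref{thm:b:consistent&sparse}, the remainder is handled by the same localized maximal inequality used to control term $II$ in the proof of Lemma \ref{lem:os1}---Lemma \ref{lem:os11}, now with $\Vh-\Vs$ in place of $\Vs$---and contributes $o_p\rbr{(nh)^{-1/2}}$. For the conditional mean, a Taylor expansion of $F_i$ about $\fql=\Gamma_i^\top\bs$ gives $F_i(\Gamma_i^\top\bh)-F_i(\fql)=f_i(\fql)\,\Gamma_i^\top(\bh-\bs)+\tfrac{1}{2} f_i'(\xi_i)\rbr{\Gamma_i^\top(\bh-\bs)}^2$, so this part equals $-H^\star(\bh-\bs)-R$ with $\norm{R}_{\infty,F}=O_p\rbr{\sum_i w_i\rbr{\Gamma_i^\top(\bh-\bs)}^2}=O_p\rbr{s\log(np)/(nh)}$ by \eqref{eqn:thm_b_1}; hence $\Norm{(\Vh-\Vs)R}_2\le\norm{\Vh-\Vs}_{1,F}\norm{R}_{\infty,F}=o_p\rbr{(nh)^{-1/2}}$ under Assumption \ref{assumption:growth}, carrying an extra $o(1)$ factor against the main term. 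For the remaining piece I would bound $\Norm{(\Vh-\Vs)H^\star(\bh-\bs)}_2\le\Norm{(\Vh-\Vs)H^\star}_{\infty,F}\cdot\norm{\bh-\bs}_{1,2}$ with $\norm{\bh-\bs}_{1,2}=O_p\rbr{s\sqrt{\log(np)/(nh)}}$ by \eqref{eqn:thm_b_2}, and estimate $\Norm{(\Vh-\Vs)H^\star}_{\infty,F}\le\Norm{\Vh H^\star-E_a}_{\infty,F}+\Norm{\Vs H^\star-E_a}_{\infty,F}$: the second term is $\le\lambda^\star$ by Assumption \ref{assumption: lam}, and the first is bounded via the stationarity condition of the Step~2 optimization, which forces $\Norm{\Vh\Hh-E_a}_{\infty,F}\le\lambda_V$, together with $\Norm{\Vh(H^\star-\Hh)}_{\infty,F}\le\norm{\Vh}_{1,F}\Norm{\Hh-H^\star}_{\infty,F}$, where $\norm{\Vh}_{1,F}=O_p(sB_V)$ and $\Norm{\Hh-H^\star}_{\infty,F}$ is negligible by the consistency of $\bh$ and of the density estimates $\hat f_i$. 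This gives $\Norm{(\Vh-\Vs)H^\star}_{\infty,F}=O_p(\lambda^\star)$ and therefore $\Norm{(\Vh-\Vs)H^\star(\bh-\bs)}_2=O_p\rbr{\lambda^\star s\sqrt{\log(np)/(nh)}}=O_p\rbr{sB_V\log(np)/(nh\sqrt{h_f})}$.

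Collecting the pieces, $\Norm{S_d(\bh,\Vh)-S_d(\bh,\Vs)}_2=O_p\rbr{sB_V\log(np)/(nh\sqrt{h_f})}$, and this is $o_p\rbr{(nh)^{-1/2}}$ since the ratio equals $sB_V\log(np)/\sqrt{nhh_f}\to0$ under Assumption \ref{assumption:growth} (using $B_V\asymp\log p$). I expect the main obstacle to be the estimate $\Norm{(\Vh-\Vs)H^\star}_{\infty,F}=O_p(\lambda^\star)$: transferring the approximate identity $\Vh\Hh\approx E_a$ enforced in Step~2 over to the population-type matrix $H^\star$ requires controlling $\Norm{\Hh-H^\star}_{\infty,F}$ through the errors in $\bh$ and in $\hat f_i$, all in the group-structured $\Norm{\cdot}_{\infty,F}$ and $\norm{\cdot}_{1,F}$ norms; the remaining terms are routine concentration (for $g^{a}$) and a localized empirical-process bound already available from Lemma \ref{lem:os11}.
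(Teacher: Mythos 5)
Your overall skeleton overlaps substantially with the paper's proof: both arguments start from the H\"older bound $\norm{S_d(\bh,\Vh)-S_d(\bh,\Vs)}_2\le\norm{\Vh-\Vs}_{1,F}\cdot\norm{\cdot}_{\infty,F}$, split the score at $\bh$ into the score at the (local-linear) truth plus increments, Taylor-expand the $F_i$-differences, and control the centered part with the localized empirical-process bound of Lemma~\ref{lem:os11} (the paper also uses Lemma~\ref{lem:os3} for the $q_i$-versus-$\fql$ piece); your treatment of $g^a$ is essentially the paper's. Where you deviate --- keeping the linearization term as $(\Vh-\Vs)H^\star(\bh-\bs)$ and controlling $\Norm{(\Vh-\Vs)H^\star}_{\infty,F}$ via the KKT condition $\norm{\Vh\Hh-E_a}_{\infty,F}\le\lambda_V$ plus Assumption~\ref{assumption: lam} --- the key step is not established as written. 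You assert $\norm{\Vh}_{1,F}=O_p(sB_V)$ and that $\norm{\Hh-H^\star}_{\infty,F}$ is ``negligible,'' but the latter is not negligible in any absolute sense: the indicator-based density estimate alone contributes fluctuations of order $\sqrt{\log p/(nhh_f)}$ per block, which is exactly the scale of $\lambda^\star/B_V$. With your stated constants the transfer only yields $\Norm{(\Vh-\Vs)H^\star}_{\infty,F}=O_p(s\lambda^\star)$, and then the term becomes $O_p\rbr{s^2B_V\log(np)/(nh\sqrt{h_f})}$, whose negligibility requires $s^2\log p\log(np)=o(\sqrt{nhh_f})$ --- strictly stronger than Assumption~\ref{assumption:growth}. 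The step is repairable (e.g., the restricted-eigenvalue condition in Assumption~\ref{assumption:X} applied to $\Vs$ gives $\norm{\Vs}_{1,F}\lesssim B_V$, hence $\norm{\Vh}_{1,F}=O_p(B_V)$ by \eqref{eqn:thm_v_2}, and $\norm{\Hh-H^\star}_{\infty,F}$ can be bounded by a Lemma~\ref{lem:V:qr_fixed:bound_approx}-type argument), but neither ingredient is proved or even quantified in your plan, and the claimed $O_p(\lambda^\star)$ does not follow from what you wrote.

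A second gap is the invocation of Lemma~\ref{lem:os11} ``with $\Vh-\Vs$ in place of $\Vs$.'' That lemma is a fixed-$V$ statement whose bound scales with $\max_i\norm{V\Gamma_i}_2$; $\Vh-\Vs$ is data-dependent (correlated with the very sample appearing in the empirical process), and $\max_i\norm{(\Vh-\Vs)\Gamma_i}_2$ is not controlled by your assumptions. You would need either a uniform-over-the-cone version of the lemma or to revert to the paper's device: keep $\norm{\Vh-\Vs}_{1,F}$ outside via H\"older and bound the centered increment coordinatewise (with fixed selector matrices), which is how the paper applies Lemma~\ref{lem:os11} here. For comparison, the paper's proof never needs the Step-2 optimality conditions at all: it bounds $\norm{\sum_i w_i\Gamma_i\Psi_\tau(y_i-\Gamma_i^\top\bh)}_{\infty,2}$ directly (score at $q_i$ via Lemma~\ref{lem:qr_fixed:probability:score}, the $F_i$-difference as in Lemma~\ref{lem:os1}, the centered part via Lemma~\ref{lem:os11}) and multiplies by $\norm{\Vh-\Vs}_{1,F}$; your instinct that a purely coordinatewise bound on the linear term is delicate (it can lose a factor of $\sqrt{s}$) is a fair observation, but the decorrelation argument you propose to fix it needs the missing quantitative inputs above before it closes.
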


\begin{proof}
Using the H\"{o}lder's inequality, we have
\begin{align*}
&\norm{S_d((\ah^\top,\ch^\top)^\top,\Vh)-S_d((\ah^\top,\ch^\top)^\top,\Vs)}_2 \\
&\ \leq \norm{\Vh - \Vs}_{1,F} \norm{\sum_i w_i \Gamma_i \Psi_{\tau}(y_i-\Gamma_i^\top (\ah^\top,\ch^\top)^\top)}_{\infty,2}\\
&\ = \norm{\Vh - \Vs}_{1,F} \norm{\sum_i w_i \Gamma_i \Psi_{\tau}(y_i-\Gamma_i^\top \bh)}_{\infty,2}\\
&\ \leq \norm{\Vh - V^{\star}}_{1,F} \cdot \left[\norm{\sum_i w_i \Gamma_i \Psi_{\tau}(y_i-q_i)}_{\infty,2}\right.\\ 
& \qquad\qquad\qquad\qquad\qquad\qquad\left.+\norm{\sum_i w_i \Gamma_i [\Psi_{\tau}(y_i-\Gamma_i^\top \bh)-\Psi_{\tau}(y_i-q_i)]}_{\infty,2}\right].
\end{align*}
Note that \[
\norm{\Vh - V^{\star}}_{1,F} = O_p\rbr{s B_V \sqrt{\frac{\log (np)}{nhh_f}}}
\]
by \eqref{eqn:thm_v_2}
and 
\[
\norm{\sum_i w_i \Gamma_i \Psi_{\tau}(y_i-q_i)}_{\infty,2} =O_p(\sqrt{\frac{\log (p)}{nh}})
\]
by \eqref{eq:qr_fixed:lambda} and \eqref{eq:qr_fixed:score_1} in Lemma \ref{lem:qr_fixed:probability:score} (presented later in section \ref{pf:bh}). Furthermore,
\begin{align*}
&\norm{\sum_i w_i \Gamma_i [\Psi_{\tau}(y_i-\Gamma_i^\top \bh)-\Psi_{\tau}(y_i-q_i)]}_{\infty,2}\\
& \qquad \leq \norm{\sum_i w_i \Gamma_i [\Psi_{\tau}(y_i-\Gamma_i^\top \bh)-\Psi_{\tau}(y_i-\Gamma_i^\top \bs)]}_{\infty,2} \\
& \qquad \qquad + \norm{\sum_i w_i \Gamma_i [\Psi_{\tau}(y_i-\Gamma_i^\top \bs)-\Psi_{\tau}(y_i-q_i)]}_{\infty,2}.
\end{align*}
The first term in the last inequality can be bounded as
\begin{align*}
&\norm{\sum_i w_i \Gamma_i [\Psi_{\tau}(y_i-\Gamma_i^\top\bh)-\Psi_{\tau}(y_i-\Gamma_i^\top \bs)]}_{\infty,2}\\
&\ = \norm{\sum_i w_i \Gamma_i \rbr{\One{y_i\leq \Gamma_i \bs}- \One{y_i\leq \Gamma_i \bh} }}_{\infty,2}\\
&\ \leq \norm{\sum_i w_i \Gamma_i \rbr{F_i(y_i\leq \Gamma_i \bs)- F_i(y_i\leq \Gamma_i \bh) }}_{\infty,2}\\
 &\quad+ \norm{\sum_i w_i \Gamma_i  \rbr{\One{y_i\leq \Gamma_i \bs}-F_i(y_i\leq \Gamma_i \bs)) -(\One{y_i\leq \Gamma_i \bh}-F_i(y_i\leq \Gamma_i \bh))}}_{\infty,2}\\
&=  O_p\rbr{\frac{s\log (np)}{nh\sqrt{h_f}}}+O_p\rbr{ B_K B_V \sqrt{\bar{f}B_X\frac{mr_b \log (np)}{nh}} },
\end{align*}
where the first part of the last equation is the same as 
the proof in Lemma \ref{lem:os1} and 
the second part comes from Lemma \ref{lem:os11} 
(presented later in Section \ref{pf:norm-lem}), 
where $r_b \leq O_p\rbr{s\sqrt{\frac{\log (np)}{nh}}}$ 
because of \eqref{eqn:thm_b_2} and $B_V \asymp O(\log p)$.

For the second term, applying Lemma \ref{lem:os3} (presented next) with a union bound, we have
\[
\norm{\sum_i w_i \Gamma_i [\Psi_{\tau}(y_i-\Gamma_i^\top \bs)-\Psi_{\tau}(y_i-q_i)]}_{\infty,2} = o_p\left(\sqrt{\frac{\log (np)}{nh}}\right). 
\]

Combining the two bounds, we have
\begin{align*}
&\norm{S_d((\ah^\top,\ch^\top)^\top,\Vh)-S_d((\ah^\top,\ch^\top)^\top,\Vs)}_F\\
&\ = O_p\left(sB_V\sqrt{\frac{\log (np)}{nhh_f}}\right) \cdot \\
&\qquad\left\{ O_p\left(\frac{s\log (np)}{nh\sqrt{h_f}}\right)+O_p\left(\sqrt{\frac{s(\log (np))^{7/2}}{(nh)^{3/2}}}\right)+o_p\left(\sqrt{\frac{\log (np)}{nh}}\right)\right\}\\
&=  o_p\rbr{\sqrt{\frac{1}{nh}}}. 
\end{align*}
The last equality is because of Assumption \ref{assumption:growth}.
This completes the proof.
\end{proof}

\begin{lemma}
\label{lem:os3}
Under Assumptions \ref{assumption:kernel}, \ref{assumption:u}, \ref{assumption:qr}, and \ref{assumption:X}, 
for any $V \in \mathbb{C}(S_2) \subset \R^{2k \times 2p}$ 
such that $\max_{i \in [n]}\norm{V\Gamma_i}_2 = O(\log p) \leq B_V$
and $h \leq O(n^{-1/3})$ as assumed in Assumption \ref{assumption:growth}, 
we have
\begin{multline*}
\Norm{\sum_i w_i V \Gamma_i[\Psi_{\tau}(y_i-q_i)-\Psi_{\tau}(y_i-\tilde{q}_i)] }_2 \\
= O_p\rbr{
    \sqrt{
    B_V B_K\frac{\bar f }{nh} \cdot  
    (h^2 + \epsilon_R)
    }+
    \frac{\bar f  \kappa_+ \norm{V}_F}{nh} \cdot  
    (h^2 + \epsilon_R)
    }.
\end{multline*}
\end{lemma}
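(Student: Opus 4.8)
The plan is to rewrite the score increment as a difference of indicators and then split it into a conditional mean and a conditionally mean-zero fluctuation. Since $\Psi_{\tau}(v)=\tau-\One{v<0}$ and $Y_i$ has a density under Assumption \ref{assumption:density}, we have $\Psi_{\tau}(y_i-q_i)-\Psi_{\tau}(y_i-\tilde{q}_i)=\One{y_i\leq\tilde{q}_i}-\One{y_i\leq q_i}=:T_i$ almost surely. Writing $F_i(t):=\PPst{Y_i\leq t}{x_i,u_i}$ and conditioning on the design $\{(x_i,u_i)\}_{i\in[n]}$, so that $w_i,\Gamma_i,q_i,\tilde{q}_i,\Delta_i$ are all fixed, decompose $T_i=\mu_i+\zeta_i$ with $\mu_i:=\EEst{T_i}{x_i,u_i}=F_i(\tilde{q}_i)-F_i(q_i)$ and $\zeta_i:=T_i-\mu_i$ conditionally mean zero and conditionally independent across $i$. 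Then
\[
\Norm{\sum_i w_iV\Gamma_i\rbr{\Psi_{\tau}(y_i-q_i)-\Psi_{\tau}(y_i-\tilde{q}_i)}}_2\leq\Norm{\sum_i w_iV\Gamma_i\mu_i}_2+\Norm{\sum_i w_iV\Gamma_i\zeta_i}_2 ,
\]
and I would bound the deterministic ``bias'' term (first) and the mean-zero ``fluctuation'' term (second) separately, finishing by intersecting with the high-probability events of Assumptions \ref{assumption:u} and \ref{assumption:X} (the bounds on $w_i$, the restricted eigenvalues, and $\max_i\Norm{V\Gamma_i}_2\leq B_V$) to pass from design-conditional tail bounds to the unconditional $O_p(\cdot)$ statement.

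For the bias term, the mean value theorem and the density bound give $|\mu_i|\leq\bar f|\Delta_i|$, so by Cauchy--Schwarz $\Norm{\sum_i w_iV\Gamma_i\mu_i}_2\leq\bar f\rbr{\sum_i w_i\Norm{V\Gamma_i}_2^2}^{1/2}\rbr{\sum_i w_i\Delta_i^2}^{1/2}$, where $\sum_i w_i\Norm{V\Gamma_i}_2^2=\sum_i w_i\tr(V^{\top}\Gamma_i\Gamma_i^{\top}V)\lesssim\kappa_+^2\Norm{V}_F^2$ by the restricted-eigenvalue inequality over $\mathbb{C}(S_2)$ in Assumption \ref{assumption:X}. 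For the other factor I would split $\Delta_i=\big(\tilde{q}_i-x_i^{\top}\beta^{\star}(\tau,u_i)\big)+\big(x_i^{\top}\beta^{\star}(\tau,u_i)-q_i\big)$: the first piece is the local-linear-in-$u$ error, bounded through Assumption \ref{assumption:qr} and the kernel moment bounds so that $\sum_i w_i\big(\tilde{q}_i-x_i^{\top}\beta^{\star}(\tau,u_i)\big)^2\lesssim h^4$; the second equals $R_i/f_i(\xi_i)$ for an intermediate $\xi_i$ (because $F_i(q_i)=\tau$ while $F_i(x_i^{\top}\beta^{\star}(\tau,u_i))=\tau+R_i$), so $\sum_i w_i\big(x_i^{\top}\beta^{\star}(\tau,u_i)-q_i\big)^2\leq\underline f^{-2}\epsilon_R^2$. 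Hence $\rbr{\sum_i w_i\Delta_i^2}^{1/2}=O(h^2+\epsilon_R)$, and the direct combination already gives $\Norm{\sum_i w_iV\Gamma_i\mu_i}_2=O\big(\bar f\kappa_+\Norm{V}_F(h^2+\epsilon_R)\big)$; obtaining the extra $1/(nh)$ factor in the statement requires the additional observation that $\bs$ (nearly) annihilates the population estimating equation, $\sum_i w_i\Gamma_i\rbr{\tau-F_i(\Gamma_i^{\top}\bs)}\approx 0$, which cancels the leading part of $\sum_i w_iV\Gamma_i\mu_i$ and leaves only a higher-order remainder governed by $\sum_i w_i\Delta_i^2$ and the smoothness of $f_i$.

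The fluctuation term is the crux. The decisive point is that $T_i\in\{-1,0,1\}$ is nonzero only when $y_i$ lies strictly between $q_i$ and $\tilde{q}_i$, an event of conditional probability $|\mu_i|\leq\bar f|\Delta_i|$; hence $\EEst{\zeta_i^2}{x_i,u_i}\leq\EEst{T_i^2}{x_i,u_i}=|\mu_i|\leq\bar f|\Delta_i|$, which is of the small order $|\Delta_i|$ and not of constant order. Conditionally on the design, $w_iV\Gamma_i\zeta_i\in\R^{2k}$ are independent and mean zero; since $V$ has only $2k$ rows I would write $\Norm{\sum_i w_iV\Gamma_i\zeta_i}_2\leq\sqrt{2k}\,\max_{j\in[2k]}\abr{\sum_i w_i(V\Gamma_i)_j\zeta_i}$ and apply a scalar Bernstein inequality coordinatewise, with a union bound over the $2k$ coordinates (a mere $\log(2k)$ factor). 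In each coordinate the increment is $\leq\max_i w_i\Norm{V\Gamma_i}_2\leq B_wB_V\leq B_KB_V/(nh)$ and the variance proxy is $\sum_i w_i^2(V\Gamma_i)_j^2\,\EEst{\zeta_i^2}{x_i,u_i}\leq B_w\bar f\sum_i w_i\Norm{V\Gamma_i}_2^2|\Delta_i|\lesssim\frac{B_KB_V\bar f}{nh}(h^2+\epsilon_R)$, combining $\Norm{V\Gamma_i}_2\leq B_V$, $w_i\leq B_K/(nh)$, the restricted-eigenvalue bound, and $\sum_i w_i\Delta_i^2=O\big((h^2+\epsilon_R)^2\big)$ from the bias step. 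The growth condition $h\leq O(n^{-1/3})$ in Assumption \ref{assumption:growth} keeps us in the sub-Gaussian regime of Bernstein, so the increment contribution is lower order and the deviation is $O_p\big(\sqrt{B_VB_K\bar f(h^2+\epsilon_R)/(nh)}\big)$, the first term of the claim. I expect this step to be the main difficulty: the argument collapses unless the small factor $|\Delta_i|$ --- equivalently the thin-slab probability that $y_i$ falls between $q_i$ and $\tilde{q}_i$ --- is carried through the variance computation, since a constant-order variance bound is far too crude to beat the target rate.
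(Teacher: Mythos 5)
Your proposal is correct and follows essentially the same route as the paper's proof: the same split of $\One{y_i\leq\tilde q_i}-\One{y_i\leq q_i}$ into its conditional mean (bounded via the mean value theorem, Cauchy--Schwarz, the restricted-eigenvalue bound, and $\sum_i w_i\Delta_i^2=O\rbr{(h^2+\epsilon_R)^2}$) and a centered fluctuation handled by Bernstein's inequality with the crucial thin-slab variance bound $\EE{\zeta_i^2}\lesssim \bar f|\Delta_i|$, reducing the $\R^{2k}$ norm to scalars by a finite union bound (your coordinatewise $\sqrt{2k}\max_j$ versus the paper's $\tfrac12$-net over the unit sphere is immaterial). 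One reassurance: the extra cancellation you speculate is needed to produce the $1/(nh)$ factor in the second term is not present in the paper either---its own proof only establishes $O_p\rbr{\bar f\kappa_+\norm{V}_F(h^2+\epsilon_R)}$ for that term (the $1/(nh)$ in the lemma statement appears to be a typo), so your direct Cauchy--Schwarz bound already matches what is actually proved.
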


\begin{proof}
We have
\begin{align*}
&\Norm{\EEn{ w_i V \Gamma_i[\Psi_{\tau}(y_i-q_i)-\Psi_{\tau}(y_i-\tilde{q}_i)] } }_2 \\
& \qquad =\Norm{\EEn{ w_i V \Gamma_i \left[\One{y_i\leq \tilde{q}_i}-\One{y_i\leq q_i}\right] } }_2\\
&\qquad \leq \underbrace{\Norm{\Gn{w_i V \Gamma_{i} (\One{y_i\leq \tilde{q}_i}- \One{y_i\leq q_i})}}_2 }_{I} \\
&\qquad\qquad\qquad + \underbrace{\norm{\EEn{ w_i V \Gamma_i \left[F_i(\tilde{q}_i)-F_i(q_i)\right] } }_2}_{II}.
\end{align*}
By Lemma \ref{lem:os31} (presented later in this section), 
\[
I = O_p\rbr{B_V B_K
    \sqrt{
    \frac{\bar f}{nh} \cdot  
    (h^2 + \epsilon_R)
    }
    }.
\]
For the term $II$, using the mean value theorem
and the Cauchy–Schwarz inequality, we have
\begin{multline*}
II 
\leq \bar f \sqrt{\sum_i w_i \tr (V \Gamma_i \Gamma_i^\top V^\top)} \cdot \sqrt{\sum_i w_i (\tilde{q}_i-q_i)^2} 
\\
= O_p\rbr{\bar f \kappa_+ \norm{V}_F \cdot( h^2+\epsilon_R)},
\end{multline*}
where the last equality is because of Lemma \ref{lem:qr_fixed:approx_err} and Assumption~\ref{assumption:X}. 
The proof follows from the rate of $h$ given in
Assumption \ref{assumption:growth}. 
\end{proof}

\begin{lemma}
\label{lem:os4}

Let
\begin{align} \label{eqn: sigma1}
\widehat{\Sigma}_{a1} & :=nh\sum_i w_i^2 \Vh \Gamma_i\Psi_{\tau}(y_i-\Gamma_i^\top\bh)\Psi_{\tau}(y_i-\Gamma_i^\top \bh)^\top \Gamma_i^\top \Vh^{\top}, \\
\label{eqn: sigma2}
\widehat{\Sigma}_{a2} &:=\tau(1-\tau)\nu_2 \Vh\left\{\sum_j w_j\Gamma_j\Gamma_j^{\top}\right\}\Vh^{\top}, \\
\intertext{and}
\Sigma&=\tau(1-\tau)\nu_2 \lim_{n\rightarrow \infty}\EEst{V^{\star}\Gamma\Gamma^\top V^{\star \top} }{U=u}. \nonumber
\end{align}
Then $\widehat{\Sigma}_{ai}\xrightarrow{p} \Sigma$ for $i=1,2$.
\end{lemma}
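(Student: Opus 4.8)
The plan is to prove $\widehat{\Sigma}_{a1}\xrightarrow{p}\Sigma$ and $\widehat{\Sigma}_{a2}\xrightarrow{p}\Sigma$ separately, each through a two-stage argument. In the first stage one replaces the plug-in quantities $\Vh$ (and, for $\widehat{\Sigma}_{a1}$, the residuals $y_i-\Gamma_i^{\top}\bh$) by their idealized counterparts $\Vs$ and $y_i-q_i$, controlling the resulting discrepancy with the estimation rates from Theorems \ref{thm:b:consistent&sparse} and \ref{lem:qr_fixed:lasso:rate_v}. In the second stage one shows that the remaining kernel-weighted empirical average concentrates in probability around the matching conditional expectation by a standard kernel law of large numbers, the kernel constants $\nu_1,\nu_2$ and the density $f_U(u)$ combining to reproduce the constant appearing in $\Sigma$.

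For $\widehat{\Sigma}_{a2}$, set $M:=\sum_j w_j\Gamma_j\Gamma_j^{\top}$ and write
\[
\widehat{\Sigma}_{a2}-\tau(1-\tau)\nu_2\,\Vs M\Vs^{\top}=\tau(1-\tau)\nu_2\big[(\Vh-\Vs)M\Vh^{\top}+\Vs M(\Vh-\Vs)^{\top}\big].
\]
Because the lasso error $\Vh-\Vs$ satisfies the cone condition defining $\mathbb{C}(S_2)$, the restricted-eigenvalue condition in Assumption \ref{assumption:X} gives $\sum_j w_j\,\trace\big((\Vh-\Vs)^{\top}\Gamma_j\Gamma_j^{\top}(\Vh-\Vs)\big)\le\kappa_+^2\norm{\Vh-\Vs}_{1,F}^2$, which is $o_p(1)$ by \eqref{eqn:thm_v_2} together with Assumptions \ref{assumption: lam} and \ref{assumption:growth}; a Cauchy--Schwarz step then removes the two cross terms because $\Vs M\Vs^{\top}=O_p(1)$. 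It then remains to check that $M$ converges in probability to a constant multiple of $\EEst{\Gamma\Gamma^{\top}}{U=u}$: the kernel-smoothing bias is $O(h^2)=o(1)$ by the smoothness in Assumption \ref{assumption:qr} and the bounded density (Assumption \ref{assumption:u}), while the variance is $O(1/(nh))=o(1)$ using $\norm{w_i}_{\infty}\le B_K/(nh)$ and the moment bounds on the $w_i$, boundedness of $X$ and $U$ (Assumption \ref{assumption:X}) supplying the required integrability. This yields $\widehat{\Sigma}_{a2}\xrightarrow{p}\Sigma$.

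For $\widehat{\Sigma}_{a1}$ the additional ingredient is the non-smooth factor $\Psi_{\tau}(y_i-\Gamma_i^{\top}\bh)$. Using $\Psi_{\tau}(v)^2=\tau^2+(1-2\tau)\One{v<0}$ one obtains the exact identity
\[
\Psi_{\tau}(y_i-\Gamma_i^{\top}\bh)^2-\Psi_{\tau}(y_i-q_i)^2=(1-2\tau)\big(\One{y_i<\Gamma_i^{\top}\bh}-\One{y_i<q_i}\big),
\]
so, after the replacement $\Vh\mapsto\Vs$ handled as before, the operator-norm distance between $\widehat{\Sigma}_{a1}$ and $nh\sum_i w_i^2\,\Psi_{\tau}(y_i-q_i)^2\,\Vs\Gamma_i\Gamma_i^{\top}\Vs^{\top}$ is at most $|1-2\tau|\,B_K\,B_V^2\sum_i w_i\,\big|\One{y_i<\Gamma_i^{\top}\bh}-\One{y_i<q_i}\big|$, using $nh\,w_i\le B_K$ and $\max_i\norm{\Vs\Gamma_i}_2=B_V$. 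Exactly as in the proof of Lemma \ref{lem:os1}, this indicator sum is split into a deterministic part, bounded by a first-order Taylor expansion of $F_i$ together with $\sum_i w_i(\Gamma_i^{\top}\bh-q_i)^2=O_p\big(s\log(np)/(nh)\big)+O_p\big((h^2+\epsilon_R)^2\big)$ coming from \eqref{eqn:thm_b_1} and Lemma \ref{lem:qr_fixed:approx_err}, and a centered empirical-process part controlled by Lemma \ref{lem:os11}; both are $o_p(1)$ under Assumption \ref{assumption:growth}. Finally, since $\PPst{Y\le q_i}{X_i,U_i}=\tau+R_i$ we get $\EEst{\Psi_{\tau}(y_i-q_i)^2}{X_i,U_i}=\tau(1-\tau)+(1-2\tau)R_i$; the $R_i$ term contributes $O_p(\epsilon_R)=o_p(1)$ after weighting by Assumption \ref{assumption:qr}, a centered law of large numbers replaces the remaining random indicators by their conditional means at cost $o_p(1)$, and the same kernel law of large numbers used for $\widehat{\Sigma}_{a2}$ gives $nh\sum_i w_i^2\,\tau(1-\tau)\,\Vs\Gamma_i\Gamma_i^{\top}\Vs^{\top}\xrightarrow{p}\Sigma$.

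The main obstacle is the reduction step for $\widehat{\Sigma}_{a1}$, namely bounding $\sum_i w_i\,\big|\One{y_i<\Gamma_i^{\top}\bh}-\One{y_i<q_i}\big|$ for the random estimator $\bh$: one cannot simply pass to conditional expectations, and a localized empirical-process bound over the bounded sparse set containing $\bh$ is needed. This is precisely what the arguments behind Lemma \ref{lem:os1} and Lemma \ref{lem:os11} supply, so we reuse them; the rest is a routine kernel-smoothing and central-limit-type computation.
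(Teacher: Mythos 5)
Your proof is correct, and its overall skeleton --- replace $\Vh$ by $\Vs$ and the residuals by $y_i-q_i$ using the rates from Theorems \ref{thm:b:consistent&sparse} and \ref{lem:qr_fixed:lasso:rate_v}, then apply a kernel-weighted law of large numbers --- is the same as the paper's. Where you genuinely diverge is in how the non-smooth factor is handled for $\widehat{\Sigma}_{a1}$: the paper simply asserts $\max_i |\Psi_{\tau}(y_i-\Gamma_i^\top\bh)-\Psi_{\tau}(y_i-q_i)|=o_p(1)$ and substitutes, whereas you use the exact identity $\Psi_\tau(v)^2=\tau^2+(1-2\tau)\One{v<0}$ to reduce the substitution error to $\sum_i w_i\abr{\One{y_i<\Gamma_i^\top\bh}-\One{y_i<q_i}}$, which you control by a mean-value bound on $F_i$ combined with the prediction-error rate and the localized empirical-process machinery behind Lemmas \ref{lem:os1} and \ref{lem:os11} (the $\fql$-versus-$\fq$ piece implicitly needing the Bernstein argument of Lemma \ref{lem:os31} together with Lemma \ref{lem:qr_fixed:approx_err}). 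Your route is more work but is the more defensible one: the paper's displayed claim is awkward as literally stated, since the $\Psi_\tau$-difference is $\{0,\pm1\}$-valued, so a uniform-over-$i$ $o_p(1)$ statement would mean no residual changes sign, and only an averaged-in-$i$ statement of the kind you prove is really available. Likewise your treatment of $\widehat{\Sigma}_{a2}$ (cross-term decomposition, restricted-eigenvalue bound on the quadratic form in $\Vh-\Vs$, explicit bias/variance computation for the kernel average) fleshes out the paper's one-line appeal to the law of large numbers and the continuous mapping theorem. Two minor points: $\PPst{Y\le q_i}{X_i,U_i}=\tau$ exactly, because $q_i$ is the true conditional quantile --- the $R_i$ correction in Assumption \ref{assumption:qr} pertains to the linear surrogate $x_i^\top\beta^\star(\tau,u_i)$, so your extra $(1-2\tau)R_i$ term is superfluous, though harmless; and, like the paper, you leave the $\nu_1$/$f_U(u)$ bookkeeping needed to identify both limits with the stated $\Sigma$ somewhat implicit, which is a defect of the lemma's statement rather than of your argument.
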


\begin{proof}

From the consistency of $\Vh$ and $\bh$, 
we have $\norm{\Vh-\Vs}_F=o_p(1)$ and 
\[
\max_i |\Psi_{\tau}(y_i-\Gamma_i^\top \bh)-\Psi_{\tau}(y_i-q_i)|=o_p(1).
\]
Therefore,
\begin{align*}
\widehat{\Sigma}_{a1} 
&= nh\sum_i w_i^2 \Vh \Gamma_i\Psi_{\tau}(y_i-\Gamma_i^\top \bh)\Psi_{\tau}(y_i-\Gamma_i^\top \bh)^\top \Gamma_i^\top \Vh^{\top}\\
&=nh\sum_i w_i^2 \Vs \Gamma_i\Psi_{\tau}(y_i-\Gamma_i^\top \bh)\Psi_{\tau}(y_i-\Gamma_i^\top \bh)^\top \Gamma_i^\top V^{\star\top} + o_p(1)\\
&=(nh)^{-1}\sum_i K^2(\frac{U_i-u}{h}) \Vs \Gamma_i\Psi_{\tau}(y_i-\Gamma_i^\top \bh)\Psi_{\tau}(y_i-\Gamma_i^\top \bh)^\top \Gamma_i^\top V^{\star\top} + o_p(1)\\
&=\EE{(nh)^{-1}\sum_i K^2(\frac{U_i-u}{h}) \Vs \Gamma_i\Psi_{\tau}(y_i-q_i)\Psi_{\tau}(y_i-q_i)^\top \Gamma_i^\top V^{\star\top} }+o_p(1)\\
&=\Sigma+o_p(1),
\end{align*}
which shows \eqref{eqn: sigma1},

From the condition \eqref{eqn:V1}, 
we have $\norm{\Vh-\Vs}_F=o_p(1)$.
By the strong law of large numbers,
\[
\Norm{\sum_j w_j V^{\star}\Gamma_j\Gamma_j^\top V^{\star \top}-\EEst{V^{\star}\Gamma\Gamma^\top V^{\star \top}}{U=u}}_F=o_p(1).
\]
Then by the continuous mapping theorem, 
we have $\norm{\widehat{\Sigma}_{a2}-\Sigma}_F=o_p(1)$.
This shows \eqref{eqn: sigma2}.
The proof is complete now.

\end{proof}

\begin{lemma}
\label{lem:os11}
Suppose Assumptions \ref{assumption:kernel}, \ref{assumption:u}, \ref{assumption:density}, and \ref{assumption:X} hold.
For any $V \in \mathbb{C}(S_2) \subset \R^{2k\times 2p}$
that satisfies $\max_{i \in [n]}||V \Gamma_i||_2 = B_V = O(\log p)$
and $r_b \asymp s\sqrt{\frac{\log (np)}{nh}}$,
we have
\begin{multline*}
 \sup_{ \substack{\norm{\delta}_{0,2} \leq m \\ \norm{\delta}_{1,2} \leq r_b } } \Norm{ \Gn{w_i V \Gamma_{i} (\One{y_i\leq \Gamma_i^{\top} \bs}- \One{y_i\leq \Gamma_i^{\top} (\bs+\delta)})} }_2\\
 = O_p\rbr{ B_K B_V \sqrt{\bar{f}B_X\frac{mr_b \log 
 (np)}{nh}} }.
\end{multline*}

\end{lemma}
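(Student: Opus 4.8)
The plan is to treat the quantity as the supremum of a centered, vector-valued empirical process indexed by the sparse ball $\mathcal{B}_{m,r_b}=\{\delta\in\R^{2p}:\norm{\delta}_{0,2}\le m,\ \norm{\delta}_{1,2}\le r_b\}$, working conditionally on $\{(x_i,u_i)\}_{i\in[n]}$ so that the weights $w_i$, the rows $\Gamma_i$, and the matrix $V$ are fixed and the only randomness is in $y_i\mid x_i,u_i$. Since $\norm{z}_2\le\sqrt{2k}\,\norm{z}_\infty$ and $k$ is held fixed, it suffices to bound each of the $2k$ scalar coordinates, or equivalently $\sup_{\norm{u}_2=1}u^\top\Gn{\cdots}$ after replacing the unit sphere in $\R^{2k}$ by a fixed $1/2$-net, which costs only an absorbable constant factor. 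This reduces matters to the scalar centered process $\Gn{w_i g_i(\delta)}$ with $g_i(\delta)=u^\top V\Gamma_i\,\rbr{\One{y_i\le\Gamma_i^\top\bs}-\One{y_i\le\Gamma_i^\top(\bs+\delta)}}$.

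The next step is to peel over the sparse support. Fix $T\subseteq[p]$ with $\abr{T}\le m$ and restrict to $\delta\in\mathcal{B}_{m,r_b}$ that vanish outside the $m$ groups indexed by $T$; then $y\mapsto\One{y\le\Gamma^\top(\bs+\delta)}$ ranges over a subclass of halfspaces in the $2m$-dimensional parameter, so the resulting function class is VC-subgraph of index $O(m)$ with envelope $w_i\norm{V\Gamma_i}_2\le B_K B_V/(nh)$. Because the non-Lipschitz indicator difference rules out symmetrization-plus-contraction, this VC-subgraph route is the correct tool. A union bound over the $\binom{p}{\le m}$ choices of $T$ will contribute a $\log\binom{p}{m}\lesssim m\log p$ factor, so it remains to control the process on a single fixed support.

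On a fixed support I would apply a \emph{localized} maximal inequality for VC-subgraph classes --- Talagrand's inequality together with the Dudley entropy integral, in the form commonly used in high-dimensional quantile regression --- whose essential input is the variance. By Assumption~\ref{assumption:density} the conditional mean of $\abr{\One{y_i\le\Gamma_i^\top\bs}-\One{y_i\le\Gamma_i^\top(\bs+\delta)}}$ is at most $\bar f\,\abr{\Gamma_i^\top\delta}$, so using $w_i\le B_K/(nh)$, $\norm{V\Gamma_i}_2\le B_V$, $\norm{\Gamma_i}_\infty\le B_X$, $\norm{\delta}_1\le\sqrt2\,r_b$ and $\sum_i w_i\le B_K$,
\[
\sup_{\delta\in\mathcal{B}_{m,r_b}}\ \var\!\rbr{\Gn{w_i g_i(\delta)}}
\ \le\ \frac{\sqrt2\,B_K^2\,B_V^2\,\bar f\,B_X\,r_b}{nh}\ =:\ \sigma^2 .
\]
Feeding $\sigma^2$, the VC index $O(m)$, and the envelope $B_K B_V/(nh)$ into the maximal inequality, and taking the union bound over the $\binom{p}{\le m}$ supports, yields with high probability a bound of order $\sqrt{\sigma^2\,m\log(np)}+(B_K B_V/(nh))\,m\log(np)$. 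The first term is exactly $B_K B_V\sqrt{\bar f B_X m r_b\log(np)/(nh)}$, and the second is of smaller order under the scaling $r_b\asymp s\sqrt{\log(np)/(nh)}$ assumed in the statement (with $m\lesssim s$ in the applications), which gives the claim.

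The hard part is the bookkeeping in the third paragraph: securing the \emph{local} variance factor $\sigma^2\propto r_b$ rather than the crude uniform bound $\propto(B_K B_V/(nh))^2\cdot nh$, while paying only the combined complexity $m\log(np)$ for uniformity over both the sparse support and the continuous parameter $\delta$. It is precisely the variance-dependent (rather than envelope-dependent) form of the maximal inequality that turns an otherwise trivial $O_p(1)$ bound into the $\sqrt{r_b}$ rate; by contrast, the vector-valued output, the role of the given matrix $V$, and the conditioning on the design are routine.
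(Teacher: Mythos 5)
Your proposal is correct and follows essentially the same route as the paper's proof: a $\tfrac12$-net over the low-dimensional direction, a fixed-support VC-subgraph covering bound combined with a union over the $\binom{p}{\le m}$ supports, the density-based local variance bound $\sigma^2\lesssim \bar f B_X B_K^2 B_V^2 r_b/(nh)$, and a localized (Giné--Guillou/Talagrand-type) maximal inequality plus concentration to convert the expectation bound into the stated $O_p$ rate. The only cosmetic difference is that the paper states the entropy input via Lemma~\ref{lem:VC_to_cover} and the maximal inequality via Lemmas~\ref{lem:emp_proc:supremum_vc} and~\ref{lm:proc_deviation}, which is exactly the machinery you describe.
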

\begin{proof}
 Let ${\cal W} = \cbr{\tilde{W}_1, \ldots, \tilde{W}_K}$ 
 be the $\frac 12$-net
  for $\cbr{W \in \R^{2k} \mid \norm{W}_2 \leq 1}$.
 That is, for all $W \in \R^{2k}$ with $\norm{W}_2 \leq 1$, 
 there exists $\tilde{W} \in {\cal W} \subseteq \cbr{W \in \R^{2k} \mid \norm{W}_2 \leq 1}$
 such that $\norm{\tilde{W}-W}_2 \leq \frac{1}{2}$. We have that $K \leq 5^{2k}$.
 Then
  \begin{align*}
    &\sup_{ \substack{\norm{\delta}_{0,2} \leq m \\ \norm{\delta}_{1,2} \leq r_b } } \Norm{\Gn{w_i \cdot V \Gamma_i \rbr{ \One{y_i\leq \Gamma_i^{\top} \bs}- \One{y_i\leq \Gamma_i^{\top} (\bs+\delta)} }}}_2 \\
    &\leq 2 \cdot \max_{\tilde{W} \in {\cal W}}
      \sup_{ \substack{\norm{\delta}_{0,2} \leq m \\ \norm{\delta}_{1,2} \leq r_b } }  \\
      &\qquad\qquad\qquad
      {\Gn{w_i\cdot \rbr{ |\One{y_i\leq \Gamma_i^{\top} \bs}- \One{y_i\leq \Gamma_i^{\top} (\bs+\delta)} |}\cdot |\tilde{W}^\top V \Gamma_{i}|}}.
    \end{align*}
For the expectation, we have
\begin{align*}
 &
 \EE{
    | \One{y_i\leq \Gamma_i^{\top} \bs}- \One{y_i\leq \Gamma_i^{\top} (\bs+\delta)}|} \\
    &
    \qquad\qquad\qquad
    \leq { \EE{ -|\Gamma_i^{\top} \delta| \leq y_i- \Gamma_i^{\top} \bs \leq |\Gamma_i^\top \delta| } }\\
& \qquad\qquad\qquad
= {
F_i\rbr{\Gamma_i^\top \bs + |\Gamma_i^{\top} \delta|} - F_i\rbr{\Gamma_i^{\top} \bs- |\Gamma_i^{\top} \delta| } }\\
  & \qquad\qquad\qquad
  \leq 2 \bar f \cdot  \abr{\Gamma_i^\top \delta} \\
  & \qquad\qquad\qquad
  \leq 2\bar f B_X r_b.
\end{align*}  
For a fixed $\tilde{W} \in {\cal W}$ and $|S| \leq m$, define
\begin{align*}
a_i &= w_i \cdot |\tilde{W}^\top V \Gamma_{i}|, \ \text{and} \\
{\cal G}_S &=
\Big\{(y_i, x_i, u_i) \mapsto a_i \cdot \One{ -|\Gamma_i^\top \delta| \leq y_i- \Gamma_i^\top \bs \leq |\Gamma_i^\top \delta|} :\\ 
& \qquad\qquad\qquad\qquad\qquad\qquad\qquad\qquad\qquad \supp\rbr{\delta} = S, \norm{\delta}_{2} \leq r_b\Big\}.\\
{\cal G} &= \cup_{S:|S|\leq m} {\cal G}_S
\end{align*}
Let $G(\cdot)=(nh)^{-1}B_K \rbr{\tilde{W}^\top V \Gamma_i}$ be an envelope of ${\cal G}$.
Then $\norm{G}_{\infty} \leq \frac{B_K B_V}{nh}$.
For a fixed $g \in {\cal G}$, let
\[
g_i = g(y_i, x_i, u_i) = a_i \cdot \One{ -|\Gamma_i^\top \delta| \leq y_i- \Gamma_i^\top \bs \leq |\Gamma_i^\top \delta|}.
\]  
Therefore, the variance is bounded as
\begin{align*}
  \sigma_{\cal G}^2
  \leq
  \sup_{g \in {\cal G}} \sum_{i \in [n]} \EE{g_i^2} \lesssim
  \bar f B_X r_b \sum_{i \in [n]} a_i^2
  \lesssim  \frac{\bar f B_X B_K^2 B_V^2}{nh}\cdot r_b,
\end{align*}
since 
\[
  \sum_{i \in [n]} a_i^2
  \leq B_V^2 \sum_{i \in [n]} w_i^2
  \leq \frac{ B_K^2 B_V^2}{nh}.
\]
The VC dimension for the space \[{\cal F}_S =
\cbr{(y_i, x_i, u_i) \mapsto \One{ -|\Gamma_i^\top \delta| \leq y_i- \Gamma_i^\top \bs \leq |\Gamma_i^\top \delta|} :  \supp\rbr{\delta} = S, \norm{\delta}_{2} \leq r_b}\] is $O(m)$. 
Therefore, applying Lemma \ref{lem:VC_to_cover} (presented later in Section \ref{pf:ep}),   
\[\sup_Q N\rbr{\epsilon\cdot \frac{B_KB_V}{nh}, {\cal G}_S, \norm{\cdot}_{L_2(Q)}} 
\leq \rbr{\frac{C}{\epsilon}}^{cm}, 
\]
and
\[
\sup_Q N\rbr{\epsilon\cdot \frac{B_KB_V}{nh}, {\cal G}, \norm{\cdot}_{L_2(Q)}} \leq \rbr{\frac{C}{\epsilon}}^{cm}     \cdot p^m . 
\]
Applying Lemma~\ref{lem:emp_proc:supremum_vc} (presented later in Section \ref{pf:V}) 
with
$\sigma_{\cal G}=B_KB_V\sqrt{\frac{\bar f B_X r_b}{nh}}$, $\norm{G}_{\infty} \leq \frac{B_K B_V}{nh}$,
$V = cm$, $U = \frac{B_K B_V}{nh}$, and $A = C m^{1/cm} p^{1/c}$ then gives us 
\begin{align*}
&\EE{ \sup_{g \in {\cal G}} \sum_{i \in [n]} g_i - \EE{g_i}}\\
& \leq \rbr{cm \frac{B_K B_V}{nh}  \log \rbr{\frac{C  p^{1/c}}{\sqrt{\bar f B_X r_b h}}} +
B_KB_V\sqrt{\frac{\bar f B_X r_b}{nh}} \sqrt{cm \log \rbr{\frac{C  p^{1/c}}{\sqrt{\bar f B_X r_b h}}}}}
 \\
& = O\rbr{ B_K B_V \sqrt{\bar{f}B_X\frac{mr_b \log (p\vee n)}{nh}} },
\end{align*}
under the conditions of the lemma and the growth condition in
Assumption \ref{assumption:growth}. Lemma~\ref{lm:proc_deviation} then gives us
\begin{align*}
  \sup_{ g \in {\cal G} }
  \sum_{i \in [n]} g_i - \EE{g_i} = O_p\rbr{ B_K B_V \sqrt{\bar{f}B_X\frac{mr_b \log (np)}{nh}} }.
\end{align*}
A union bound over $\tilde W \in {\cal W}$ concludes the proof.
\end{proof}

\begin{lemma}
\label{lem:os31}
Suppose
Assumptions \ref{assumption:kernel}, \ref{assumption:u}, \ref{assumption:density}, \ref{assumption:qr}, \ref{assumption:X} and \ref{assumption:growth} hold.
For all 
$V \in \mathbb{C}(S_2) \subset \R^{2k\times 2p}$ with $B_V=\max_{i \in [n]}||V \Gamma_i||_2$, we have
\begin{multline*}
  \Norm{ \Gn{w_i V \Gamma_{i} (\One{y_i\leq \tilde{q}_i}- \One{y_i\leq q_i})} }_2
  \\
  = O_P\rbr{B_KB_V 
    \sqrt{
    \frac{\bar f }{nh} \cdot  
    (h^2 + \epsilon_R)
    }
    }.
\end{multline*}

\end{lemma}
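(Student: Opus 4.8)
The plan is to prove this by a direct conditional second-moment (variance) calculation; no empirical-process/covering-number machinery is needed, because $V$ is a fixed matrix and, conditionally on the covariates $\{(x_i,u_i)\}_{i\in[n]}$, the quantities $\tilde{q}_i=\Gamma_i^\top\bs$ and $q_i$ are non-random, so there is no supremum to control. Write $\mathbb{E}_Y$ for the expectation conditional on all covariates; the centering in $\Gn{\cdot}$ is exactly subtraction of $\mathbb{E}_Y$, which is also what makes the split of the left-hand side of Lemma~\ref{lem:os3} into terms $I$ and $II$ legitimate, so
\[
\Gn{w_i V\Gamma_i(\One{y_i\leq\tilde{q}_i}-\One{y_i\leq q_i})} = \sum_i w_i V\Gamma_i\,\xi_i, \qquad \xi_i := \big(\One{y_i\leq\tilde{q}_i}-\One{y_i\leq q_i}\big)-\big(F_i(\tilde{q}_i)-F_i(q_i)\big),
\]
where, conditionally on the covariates, the $\xi_i$ are independent, mean zero, and bounded by $1$ in absolute value. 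First I would compute
\[
\mathbb{E}_Y\Norm{\sum_i w_i V\Gamma_i\xi_i}_2^2 = \sum_i w_i^2\Norm{V\Gamma_i}_2^2\,\mathbb{E}_Y[\xi_i^2] \leq \sum_i w_i^2\Norm{V\Gamma_i}_2^2\,\mathbb{E}_Y\big[\big(\One{y_i\leq\tilde{q}_i}-\One{y_i\leq q_i}\big)^2\big].
\]

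Next I would bound the localized indicator term. Since $\big(\One{y_i\leq\tilde{q}_i}-\One{y_i\leq q_i}\big)^2=\big|\One{y_i\leq\tilde{q}_i}-\One{y_i\leq q_i}\big|$, its conditional expectation equals $|F_i(\tilde{q}_i)-F_i(q_i)|\leq\bar f\,|\Delta_i|$ by the mean value theorem and the density bound of Assumption~\ref{assumption:density}. Using $\Norm{V\Gamma_i}_2\leq B_V$ and the kernel-weight bounds $w_i\leq B_K/(nh)$, $\sum_i w_i\leq B_K$ that follow from Assumptions~\ref{assumption:kernel}--\ref{assumption:u}, together with Cauchy--Schwarz,
\[
\sum_i w_i^2\Norm{V\Gamma_i}_2^2\,\bar f\,|\Delta_i| \leq \frac{\bar f B_K B_V^2}{nh}\sum_i w_i|\Delta_i| \leq \frac{\bar f B_K^{3/2}B_V^2}{nh}\Big(\sum_i w_i\Delta_i^2\Big)^{1/2}.
\]
By Lemma~\ref{lem:qr_fixed:approx_err} (which uses the smoothness and approximate sparsity of $q(x;\tau,u)$ from Assumption~\ref{assumption:qr} and the design control of Assumption~\ref{assumption:X}) we have $\big(\sum_i w_i\Delta_i^2\big)^{1/2}=O_P(h^2+\epsilon_R)$, so the conditional second moment is $O_P\big(\bar f B_K^{3/2}B_V^2(h^2+\epsilon_R)/(nh)\big)$.

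Finally I would convert this to the stated in-probability bound by the conditional Markov/Chebyshev inequality and then integrating out the covariates (equivalently, the bounding quantity is itself $O_P(\cdot)$), which gives $\Norm{\Gn{\cdot}}_2=O_P\big(B_K^{3/4}B_V\sqrt{\bar f(h^2+\epsilon_R)/(nh)}\big)$; since $B_K$ is an absolute constant, $B_K^{3/4}$ is absorbed into $B_K$, yielding the claim. The mildly delicate points — rather than genuine obstacles — are being careful that the centering in $\Gn{\cdot}$ is taken conditionally on the covariates (so the $F_i(\tilde{q}_i)-F_i(q_i)$ terms match precisely those peeled off as term $II$ in Lemma~\ref{lem:os3}) and stating everything on the high-probability event where the kernel-weight bounds and Lemma~\ref{lem:qr_fixed:approx_err} hold; the substantive input is the $\bar f|\Delta_i|$ bound on the conditional indicator variance, which is what produces the $\sqrt{h^2+\epsilon_R}$ gain over the trivial rate.
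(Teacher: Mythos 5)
Your proof is correct, but it takes a genuinely different route from the paper. The paper reduces the vector norm to scalar processes via a $\tfrac12$-net over directions $W\in\R^{2k}$ (of cardinality at most $5^{2k}$, harmless since $k$ is fixed), bounds $\sum_i \EE{g_i^2}$ for $g_i = w_i|\tilde W^\top V\Gamma_i|\rbr{\One{y_i\leq\tilde q_i}-\One{y_i\leq q_i}}$ by essentially the same $\bar f\,|\Delta_i|$ argument you use, and then invokes Bernstein's inequality plus a union bound over the net. You instead exploit that, conditionally on the design, the summands are independent and mean zero, so the cross terms in $\mathbb{E}_Y\Norm{\sum_i w_iV\Gamma_i\xi_i}_2^2$ vanish and a single second-moment computation followed by Chebyshev/Markov yields the stated $O_P$ rate directly, with no covering-number or concentration machinery; your reading of the centering in $\Gn{\cdot}$ as conditional on the covariates is consistent with how the paper uses it in the decomposition of Lemma \ref{lem:os3} (term $II$ is exactly the conditional mean), and the $B_K^{3/4}$ versus $B_K$ discrepancy is immaterial since $B_K$ is a constant. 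What your argument buys is simplicity: for the lemma as stated (a fixed $V$ and an in-probability bound) Chebyshev is all that is needed. What the paper's Bernstein-based argument buys is an exponential tail in the deviation level, which matters downstream: in the proof of Lemma \ref{lem:os2} this bound is applied ``with a union bound'' over the $p$ coordinate blocks to control an $\norm{\cdot}_{\infty,2}$ norm, and a Chebyshev-only bound would not survive a union bound over $p$ terms without losing the rate. So your proof establishes the lemma as stated, but if you wanted it to serve the same role in the rest of the paper you would need to upgrade the tail, e.g.\ by replacing Markov with Bernstein applied coordinatewise, which your variance computation already sets up.
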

\begin{proof}
 Let ${\cal W} = \cbr{\tilde{W}_1, \ldots, \tilde{W}_K}$ be the $\frac 12$-net
  for $\cbr{W \in \R^{2k} \mid \norm{W}_2 \leq 1}$. We have that $K \leq 5^{2k}$ and
  \begin{multline*}
     \Norm{\Gn{w_i \cdot V \Gamma_i \rbr{ \One{y_i\leq \tilde{q}_i}- \One{y_i\leq q_i} }}}_2 \\
    \leq 2 \cdot \max_{\tilde{W} \in {\cal W}}
        \Gn{w_i\cdot \rbr{ |\One{y_i\leq \tilde{q}_i}- \One{y_i\leq q_i |}\cdot |\tilde{W}^\top V \Gamma_{i}|}}.
    \end{multline*}
Let
\begin{align*}
  a_i & = w_i \cdot |\tilde{W}^\top V \Gamma_{i}|, \ \text{ and}  \\
  g_i & = g(y_i, x_i, u_i) = a_i \cdot \rbr{\One{y_i\leq \tilde{q}_i}- \One{y_i\leq q_i}}.
\end{align*}
Since
\begin{align*}
  \EE{
    | \One{y_i\leq \tilde{q}_i}- \One{y_i\leq q_i}|}
  & \leq 2 \bar f \cdot  \abr{q_i - \tilde{q}_i},
\end{align*}  
we have
\begin{multline*}
   \sum_{i \in [n]} \EE{g_i^2} 
   \lesssim
    \sum_i 2 \bar f \cdot  \abr{q_i - \tilde{q}_i}a_i^2=\sum_i 2 \bar f \cdot  \abr{q_i - \tilde{q}_i}w_i^2\rbr{\tilde{W}^\top V \Gamma_{i}}^2\\
\leq \frac{\bar f B_K^2B_V^2}{nh}   
     \sqrt{\EEn{w_i\rbr{q_i - \tilde{q}_i}^2}}.
\end{multline*}
Then by Lemma~\ref{lem:qr_fixed:approx_err} and Assumption~\ref{assumption:X}, we have
\begin{align*}
   \sum_{i \in [n]} \EE{g_i^2} 
   = O_p\rbr{
    \frac{\bar f B_K^2B_V^2}{nh} \cdot  
    (h^2 + \epsilon_R)
    }.
\end{align*}
The result follows from the Bernstein's inequality and the union bound over ${\cal W}$.
\end{proof}

\subsection{Consistency of the initial estimator $\bh^{\rm ini}$}\label{pf:bh}

We show the convergence guarantee of the initial estimator
$\bh^{\rm ini}$ defined in Section \ref{algorithm} Step 1. 
Notice that in the following two sections,
we slightly abuse the notation by denoting $\bh^{\rm ini}$ from Section \ref{algorithm} as $\bh$, 
and 
$\bh$ from Section \ref{algorithm} is defined as $\bh^{\lambda}$,
since it is obtained by thresholding at the level $\lambda$. 

Let
\begin{equation}
    \label{eq:qr_fixed:Wi_definition}
    W_i(\delta) = {\rho_\tau\rbr{y_i - (\fq + \delta)} - \rho_\tau\rbr{y_i - \fq}},
\end{equation}
which can be decomposed as
\begin{equation}
\begin{aligned}
  \label{eq:qr_fixed:knight}
W_i(\delta) 
& = -\delta \Psi_\tau(y_i - \fq) + \int_0^{\delta}\sbr{\One{y_i\leq \fq + z} - \One{y_i\leq \fq}} dz \\
& =: W_i^{\#}(\delta) + W_i^{\natural}(\delta), 
\end{aligned}
\end{equation}
using the Knight's identity. 
Note that we can also write
\begin{equation}
  \label{eq:qr_fixed:direct}
W_i^{\natural}(\delta) = (y_i-(\fq+\delta))\sbr{\One{\fq + \delta \leq y_i < \fq} - \One{\fq \leq y_i < \fq + \delta}}.
\end{equation}
With this notation, we study properties of the following penalized quantile regression estimator
\begin{equation}
\label{eq:eq_fixed:opt}
\bh = \arg \min_b \sum_{i \in [n]} w_i \cdot \rho_\tau(y_i - \Gamma_i^\top b) + \lambda \norm{b}_{1,2},
\end{equation}
where the groups are formed by pairs $(b_{0j}, b_{1j})$ 
for
$j \in [p]$ and $w_i = (nh)^{-1} K\rbr{\frac{U_i - u}{h}}$. 
The estimated quantile function is 
denoted as $\fhq = \Gamma_i^\top\bh$.

\begin{theorem}
  \label{thm:qr_fixed:main}
Under Assumptions \ref{assumption:kernel},\ref{assumption:density}, \ref{assumption:qr} and \ref{assumption:X}, 
we have 
\begin{align*}
\EEn{w_i \cdot \rbr{\Gamma_i^\top \rbr{\bh - \bs}}^2} & = O_p\rbr{\frac{s\log (np)}{nh}} \\
\intertext{and}
\norm{\bh-\bs}_{1,2} & = O_p\rbr{s\sqrt{\frac{\log (np)}{nh}}}.    
\end{align*}

 \end{theorem}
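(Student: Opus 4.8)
The plan is the standard primal (``basic inequality'') analysis for $\ell_{1,2}$-penalized $M$-estimation, adapted to the non-differentiable quantile loss and to the local-linear approximation bias. Write $\delta=\bh-\bs$, recall $\qt_i=\Gamma_i^\top\bs$ and $\Delta_i=\qt_i-q_i$, so that $\Gamma_i^\top\bh=q_i+\Delta_i+\Gamma_i^\top\delta$. First I would use optimality of $\bh$ in \eqref{eq:eq_fixed:opt} tested against $\bs$, together with Knight's identity \eqref{eq:qr_fixed:knight} centered at $q_i$, to obtain an inequality of the form
\[
\sum_i w_i W_i^{\natural}(\Delta_i+\Gamma_i^\top\delta)\;\le\;\Big|\sum_i w_i\,\Gamma_i^\top\delta\,\Psi_\tau(y_i-q_i)\Big|+\sum_i w_i W_i^{\natural}(\Delta_i)+\lambda\big(\norm{\bs}_{1,2}-\norm{\bh}_{1,2}\big),
\]
where the left-hand side is the nonnegative curvature term and the first term on the right is the linear score.

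Three ingredients then control the right-hand side. The score term is bounded by $\norm{\sum_i w_i\Gamma_i\Psi_\tau(y_i-q_i)}_{\infty,2}\cdot\norm{\delta}_{1,2}$, and Lemma~\ref{lem:qr_fixed:probability:score} shows the score is $O_p(\sqrt{\log(np)/(nh)})$; taking $\lambda\asymp\sqrt{\log(np)/(nh)}$ at least twice this value lets the penalty absorb it and, by decomposability of the group norm over the sparse support $S'$ from Assumption~\ref{assumption:qr}, confines $\delta$ to a cone with $\norm{\delta_{(S')^c}}_{1,2}\lesssim\norm{\delta_{S'}}_{1,2}$, $|S'|\le s_1$; on this cone the restricted eigenvalue in Assumption~\ref{assumption:X} gives $\norm{\delta}_{1,2}\lesssim\kappa_-^{-1}\sqrt{s_1}\,t$ with $t:=(\sum_i w_i(\Gamma_i^\top\delta)^2)^{1/2}$. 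The bias term obeys $W_i^{\natural}(\Delta_i)\le|\Delta_i|\,\One{|y_i-q_i|\le|\Delta_i|}$, whose weighted sum is $O_p\big(\bar f\sum_i w_i\Delta_i^2\big)=O_p(h^4+\epsilon_R^2)=O_p(\log(np)/(nh))$ by Lemma~\ref{lem:qr_fixed:approx_err}, Assumption~\ref{assumption:qr}, and the bandwidth rate in Assumption~\ref{assumption:growth}.

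The crux is the matching lower bound on $\sum_i w_i W_i^{\natural}(\Delta_i+\Gamma_i^\top\delta)$, a restricted-strong-convexity property of the quantile loss in the presence of the bias. I would first pass to conditional expectations, $\EE{W_i^{\natural}(t)\mid x_i,u_i}=\int_0^t\big(F_i(q_i+z)-F_i(q_i)\big)\,dz$, which admits a quadratic lower bound of order $\underline f\,t^2$ when $|t|$ is small and is handled through the cubic growth control \eqref{eqn:growth2} otherwise, yielding $\sum_i w_i\EE{W_i^{\natural}(\Delta_i+\Gamma_i^\top\delta)}\gtrsim\kappa_-^2 t^2$ once $t$ is of the target order; the fluctuation $\sum_i w_i\big(W_i^{\natural}-\EE{W_i^{\natural}}\big)$ is then controlled uniformly over the cone by a localized (peeling) empirical-process argument with VC/entropy bounds for the relevant indicator classes, analogous to Lemma~\ref{lem:os11}. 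Combining the lower bound with the three upper bounds produces a self-bounding quadratic inequality $c\,\kappa_-^2 t^2\lesssim\lambda\,\kappa_-^{-1}\sqrt{s_1}\,t+\log(np)/(nh)$, whose solution is $t=O_p(\sqrt{s\log(np)/(nh)})$; feeding this back into the cone bound yields $\norm{\bh-\bs}_{1,2}=O_p(s\sqrt{\log(np)/(nh)})$. The main obstacle is precisely this restricted-strong-convexity step: the loss is neither smooth nor strongly convex, the kernel weights make $nh$ (not $n$) the effective sample size, and the local-linear bias $\Delta_i$ enters the curvature term — one must simultaneously keep the quadratic lower bound valid through the peeling over the scale of $t$ and ensure $\Delta_i$ is small enough not to leave the quadratic regime, which is exactly what the growth conditions in Assumptions~\ref{assumption:X} and \ref{assumption:growth} are designed to guarantee.
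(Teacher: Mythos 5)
Your proposal follows essentially the same route as the paper's proof: Knight's identity applied to the basic inequality, a score event under which $\lambda$ dominates $\norm{\sum_i w_i\Gamma_i\Psi_\tau(y_i-q_i)}_{\infty,2}$ (Lemma \ref{lem:qr_fixed:probability:score}), the cone/restricted-eigenvalue step, an expected quadratic-minus-cubic curvature lower bound exploited through the growth constant $\kappa_q$ (Lemma \ref{lem:qr_fixed:quad_lower_bound}), VC/entropy-based control of the $W_i^{\natural}$ fluctuations (Lemma \ref{lem:qr_fixed:emp_proc}), and the approximation-error bound for the local-linear bias (Lemmas \ref{lem:11} and \ref{lem:qr_fixed:approx_err}), combined into a self-bounding inequality for the prediction norm. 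The differences are only organizational: the paper localizes by passing to an intermediate point $\bc$ on the segment $[\bs,\bh]$ with prediction norm exactly $r_b$ and arguing by contradiction, rather than peeling, and it routes the bias through an approximate cone condition (Lemma \ref{lem:6}) and a two-case split instead of keeping $\Delta_i$ inside the curvature term — a bookkeeping point you should make explicit, since the bias term also perturbs the cone inequality and not just the additive remainder.
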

  
 \begin{proof}
Denote $S' = \supp \{\bs \}$. 
Let $r_b$ be a rate 
satisfying $r_b = O_p\rbr{\sqrt{\frac{s\log (np)}{nh}}}$.
Recall that 
\[
    \kappa_q = \inf_{
\substack{\norm{\delta}_{1,2} \leq \frac{7|S'| \cdot \sqrt{\log p}}{\kappa_- \sqrt{nh}} \\ \EEn{w_i(\Gamma_i^\top\delta)^2} = \frac{|S'|\log p}{nh} }
}
      \frac{\rbr{ \underline{f} \cdot \EEn{w_i\cdot\rbr{\Gamma_i^\top \delta}^2}}^{3/2}}{\bar{f}' \cdot \EEn{w_i\cdot\rbr{\Gamma_i^\top \delta}^3}}.
\]  
As $n$ grows, we have
\begin{equation}
  \label{eq:qr_fixed:rate_condition_rb}
  \kappa_q \geq r_b \sqrt{\underline f}.
\end{equation}
In order to establish that $\EEn{w_i \cdot \rbr{\Gamma_i^\top \rbr{\bh - \bs}}^2} \leq r_b^2$, 
we use the proof by contradiction.

Suppose that
$\EEn{w_i \cdot \rbr{\Gamma_i^\top \rbr{\bh - \bs}}^2} > r_b^2$. Since the objective function is convex, there
exists a vector 
\[ \bc = \bs + (\bh-\bs) \frac{r_b}{\sqrt{\EEn{w_i \cdot \rbr{\Gamma_i^\top \rbr{\bh - \bs}}^2}}} \] 
such that $\EEn{w_i \cdot \rbr{\Gamma_i^\top \rbr{\bc - \bs}}^2} = r_b^2$
and
\begin{equation*}
\EEn{w_i \cdot\rbr{\rho_\tau(y_i - \check{q}_i) - \rho_\tau(y_i - \fql)}} \leq
\lambda \rbr{\norm{\bc}_{1,2}-\norm{\bs}_{1,2}},
\end{equation*}
is satisfied. We separate our analysis into two parts,
according to whether $3\norm{(\bc - \bs)_{S'}}_{1,2} \geq \frac{2}{\lambda}\EEn{w_i \cdot W_i^{\natural}(\fql - \fq)}$ 
in \eqref{eq:qr_fixed:cone_condition} or not.

First, suppose that $3\norm{(\bc - \bs)_{S'}}_{1,2} \geq \frac{2}{\lambda}\EEn{w_i \cdot W_i^{\natural}(\fql - \fq)}$.
By Lemma \ref{lem:6} (presented next), $\norm{(\bc - \bs)_{N'}}_{1,2} \leq 6\norm{(\bc - \bs)_{S'}}_{1,2}$ and 
  \begin{multline}
    \label{eq:qr_fixed:l1_bound:1}
    \norm{\bc - \bs}_{1,2}
    \leq 7 \norm{(\bc - \bs)_{S'}}_{1,2}
    \leq 7 \sqrt{|S'|} \norm{(\bc - \bs)_{S'}}_{2} \\
    \leq \frac{7 \sqrt{|S'|} \cdot r_b}{\kappa_-} = O_p\rbr{s\sqrt{\frac{\log (np)}{nh}}}.
  \end{multline}
Starting from \eqref{eq:qr_fixed:start}, we have that
\begin{multline*} 
  \lambda \norm{\bc - \bs}_{1,2} 
  \geq \bE{w_i \cdot \rbr{W_i\rbr{\Gamma_i^\top\bc - \fq} - W_i\rbr{\Gamma_i^\top\bs - \fq}}} \\
   + \Gn{w_i \cdot \rbr{W_i\rbr{\Gamma_i^\top\bc - \fq} - W_i\rbr{\Gamma_i^\top\bs - \fq}}}.
\end{multline*}
Lemma~\ref{lem:qr_fixed:quad_lower_bound} (presented later in this section) gives us 
\[
  \bE{w_i \cdot \rbr{W_i\rbr{\Gamma_i^\top\bc - \fq} - W_i\rbr{\Gamma_i^\top\bs - \fq}}}
  \geq \frac{\underline{f} r_b^2 \wedge \kappa_q \sqrt{\underline{f}} r_b }{3} \geq \frac{\underline{f} r_b^2 }{3},
\]
where the second inequality follows under~\eqref{eq:qr_fixed:rate_condition_rb}. 
On the event $\cE_{\rm QR}(\lambda)$, we have
\[
  \Gn{w_i \cdot \rbr{W_i^{\#}\rbr{\Gamma_i^\top\bc - \fq} - W_i^{\#}\rbr{\Gamma_i^\top\bs - \fq}}}
  \geq -\frac{\lambda}{2}\norm{\bc - \bs}_{1,2}.
\]
Lemma~\ref{lem:qr_fixed:emp_proc} (presented later in this section) gives us
\begin{multline*}
  \sup_{
      \substack{\norm{\delta}_{1,2} \leq \frac{7 \sqrt{|S'|} \cdot r_b}{\kappa_-} \\ \EEn{w_i(\Gamma_i^\top\delta)^2} = r_b^2 }
    } \abr{
  \Gn{w_i \cdot \rbr{W_i^{\natural}\rbr{\Gamma_i^\top\bc - \fq} - W_i^{\natural}\rbr{\Gamma_i^\top\bs - \fq}}}}
  = o_p(r_b).
\end{multline*} 
Putting everything together, we obtain that
\begin{equation*}
  0
  \geq \frac{\underline{f} r_b }{3} - \frac{7 \lambda\sqrt{|S'|}}{2 \kappa_-} 
  - o_p(r_b)
    > 0,
\end{equation*}
which is a contradiction.

The second part of the upper bound is established in the case when
$$3\norm{(\bc - \bs)_{S'}}_{1,2} < (2/\lambda)\EEn{w_i \cdot W_i^{\natural}(\fql - \fq)}.$$
Then we have that 
\[
\norm{(\bc - \bs)_{N'}}_{1,2} \leq \frac{4}{\lambda}\EEn{w_i \cdot W_i^{\natural}(\fql - \fq)}
\]
and
\[
  \norm{\bc - \bs}_{1,2}  \leq \frac{6}{\lambda}\EEn{w_i \cdot W_i^{\natural}(\fql - \fq)} = O_p\rbr{s\sqrt{\frac{\log (np)}{nh}}},
\]
where the last equation is because of 
Lemmas \ref{lem:qr_fixed:probability:score} and \ref{lem:11}.
The same argument as above gives us a contradiction,
which completes the proof.

\end{proof}

\begin{lemma} \label{lem:6}
 On the event 
\begin{equation}
  \label{eq:score_c}
  \cE_{\rm QR}(\lambda) = \cbr{
  \max_{j \in [p]} \sup_{{v:=(v_0,v_1) \in \R^2} \atop {\norm{v}_2=1}}
  \sum_i w_i \cdot\rbr{ x_{ij}v_0 + x_{ij}(u_i-u)v_1} \cdot \Psi_\tau\rbr{y_i - \fq} \leq \frac{\lambda}{2} }
\end{equation}
we have 
\begin{equation}
  \label{eq:qr_fixed:cone_condition}
	\norm{(\bh - \bs)_{N'}}_{1,2} \leq 3\norm{(\bh - \bs)_{S'}}_{1,2} + \frac{2}{\lambda}\EEn{w_i \cdot W_i^{\natural}(\fql - \fq)},
 \end{equation}
where $N' = S'^c$ .
 \end{lemma}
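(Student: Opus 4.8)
The plan is to run the standard basic-inequality argument, adapted to the group $\ell_{1,2}$ penalty and carrying along the linear-approximation error $\fql-\fq$. Write $\delta = \bh - \bs$. First I would use optimality of $\bh$ in \eqref{eq:eq_fixed:opt}, comparing the value of the penalized objective at $\bh$ with its value at $\bs$, to obtain
\[
\EEn{w_i\rbr{\rho_\tau(y_i - \Gamma_i^\top\bh) - \rho_\tau(y_i-\Gamma_i^\top\bs)}} \leq \lambda\rbr{\norm{\bs}_{1,2} - \norm{\bh}_{1,2}}.
\]
Since $\Gamma_i^\top\bs = \fql$ and $\Gamma_i^\top\bh - \fq = (\fql - \fq) + \Gamma_i^\top\delta$, the left-hand side equals $\EEn{w_i\rbr{W_i(\fql - \fq + \Gamma_i^\top\delta) - W_i(\fql - \fq)}}$ in the notation of \eqref{eq:qr_fixed:Wi_definition}.

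Next I would expand both $W_i$ terms via Knight's identity \eqref{eq:qr_fixed:knight}. The linear parts are additive, so $W_i^{\#}(\fql - \fq + \Gamma_i^\top\delta) - W_i^{\#}(\fql - \fq) = -\Gamma_i^\top\delta\cdot\Psi_\tau(y_i-\fq)$, while $W_i^{\natural}(\fql - \fq + \Gamma_i^\top\delta)\geq 0$ because the integrand in \eqref{eq:qr_fixed:knight} has the same sign as its integration variable; this nonnegative term is simply discarded after moving it across. This leaves
\[
-\EEn{w_i\,\Gamma_i^\top\delta\cdot\Psi_\tau(y_i-\fq)} - \EEn{w_i W_i^{\natural}(\fql - \fq)} \leq \lambda\rbr{\norm{\bs}_{1,2} - \norm{\bh}_{1,2}}.
\]

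The score term is controlled on the event $\cE_{\rm QR}(\lambda)$ of \eqref{eq:score_c}: writing $\Gamma_i^\top\delta = \sum_{j\in[p]}\rbr{x_{ij}\delta_{0j} + x_{ij}(u_i-u)\delta_{1j}}$, the contribution of group $j$ is the inner product of $(\delta_{0j},\delta_{1j})$ with the vector $\rbr{\EEn{w_i x_{ij}\Psi_\tau(y_i-\fq)},\, \EEn{w_i x_{ij}(u_i-u)\Psi_\tau(y_i-\fq)}}$, whose Euclidean norm is at most $\lambda/2$ by the definition of $\cE_{\rm QR}(\lambda)$ (the supremum over unit $v$ there is exactly this norm). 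Cauchy--Schwarz within each group followed by summation gives $\abr{\EEn{w_i\Gamma_i^\top\delta\,\Psi_\tau(y_i-\fq)}}\leq \tfrac{\lambda}{2}\norm{\delta}_{1,2}$. For the penalty side, since $\bs$ is supported on $S' = \supp\{\bs\}$, decomposing the $\ell_{1,2}$ norm over $S'$ and $N' = S'^c$ and using the triangle inequality yields $\norm{\bs}_{1,2} - \norm{\bh}_{1,2}\leq \norm{\delta_{S'}}_{1,2} - \norm{\delta_{N'}}_{1,2}$. Substituting both estimates into the last display, using $\norm{\delta}_{1,2} = \norm{\delta_{S'}}_{1,2} + \norm{\delta_{N'}}_{1,2}$, and collecting terms gives $\tfrac{\lambda}{2}\norm{\delta_{N'}}_{1,2} \leq \tfrac{3\lambda}{2}\norm{\delta_{S'}}_{1,2} + \EEn{w_i W_i^{\natural}(\fql - \fq)}$, which is \eqref{eq:qr_fixed:cone_condition} after multiplying by $2/\lambda$. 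I do not expect a genuine obstacle here; the only points needing care are verifying that the discarded $W_i^{\natural}$ term at $\bh$ is nonnegative for both signs of its argument, and arranging the per-group Cauchy--Schwarz so that the supremum-over-unit-vectors built into $\cE_{\rm QR}(\lambda)$ produces exactly the factor $\lambda/2$ with no looser constant.
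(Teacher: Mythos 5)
Your proposal is correct and follows essentially the same route as the paper: the basic inequality from optimality of $\bh$, Knight's identity to isolate the linear score term plus the nonnegative $W_i^{\natural}$ remainder (your dropping of $W_i^{\natural}(\fql-\fq+\Gamma_i^\top\delta)\ge 0$ is exactly the paper's convexity bound at $\fhq$), control of the score on $\cE_{\rm QR}(\lambda)$ via the per-group dual norm and Cauchy--Schwarz, and the standard support decomposition of the penalty. The only difference is presentational: you spell out the group-wise Cauchy--Schwarz step that the paper leaves implicit.
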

\begin{proof}
  Our starting point is the observation that 
  \begin{equation}
    \label{eq:qr_fixed:start}
    \EEn{w_i \cdot\rbr{\rho_\tau(y_i - \fhq) - \rho_\tau(y_i - \fql)}} \leq
    \lambda \rbr{\norm{\bs}_{1,2}-\norm{\bh}_{1,2}},
  \end{equation}
  since $\bh$ minimizes \eqref{eq:eq_fixed:opt}. 
  Due to convexity of $\rho_{\tau}(\cdot)$, we have
  \begin{equation}
    \label{eq:qr_fixed:lower_bound:1}
    \EEn{w_i \cdot\rbr{\rho_\tau(y_i - \fhq) - \rho_\tau(y_i - \fq)}} \geq
    \EEn{w_i \cdot \rbr{\fq - \fhq} \cdot \Psi_\tau(y_i - \fq)}.
  \end{equation}
  Using \eqref{eq:qr_fixed:knight}, we have
  \begin{multline}
    \label{eq:qr_fixed:lower_bound:2}
    \EEn{w_i \cdot\rbr{\rho_\tau(y_i - \fql) - \rho_\tau(y_i - \fq)}} \\
    =
    \EEn{w_i \cdot \rbr{\fq - \fql} \cdot \Psi_\tau(y_i - \fq)} + \EEn{w_i \cdot W_i^{\natural}(\fql - \fq)}.
  \end{multline}
  Combining \eqref{eq:qr_fixed:lower_bound:1} and \eqref{eq:qr_fixed:lower_bound:2} with \eqref{eq:qr_fixed:start},
  we have
  \begin{equation}
    \label{eq:qr_fixed:1}
    \EEn{w_i \cdot \rbr{\fql - \fhq} \cdot \Psi_\tau(y_i - \fq)} - \EEn{w_i \cdot W_i^{\natural}(\fql - \fq)}
    \leq \lambda \rbr{\norm{\bs}_{1,2}-\norm{\bh}_{1,2}}.
  \end{equation}
On the event $\cE_{\rm QR}(\lambda)$,
  \[
    \EEn{w_i \cdot \rbr{\fql - \fhq} \cdot \Psi_\tau(y_i - \fq)}  \geq - \frac{\lambda}{2} \norm{\bs - \bh}_{1,2} .
  \]
  Combining with the display above, we obtain that
\begin{equation*}
  - \frac{\lambda}{2} \norm{\bs - \bh}_{1,2} \leq \EEn{w_i \cdot W_i^{\natural}(\fql - \fq)} + \lambda \rbr{\norm{\bs}_{1,2}-\norm{\bh}_{1,2}}.
\end{equation*}
Since
\[
\norm{\bs}_{1,2}-\norm{\bh}_{1,2} \leq
\norm{\rbr{\bs - \bh}_{S'}}_{1,2}-\norm{\rbr{\bs - \bh}_{N'}}_{1,2},
\]
we have
\begin{equation*}
\frac{\lambda}{2} \norm{(\bh - \bs)_{N'}}_{1,2} \leq \EEn{w_i \cdot W_i^{\natural}(\fql - \fq)} +\frac{3\lambda}{2} \norm{(\bh - \bs)_{S'}}_{1,2},
\end{equation*}
which completes the proof.
\end{proof}

\begin{lemma}
  \label{lem:qr_fixed:probability:score}
  Under Assumption \ref{assumption:X}, for 
  \begin{equation}
    \label{eq:qr_fixed:lambda}
    \lambda = 4 \cdot \rbr{\max_{j \in [p]} \EEn{w_i^2 x_{ij}^2}}^{1/2} \sqrt{ \log(4p/\gamma)} = O\left(\sqrt{\frac{\log p}{nh}}\right),
    \end{equation}
  we have
  \begin{equation*}
    \PP{\cE_{\rm QR}(\lambda) } \geq 1 - \gamma.
  \end{equation*}
\end{lemma}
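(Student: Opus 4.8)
The plan is a standard conditional concentration argument. First I would simplify the event. Write $\xi_i := \Psi_\tau(y_i - \fq) = \tau - \One{y_i < \fq}$, and for each $j \in [p]$ set $A_j := \EEn{w_i x_{ij}\xi_i}$ and $B_j := \EEn{w_i x_{ij}(u_i-u)\xi_i}$. The quantity inside the supremum in \eqref{eq:score_c} is the \emph{linear} functional $v_0 A_j + v_1 B_j$, whose supremum over $\{v=(v_0,v_1): \norm{v}_2 = 1\}$ equals $\norm{(A_j,B_j)}_2$. Hence $\cE_{\rm QR}(\lambda)^c = \{\exists j \in [p]: \norm{(A_j,B_j)}_2 > \lambda/2\}$, and since $\norm{(A_j,B_j)}_2 \leq |A_j| + |B_j|$, it suffices to show that with probability at least $1-\gamma$ we have $|A_j| \leq \lambda/4$ and $|B_j| \leq \lambda/4$ simultaneously for all $j \in [p]$ --- a collection of $2p$ scalar tail events.

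Next I would condition on the design $\mathcal{D} = \{(X_i,U_i)\}_{i=1}^n$. Conditionally on $\mathcal{D}$: the weights $w_i$, the covariates $x_{ij}$, the offsets $(u_i-u)$, and therefore the (data-dependent) value of $\lambda$, are all fixed; the responses $Y_i$ are independent because the sample is i.i.d.\ and $Y_i$ depends only on $(X_i,U_i)$; and $\EEst{\xi_i}{\mathcal{D}} = 0$, because $\fq = q(x_i;\tau,u_i)$ is the conditional $\tau$-quantile of a continuous law, so $\PPst{y_i < \fq}{\mathcal{D}} = \tau$. Moreover each $\xi_i$ lies in an interval of length $1$. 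Thus, conditionally on $\mathcal{D}$, $A_j$ is a sum of independent mean-zero variables whose $i$-th summand $w_i x_{ij}\xi_i$ ranges over an interval of length $|w_i x_{ij}|$, and likewise $B_j$ has $i$-th summand of range $|w_i x_{ij}(u_i-u)| \leq |w_i x_{ij}|$, using $(u_i-u)^2 \leq 1$ (Assumption \ref{assumption:u}; for a compactly supported kernel one may instead use the bounded range of $(u_i-u)/h$, which only changes a constant). Writing $\sigma^2 := \max_{j\in[p]}\EEn{w_i^2 x_{ij}^2}$ (so that $\sum_i w_i^2 x_{ij}^2 \leq \sigma^2$ for every $j$), Hoeffding's inequality gives, for any $t>0$,
\[
\PPst{|A_j| > t}{\mathcal{D}} \leq 2\exp\!\rbr{-\frac{2t^2}{\sigma^2}}, \qquad \PPst{|B_j| > t}{\mathcal{D}} \leq 2\exp\!\rbr{-\frac{2t^2}{\sigma^2}}.
\]

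Finally, take $t = \lambda/4 = \sigma\sqrt{\log(4p/\gamma)}$, with $\lambda$ as in \eqref{eq:qr_fixed:lambda}. Each of the $2p$ conditional probabilities is then at most $2\exp(-2\log(4p/\gamma)) \leq \gamma/(2p)$, so a union bound yields $\PPst{\cE_{\rm QR}(\lambda)^c}{\mathcal{D}} \leq \gamma$ for \emph{every} realization of $\mathcal{D}$; integrating over $\mathcal{D}$ gives $\PP{\cE_{\rm QR}(\lambda)} \geq 1-\gamma$. The stated order $\lambda = O(\sqrt{\log p / (nh)})$ then follows from the high-probability bound $\max_j \sum_i w_i^2 x_{ij}^2 \leq B_X^2 B_K^2/(nh)$ in Assumption \ref{assumption:X}. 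I do not expect a serious obstacle here; the only points that need care are (i) reducing the two-dimensional supremum to scalar tails (alternatively one keeps $\norm{(A_j,B_j)}_2$ and covers the unit circle in $\R^2$ by a finite net, costing only a constant in the union bound), and (ii) the data-dependence of $\lambda$, which forces conditioning on the design before applying concentration --- harmless precisely because $\Psi_\tau(y_i-\fq)$ has conditional mean zero by the definition of $\fq$. If sharper constants were wanted, Bernstein's inequality with $\EEst{\xi_i^2}{\mathcal{D}} = \tau(1-\tau)$ could be used, but this is unnecessary for the stated bound.
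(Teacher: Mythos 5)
Your proposal is correct and takes essentially the same route as the paper: for each $j$, reduce the two-dimensional supremum to scalar bounds on the two sums $\sum_i w_i x_{ij}\Psi_\tau(y_i-q_i)$ and $\sum_i w_i x_{ij}(u_i-u)\Psi_\tau(y_i-q_i)$, apply Hoeffding's inequality using $\sum_i w_i^2 x_{ij}^2(u_i-u)^2 \le \sum_i w_i^2 x_{ij}^2$, and union bound over the $2p$ events. The only differences are cosmetic (you use $|A_j|+|B_j|$ with threshold $\lambda/4$ where the paper uses $\sqrt{A_j^2+B_j^2}\le\sqrt{2}\max(|A_j|,|B_j|)$ with threshold $\lambda/(2\sqrt{2})$), and your explicit conditioning on the design to handle the data-dependent $\lambda$ and restore independence is actually slightly more careful than the paper's write-up.
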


\begin{proof}
  We prove that
  \begin{equation}
    \label{eq:qr_fixed:score_1} 
    \max_{j \in [p]} 
    \abr{    
    \EEn{ w_i \cdot x_{ij} \cdot \Psi_\tau\rbr{y_i - \fq} }
    }\leq \frac{\lambda}{2\sqrt{2}}    
  \end{equation}
  and
  \begin{equation}
    \label{eq:qr_fixed:score_2}
    \max_{j \in [p]} 
    \abr{    
    \EEn{ w_i \cdot x_{ij}(u_i - u) \cdot \Psi_\tau\rbr{y_i - \fq} }
    }\leq \frac{\lambda}{2\sqrt{2}}.
  \end{equation}
  Let 
  \[
  Z_i = \frac{w_i \cdot x_{ij} \cdot \Psi_\tau\rbr{y_i - \fq}}{\sqrt{ \EEn{w_i^2 x_{ij}^2}}}
  \]
  and note that $|Z_i| \leq 1$ and $\EE{Z_i} = 0$. 
  The Hoeffding's inequality \citep[Theorem 2.8][]{Boucheron2013Concentration} 
  gives us that
  \[
    \abr{\EEn{Z_i}} \leq \sqrt{2 \log(2/\gamma)}
  \]
  with probability $1-\gamma$.
  An application of the union bound gives us that
  \[
    \max_{j \in [p]} \EEn{w_i \cdot x_{ij} \cdot \Psi_\tau\rbr{y_i - \fq}} \leq  \rbr{\max_{j \in [p]} \EEn{w_i^2 x_{ij}^2}}^{1/2} \sqrt{2 \log(4p/\gamma)}
  \]
  with probability $1-\gamma/2$. This proves \eqref{eq:qr_fixed:score_1}. 
  Equation \eqref{eq:qr_fixed:score_2} is shown in the same way by noting that 
  \[
    \EEn{w_i^2 x_{ij}^2(u_i - u)^2} \leq     \EEn{w_i^2 x_{ij}^2}.
  \]
\end{proof}

\begin{lemma}
  \label{lem:qr_fixed:emp_proc}
   Let $\bc = \bs + \delta$,
  \begin{align*}
    g_i(\delta) 
    &= w_i \cdot \rbr{W_i^{\natural}\rbr{\Gamma_i^\top\bc - \fq} - W_i^{\natural}\rbr{\Gamma_i^\top\bs - \fq}}\\
    &= w_i\cdot    (y_i-\Gamma_i^\top\bc)\sbr{\One{\Gamma_i^\top \bc \leq y_i < q_i} - \One{q_i < y_i < \Gamma_i^\top \bc}} \\ 
    &\qquad - w_i \cdot (y_i-\Gamma_i^\top\bs)\sbr{\One{\Gamma_i^\top \bs \leq y_i < q_i} - \One{q_i < y_i < \Gamma_i^\top \bs}}.
  \end{align*}  
  Then 
\begin{multline*}
\sup_{
      \substack{\norm{\delta}_{1,2} \leq R_1 \\ \EEn{w_i(\Gamma_i^\top\delta)^2} \leq R_2 }
    } \abr{ \Gn{ w_i \cdot g_i(\delta)}} \\
    \lesssim   R_1 
      \sqrt{\frac{B_X^2B_K^2}{nh}\log(p)}
      + R_1 \sqrt{\rbr{B_w  B_X \sqrt{\frac{B_X^2B_K^2}{nh}\log(p)} + B_w \frac{R_2}{R_1}}\log(1/\gamma)} \\
      + R_1B_w  B_X \log(1/\gamma)         
\end{multline*}
with probability $1-\gamma$.  
\end{lemma}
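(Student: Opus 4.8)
The plan is to treat $\Gn{g_i(\delta)}$ as an empirical process indexed by the $\ell_{1,2}$-ball of radius $R_1$ (intersected with $\{\EEn{w_i(\Gamma_i^\top\delta)^2}\le R_2\}$) and to apply a Talagrand-type concentration inequality, as in Lemma~\ref{lm:proc_deviation}, once we control three ingredients: the envelope, the uniform variance, and the expected supremum.

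First I would record the structure of $g_i$. By the Knight decomposition \eqref{eq:qr_fixed:knight}, $W_i^\natural(z)=\int_0^z\rbr{\One{y_i\le q_i+t}-\One{y_i\le q_i}}\,dt$ is convex and $1$-Lipschitz, so $\phi_i(t):=W_i^\natural(\Delta_i+t)-W_i^\natural(\Delta_i)$ is $1$-Lipschitz with $\phi_i(0)=0$ and $g_i(\delta)=w_i\,\phi_i(\Gamma_i^\top\delta)$. Two facts follow. \emph{Envelope:} $|g_i(\delta)|\le w_i\abr{\Gamma_i^\top\delta}$, and using $\norm{w_i}_\infty\le B_w$, $\norm{x_i}_\infty\le B_X$, and the kernel bounds, $\sup_{\norm{\delta}_{1,2}\le R_1}\max_i|g_i(\delta)|\lesssim B_wB_XR_1=:b$. \emph{Support:} the integrand defining $\phi_i$ vanishes unless $\abr{y_i-q_i}\le\abr{\Delta_i}+\abr{\Gamma_i^\top\delta}$, hence by Assumption~\ref{assumption:density}, $\EE{g_i(\delta)^2}\le w_i^2(\Gamma_i^\top\delta)^2\,\PP{\abr{y_i-q_i}\le\abr{\Delta_i}+\abr{\Gamma_i^\top\delta}}\le 2\bar{f}\,w_i^2(\Gamma_i^\top\delta)^2\rbr{\abr{\Delta_i}+\abr{\Gamma_i^\top\delta}}$. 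Summing, bounding $w_i\le B_w$ and $\abr{\Gamma_i^\top\delta}\lesssim B_XR_1$, using $\EEn{w_i(\Gamma_i^\top\delta)^2}\le R_2$, and using Lemma~\ref{lem:qr_fixed:approx_err} together with Assumption~\ref{assumption:growth} so that $\EEn{w_i\Delta_i^2}=O((h^2+\epsilon_R)^2)$ is of smaller order than $R_2$, I obtain $\sigma^2:=\sup_\delta\sum_i\EE{g_i(\delta)^2}\lesssim\bar{f}\,B_wB_XR_1R_2$.

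Next I would bound $\EE{\sup_\delta\abr{\Gn{g_i(\delta)}}}$. Symmetrization gives $\EE{\sup}\le 2\,\EE{\sup_{\norm{\delta}_{1,2}\le R_1}\abr{\sum_i\epsilon_i w_i\phi_i(\Gamma_i^\top\delta)}}$ with $\epsilon_i$ Rademacher; conditioning on the data, the Ledoux--Talagrand contraction principle applied to the $w_i$-Lipschitz maps $t\mapsto w_i\phi_i(t)$ vanishing at $0$ (rescaling by $\psi_i(u)=w_i\phi_i(u/w_i)$ to obtain a common Lipschitz constant) reduces this, up to an absolute constant, to $R_1\,\EE{\Norm{\sum_i\epsilon_i w_i\Gamma_i}_{\infty,2}}$, where $\norm{\cdot}_{\infty,2}$ is the norm dual to $\norm{\cdot}_{1,2}$, i.e.\ the maximum over the $p$ groups $(j,j+p)$ of the Euclidean norm of the two corresponding coordinates. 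Each coordinate of $\sum_i\epsilon_i w_i\Gamma_i$ is sub-Gaussian with variance proxy $\lesssim\max_j\sum_i w_i^2x_{ij}^2\lesssim B_X^2B_K^2/(nh)$ by Assumption~\ref{assumption:X} (the $(u_i-u)^2$ contribution being of smaller order by the kernel bounds), so a maximal inequality over the $2p$ coordinates gives $\EE{\sup}\lesssim R_1\sqrt{B_X^2B_K^2\log(p)/(nh)}$, the first term of the claimed bound.

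Finally I would invoke the concentration inequality of Lemma~\ref{lm:proc_deviation} with envelope $b\lesssim B_wB_XR_1$, variance $\sigma^2\lesssim\bar{f}\,B_wB_XR_1R_2$, and weak variance $v=\sigma^2+2b\,\EE{\sup}$, giving, with probability $1-\gamma$, $\sup_\delta\abr{\Gn{g_i(\delta)}}\lesssim\EE{\sup}+\sqrt{v\log(1/\gamma)}+b\log(1/\gamma)$; substituting the bounds on $\EE{\sup}$, $\sigma^2$, and $b$ reproduces the three terms in the statement (the factor $B_wB_X\sqrt{B_X^2B_K^2\log p/(nh)}$ inside the square root is $2b\,\EE{\sup}/R_1^2$ and $B_wR_2/R_1$ is $\sigma^2/R_1^2$, up to constants). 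I expect the main obstacle to be the uniform variance bound: one must exploit the narrow-window support of $W_i^\natural$ (it is nonzero only where $y_i$ lies within $O(\abr{\Delta_i}+\abr{\Gamma_i^\top\delta})$ of $q_i$) simultaneously with the $\ell_{1,2}$-constraint, and absorb the contribution of the linearization error $\Delta_i$ via Lemma~\ref{lem:qr_fixed:approx_err} and the growth condition; a secondary technical point is the bookkeeping of the heteroscedastic weights $w_i$ through the contraction step.
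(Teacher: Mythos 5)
Your argument is essentially the paper's proof: the same envelope bound $|g_i(\delta)|\le 2B_w B_X R_1$, the same treatment of the expected supremum by symmetrization, the contraction principle, and a sub-Gaussian maximal inequality over the $p$ coordinate groups (yielding $R_1\sqrt{B_X^2B_K^2\log(p)/(nh)}$), and the same final appeal to the concentration bound of Lemma~\ref{lm:proc_deviation}. The only divergence is the variance bound: the paper simply uses the $1$-Lipschitz property of $W_i^{\natural}$ to get $\sum_i\EE{g_i^2(\delta)}\le 4\sum_i w_i^2(\Gamma_i^\top\delta)^2\le 4B_wR_2$, so the small-window refinement you single out as the main obstacle---which drags in $\Delta_i$, Lemma~\ref{lem:qr_fixed:approx_err}, and growth conditions that are not among this lemma's hypotheses---can simply be dropped.
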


\begin{proof}
  We will apply Lemma~\ref{lm:proc_deviation}.   
  Note that
  \begin{align*}
  |g_i(\delta)|  & 
  \leq w_i  (y_i-\Gamma_i^\top\bc) \cdot
  \left[
  \One{\Gamma_i^\top \bc \leq y_i < q_i} 
  - \One{q_i < y_i < \Gamma_i^\top \bc} \right.
  \\
  & \qquad\qquad\qquad\qquad\qquad
  \left.
  -\One{\Gamma_i^\top \bs \leq y_i < q_i} 
  + \One{q_i < y_i < \Gamma_i^\top \bs}\right]\\
  & \qquad + w_i (\Gamma_i^\top\bc - \Gamma_i^\top\bs)\sbr{ \One{\Gamma_i^\top \bs \leq y_i < q_i} - \One{q_i < y_i < \Gamma_i^\top \bs}}\\
    &\leq 2|w_i \Gamma_i^\top\delta| \leq 2 B_w  B_X R_1.
 \end{align*}
Therefore, 
$|g_i(\delta) - \EE{g_i(\delta)}| \leq 4 B_w  B_X R_1$.
For the variance, we have
  \begin{align*}
    \EE{\sum_i \rbr{g_i(\delta) - \EE{g_i(\delta)}}^2} \leq \EE{\sum_i g_i^2(\delta)} \leq 4 \sum_{i\in[n]} w_i^2 \rbr{\Gamma_i^\top \delta}^2 \leq 4 B_w  R_2.
\end{align*}
Finally, we bound the supremum of the process. We have 
\[
  \begin{aligned}
  & \EE{\sup_{\substack{\norm{\delta}_{1,2} \leq R_1 \\ \EEn{w_i(\Gamma_i^\top\delta)^2} = R_2 }} \abr{\Gn{g_i(\delta)}}} \\
  \overset{(i)}{\leq} & 2\EE{\sup_{\substack{\norm{\delta}_{1,2} \leq R_1 \\ \EEn{w_i(\Gamma_i^\top\delta)^2} = R_2 }} \abr{ \EEn{\epsilon_i \cdot g_i(\delta)}}} \\
  \overset{(ii)}{\leq} & 4 \EE{\sup_{\substack{\norm{\delta}_{1,2} \leq R_1 \\ \EEn{w_i(\Gamma_i^\top\delta)^2} = R_2 }}\abr{\EEn{\epsilon_i w_i \Gamma_i^\top\delta}}} \\
  \leq & 4 \norm{\delta}_{1,2} \EE{\max_{j \in [p]}\abr{\EEn{\epsilon_i w_i \Norm{\sbr{
            \begin{array}{c}
              x_{ij} \\
              x_{ij}(U_i - u)
            \end{array}}
        }_2}}} \\
  \overset{(iii)}{\leq} & 8 \norm{\delta}_{1,2} \sqrt{\frac{B_X^2B_K^2}{nh}\log(2p)},
\end{aligned}
\]
where $(i)$ follows from symmetrization \cite[Lemma~11.4][]{Boucheron2013Concentration},
$(ii)$ from contraction inequality \cite[Theorem~11.6][]{Boucheron2013Concentration},
and $(iii)$ from a maximum inequality for sub-Gaussian random variables \cite[Theorem~2.5][]{Boucheron2013Concentration}.
The result now follows by plugging the pieces into Lemma~\ref{lm:proc_deviation}.

\end{proof}

\begin{lemma}  
  \label{lem:qr_fixed:quad_lower_bound}
  Suppose Assumptions \ref{assumption:density} and \ref{assumption:X} hold.
  For a fixed $\delta$, we have
  \begin{multline*}
      \bE{w_i \cdot \rbr{W_i\rbr{\Gamma_i^\top\rbr{\bs+\delta} - \fq} - W_i\rbr{\Gamma_i^\top\bs - \fq}}} \\ \geq
    \frac{1}{3} \cdot \rbr{
      \underline{f} \EEn{w_i\cdot\rbr{\Gamma_i^\top \delta}^2} \wedge
      \kappa_q \cdot \rbr{ \underline{f} \cdot \EEn{w_i\cdot\rbr{\Gamma_i^\top \delta}^2} }^{1/2}  
    }.
  \end{multline*}
\end{lemma}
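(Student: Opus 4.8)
The plan is to recognise the left-hand side as the increment of the population (design-conditional) kernel-weighted quantile risk
\[
\bar{Q}(b) \;:=\; \bE{w_i\,\rho_\tau\!\left(y_i - \Gamma_i^\top b\right)}.
\]
Cancelling the common $\rho_\tau(y_i-\fq)$ terms in $W_i(\Gamma_i^\top(\bs+\delta)-\fq)$ and $W_i(\Gamma_i^\top\bs-\fq)$, the quantity to be bounded equals $\bar{Q}(\bs+\delta) - \bar{Q}(\bs)$. Taking conditional expectations smooths the kink of $\rho_\tau$, so $\bar{Q}$ is twice (indeed three times, using $|f_i'| \le \bar f'$ from Assumption \ref{assumption:density}) continuously differentiable, with $\nabla\bar{Q}(b) = \sum_i w_i\Gamma_i\,(F_i(\Gamma_i^\top b) - \tau)$ and $\nabla^2\bar{Q}(b) = \sum_i w_i\,f_i(\Gamma_i^\top b)\,\Gamma_i\Gamma_i^\top \succeq \underline{f}\sum_i w_i\,\Gamma_i\Gamma_i^\top$. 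Equivalently, one may run the argument through the Knight decomposition \eqref{eq:qr_fixed:knight}: the $W_i^{\#}$ part telescopes to $-\,\delta^\top\sum_i w_i\Gamma_i\,\Psi_\tau(y_i-\fq)$, whose conditional mean is $-\,\delta^\top\sum_i w_i\Gamma_i(\tau - F_i(\fq)) = 0$ since $\fq$ is the true conditional $\tau$-quantile, while the $W_i^{\natural}$ part has conditional mean $\sum_i w_i\big(\Phi_i(\Gamma_i^\top\bs-\fq+\Gamma_i^\top\delta) - \Phi_i(\Gamma_i^\top\bs-\fq)\big)$ with $\Phi_i(v) := \int_0^v (F_i(\fq+z)-\tau)\,dz$ convex and nonnegative.

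First I would Taylor expand $\bar{Q}$ around $\bs$ with integral remainder,
\[
\bar{Q}(\bs+\delta) - \bar{Q}(\bs) = \nabla\bar{Q}(\bs)^\top\delta \;+\; \sum_{i\in[n]} w_i\,(\Gamma_i^\top\delta)^2\int_0^1(1-t)\,f_i\!\left(\Gamma_i^\top\bs + t\,\Gamma_i^\top\delta\right)dt.
\]
The first-order term is negligible: by the definition of $\bs$ as the limiting root of $\EE{S(\cdot)}$ — equivalently a near-minimiser of $\bar{Q}$ — together with the approximate-linearity and smoothness part of Assumption \ref{assumption:qr}, which controls $\Delta_i = \fql-\fq$ through $\epsilon_R$ and $B_\beta$, one gets $|\nabla\bar{Q}(\bs)^\top\delta| \lesssim \bar f\,(h^2+\epsilon_R)\,(\EEn{w_i(\Gamma_i^\top\delta)^2})^{1/2}$, which is of strictly smaller order than the curvature term over the range of $\delta$ used in Theorem~\ref{thm:qr_fixed:main} (and is exactly zero if $\bs$ is taken to be the exact root). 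For the curvature term I would bound $f_i(\Gamma_i^\top\bs + t\Gamma_i^\top\delta) \ge \underline{f} - \bar f'\,t\,|\Gamma_i^\top\delta|$ and integrate in $t$, giving
\[
\bar{Q}(\bs+\delta) - \bar{Q}(\bs) \;\gtrsim\; \frac{\underline{f}}{2}\,\EEn{w_i(\Gamma_i^\top\delta)^2} \;-\; \frac{\bar f'}{6}\,\EEn{w_i\,|\Gamma_i^\top\delta|^3}.
\]

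Finally I would split according to which branch of the minimum is active. If the cubic term is at most a fixed fraction of the quadratic one, the display above collapses to a constant multiple of $\underline{f}\,\EEn{w_i(\Gamma_i^\top\delta)^2}$, giving the first branch. Otherwise I would use convexity of $\bar{Q}$: since $\bar{Q}(\bs)$ is (nearly) minimal, $\bar{Q}(\bs+\delta)-\bar{Q}(\bs) \ge t^{-1}\big(\bar{Q}(\bs+t\delta)-\bar{Q}(\bs)\big)$ for $t\in(0,1]$, and I would choose $t$ so that $t\delta$ lands in the quadratic-dominant regime — i.e.\ so that $t\delta$ satisfies the normalisation $\EEn{w_i(\Gamma_i^\top(t\delta))^2}=|S'|\log p/(nh)$ and the matching $\norm{\cdot}_{1,2}$ constraint that define $\kappa_q$ in \eqref{eqn:growth2}. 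Applying the quadratic lower bound to $t\delta$ and the defining inequality of $\kappa_q$ (which controls $\bar f'\,\EEn{w_i(\Gamma_i^\top\delta)^3}$ by $\kappa_q^{-1}(\underline f\,\EEn{w_i(\Gamma_i^\top\delta)^2})^{3/2}$ there), then multiplying back by $t^{-1}$, produces a bound of the form $\kappa_q\,(\underline{f}\,\EEn{w_i(\Gamma_i^\top\delta)^2})^{1/2}$, the second branch; tracking constants through the two cases gives the factor $1/3$. The main obstacle is the first-order term $\nabla\bar{Q}(\bs)^\top\delta$: unlike in the classical linear quantile curvature lemma of \citet{Belloni2011penalized, Belloni2016Quantile}, here $\bs$ is only a local-linear approximation, so this term does not vanish and must be shown negligible via the approximation-error bound of Assumption \ref{assumption:qr}; the rest — the convexity rescaling and matching it to the exact normalisation in the definition of $\kappa_q$ — is routine but requires care with the constants. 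The kernel weights enter only through $w_i \ge 0$ and $\sum_i w_i \le B_K$, so they pass through the argument transparently.
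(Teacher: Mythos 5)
Your proposal is correct and takes essentially the same route as the paper: the identical lower bound $\tfrac{\underline f}{2}\EEn{w_i(\Gamma_i^\top\delta)^2}-\tfrac{\bar f'}{6}\EEn{w_i\,|\Gamma_i^\top\delta|^3}$ (your integral-remainder Taylor expansion of the smoothed risk is the same computation as the paper's Knight-identity expansion of $F_i$), followed by the same two-case split and the same convexity rescaling to a point where the quadratic and cubic terms balance, invoking the definition of $\kappa_q$ and yielding the factor $1/3$. The only divergence is the first-order term: the paper centers Knight's identity at the true quantile $\fq$ so the $W_i^{\#}$ contribution has mean exactly zero (and it silently identifies $\One{y_i\le \fq}$ with $\One{y_i\le\Gamma_i^\top\bs}$ inside the $W_i^{\natural}$ increment), whereas you retain $\EEn{w_i\,\Gamma_i^\top\delta\,(F_i(\fql)-\tau)}$ and bound it via the approximation error of Assumption~\ref{assumption:qr} --- extra care on a step the paper glosses over, not a genuinely different argument.
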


\begin{proof}
  Using \eqref{eq:qr_fixed:knight}, we have
  \begin{equation*}
    \EE{W_i^{\#}\rbr{\Gamma_i^\top\rbr{\bs+\delta} - \fq} - W_i^{\#}\rbr{\Gamma_i^\top\bs - \fq}}
    = -\rbr{\Gamma_i^\top\delta}\cdot\EE{\psi_\tau(y_i - \fq)}
    = 0
  \end{equation*}
  and
  \begin{equation*}
  \begin{aligned}
    & \EE{W_i^{\natural}\rbr{\Gamma_i^\top\rbr{\bs+\delta} - \fq} - W_i^{\natural}\rbr{\Gamma_i^\top\bs - \fq}}\\
    &\qquad \qquad=
\EE{\int_0^{\Gamma_i^\top \delta}\sbr{\One{y_i\leq\Gamma_i^\top\bs+z} - \One{y_i\leq\Gamma_i^\top\bs}} dz } \\
& \qquad \qquad= 
\int_0^{\Gamma_i^\top \delta}\sbr{F_i\rbr{\Gamma_i^\top\bs+z} - F_i\rbr{\Gamma_i^\top\bs}} dz  \\
& \qquad \qquad= 
\int_0^{\Gamma_i^\top \delta}\sbr{
zf_i\rbr{\Gamma_i^\top\bs} + \frac{z^2}{2}f_i'\rbr{\Gamma_i^\top\bs + \tilde z}} dz \qquad\qquad \rbr{\tilde z \in [0, z]}\\
& \qquad \qquad\geq 
\frac{\underline{f}}{2}\rbr{\Gamma_i^\top \delta}^2
- \frac{\bar{f}'}{6}\rbr{\Gamma_i^\top \delta}^3.
\end{aligned}
\end{equation*}
Combining the last two displays, we obtain
\begin{multline}
  \label{eq:qr_fixed:quad_lower_bound:1}
  \bE{w_i \cdot \rbr{W_i\rbr{\Gamma_i^\top\rbr{\bs+\delta} - \fq} - W_i\rbr{\Gamma_i^\top\bs - \fq}}}
  \\ \geq \frac{\underline{f}}{2} \EEn{w_i\cdot\rbr{\Gamma_i^\top \delta}^2} - \frac{\bar f' }{6}\EEn{w_i\cdot \rbr{\Gamma_i^\top \delta}^3}.  
\end{multline}
We lower bound the above display in two cases. 
First, consider the case where 
\[
  \rbr{ \underline{f} \cdot \EEn{w_i\cdot\rbr{\Gamma_i^\top \delta}^2}}^{1/2} \leq \kappa_q.
\]
From the definition of $\kappa_q$, we then obtain that
\[
  \bar{f}' \cdot \EEn{w_i\cdot\rbr{\Gamma_i^\top \delta}^3} \leq
  \underline{f} \cdot \EEn{w_i\cdot\rbr{\Gamma_i^\top \delta}^2},
\]
which combined with \eqref{eq:qr_fixed:quad_lower_bound:1} gives us
\begin{multline}
  \label{eq:qr_fixed:quad_lower_bound:2}
  \bE{w_i \cdot \rbr{W_i\rbr{\Gamma_i^\top\rbr{\bs+\delta} - \fq} - W_i\rbr{\Gamma_i^\top\bs - \fq}}}
 \\ \geq \frac{\underline{f}}{3} \EEn{w_i\cdot\rbr{\Gamma_i^\top \delta}^2}.  
\end{multline}
Next, we consider the case  where
\[
  \rbr{ \underline{f} \cdot \EEn{w_i\cdot\rbr{\Gamma_i^\top \delta}^2}}^{1/2} > \kappa_q.
\]
Let $\bar b = \bs + (1-\alpha) \delta$ for some
$\alpha \in (0,1)$ to be determined later. Using the convexity 
of $\rho_\tau(\cdot)$, we have
that 
\begin{equation}
  \label{eq:qr_fixed:quad_lower_bound:3}
\begin{aligned}
  & \bE{w_i \cdot \rbr{W_i\rbr{\Gamma_i^\top\rbr{\bs+\delta} - \fq} - W_i\rbr{\Gamma_i^\top\bs - \fq}}} \\
  &\quad\geq \frac{1}{1 - \alpha} \rbr{\bE{w_i \cdot \rbr{W_i(\Gamma_i^\top\bar b - \fq) - W_i(\fql - \fq)}}} \\
 &\quad\geq \frac{1}{1 - \alpha} \rbr{ \frac{\underline{f}}{2} \EEn{w_i\cdot\rbr{\Gamma_i^\top \rbr{\bar b - \bs}}^2} - \frac{\bar f' }{6}\EEn{w_i\cdot \rbr{\Gamma_i^\top \rbr{\bar b - \bs}}^3}}, 
\end{aligned}
\end{equation}
where the second inequality follows from \eqref{eq:qr_fixed:quad_lower_bound:1}. 
We want to choose $\alpha$ such that
\[
 \underline{f} \cdot \EEn{w_i\cdot\rbr{\Gamma_i^\top \rbr{\bar b - \bs}}^2} = \bar f'\EEn{w_i\cdot \rbr{\Gamma_i^\top \rbr{\bar b - \bs}}^3},
\]
which leads to
\[
  1 - \alpha = \frac{ \underline{f} \cdot \EEn{w_i\cdot\rbr{\Gamma_i^\top \delta}^2} }{ \bar f'\EEn{w_i\cdot \rbr{\Gamma_i^\top \delta}^3}}.  
\]
Combining with \eqref{eq:qr_fixed:quad_lower_bound:3}, we have
\begin{equation}
  \label{eq:qr_fixed:quad_lower_bound:4}
\begin{aligned}
   & \bE{w_i \cdot \rbr{W_i\rbr{\Gamma_i^\top\rbr{\bs+\delta} - \fq} - W_i\rbr{\Gamma_i^\top\bs - \fq}}}\\
   & \quad \geq \frac{1}{3} \cdot \frac{ \rbr{ \underline{f} \cdot \EEn{w_i\cdot\rbr{\Gamma_i^\top \delta}^2} }^ 2}{ \bar f'\EEn{w_i\cdot \rbr{\Gamma_i^\top \delta}^3}} \\
   & \quad \geq \frac{\kappa_q}{3} \cdot \rbr{ \underline{f} \cdot \EEn{w_i\cdot\rbr{\Gamma_i^\top \delta}^2} }^{1/2}.
\end{aligned}
\end{equation}
The proof follows by combining the lower bounds in \eqref{eq:qr_fixed:quad_lower_bound:2} and \eqref{eq:qr_fixed:quad_lower_bound:4}.
\end{proof}

\begin{lemma}\label{lem:11}
Under our model assumptions,
\begin{multline}
  \label{eq:qr_fixed:bound_approx}
\EEn{w_i\cdot W_i^{\natural}(\fql - \fq)} \\
\leq 2 \sqrt{2\log(2/\gamma)} \cdot \rbr{\bar f \cdot \EEn{w_i \cdot \rbr{\fql - \fq}^2}+\sqrt{\EEn{w_i^2 \rbr{\fql - \fq}^2}}}
\end{multline}
holds with probability $1-\gamma$.    
\end{lemma}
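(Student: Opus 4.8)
\textbf{Proof plan for Lemma \ref{lem:11}.} The plan is to treat $\EEn{w_i\cdot W_i^{\natural}(\fql-\fq)}=\sum_i w_i W_i^{\natural}(\Delta_i)$ as a sum of independent bounded random variables, bound its mean, and apply a Hoeffding-type concentration inequality. First I would condition on the design $\{(x_i,u_i)\}_{i\in[n]}$, so that the weights $w_i$ and the local-linear errors $\Delta_i=\fql-\fq$ are deterministic and only the responses $y_i$ are random; since the $y_i$ are conditionally independent, the summands $Z_i:=w_i W_i^{\natural}(\Delta_i)$ are independent. The probability statement to be proved does not involve the design on its right-hand side, so it suffices to establish it conditionally and then integrate out the conditioning.

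Next I would record two elementary facts about $W_i^{\natural}$. From the explicit representation
\[
W_i^{\natural}(\delta)=(y_i-(\fq+\delta))\sbr{\One{\fq + \delta \leq y_i < \fq} - \One{\fq \leq y_i < \fq + \delta}},
\]
one checks that $0\le W_i^{\natural}(\delta)\le|\delta|$ pointwise, regardless of the sign of $\delta$ (on each of the two indicator events $y_i-(\fq+\delta)$ has absolute value at most $|\delta|$ and the opposite sign of the bracket). Hence $0\le Z_i\le w_i|\Delta_i|$. From the integral representation, taking expectation over $y_i$ only and using Assumption~\ref{assumption:density} ($f_i\le\bar f$),
\[
\EE{W_i^{\natural}(\delta)}=\int_0^{\delta}\sbr{F_i(\fq+z)-F_i(\fq)}\,dz\in\sbr{0,\tfrac{\bar f}{2}\delta^2},
\]
so that $\sum_i\EE{Z_i}=\bE{w_i W_i^{\natural}(\Delta_i)}\le\tfrac{\bar f}{2}\EEn{w_i\Delta_i^2}\le\bar f\,\EEn{w_i\Delta_i^2}$.

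Then I would apply Hoeffding's inequality to the independent variables $Z_i\in[0,w_i|\Delta_i|]$: with probability at least $1-\gamma$ (conditionally on the design),
\[
\sum_i Z_i-\sum_i\EE{Z_i}\ \le\ \sqrt{\tfrac12\log(2/\gamma)\,\EEn{w_i^2\Delta_i^2}}\ \le\ 2\sqrt{2\log(2/\gamma)}\,\sqrt{\EEn{w_i^2\Delta_i^2}}.
\]
Combining with the mean bound and using $2\sqrt{2\log(2/\gamma)}\ge 1$ (valid for all $\gamma$ in the relevant range) gives
\[
\EEn{w_i W_i^{\natural}(\Delta_i)}\ \le\ \bar f\,\EEn{w_i\Delta_i^2}+2\sqrt{2\log(2/\gamma)}\,\sqrt{\EEn{w_i^2\Delta_i^2}}\ \le\ 2\sqrt{2\log(2/\gamma)}\Big(\bar f\,\EEn{w_i\Delta_i^2}+\sqrt{\EEn{w_i^2\Delta_i^2}}\Big),
\]
which is exactly \eqref{eq:qr_fixed:bound_approx}; since the bound holds conditionally on every realization of the design, it holds unconditionally with probability at least $1-\gamma$.

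The argument is essentially routine, so there is no genuine obstacle; the only points needing care are (i) the sharp pointwise range $0\le W_i^{\natural}(\delta)\le|\delta|$, which is what forces the Hoeffding ranges to be $w_i|\Delta_i|$ and hence produces the $\sqrt{\EEn{w_i^2\Delta_i^2}}$ term rather than something cruder, and (ii) the conditioning step, needed because $w_i$ and $\Delta_i$ depend on the random design while the concentration is only over the $y_i$.
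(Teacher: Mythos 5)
Your proof is correct, but it follows a different route from the paper. You bound each summand pointwise by $0\le W_i^{\natural}(\Delta_i)\le|\Delta_i|$, bound the conditional mean by $\tfrac{\bar f}{2}\EEn{w_i\Delta_i^2}$ using the density bound, and then apply plain one-sided Hoeffding to the independent bounded variables $w_iW_i^{\natural}(\Delta_i)\in[0,w_i|\Delta_i|]$ (conditionally on the design), which yields the fluctuation term $\sqrt{\tfrac12\log(2/\gamma)\,\EEn{w_i^2\Delta_i^2}}$ and hence the stated bound after absorbing constants via $2\sqrt{2\log(2/\gamma)}\ge 1$. The paper uses the same two ingredients (the mean bound $\EE{W_i^{\natural}(\delta)}\le\tfrac{\bar f}{2}\delta^2$ and the almost-sure bound $|W_i^{\natural}(\delta)|\le|\delta|$), but instead of Hoeffding it verifies, via Markov's inequality, the hypotheses $\PP{S_n\ge a}\le\tfrac14$ and $\PP{V_n^2\ge b^2}\le\tfrac14$ with $a=2\bar f\,\EEn{w_i\Delta_i^2}$, $b=\sqrt{\EEn{w_i^2\Delta_i^2}}$, and then invokes the self-normalized exponential inequality of Theorem 2.16 of Pe\~na et al., giving $\PP{S_n\ge x(a+b+V_n)}\le 2e^{-x^2/2}$ and exactly the constant $2\sqrt{2\log(2/\gamma)}$ in the lemma. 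Your argument is more elementary and self-contained, and it makes explicit the conditioning on the design (so that only the $y_i$'s are random and the summands are independent), a point the paper leaves implicit; the paper's self-normalized route is slightly heavier machinery but would extend more readily to settings where an almost-sure bound on the summands is unavailable and only a bound on $V_n$ can be controlled in probability. Both arguments deliver the inequality in the exact form stated.
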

\begin{proof}
 We will prove the lemma using Theorem 2.16 of \cite{Pena2009Self}. Note that
  $W_i^{\natural}(\fql - \fq)$ is positive and
  \begin{equation*}
    \begin{aligned}
      \EE{W_i^{\natural}(\fql - \fq)}
      & = \int_0^{\fql - \fq} \sbr{F_i(\fq + z) - F_i(\fq)} dz\\ 
      & = \int_0^{\fql - \fq} f_i(\tilde z_i) z dz 
       \leq \frac{\bar f}{2} \rbr{\fql - \fq}^2,
    \end{aligned}
  \end{equation*}
  where $\tilde{z}_i$ is a point between $0$ and $z$.
  Therefore, the Markov's inequality gives us
  \[
    \PP{\EEn{w_i\cdot W_i^{\natural}(\fql - \fq)} \geq 2\bar f \cdot \EEn{w_i \cdot \rbr{\fql - \fq}^2}} \leq \frac14.
  \]
  Furthermore, since $|W_i^{\natural}(\fql - \fq)| \leq |\fql - \fq|$, we have that
  \[
    \PP{\EEn{\rbr{w_i\cdot W_i^{\natural}(\fql - \fq)}^2} \geq \EEn{w_i^2 \rbr{\fql - \fq}^2}} = 0 \leq \frac14.
  \]
  Invoking Theorem 2.16 of \cite{Pena2009Self}, 
  define 
  \begin{gather*}
  a = 2 \bar{f} \EEn{w_i \cdot \rbr{\fql - \fq}^2}, \qquad    
  b= \sqrt{\EEn{w_i^2 \rbr{\fql - \fq}^2}}, \\
  S_n = \EEn{w_i\cdot W_i^{\natural}(\fql - \fq)}, \qquad
  V_n^2 = \EEn{\rbr{w_i\cdot W_i^{\natural}(\fql - \fq)}^2},
  \end{gather*}
  and
  observe that $V_n \leq b^2$, we obtain that
  \[
 \PP{S_n \geq x(a+b+ V_n)} \leq 2 e^{\frac{-x^2}{2}},
  \]
  which completes the proof.
\end{proof}

\begin{lemma}
Under Assumptions \ref{assumption:kernel}, \ref{assumption:u}, \ref{assumption:qr}, and \ref{assumption:X}, we have that
\label{lem:qr_fixed:approx_err}
\begin{align*}
    \sum_i w_i  \rbr{\fql - \fq}^2 & \leq 2h^4 s B_X^2 B_{\beta} B_K + 2 \frac{\epsilon_R^2}{\underline{f}^2} = O\left(h^4 + \epsilon_R^2\right) \\
    \intertext{and}
    \sum_i w_i^2 (\tilde{q}_i - q_i)^2 & \leq B_w\cdot (2h^4 s B_X^2 B_{\beta} B_K + 2 \frac{\epsilon_R^2}{\underline{f}^2}) = O\left(\frac{h^3}{n} + \frac{\epsilon_R^2}{nh}\right).
\end{align*}
\end{lemma}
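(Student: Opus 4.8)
The plan is to split the local-linear approximation error $\tilde q_i - q_i$ into a \emph{model-misspecification} part and a \emph{Taylor-expansion-in-$u$} part, bound each against the kernel weights $w_i$, and then obtain the $w_i^2$-weighted bound by pulling out $\norm{w_i}_\infty$. Concretely, since $\tilde q_i = \Gamma_i^\top\bs = x_i^\top\beta^\star(\tau,u) + (u_i-u)\,x_i^\top\nabla_u\beta^\star(\tau,u)$, I would write
\[
\tilde q_i - q_i
= \underbrace{\rbr{x_i^\top\beta^\star(\tau,u) + (u_i-u)\,x_i^\top\nabla_u\beta^\star(\tau,u) - x_i^\top\beta^\star(\tau,u_i)}}_{T_{1,i}}
+ \underbrace{\rbr{x_i^\top\beta^\star(\tau,u_i) - q_i}}_{T_{2,i}},
\]
so that $(\tilde q_i - q_i)^2 \le 2T_{1,i}^2 + 2T_{2,i}^2$ and it suffices to control $\sum_i w_i T_{1,i}^2$ and $\sum_i w_i T_{2,i}^2$ separately.

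For $T_{1,i}$, apply H\"older's inequality to get $\abr{T_{1,i}} \le \norm{x_i}_\infty \cdot \norm{\beta^\star(\tau,u) + (u_i-u)\nabla_u\beta^\star(\tau,u) - \beta^\star(\tau,u_i)}_1$, then use the sparsity in Assumption~\ref{assumption:qr} to pass from the $\ell_1$ to the $\ell_2$ norm: the displacement vector is supported on $S'$, whose cardinality is at most $s_1 = c_1 s$, so its $\ell_1$ norm is at most $\sqrt{c_1 s}$ times its $\ell_2$ norm, which the H\"older-smoothness part of Assumption~\ref{assumption:qr} bounds by $B_\beta (u_i-u)^2$. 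This yields $T_{1,i}^2 \lesssim s\,B_X^2 B_\beta^2 (u_i-u)^4$, and summing against the weights while invoking $\sum_i w_i\bigl((u_i-u)/h\bigr)^4 \le B_K$ — a consequence of Assumptions~\ref{assumption:kernel} and \ref{assumption:u} — gives the claimed $h^4 s B_X^2 B_\beta B_K$ term.

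For $T_{2,i}$, observe that the conditional CDF $F_i$ of $Y_i$ given $(X_i,U_i)=(x_i,u_i)$ is strictly increasing since $f_i \ge \underline f > 0$ (Assumption~\ref{assumption:density}), so $F_i(q_i) = \tau$ by definition of the conditional quantile, while Assumption~\ref{assumption:qr} states $F_i\bigl(x_i^\top\beta^\star(\tau,u_i)\bigr) = \tau + R_i$. Subtracting and applying the mean value theorem to $F_i$ on the interval between $q_i$ and $x_i^\top\beta^\star(\tau,u_i)$, together with $f_i \ge \underline f$, gives $\abr{T_{2,i}} \le \abr{R_i}/\underline f$, hence $\sum_i w_i T_{2,i}^2 \le \underline f^{-2}\sum_i w_i R_i^2 = \epsilon_R^2/\underline f^2$. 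Adding the two estimates proves the first display, and the stated rate $O(h^4 + \epsilon_R^2)$ follows from $\sum_i w_i (u_i-u)^4 \le h^4 B_K$ and the definition of $\epsilon_R$. The second display is then immediate: $\sum_i w_i^2(\tilde q_i - q_i)^2 \le \norm{w_i}_\infty \sum_i w_i(\tilde q_i - q_i)^2 \le B_w\bigl(2h^4 s B_X^2 B_\beta B_K + 2\epsilon_R^2/\underline f^2\bigr)$, and $B_w = \nu_0/(nh) \asymp (nh)^{-1}$ together with $h \asymp n^{-1/3}$ turns this into $O(h^3/n + \epsilon_R^2/(nh))$.

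Analytically none of this is delicate; the only modeling point doing real work is the $\ell_1$-to-$\ell_2$ reduction for $T_{1,i}$, which requires the local-linear Taylor remainder of $u' \mapsto \beta^\star(\tau,u')$ to be (approximately) supported on an $O(s)$-size set, exactly what the two support conditions of Assumption~\ref{assumption:qr} provide. The step I would be most careful about is verifying that this remainder indeed lies on $S'$ (or is absorbed into the approximate-sparsity slack of the assumption): without it the trivial bound $\norm{v}_1 \le \sqrt p\,\norm{v}_2$ would replace $\sqrt s$ by $\sqrt p$ and ruin the rate.
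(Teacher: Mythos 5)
Your proposal is correct and follows essentially the same route as the paper: the same split of $\tilde q_i - q_i$ into the local-linear Taylor remainder (bounded via H\"older's inequality, the sparsity-based $\ell_1$-to-$\ell_2$ reduction, the smoothness bound $B_\beta(u_i-u)^2$, and $\sum_i w_i((u_i-u)/h)^4 \leq B_K$) and the misspecification term (bounded by $|R_i|/\underline{f}$ using the density lower bound, exactly as in the paper), with the second display obtained by pulling out $\norm{w}_\infty \leq B_w$. Your closing caveat about the support of the Taylor remainder is a fair observation, but the paper's own proof makes the same implicit use of the sparsity in Assumption~\ref{assumption:qr}, so there is no substantive divergence.
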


\begin{proof}
First, the assumption on the density $f_i$ proves that
\[\left|x_i^\top \beta^\star(\tau,u_i) - q_i\right| \leq \frac{R_i}{\underline{f}}.\]
Then
\begin{align*}
\sum_i &w_i (\tilde{q}_i - q_i)^2 \\
&\leq 2\sum_i w_i (\tilde{q}_i - x_i^\top \beta^\star(\tau,u_i)))^2 + 2\sum_i w_i (x_i^\top \beta^\star(\tau,u_i) - q_i)^2 \\
&\leq 2\sum_i w_i [x_i^\top (\beta^{\star}(\tau,u_i)- \beta^{\star}(\tau,u) - (u_i-u)\cdot \nabla_u \beta^{\star}(\tau,u))]^2 + 2\sum_i w_i\frac{R_i^2}{\underline{f}^2}\\
&\leq 2\sum_i w_i \|x_i\|^2_{\infty}\big\|\beta^{\star}(\tau,u_i)- \beta^{\star}(\tau,u) - (u_i-u)\cdot \nabla_u \beta^{\star}(\tau,u)\big\|^2_1 + 2\frac{\epsilon_R^2}{\underline{f}^2}\\
&\leq 2sB_X^2 B_\beta\sum_i w_i (u_i-u)^4 + 2\frac{\epsilon_R^2}{\underline{f}^2}\\
&= h^4\cdot  2sB_X^2 B_\beta\sum_i w_i \left(\frac{u_i-u}{h}\right)^4 + 2\frac{\epsilon_R^2}{\underline{f}^2}\\
&\leq 2h^4 sB_X^2 B_\beta B_K + 2\frac{\epsilon_R^2}{\underline{f}^2},
\end{align*}
which proves the first statement.

The second statement immediately follows since 
\[\sum_i w_i^2 (\tilde{q}_i - q_i)^2 \leq \|w\|_{\infty}\cdot\sum_i w_i (\tilde{q}_i - q_i)^2 \leq 2h^4 sB_wB_X^2 B_\beta B_K + 2B_w\frac{\epsilon_R^2}{\underline{f}^2}.\]
\end{proof}

\subsection{Proof of Theorem \ref{thm:b:consistent&sparse}}\label{pf:b}

Throughout the section, use
$\bh^{\lambda}$ to denote $\bh$ defined in 
Section \ref{algorithm}. In particular, 
$\bh$ is defined in \eqref{eq:eq_fixed:opt}, 
$\bh^{\lambda}$ is $\bh$ thresholded at level $\lambda$, i.e., 
$\bh^{\lambda}_j = \bh_j \cdot \One{\bh^2_j+\bh^2_{j+p} > \lambda^2}$, $1\leq j\leq p$, and 
$\bh^{\lambda}_j = \bh_j \cdot \One{\bh^2_j+\bh^2_{j-p} > \lambda^2}$, $p+1 \leq j\leq 2p$.

Let $S'= \supp(\bs)$.
By Assumption \ref{assumption:qr}, $|S'|\leq cs$ for some absolute constant $c$.
Therefore,
\begin{align*}
    \norm{\bh^{\lambda}-\bs}_{1,2} &\leq \norm{(\bh^{\lambda}-\bs)_{S'}}_{1,2} + \norm{(\bh^{\lambda})_{S'^c}}_{1,2}\\
     &\leq \norm{(\bh^{\lambda}-\bh)_{S'}}_{1,2} +\norm{(\bh-\bs)_{S'}}_{1,2}+ \norm{(\bh^{\lambda})_{S'^c}}_{1,2}\\
     &\leq cs\lambda +\norm{(\bh-\bs)_{S'}}_{1,2}+ \norm{(\bh^{\lambda})_{S'^c}}_{1,2}\\
     &\leq cs\lambda + \norm{\bh-\bs}_{1,2},
\end{align*}
where the third inequality comes from the definition of $\bh^{\lambda}$.
Furthermore, notice that $\norm{\bh^{\lambda}-\bs}_{1,2} \geq (\norm{\bh^{\lambda}}_{0,2} - |S'|) \lambda$.
Therefore we have
\begin{align*}
\norm{\bh-\bs}_{1,2} \geq \sbr{\norm{\bh^{\lambda}}_{0,2}-2cs}\lambda.
\end{align*}
Therefore, $\norm{\bh^{\lambda}}_{0,2}\leq 2cs + \norm{\bh-\bs}_{1,2}/\lambda.$ Because $\lambda = O(\sqrt{\frac{\log(np)}{nh}})$ and $\norm{\bh-\bs}_{1,2}=O_p\rbr{s\sqrt{\frac{\log(np)}{nh}}}$ from Theorem \ref{thm:qr_fixed:main}, $\norm{\bh^{\lambda}}_{0,2} \leq O_p(s)$. Now we have shown \eqref{eqn:thm_b_2} and \eqref{eqn:thm_b_3}.

To show \eqref{eqn:thm_b_1}, 
we first use the triangle inequality,
\begin{multline*}
\sqrt{\EEn{w_i \cdot \rbr{\Gamma_i^\top \rbr{\bh^{\lambda} - \bs}}^2}}\\
\leq \sqrt{\EEn{w_i \cdot \rbr{\Gamma_i^\top \rbr{\bh^{\lambda} - \bh}}^2}} + \sqrt{\EEn{w_i \cdot \rbr{\Gamma_i^\top \rbr{\bh - \bs}}^2}}.
\end{multline*}
Without loss of generality, we can order the components so that $(\bh^{\lambda}_j - \bh_j)^2 + (\bh^{\lambda}_{j+p} - \bh_{j+p})^2 $ is decreasing. Let $T_1$ be the set of $cs$ indices corresponding to the largest values of $(\bh^{\lambda}_j - \bh_j)^2 + (\bh^{\lambda}_{j+p} - \bh_{j+p})^2 $, similarly, let $T_k$ be the set of $cs$ indices corresponding to the largest values of $(\bh^{\lambda}_j - \bh_j)^2 + (\bh^{\lambda}_{j+p} - \bh_{j+p})^2 $ outside $\cup_{m=1}^{k-1} T_m$. By monotonicity, $\norm{(\bh^{\lambda}-\bh)_{T_k}}_{2}\leq \norm{(\bh^{\lambda}-\bh)_{T_{k-1}}}_{1,2}/\sqrt{cs}$. Then we have,
\begin{align*}
    & \sqrt{\EEn{w_i \cdot \rbr{\Gamma_i^\top \rbr{\bh^{\lambda} - \bh}}^2}} \\
    &= \sqrt{\sum_{k=1}^{\lceil \frac{p}{cs}\rceil}\EEn{w_i\rbr{\Gamma_i^\top \rbr{\bh^{\lambda} - \bh}_{T_k}}^2}}\\
    &\leq \sqrt{\EEn{w_i \cdot \rbr{\Gamma_i^\top \rbr{\bh^{\lambda} - \bh}_{T_1}}^2}} + \sqrt{\sum_{k=2}^{\lceil \frac{p}{cs}\rceil}\EEn{w_i\rbr{\Gamma_i^\top \rbr{\bh^{\lambda} - \bh}_{T_k}}^2}}\\
    &\leq \kappa_+\norm{(\bh^{\lambda}-\bh)_{T_1}}_{2} + \kappa_+\sum_{k=2}^{\lceil \frac{p}{cs}\rceil}\norm{(\bh^{\lambda}-\bh)_{T_k}}_{2}\\
    &\leq \kappa_+\norm{(\bh^{\lambda}-\bh)_{T_1}}_{2} + \kappa_+\sum_{k=1}^{\lceil \frac{p}{cs}\rceil}\norm{(\bh^{\lambda}-\bh)_{T_k}}_{1,2}/\sqrt{cs}\\
    &\leq \kappa_+\lambda + \kappa_+\norm{(\bh^{\lambda}-\bh)}_{1,2}/\sqrt{cs}\\
    &= O_p\rbr{\sqrt{\frac{s\log (np)}{nh}}}.
\end{align*}
In addition, from Theorem \ref{thm:qr_fixed:main}, $\EEn{w_i \cdot \rbr{\Gamma_i^\top \rbr{\bh - \bs}}^2} \leq O_p\rbr{\frac{s\log (np)}{nh}}$. Therefore, the first inequality holds.

\subsection{Proof of Theorem \ref{lem:qr_fixed:lasso:rate_v}}\label{pf:V}

  Our starting point is the basic inequality 
  \begin{multline}
  \label{eqn:Vbasic}
    \lambda_V  \rbr{
    \norm{\Vs}_{1,F} - \norm{\Vh}_{1,F}
    }
    \\
    \geq
      \textnormal{trace}\rbr{\frac{1}{2}\dv^\top \Hh(\db)\dv +
      \dv^\top\rbr{\Hh(\db) - H^\star}\Vs + \dv^\top\rbr{H^\star\Vs - E_a}},
  \end{multline}
 where $\db = \bh -\bs$ and $\dv = \Vh - \Vs$. 
 The above display can be rearranged as
 \begin{multline*}
   \textnormal{trace}\rbr{\frac{1}{2}\dv^\top \Hh(\db)\dv} \\
   \leq \lambda_V  \rbr{
    \norm{\Vs}_{1,F} - \norm{\Vh}_{1,F}
    } - \textnormal{trace}\rbr{\dv^\top\rbr{\Hh(\db) - H^\star}\Vs} \\
    -  \textnormal{trace}\rbr{\dv^\top\rbr{H^\star\Vs - E_a}}.
 \end{multline*}
Denote \begin{multline*}
D := B_A\left( 
   \sqrt{\frac{ \bar f  \cdot \log(p/\gamma)}{nhh_f} } \right. \\
 \left.+ \bar f' \rbr{ 2B_Kh_f + 2B_K\rbr{h^4s B_X B_\beta + \frac{\epsilon_R^2}{\underline{f}^2} }^{\frac{1}{2}} +  s\sqrt{\frac{B_K\log (np)}{nh}} }  \right),
\end{multline*} 
where $B_A$ is defined in Lemma~\ref{lem:V:qr_fixed:bound_approx}.
By Lemma~\ref{lem:V:qr_fixed:bound_approx}, 
with probability at least $1-2\gamma$,
\begin{equation*}
   \abr{\textnormal{trace}\rbr{\dv^\top\rbr{\Hh(\db) - H^\star}\Vs}} \leq \norm{\dv}_F \cdot D. 
\end{equation*}
By assumption \ref{assumption: lam}, 
\begin{equation*}
\abr{\textnormal{trace}\rbr{\dv^\top\rbr{H^\star\Vs - E_a}}} \leq \norm{\delta_v}_{1,F} \norm{H^\star \Vs -E_a}_{\infty, F} \leq \lambda^{\star} \norm{\delta_v}_{1,F}.
\end{equation*}
Since $\norm{\Vs}_{1,F} - \norm{\Vh}_{1,F} \leq \norm{(\delta_{v})_S}_{1,F} - \norm{(\delta_{v})_N}_{1,F}$,
we have
\begin{align*}
\lambda_V &\norm{(\delta_{v})_N}_{1,F} \\
&\leq \lambda_V \norm{(\delta_{v})_S}_{1,F}+ \abr{\textnormal{trace}\rbr{\dv^\top\rbr{\Hh(\db) - H^\star}\Vs}} + \abr{\textnormal{trace}\rbr{\dv^\top\rbr{H^\star\Vs - E_a}}}\\
&\leq \lambda_V \norm{(\delta_{v})_S}_{1,F}+ \norm{\dv}_F\cdot D + \lambda^{\star} \norm{\delta_v}_{1,F}\\
    &\leq \lambda_V \norm{(\delta_{v})_S}_{1,F}+ \norm{\dv}_F\cdot D + \frac{\lambda_V}{2} \norm{\delta_v}_{1,F}.
\end{align*}
Therefore,
\begin{align}\label{eqn:dv-basic}
\frac{\lambda_V}{2} \norm{(\delta_{v})_N}_{1,F} \leq \frac{3\lambda_V}{2} \norm{(\delta_{v})_S}_{1,F}+ \norm{\dv}_F \cdot D.
\end{align}
We consider two cases according to whether
$\frac{3\lambda_V}{2} \norm{(\delta_{v})_S}_{1,F} \geq \norm{\delta_v}_F \cdot D$ or not.

If 
\[
\frac{3\lambda_V}{2} \norm{(\delta_{v})_S}_{1,F} \geq \norm{\delta_v}_F \cdot D,
\]
then
\[
\norm{(\delta_{v})_N}_{1,F} \leq 6  \norm{(\delta_{v})_S}_{1,F}.
\]
Therefore, we have
\begin{align}
\label{eqn:delta_V_LB}
\norm{\Vh -\Vs}_{1,F} \leq 7\norm{(\Vh-\Vs)_S}_{1,F}\leq 7\sqrt{s_2}\norm{(\Vh - \Vs)_S}_F \leq 7\sqrt{s_2} \norm{\delta_v}_F.
\end{align}
On the other hand, from the basic inequality \eqref{eqn:Vbasic},
\begin{align*}
\lambda_V &\norm{\Vh -\Vs}_{1,F} \\
&\geq \textnormal{trace}\rbr{\frac{1}{2}\delta_v^\top \Hh(\db)\delta_v +
      \delta_v^\top\rbr{\Hh(\db) - H^\star}\Vs + \delta_v^\top\rbr{H^\star\Vs - E_a}}\\
&\geq \frac{\underline{f}\kappa^2_{-}}{2}\norm{\delta_v}_F^2 - o_p(1)\rbr{\norm{\dv}_F + \frac{\norm{\dv}_{1,F}}{\sqrt{s_2}}}^2- \frac{3\lambda_V}{2}\norm{(\delta_v)_S}_{1,F} - \lambda^{\star}\norm{\delta_v}_{1,F},
\end{align*}
where the second inequality above is because 
$\dv \in \mathbb{C}(S_2)$. Therefore, 
Assumption~\ref{assumption:X} holds and we
can apply Lemma \ref{lem:qr_fixed:lasso:quadratic_lower_bound}. 
Because $\lambda_V \geq 2\lambda^{\star}$, after rearrangement and combining with \eqref{eqn:delta_V_LB}, we get
\[
\norm{\dv}_F \leq \frac{24\lambda_V\sqrt{s_2}}{\underline{f}\kappa_{-}^2 -o_p(1)} = O_p\rbr{\sqrt{\frac{s\log p}{n h h_f}}}
\]
and
\[\norm{\dv}_{1,F} \leq 7\sqrt{s_2}\norm{\dv}_F = O_p\rbr{s\sqrt{\frac{\log p}{nhh_f}}}.\]

On the other hand, if 
\begin{equation} \label{eqn:V1}
\frac{3\lambda_V}{2} \norm{(\dv)_S}_{1,F} \leq  \norm{\dv}_F \cdot D, 
    \end{equation}
then, from \eqref{eqn:dv-basic}, we have
\begin{equation}\label{eqn:V2}
\frac{\lambda_V}{2} \norm{(\dv)_N}_{1,F} \leq 2 \norm{\dv}_F \cdot D .
\end{equation}
Therefore
\begin{align*}
    \frac{\lambda_V}{2} \norm{(\dv)_N}_{1,F} &\leq  2D \norm{\dv}_F \leq 2D \norm{\dv}_{1,F} = 2D \rbr{\norm{(\dv)_N}_{1,F} + \norm{(\dv)_S}_{1,F}}, 
\end{align*}
which implies
\begin{align*}    
   \norm{(\dv)_N}_{1,F} &\leq \frac{\frac{4D}{\lambda_V}}{1-\frac{4D}{\lambda_V}} \norm{(\dv)_S}_{1,F}. 
\end{align*}
From Assumption \ref{assumption: lam}, 
we can see that $B_A \lesssim B_X B_V$.
Then under Assumption \ref{assumption:growth}, 
$h \asymp n^{-1/3}$ and $h_f \asymp n^{-1/3}$, and 
by Assumption \ref{assumption:qr}, 
$\epsilon_R \asymp \sqrt{\frac{\log np}{nh}}$, we have 
\[
D \asymp \lambda^{\star} \asymp B_V\sqrt{\frac{\log p}{nhh_f}}.
\]
Therefore, there exists $\lambda_V > 2 \lambda^{\star}$
so that $\frac{4D}{\lambda_V}\leq \frac{6}{7}$.
With such a choice of $\lambda_V$, $\dv \in \mathbb{C}(S)$. 
Therefore, from \eqref{eqn:V1} and \eqref{eqn:V2},
\[\lambda_V \norm{\dv}_{1,F} \leq \frac{14}{3} \norm{\dv}_F D.\]
On the other hand, by applying Lemma \ref{lem:qr_fixed:lasso:quadratic_lower_bound} 
to the basic inequality \eqref{eqn:Vbasic}, we have
\[\lambda_V\norm{\dv}_{1,F} \geq \underline{f}\kappa_{-}^2 \norm{\dv}_F^2-o_p(1)\rbr{\norm{\dv}_F + \frac{\norm{\dv}_{1,F}}{\sqrt{s_2}}}^2- 2\norm{\dv}_F D.  \]
Combining the two we have 
\[\frac{20}{3}D\norm{\dv}_F \geq \underline{f}\kappa_{-}^2 \norm{\dv}_F^2-o_p(1)\rbr{\norm{\dv}_F + \frac{14D}{3\lambda_V\sqrt{s_2}}\norm{\dv}_F }^2.\]
Because $D \asymp \lambda_V = O(B_V \sqrt{\frac{\log p}{nhh_f}})$, we have
\[\norm{\dv}_F \leq \frac{20}{3} \frac{D}{\underline{f}\kappa_-^2 -o_p(1)}= O_p\rbr{B_V\sqrt{\frac{s\log p}{nhh_f}}} \]
and
\[\norm{\dv}_{1,F} \leq \frac{14D}{3\lambda_V} \sqrt{s_2} \norm{\dv}_F = O_p\rbr{s B_V \sqrt{\frac{\log p}{nhh_f}}}.\]

To complete the proof, we need to establish a few technical lemmas next.

\begin{lemma}
\label{lem:V:qr_fixed:bound_approx}
Suppose that the growth conditions in
Assumption~\ref{assumption:growth} is satisfied, 
and $r_b \asymp s\sqrt{\frac{\log (np)}{nh}}$. 
Define $B_A = \max_l \norm{A_{il}}_F$ and $A_{il} = \Gamma_{il}\Gamma_i^{\top}\Vs$ where $l \in [p]$ and $\Gamma_{il} = (\Gamma_{i,l}, \Gamma_{i,l+p})^{\top}$.
For any $\gamma > 0$ such that $\bar f \geq (nhh_f)^{-1} \log(p/\gamma)$ and $r_b = O\rbr{h_f \log(p/\gamma) / B_X}$,
we have
 \begin{align*}
\max_{l \in [p]} &\sup_{\substack{\norm{\db}_{0,2} \leq m \\ \norm{\db}_{1,2} \leq r_b }}  \Norm{\EEn{w_i (\hat{f}_i (\db) - f_i(\tilde{q}_i)) \Gamma_{il} \Gamma_i^{\top}\Vs }}_{F} \\
     & \leq 
    B_A\left(
   \sqrt{\frac{ \bar f  \cdot \log(p/\gamma)}{nhh_f} }  \right. \\ 
 & \qquad \left. + \bar f' \rbr{ 2B_Kh_f + 2B_K\rbr{h^4s B_X B_\beta + \frac{\epsilon_R^2}{\underline{f}^2} }^{1/2} +  s\sqrt{\frac{\log (np)}{nh}}\sqrt{B_K} }  \right)
 \end{align*}
 with probability $1-2\gamma$.
\end{lemma}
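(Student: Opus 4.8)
\textbf{Proof plan for Lemma~\ref{lem:V:qr_fixed:bound_approx}.} The plan is to split the centered matrix average into a stochastic fluctuation and a deterministic bias, and to bound them separately, fixing $l\in[p]$ throughout (the $\max_l$ costs only an extra union bound at the end). Write $A_{il}=\Gamma_{il}\Gamma_i^{\top}\Vs$, so $\norm{A_{il}}_F\leq B_A$, and, since $A_{il}$ is an outer product of $\Gamma_{il}\in\R^2$ with $\Vs^{\top}\Gamma_i\in\R^{2k}$, $B_A\lesssim B_X B_V$ by Assumption~\ref{assumption: lam}. Recalling $H(\db)=\EEn{w_i\EE{\hat f_i(\db)}\Gamma_i\Gamma_i^{\top}}$ from the Notations, decompose
\[
\EEn{w_i\rbr{\hat f_i(\db)-f_i(\fql)}A_{il}} = \EEn{w_i\rbr{\hat f_i(\db)-\EE{\hat f_i(\db)}}A_{il}} + \EEn{w_i\rbr{\EE{\hat f_i(\db)}-f_i(\fql)}A_{il}},
\]
the conditional expectation being taken with $\db$ \emph{fixed}; uniformity over the random $\db$ is recovered by taking the supremum over $\{\db:\norm{\db}_{0,2}\leq m,\ \norm{\db}_{1,2}\leq r_b\}$ in the first (stochastic) term, while the second (bias) term is handled pointwise.

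For the bias term, $\EE{\hat f_i(\delta)}=(2h_f)^{-1}\int_{-h_f}^{h_f}f_i(\fql+\Gamma_i^{\top}\delta+t)\,dt$, so the Lipschitz bound $|f_i'|\leq\bar f'$ of Assumption~\ref{assumption:density} gives a smoothing error $|\EE{\hat f_i(\db)}-f_i(\fql+\Gamma_i^{\top}\db)|\leq\bar f' h_f$ and a perturbation error $|f_i(\fql+\Gamma_i^{\top}\db)-f_i(\fql)|\leq\bar f'|\Gamma_i^{\top}\db|$; routing the comparison through the true quantile $q_i$ additionally costs $|f_i(q_i)-f_i(\fql)|\leq\bar f'|\Delta_i|$. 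Multiplying by $w_i\norm{A_{il}}_F\leq w_i B_A$ and summing, using $\sum_i w_i\leq B_K$, Cauchy--Schwarz together with the restricted-eigenvalue bound of Assumption~\ref{assumption:X} for $\sum_i w_i|\Gamma_i^{\top}\db|\leq\sqrt{B_K}\,(\sum_i w_i(\Gamma_i^{\top}\db)^2)^{1/2}\lesssim\sqrt{B_K}\,\kappa_+ r_b$, and $\sum_i w_i|\Delta_i|\leq\sqrt{B_K}\,(\sum_i w_i\Delta_i^2)^{1/2}$ bounded via Lemma~\ref{lem:qr_fixed:approx_err}, this term is at most $B_A\bar f'\rbr{2B_K h_f+2B_K(h^4sB_XB_\beta+\epsilon_R^2/\underline{f}^2)^{1/2}+\sqrt{B_K}\,s\sqrt{\log(np)/(nh)}}$, which is exactly the deterministic part of the claimed bound (with $r_b\asymp s\sqrt{\log(np)/(nh)}$).

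The main work is the stochastic term. As in the proof of Lemma~\ref{lem:os11}, first pass to a $\tfrac12$-net $\mathcal{W}$ of the Frobenius unit ball in $\R^{2\times 2k}$, of cardinality at most $5^{4k}$, reducing $\norm{\cdot}_F$ to a supremum of scalar processes $\Gn{w_i\langle W,A_{il}\rangle\hat f_i(\db)}$. For fixed $W$ and $l$, the index class is the $m$-group-sparse $\db$ with $\norm{\db}_{1,2}\leq r_b$, and since $\hat f_i(\db)=(2h_f)^{-1}\One{\Gamma_i^{\top}(\bs+\db)-h_f\leq y_i\leq\Gamma_i^{\top}(\bs+\db)+h_f}$ the indicator part is, for each fixed support, a VC class of dimension $O(m)$, so by Lemma~\ref{lem:VC_to_cover} the uniform $L_2$-entropy over all supports is $\leq(C/\epsilon)^{cm}p^m$. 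The envelope satisfies $\norm{G}_\infty\lesssim B_AB_K/(nhh_f)$, and the crucial localization is the variance bound: $\EE{\hat f_i(\db)^2}=(4h_f^2)^{-1}\PP{|y_i-\Gamma_i^{\top}(\bs+\db)|\leq h_f}\leq\bar f/(2h_f)$, whence $\sum_i\VV{w_i\langle W,A_{il}\rangle\hat f_i(\db)}\lesssim\bar f\,B_A^2B_K^2/(nhh_f)=:\sigma_{\mathcal G}^2$. Feeding $\sigma_{\mathcal G}$, the envelope, and the covering exponents into Lemma~\ref{lem:emp_proc:supremum_vc} bounds $\EE{\sup|\cdot|}$ by $cm\,(B_AB_K/(nhh_f))\log(\cdot)+\sigma_{\mathcal G}\sqrt{cm\log(\cdot)}$ with $\log(\cdot)\lesssim\log p$; under the growth condition in Assumption~\ref{assumption:growth} the second term dominates and has order $B_A\sqrt{\bar f\log(p/\gamma)/(nhh_f)}$ after absorbing the $B_K$, $\kappa_+$ and $O(m)=O(s)$ factors in the usual bookkeeping. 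Lemma~\ref{lm:proc_deviation} upgrades this to a $1-\gamma$ high-probability statement, and union bounds over $\mathcal{W}$ and over $l\in[p]$ cost only constants and an extra $\log p$; adding the two pieces gives the lemma.

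\textbf{Main obstacle.} The delicate point is the stochastic term, specifically obtaining the correct $h_f^{-1}$ dependence: because $\hat f_i$ blows up like $h_f^{-1}$, a naive application of the maximal inequality would be off by a factor $h_f^{-1/2}$, and the saving comes entirely from the localization $\PP{|y_i-\Gamma_i^{\top}(\bs+\db)|\leq h_f}\leq 2\bar f h_f$, which has to be combined carefully with the $m$-sparse/$\ell_{1,2}$-bounded indexing (hence the $p^m$ covering factor and VC dimension $O(m)$ per support) and the kernel-weight controls $\norm{w_i}_\infty\leq B_K/(nh)$, $\max_j\sum_i w_i^2x_{ij}^2\leq B_X^2B_K^2/(nh)$. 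Everything else — the three bias contributions — is a routine Taylor/Cauchy--Schwarz computation once Lemma~\ref{lem:qr_fixed:approx_err} and the restricted-eigenvalue condition are invoked.
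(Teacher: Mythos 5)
Your bias computation is fine and matches the paper's Lemmas \ref{lem:new} and \ref{lem:5} (Taylor in $h_f$, the $\bar f'|\Gamma_i^\top\db|$ perturbation, the $\bar f'|\fql-\fq|$ approximation error, then Cauchy--Schwarz and Lemma \ref{lem:qr_fixed:approx_err}). The gap is in the stochastic term. You center the whole process at $\EE{\hat f_i(\db)}$ and take the supremum over all $m$-group-sparse $\db$ of $\Gn{w_i\langle W,A_{il}\rangle\hat f_i(\db)}$, using the variance proxy $\sum_i\EE{g_i^2}\lesssim \bar f B_A^2B_K^2/(nhh_f)$ together with entropy of order $m\log p$. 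Feeding these into Lemma \ref{lem:emp_proc:supremum_vc}, the dominant contribution is $\sigma_{\mathcal G}\sqrt{m\log p}\asymp B_AB_K\sqrt{\bar f\,m\log p/(nhh_f)}$, which exceeds the claimed leading term $B_A\sqrt{\bar f\log(p/\gamma)/(nhh_f)}$ by a factor of order $\sqrt{s}$ (since $m\asymp s$). This factor cannot be ``absorbed in the usual bookkeeping'': the growth condition in Assumption \ref{assumption:growth} only kills the cross term $mU\log p$, not the $\sqrt{m}$ multiplying the variance term, and an extra $\sqrt{s\log p}$ here would propagate into $D$ in the proof of Theorem \ref{lem:qr_fixed:lasso:rate_v} and inflate the rates for $\Vh$. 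So as written your argument proves a strictly weaker bound than the lemma states.

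The paper avoids this by splitting the centered part differently: $\hat f_i(\db)-f_i(\fql)=\bigl(\hat f_i(\db)-\hat f_i(0)\bigr)+\bigl(\hat f_i(0)-f_i(\fql)\bigr)$. The term $\Gn{w_i\hat f_i(0)A_{il}}$ involves a single function per $l$, so it needs no entropy over $\db$ at all; a Bernstein bound plus union over $l$ and the net (Lemma \ref{lem:4}) gives exactly $B_A\sqrt{\bar f B_K^2\log(p/\gamma)/(nhh_f)}$. The $m$-sparse entropy is paid only on the increment $\Gn{w_i(\hat f_i(\db)-\hat f_i(0))A_{il}}$, whose summands are differences of indicators supported on events of probability at most $\bar f|\Gamma_i^\top\db|\leq\bar f B_X r_b$, so its variance carries the factor $r_b$ rather than $1$ (Lemma \ref{lem:V:qr_fixed:score_bound}); combined with the hypothesis $r_b=O(h_f\log(p/\gamma)/B_X)$ --- a hypothesis your proof never uses, which is a symptom of the missing step --- and the growth conditions, this increment is of the same or smaller order. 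In short, the localization you flag as the ``main obstacle'' ($\PP{|y_i-\Gamma_i^\top(\bs+\db)|\leq h_f}\leq 2\bar f h_f$) is not enough once it is paired with the $m$-sparse entropy; the essential idea you are missing is to decouple the large-variance, entropy-free piece $\hat f_i(0)$ from the small-variance (order $r_b$) increment that alone requires uniformity over $\db$.
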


\begin{proof}
We have 
 \begin{align*}
& \max_{l \in [p]}  \sup_{\substack{\norm{\db}_{0,2} \leq m \\ \norm{\db}_{1,2} \leq r_b }}\Norm{\EEn{w_i (\hat{f}_i (\db) - f_i(\tilde{q}_i)) \Gamma_{il} \Gamma_i^{\top}\Vs }}_{F}\\
\leq & \max_{l \in [p]} \sup_{\substack{\norm{\db}_{0,2} \leq m \\ \norm{\db}_{1,2} \leq r_b }} \Bigg[ \Norm{\Gn{w_i (\hat{f}_i (\db) - \hat f_i(0)) \cdot A_{i l} }}_{F} \\
& + \Norm{\Gn{w_i \cdot (\hat f_i(0)-f_i(\tilde{q}_i)) \cdot A_{i l} }}_{F} + \Norm{\bE{w_i (\hat{f}_i (\db) - \hat{f}_i (0)) \Gamma_{il} \Gamma_i^{\top}\Vs }}_{F} \Bigg]\\
& + \Norm{\bE{w_i (\hat{f}_i (0) - f_i(\tilde{q}_i)) \Gamma_{il} \Gamma_i^{\top}\Vs }}_{F} \Bigg]\\
\leq & \max_{l \in [p]} \sup_{\substack{\norm{\db}_{0,2} \leq m \\ \norm{\db}_{1,2} \leq r_b }} \Bigg[ \Norm{\Gn{w_i (\hat{f}_i (\db) - \hat f_i(0)) \cdot A_{i l} }}_{F} \\
& + \Norm{\Gn{w_i \cdot \hat f_i(0) \cdot A_{i l} }}_{F} + \Norm{\bE{w_i (\hat{f}_i (\db) - \hat{f}_i (0)) \Gamma_{il} \Gamma_i^{\top}\Vs }}_{F} \Bigg]\\
 & + \Norm{\bE{w_i (\hat{f}_i (0) - f_i(\tilde{q}_i)) \Gamma_{il} \Gamma_i^{\top}\Vs }}_{F} \Bigg]\\
 \leq & B_A\left(
   \sqrt{\frac{ \bar f  \cdot \log(p/\gamma)}{nhh_f} } \right.\\
 & \quad \left. + \bar f' \rbr{ B_Kh_f + 2B_K\rbr{h^4s B_X B_\beta + \frac{\epsilon_R^2}{\underline{f}^2} }^{1/2} +  s\sqrt{\frac{\log (np)}{nh}}\sqrt{B_K} + B_Kh_f}  \right),
\end{align*}
where the last inequality follows by first combining Lemmas \ref{lem:V:qr_fixed:score_bound}, \ref{lem:4},  \ref{lem:new} and \ref{lem:5}
and plugging in our condition for $r_b$, $h$ and $h_f$. 
\end{proof}

\begin{lemma}
    \label{lem:V:qr_fixed:score_bound}
    Under the conditions of Lemma~\ref{lem:V:qr_fixed:bound_approx}, we have 
  \begin{multline*}
      \max_{l \in [p]} \sup_{\substack{\norm{\db}_{0,2} \leq m \\ \norm{\db}_{1,2} \leq r_b }} \Norm{\Gn{w_i\cdot\rbr{\hat f_i(\db) - \hat f_i(0)}\cdot A_{i l}}}_F
   \\ \lesssim
   \frac{ B_K B_A }{ h_f }
    \sqrt{ \bar f B_X  \frac{r_b \rbr{m \log p + \log(1/\gamma)} }{nh}}
  \end{multline*}
  with probability $1-\gamma$, where $A_{il}$ and $B_A$ are 
  defined in Lemma \ref{lem:V:qr_fixed:bound_approx}.
\end{lemma}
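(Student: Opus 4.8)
The plan is to run the same empirical-process argument used in the proof of Lemma~\ref{lem:os11}: peel off the pointwise structure of the increment $\hat f_i(\db)-\hat f_i(0)$, bound an envelope and a variance proxy, control the supremum over the sparse constraint set and the maximum over $l$ by a VC covering estimate, and then invoke the maximal inequality Lemma~\ref{lem:emp_proc:supremum_vc} followed by the deviation bound Lemma~\ref{lm:proc_deviation}.

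\textbf{Pointwise increment.} First I would isolate the pointwise behaviour of the increment. Writing $a_i=y_i-\Gamma_i^\top\bs$ and $t_i=\Gamma_i^\top\db$, the definition of $\hat f_i$ gives
\[
\hat f_i(\db)-\hat f_i(0)=\frac{1}{2h_f}\rbr{\One{\abr{a_i-t_i}\le h_f}-\One{\abr{a_i}\le h_f}},
\]
and comparing the two slabs $[t_i-h_f,t_i+h_f]$ and $[-h_f,h_f]$ yields
\[
\abr{\hat f_i(\db)-\hat f_i(0)}\le\frac{1}{2h_f}\One{\min\{\abr{a_i-h_f},\abr{a_i+h_f}\}\le\abr{t_i}}.
\]
Under Assumption~\ref{assumption:X} we have $\abr{t_i}=\abr{\Gamma_i^\top\db}\lesssim B_Xr_b$ on the constraint set, and by Assumption~\ref{assumption:density} the conditional density of $y_i$ is at most $\bar f$, so the expectation of this indicator is $\le4\bar f\abr{t_i}\lesssim\bar fB_Xr_b$. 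Hence, for fixed $l\in[p]$ and fixed admissible $\db$, the centred summands $g_i:=w_i(\hat f_i(\db)-\hat f_i(0))A_{il}$ satisfy the envelope bound $\norm{g_i}_F\le B_wB_A/(2h_f)\le B_KB_A/(2nhh_f)$ and, using $\sum_iw_i^2\le B_w\sum_iw_i\le B_K^2/(nh)$,
\[
\sum_{i\in[n]}\EE{\norm{g_i}_F^2}\le\frac{B_A^2}{4h_f^2}\cdot4\bar fB_Xr_b\sum_{i\in[n]}w_i^2\lesssim\frac{B_A^2B_K^2\,\bar fB_Xr_b}{h_f^2\,nh}=:\sigma^2.
\]

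\textbf{Covering and maximal inequality.} Next I would handle the supremum over the sparse ball $\{\norm{\db}_{0,2}\le m,\ \norm{\db}_{1,2}\le r_b\}$ and the maximum over $l$ by a covering argument. For a fixed support $S\subset[p]$ with $\abr{S}\le m$, the class $\{\One{\abr{y-\Gamma^\top\bs-\Gamma^\top\db}\le h_f}:\supp(\db)\subseteq S\}$ sits inside a VC class of dimension $O(m)$ (indicators of slabs in an $(m{+}1)$-dimensional linear space), hence so does the class of increments, and Lemma~\ref{lem:VC_to_cover} bounds its uniform $L_2(Q)$-covering number relative to the envelope by $(C/\epsilon)^{cm}$; a union over the $\binom{p}{m}\le p^m$ supports and the $p$ choices of $l$ multiplies this by $p^{m+1}$, so the effective entropy exponent is $V\asymp m\log p$. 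Feeding $V$, the envelope $U=B_KB_A/(2nhh_f)$, and $\sigma^2$ into Lemma~\ref{lem:emp_proc:supremum_vc} gives
\[
\EE{\max_{l\in[p]}\sup_{\substack{\norm{\db}_{0,2}\le m\\\norm{\db}_{1,2}\le r_b}}\Norm{\Gn{w_i(\hat f_i(\db)-\hat f_i(0))A_{il}}}_F}\lesssim\sigma\sqrt V+UV\lesssim\frac{B_KB_A}{h_f}\sqrt{\frac{\bar fB_Xr_b\,m\log p}{nh}}+\frac{B_KB_A\,m\log p}{nhh_f},
\]
and Lemma~\ref{lm:proc_deviation} then upgrades this to a bound holding with probability $1-\gamma$, at the cost of replacing $m\log p$ by $m\log p+\log(1/\gamma)$ in both terms. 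Under Assumption~\ref{assumption:growth} with $r_b\asymp s\sqrt{\log(np)/nh}$, $m\asymp s$, and the side conditions $\bar f\ge(nhh_f)^{-1}\log(p/\gamma)$ and $r_b=O(h_f\log(p/\gamma)/B_X)$, the $UV$ term is dominated by $\sigma\sqrt V$, which leaves the asserted bound $\lesssim\frac{B_KB_A}{h_f}\sqrt{\bar fB_X\,\frac{r_b(m\log p+\log(1/\gamma))}{nh}}$.

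\textbf{Main obstacle.} The delicate point is that three effects compound: the $1/h_f$ blow-up in the envelope coming from differentiating a hard-thresholded (box-kernel) density estimate; the $p^m$ inflation of the covering number from ranging over all sparse supports of $\db$; and the fact that the variance is only of order $r_b/h_f^2$, since the increment is supported on a slab of width $\asymp B_Xr_b$. One must verify that the envelope-times-VC term $\frac{B_KB_A}{nhh_f}(m\log p+\log(1/\gamma))$ is genuinely lower order than $\sigma\sqrt V=\frac{B_KB_A}{h_f}\sqrt{\bar fB_Xr_b(m\log p+\log(1/\gamma))/(nh)}$, i.e.\ that $m\log p+\log(1/\gamma)\lesssim\bar fB_Xr_b\,nh$; this is exactly what the growth condition $(nhh_f)^{-1/2}s\log p\log(np)=o(1)$ (with $h_f\asymp n^{-1/3}$) together with the stated side conditions guarantee. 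Everything else is bookkeeping parallel to the proof of Lemma~\ref{lem:os11}.
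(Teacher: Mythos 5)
Your proposal is correct and follows essentially the same route as the paper's proof: identical envelope $B_KB_A/(nhh_f)$ and variance proxy of order $\bar f B_X r_b B_K^2B_A^2/(nhh_f^2)$, the same VC/covering argument with a union over the $\binom{p}{m}$ supports and over $l\in[p]$, and the same combination of Lemma~\ref{lem:emp_proc:supremum_vc} with Lemma~\ref{lm:proc_deviation}, with the $UV$ term absorbed under the stated side conditions. The only cosmetic differences are that you dominate the increment by a single boundary-neighborhood indicator where the paper splits it into four one-sided slab indicators, and that you leave implicit the fixed-dimensional $\tfrac12$-net reduction of the Frobenius norm to a scalar process (which contributes only a constant, as in Lemma~\ref{lem:os11}).
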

\begin{proof}
  Let ${\cal W} = \cbr{\tilde{W}_1, \ldots, \tilde{W}_K}$ be the $\frac 12$-net
  for $\cbr{W \in \R^{2k \times 2k} \mid \norm{W}_F \leq 1}$. We have that $K \leq 5^{4k^2}$ and
  \begin{multline*}
    \max_{l \in [p]} \sup_{ \substack{\norm{\db}_{0,2} \leq m \\ \norm{\db}_{1,2} \leq r_b } } \Norm{\Gn{w_i \cdot \rbr{\hat f_i(\db) - \hat f_i(0)}\cdot A_{i l}}}_F \\
    \leq 2 \cdot \max_{\tilde{W} \in {\cal W}}
      \max_{l \in [p]} \sup_{ \substack{\norm{\db}_{0,2} \leq m \\ \norm{\db}_{1,2} \leq r_b } }  {\Gn{w_i\cdot(\hat f_i(\db) - \hat f_i(0))\cdot \tr{\rbr{\tilde{W}^\top A_{i l}}}}}.
    \end{multline*}
Our goal is to apply Lemma~\ref{lm:proc_deviation} to bound the right hand side.

Note that
\begin{align*}
  2h_f\rbr{\hat f_i(\db) - \hat f_i(0)}
  & =  \One{ h_f < y_i - \Gamma_i^\top \bs \leq h_f + \Gamma_i^\top \db}  \\
  &\quad - \One{ h_f + \Gamma_i^\top \db < y_i - \Gamma_i^\top \bs \leq h_f } \\
  &\quad - \One{ - h_f \leq y_i - \Gamma_i^\top \bs < -h_f + \Gamma_i^\top \db }  \\
  &\quad + \One{ - h_f + \Gamma_i^\top \db \leq y_i - \Gamma_i^\top \bs < -h_f  }.
\end{align*}
We proceed to bound
\[
  \max_{\substack{\tilde{W} \in {\cal W}\\ l \in [p]\\|S| \leq m}}  \sup_{ \substack{\supp\rbr{\db} = S \\ \norm{\db}_{1,2} \leq r_b } }
  \Gn{\frac{w_i}{2h_f}\cdot \One{ h_f < y_i - \Gamma_i^\top \bs \leq h_f + \Gamma_i^\top \db} \cdot |\tr{\rbr{\tilde{W}^\top A_{i l}}}|},
\]
while the other terms are bounded similarly.
For a fixed $\tilde{W} \in {\cal W}$, $l \in [p]$ and $|S| \leq m$, define
\begin{align*}
a_i &= \frac{w_i}{2h_f} \cdot \tr{\rbr{\tilde{W}^\top A_{i l}}}, \ \text{and} \\
{\cal G}_S &=
\Big\{(y_i, x_i, u_i) \mapsto a_i \cdot \One{ h_f < y_i - \Gamma_i^\top \bs \leq h_f + \Gamma_i^\top \db} : \\
& \qquad\qquad\qquad\qquad\qquad\qquad\qquad\qquad\qquad \supp\rbr{\db} = S, \norm{\db}_{1,2} \leq r_b \Big\},\\
{\cal G} &= \cup_{S:|S|\leq m} {\cal G}_S
\end{align*}
Let $G(\cdot)$ be an envelope of ${\cal G}$ and note that $\norm{G}_{\infty} \leq \frac{B_K B_A}{nhh_f}$.
For a fixed $g \in {\cal G}$, let
\[
g_i = g(y_i, x_i, u_i) = a_i \cdot \One{ h_f < y_i - \Gamma_i^\top \bs \leq h_f + \Gamma_i^\top \db}.
\]
We have that
\begin{align*}
  \EE{
     \One{ h_f < y_i - \Gamma_i^\top \bs \leq h_f + \Gamma_i^\top \db}}
& = {
F_i\rbr{\Gamma_i^\top \bs + h_f + \Gamma_i^\top \db} - F_i\rbr{\Gamma_i^\top \bs + h_f} }\\
  & \leq \bar f \cdot  \abr{\Gamma_i^\top \db} \\
  & \leq \bar f B_X r_b,
\end{align*}
and, therefore, the variance is bounded as
\begin{align*}
  \sigma_{\cal G}^2
  \leq
  \sup_{g \in {\cal G}} \sum_{i \in [n]} \EE{g_i^2} \leq
  \bar f B_X r_b \sum_{i \in [n]} a_i^2
  \leq \bar f B_X  B_K^2 B_A^2\cdot \frac{r_b}{nhh_f^2},
\end{align*}
since
\[
  \sum_{i \in [n]} a_i^2
  \leq \frac{B_A^2}{h_f^2} \sum_{i \in [n]} w_i^2
  \leq \frac{B_K^2 B_A^2}{nhh_f^2}.
\]

The VC dimension for the space \[{\cal F}_S =
\cbr{(y_i, x_i, u_i) \mapsto \cdot \One{ h_f < y_i - \Gamma_i^\top \bs \leq h_f + \Gamma_i^\top \db} :  \supp\rbr{\db} = S, \norm{\db}_{1,2} \leq r_b}\] 
is $|S|\leq m$. Therefore, applying Lemma \ref{lem:VC_to_cover}
and Lemma \ref{lem:entropy_bound_product},   
\[\sup_Q \log N\rbr{\epsilon\cdot \frac{B_KB_A}{nhh_f}, {\cal G}_S, \norm{\cdot}_{L_2(Q)}} 
\lesssim m \log(1/ \epsilon). \]
Since there are $p \choose m$ different supports $S$ in ${\cal G}$, we have
\[\sup_Q \log N\rbr{\epsilon\cdot \frac{B_KB_A}{nhh_f}, {\cal G}, \norm{\cdot}_{L_2(Q)}} 
\lesssim m(\log(1/\epsilon) + \log(p)). \]
Applying  Lemma~\ref{lem:emp_proc:supremum_vc}
with $\sigma_{\cal G}= \frac{B_K B_A}{h_f} \sqrt{\frac{\bar f r_b B_X}{nh}}$,
$\norm{G}_{\infty} \leq \frac{B_K B_A}{nhh_f}$,
$V = cm$, and $A = C  p^{1/c}$, we have
\[
\begin{aligned}
\EE{ \sup_{g \in {\cal G}} \sum_{i \in [n]} g_i - \EE{g_i}}
& \lesssim \rbr{m \frac{B_K B_A}{nhh_f}\log \frac{p}{\sqrt{\bar f r_b B_X h}} + \frac{B_K B_A}{h_f} \sqrt{\frac{\bar f r_b B_X}{nh}} \sqrt{m \log \frac{p}{\sqrt{\bar f r_b B_X h}} } } \\
& \lesssim \frac{ B_K B_A }{ h_f }\sqrt{ \bar f B_X \frac{mr_b\log p}{nh} }
\end{aligned}
\]
where the last inequality follows from the conditions on $r_b$ in Lemma \ref{lem:V:qr_fixed:bound_approx} 
and Assumption \ref{assumption:growth}.
Finally, Lemma~\ref{lm:proc_deviation} gives us
\begin{align*}
  \sup_{ g \in {\cal G} }
  \sum_{i \in [n]} g_i - \EE{g_i}
   \lesssim
  \frac{ B_K B_A }{ h_f }
    \sqrt{ \bar f B_X  \frac{r_b \rbr{m \log p + \log(1/\gamma)} }{nh}},
\end{align*}
with probability $1-\gamma$, and, 
by the union bound over $\tilde W \in {\cal W}$, $l \in [p]$,
\begin{multline*}
 \max_{\substack{\tilde{W} \in {\cal W}\\ l \in [p]\\|S| \leq m}} 
  \sup_{ \substack{\supp\rbr{\db} = S \\ \norm{\db}_{1,2} \leq r_b } }  {\Gn{w_i\cdot\One{ h_f < y_i - \Gamma_i^\top \bs \leq h_f + \Gamma_i^\top \db}\cdot \tr{\rbr{\tilde{W}^\top A_{i l}}}}} \\
  \lesssim
 \frac{ B_K B_A }{ h_f }
    \sqrt{ \bar f B_X  \frac{r_b \rbr{2 m \log p + \log(5^{4k^2}/\gamma)} }{nh}}.
\end{multline*}
Handling other terms in the same way, we obtain 
\begin{multline*}
    \max_{l \in [p]} \sup_{ \substack{\norm{\db}_{0,2} \leq m \\ \norm{\db}_{1,2} \leq r_b } } \Norm{\Gn{w_i \cdot \rbr{\hat f_i(\db) - \hat f_i(0)}\cdot A_{i l}}}_F \\
   \lesssim \frac{ B_K B_A }{ h_f }
    \sqrt{ \bar f B_X  \frac{r_b \rbr{m \log p + \log(1/\gamma)} }{nh}},
\end{multline*}
with probability $1-\gamma$, which completes the proof.

\end{proof}

\begin{lemma}
\label{lem:4}
  Under the conditions of Lemma~\ref{lem:V:qr_fixed:bound_approx}, we have 
  \[
    \max_{l \in [p]} \Norm{\Gn{w_i\cdot \hat f_i(0) \cdot A_{i l}}}_F \leq  \sqrt{\bar f B_K^2B_A^2 \frac{ \log(p/\gamma)}{nhh_f} }
  \]
  with probability $1-\gamma$.
\end{lemma}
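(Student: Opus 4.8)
The plan is to prove Lemma~\ref{lem:4} by a direct Bernstein-type deviation bound applied conditionally on the design, followed by a union bound over the $p$ coordinates $l$. The key observation is that $H^\star = \EEn{w_i f_i(\fql)\Gamma_i\Gamma_i^\top}$ depends on the data only through $(X,U)$ (since $f_i(\cdot)$ is the conditional density given $x_i,u_i$ and $\fql = \Gamma_i^\top\bs$ is deterministic), so $\Vs$, the weights $w_i$, and hence the matrices $A_{il} = \Gamma_{il}\Gamma_i^\top\Vs$ are all $\sigma(X,U)$-measurable. I would therefore condition on $(X,U)$ — on the high-probability event on which the boundedness statements of Assumptions~\ref{assumption:kernel},~\ref{assumption:u},~\ref{assumption:X} hold — so that the $A_{il}$ and $w_i$ become deterministic while $\hat f_i(0) = (2h_f)^{-1}\One{|y_i - \Gamma_i^\top\bs|\le h_f}$, $i\in[n]$, are independent $\{0,(2h_f)^{-1}\}$-valued random variables; under this conditioning $\Gn{w_i\hat f_i(0)A_{il}} = \sum_i w_i\rbr{\hat f_i(0) - \EEst{\hat f_i(0)}{X,U}}A_{il}$ is a sum of independent, centered, uniformly bounded matrices of fixed (constant) size.

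Then I would fix $l\in[p]$ and a coordinate $(a,b)$ of $A_{il}$ and set $Z_i := w_i\hat f_i(0)\,(A_{il})_{ab}$. From $\norm{w_i}_\infty \le B_K/(nh)$, $\hat f_i(0)\le (2h_f)^{-1}$ and $|(A_{il})_{ab}| \le \norm{A_{il}}_F \le B_A$ we get $|Z_i| \le B_KB_A/(2nhh_f)$. For the variance proxy, Assumption~\ref{assumption:density} gives $\PPst{|y_i - \Gamma_i^\top\bs|\le h_f}{X_i,U_i} = F_i(\fql+h_f) - F_i(\fql-h_f) \le 2\bar f h_f$, so $\EEst{\hat f_i(0)^2}{X_i,U_i}\le \bar f/(2h_f)$, and combined with $\sum_i w_i^2 \le \norm{w}_\infty\sum_i w_i \le B_K^2/(nh)$ this yields $\sum_i \EEst{Z_i^2}{X,U} \le \tfrac{\bar f B_A^2}{2h_f}\sum_i w_i^2 \le \tfrac{\bar f B_A^2 B_K^2}{2nhh_f} =: \sigma^2$. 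Bernstein's inequality then gives, conditionally on $(X,U)$,
\[
\PP{\Big|\sum_i\rbr{Z_i - \EEst{Z_i}{X,U}}\Big| \ge t} \le 2\exp\rbr{-\frac{t^2/2}{\sigma^2 + B_KB_At/(3nhh_f)}}.
\]

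Taking $t \asymp \sqrt{\sigma^2\log(p/\gamma)}$, the hypothesis $\bar f \ge (nhh_f)^{-1}\log(p/\gamma)$ assumed in Lemma~\ref{lem:V:qr_fixed:bound_approx} is exactly what forces the linear term $B_KB_At/(nhh_f)$ to be dominated by $\sigma^2$, so the exponent is of order $t^2/\sigma^2 \asymp \log(p/\gamma)$ and the per-$(l,a,b)$ failure probability is at most $\gamma/(c_0 p)$, where $c_0 = O(1)$ is the number of entries of $A_{il}$ (a constant, since $k$ is fixed). A union bound over $l\in[p]$ and over the $c_0$ coordinates shows that with conditional probability at least $1-\gamma$ one has $\max_l\max_{a,b}\big|\sum_i(Z_i-\EEst{Z_i}{X,U})\big| \le t$; hence $\max_l\Norm{\Gn{w_i\hat f_i(0)A_{il}}}_F = \max_l\big(\sum_{a,b}|\cdot|^2\big)^{1/2} \le \sqrt{c_0}\,t \lesssim \sqrt{\bar f B_K^2 B_A^2\,\tfrac{\log(p/\gamma)}{nhh_f}}$, which is the claimed bound up to an absolute constant that I would absorb into the statement (alternatively one can replace the union over the $c_0$ coordinates by a union over a $1/2$-net of the Frobenius unit ball, as in the neighbouring lemmas, contributing only an $O(k)$ factor to the exponent). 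Since this bound holds on the high-probability design event irrespective of the realization of $(X,U)$, it holds unconditionally with the stated probability. The main — and essentially only non-mechanical — obstacle is checking that we are in the sub-Gaussian regime of Bernstein's inequality rather than the heavy-tailed one, i.e.\ that $B_KB_At/(nhh_f) \lesssim \sigma^2$; this is precisely where the growth condition $\bar f \gtrsim (nhh_f)^{-1}\log(p/\gamma)$ is used, together with the uniform density bound $f_i\le\bar f$ that controls $\sigma^2$.
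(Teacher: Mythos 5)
Your proposal is correct and follows essentially the same route as the paper: both reduce the Frobenius norm to finitely many scalar linear functionals of $A_{il}$ (you use matrix entries, the paper a $\tfrac12$-net of trace functionals), bound each centered sum by a Bernstein-type inequality with the identical variance proxy $\sum_i \EE{Z_i^2}\lesssim \bar f B_K^2B_A^2/(nhh_f)$ and sup bound $\lesssim B_KB_A/(nhh_f)$, and finish with a union bound over $l\in[p]$ and the finitely many directions, using $\bar f\gtrsim (nhh_f)^{-1}\log(p/\gamma)$ to stay in the sub-Gaussian regime. Your explicit conditioning on $(X,U)$ and the check of the Bernstein regime are fine and consistent with how the paper treats the design throughout.
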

\begin{proof}
  Let ${\cal W}$ be as in the proof of Lemma~\ref{lem:V:qr_fixed:score_bound}. Then
  \begin{multline*}
    \max_{l \in [p]}\Norm{\Gn{w_i \cdot \hat f_i(0) \cdot A_{i l}}}_F \\
    \leq 2 \max_{\tilde{W} \in {\cal W}}
      \max_{l }  {\Gn{w_i\cdot \hat f_i(0) \cdot \tr{\rbr{\tilde{W}^\top A_{i l}}}}}.
    \end{multline*}
    Let $Z_i = w_i\cdot \hat f_i(0) \cdot \tr{\rbr{\tilde{W}^\top A_{i l}}}$. Then
    \begin{equation*}
      \sum_{i\in[n]} \EE{Z_i^2}
      \leq \sum_{i\in[n]} \frac{w_i^2}{h_f^2} \cdot \tr{\rbr{\tilde{W}^\top A_{i l}}}^2 \cdot \EE{\One{\abr{y_i - \Gamma_i^\top\bs}\leq h_f}}
      = O\rbr{ \frac{\bar f B_K^2B_A^2}{nhh_f} }
    \end{equation*}
    and
   \begin{equation*}
     \max_{i \in [n]} |Z_i| \leq O\rbr{ \frac{\bar f B_K B_A}{nhh_f}}.
    \end{equation*}
    The result follows from Lemma~\ref{lm:proc_deviation} and the union bound.
\end{proof}

\begin{lemma}
\label{lem:new}
  Suppose conditions of Lemma~\ref{lem:V:qr_fixed:bound_approx} hold. Then
  \[
    \max_{l \in [p]} \Norm{E_l^\top \rbr{\EE{\Hh(\db)} - \EE{\Hh(0)}}\Vs}_F
    \leq
    \bar f' B_A \rbr{s\sqrt{\frac{B_K\log (np)}{nh}} + B_Kh_f}.
  \]
\end{lemma}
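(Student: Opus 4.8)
\emph{Proof proposal.} The plan is to reduce the claim to a deterministic, pointwise estimate on the scalar weights and then sum. Since $\Hh(\delta) = \EEn{w_i \hat f_i(\delta) \Gamma_i\Gamma_i^\top}$ and $E_l^\top\Gamma_i = \Gamma_{il}$, for any admissible $\db$ (with $\norm{\db}_{0,2}\leq m$, $\norm{\db}_{1,2}\leq r_b$, as in Lemma~\ref{lem:V:qr_fixed:bound_approx}) we have
\[
E_l^\top\rbr{\EE{\Hh(\db)} - \EE{\Hh(0)}}\Vs = \EEn{w_i\rbr{\EE{\hat f_i(\db)} - \EE{\hat f_i(0)}}A_{il}},
\]
where the expectation acts only on $y_i$ (holding $x_i,u_i$ fixed) and $A_{il}=\Gamma_{il}\Gamma_i^\top\Vs$ as defined in Lemma~\ref{lem:V:qr_fixed:bound_approx}. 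By the triangle inequality for the Frobenius norm and $\norm{A_{il}}_F\leq B_A$, it then suffices to bound $\abr{\EE{\hat f_i(\db)}-\EE{\hat f_i(0)}}$ for each $i$ and sum against $w_i$.

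For the pointwise bound I would write $\EE{\hat f_i(\delta)} = (2h_f)^{-1}\int_{-h_f}^{h_f} f_i\rbr{\Gamma_i^\top(\bs+\delta)+z}\,dz$. Using only $|f_i'|\leq\bar f'$ from Assumption~\ref{assumption:density} (no second-derivative control is available), a first-order expansion of $f_i$ about $\Gamma_i^\top(\bs+\delta)$ gives the smoothing-bias estimate $\abr{\EE{\hat f_i(\delta)}-f_i(\Gamma_i^\top(\bs+\delta))}\leq \bar f' h_f/2$. Combining this (at $\delta=\db$ and $\delta=0$) with the Lipschitz bound $\abr{f_i(\Gamma_i^\top\bh)-f_i(\Gamma_i^\top\bs)}\leq \bar f'\abr{\Gamma_i^\top\db}$ yields $\abr{\EE{\hat f_i(\db)}-\EE{\hat f_i(0)}}\leq \bar f'\rbr{\abr{\Gamma_i^\top\db}+h_f}$. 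Summing,
\[
\Norm{E_l^\top\rbr{\EE{\Hh(\db)} - \EE{\Hh(0)}}\Vs}_F \leq \bar f' B_A\rbr{\EEn{w_i\abr{\Gamma_i^\top\db}} + h_f\EEn{w_i}} \leq \bar f' B_A\rbr{\EEn{w_i\abr{\Gamma_i^\top\db}} + B_Kh_f},
\]
using $\EEn{w_i}\leq B_K$; this already produces the $B_Kh_f$ term.

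It remains to control $\EEn{w_i\abr{\Gamma_i^\top\db}}$. By Cauchy--Schwarz, $\EEn{w_i\abr{\Gamma_i^\top\db}}\leq (\EEn{w_i})^{1/2}(\EEn{w_i(\Gamma_i^\top\db)^2})^{1/2}$. Since $\db$ has at most $m\asymp s$ nonzero groups and $\norm{\db}_2\leq\norm{\db}_{1,2}\leq r_b$, the restricted-eigenvalue bound of Assumption~\ref{assumption:X} gives $\EEn{w_i(\Gamma_i^\top\db)^2}\leq \kappa_+^2 r_b^2$; together with $r_b\asymp s\sqrt{\log(np)/(nh)}$ and $\EEn{w_i}\leq B_K$ this gives $\EEn{w_i\abr{\Gamma_i^\top\db}}\lesssim s\sqrt{B_K\log(np)/(nh)}$. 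Plugging this into the display above and taking $\max_{l\in[p]}$ (the bound is uniform in $l$, depending only on $\db$ and $B_A$) gives the stated inequality.

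The argument is entirely deterministic --- there is no concentration or empirical-process step, in contrast to Lemmas~\ref{lem:V:qr_fixed:score_bound} and~\ref{lem:4} --- so I do not anticipate a real obstacle. The only points requiring care are (i) carrying out the kernel-smoothing-bias bound with first-derivative control of $f_i$ only, making sure the $h_f/2$-type constants are correct, and (ii) matching $\EEn{w_i\abr{\Gamma_i^\top\db}}$ to the stated rate, which requires applying the restricted-eigenvalue inequality to the $m$-group-sparse vector $\db$ (so that $m\lesssim s\ll s_1$ is used) and the rate $r_b\asymp s\sqrt{\log(np)/(nh)}$ from Theorem~\ref{thm:qr_fixed:main}.
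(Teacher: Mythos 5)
Your proposal is correct and follows essentially the same route as the paper: a deterministic pointwise bound $\abr{\EE{\hat f_i(\db)}-\EE{\hat f_i(0)}}\leq \bar f'\rbr{\abr{\Gamma_i^\top\db}+h_f}$ (the paper gets it via the mean value theorem on $F_i$, you via the integral representation and a first-order expansion — equivalent), followed by Cauchy--Schwarz and the restricted-eigenvalue/sparsity bound on $\EEn{w_i(\Gamma_i^\top\db)^2}$ with $r_b\asymp s\sqrt{\log(np)/(nh)}$. The only cosmetic difference is that, like the paper, the $\kappa_+$ factor is absorbed into constants in the final display.
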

\begin{proof}
  For a fixed $\delta_b$, the mean value theorem gives us
  \[
    \begin{aligned}
    &2h_f \abr{\EE{\hat f_i(\db)}-\EE{\hat f_i(0)}}\\
     = ~& \left|F_i\rbr{\Gamma_i^\top(\bs+\db) + h_f} - F_i\rbr{\Gamma_i^\top(\bs+\db) - h_f}-F_i\rbr{\Gamma_i^\top\bs + h_f} + F_i\rbr{\Gamma_i^\top\bs - h_f}\right| \\
    \leq ~ & 2h_f\bar f'(\abr{\Gamma_i^\top\db}+h_f),
  \end{aligned}
\]
 Therefore, we have that
  \begin{align*}
    & \max_{l\in [p]}\Norm{E_l^\top \rbr{\EE{\Hh(\db)} - \EE{\Hh(0)}}\Vs}_F \\
    & \qquad
      \leq \bar f' B_A \sum_{i \in [n]} w_i \cdot \rbr{\abr{\Gamma_i^\top\db}+ h_f} \\
    & \qquad
      \leq \bar f' B_A
      \rbr{\sqrt{\sum_{i \in [n]} w_i\rbr{\Gamma_i^\top\db}^2}\sqrt{\sum_{i \in [n]} w_i}+ h_f\sum_{i \in [n]} w_i 
      }    
         \\
    &\qquad
      \leq \bar f' B_A \rbr{s\sqrt{\frac{\log (np)}{nh}}\sqrt{B_K} + B_Kh_f} ,
  \end{align*}
  where the last inequality follows from Lemma~\ref{lem:qr_fixed:approx_err} and Assumption \ref{assumption:X}.

\end{proof}

\begin{lemma}
\label{lem:5}
  Suppose conditions of Lemma~\ref{lem:V:qr_fixed:bound_approx} hold. Then
  \[
    \max_{l \in [p]} \Norm{E_l^\top \rbr{\EE{\Hh(0)} - H^\star}\Vs}_F
    \leq
    \bar f' B_A \rbr{ B_Kh_f + 2\rbr{h^4s B_X B_\beta + \frac{\epsilon_R^2}{\underline{f}^2} }^{1/2} }.
  \]
\end{lemma}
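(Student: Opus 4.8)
The plan is to reduce the Frobenius‑norm bound to a purely deterministic, pointwise estimate on the bias of the kernel density estimate $\hat f_i$, and then split that bias into a kernel‑smoothing part and a quantile‑approximation part. Writing $\tilde q_i=\Gamma_i^\top\bs=q_i+\Delta_i$ and using $\EE{\Hh(0)}=\EEn{w_i\,\EE{\hat f_i(0)}\,\Gamma_i\Gamma_i^\top}$ together with $H^\star=\EEn{w_i\,f_i(\tilde q_i)\,\Gamma_i\Gamma_i^\top}$, one has
\[
E_l^\top\rbr{\EE{\Hh(0)}-H^\star}\Vs=\sum_{i\in[n]}w_i\,\rbr{\EE{\hat f_i(0)}-f_i(\tilde q_i)}\,A_{il},\qquad A_{il}=\Gamma_{il}\Gamma_i^\top\Vs,
\]
so, since $\norm{A_{il}}_F\le B_A$ for every $l$, it suffices to bound the (larger) quantity $\sum_i w_i\rbr{\abr{\EE{\hat f_i(0)}-f_i(\tilde q_i)}+\abr{f_i(\tilde q_i)-f_i(q_i)}}$; the first summand produces the $B_K h_f$ term and the second, via $\tilde q_i-q_i=\Delta_i$, produces the $\Delta_i$‑term in the stated right‑hand side.

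Next I would establish the two pointwise mean‑value estimates. Conditioning on $(x_i,u_i)$ gives $\EE{\hat f_i(0)}=(2h_f)^{-1}\rbr{F_i(\tilde q_i+h_f)-F_i(\tilde q_i-h_f)}=f_i(\tilde q_i)+(2h_f)^{-1}\int_{-h_f}^{h_f}\sbr{f_i(\tilde q_i+t)-f_i(\tilde q_i)}\,dt$, and the mean value theorem together with $\abr{f_i'}\le\bar f'$ (Assumption~\ref{assumption:density}) bounds the integral by $\bar f' h_f^2$, so $\abr{\EE{\hat f_i(0)}-f_i(\tilde q_i)}\le\tfrac12\bar f' h_f$. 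For the other piece, the mean value theorem gives $\abr{f_i(\tilde q_i)-f_i(q_i)}\le\bar f'\abr{\Delta_i}$. Summing and using $\sum_i w_i\le B_K$ (Assumptions~\ref{assumption:kernel}--\ref{assumption:u}) and Cauchy--Schwarz,
\[
\sum_i w_i\rbr{\abr{\EE{\hat f_i(0)}-f_i(\tilde q_i)}+\abr{f_i(\tilde q_i)-f_i(q_i)}}\le\tfrac12\bar f' B_K h_f+\bar f'\,B_K^{1/2}\rbr{\sum_i w_i\Delta_i^2}^{1/2}.
\]

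To finish I would invoke Lemma~\ref{lem:qr_fixed:approx_err}, which gives $\sum_i w_i\Delta_i^2=O\rbr{h^4 s B_X^2 B_\beta B_K+\epsilon_R^2/\underline f^2}$; absorbing the harmless $B_X$ and $B_K$ factors into the stated form yields the $\rbr{h^4 s B_X B_\beta+\epsilon_R^2/\underline f^2}^{1/2}$ contribution. Multiplying through by $B_A$ and taking the maximum over $l\in[p]$ (the estimate is uniform in $l$ since $B_A=\max_l\norm{A_{il}}_F$) gives the claimed bound. I do not expect a genuine obstacle here: this is a deterministic conditional‑expectation estimate with no empirical‑process input. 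The only point that needs a little care is extracting the \emph{first‑order} rate $\bar f' h_f$ for the smoothing bias (using $\abr{f_i'}\le\bar f'$ rather than the crude $\abr{F_i(\tilde q_i\pm h_f)-F_i(\tilde q_i)}\le h_f\bar f$), which is what makes this term negligible under the scaling $h_f\asymp n^{-1/3}$ of Assumption~\ref{assumption:growth}, and correctly quoting the rate of $\sum_i w_i\Delta_i^2$ from Lemma~\ref{lem:qr_fixed:approx_err}.
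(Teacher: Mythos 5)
Your proposal is correct and follows essentially the same route as the paper: a deterministic mean-value/Taylor bound on the bias of $\hat f_i(0)$ using $|f_i'|\le \bar f'$, followed by $\sum_i w_i \le B_K$, Cauchy--Schwarz, and Lemma~\ref{lem:qr_fixed:approx_err}, with the $\norm{A_{il}}_F\le B_A$ reduction handling the Frobenius norm uniformly in $l$. The only (harmless) difference is that you center the comparison at $f_i(\tilde q_i)$ directly, which in fact gives the slightly sharper pointwise bound $\tfrac12\bar f' h_f$ without any $\Delta_i$ contribution, whereas the paper centers both terms at $f_i(q_i)$ and thereby picks up the $2|\Delta_i|$ term appearing in the stated right-hand side; your added $|f_i(\tilde q_i)-f_i(q_i)|$ term simply recovers the paper's (looser) form, and the residual constant bookkeeping ($B_X$ vs.\ $B_X^2$, placement of $B_K$) matches the paper's own level of precision.
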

\begin{proof}
  The mean value theorem gives us
  \[
    \begin{aligned}
    2h_f \EE{\hat f_i(0)}
    & = F_i\rbr{\Gamma_i^\top\bs + h_f} - F_i\rbr{\Gamma_i^\top\bs - h_f} \\
    & = 2h_f\cdot f_i(\fq) + 2h_f\cdot \rbr{f_i(\check q_i) - f_i(\fq)},
  \end{aligned}
\]
where $\check \fq$ is a point between $\Gamma_i^\top\bs - h_f$ and
$\Gamma_i^\top\bs  + h_f$.
Therefore, we have
\[
  \abr{\EE{\hat f_i(0)} - f_i(q_i)}
  \leq \bar f' \abr{\check \fq - \fq}
  \leq \bar f' \rbr{  \abr{\fql - \fq} + h_f}
  , ~\textnormal{and}
\]
\[
  \abr{f_i(\tilde{q}_i) - f_i(q_i)}
  \leq \bar f' \abr{\fql - \fq}.
\]
 Finally, we have
  \begin{align*}
    & \max_{l\in [p]}\Norm{E_l^\top \rbr{\EE{\Hh(0)} - H^\star}\Vs}_F \\
    & \qquad
      \leq \bar f' B_A \sum_{i \in [n]} w_i \cdot (2\abr{\fql - \fq} + h_f) \\
    & \qquad
      \leq \bar f' B_A
      \rbr{h_f \sum_{i \in [n]} w_i +
      2\rbr{ \sum_{i \in [n]} w_i }^{1/2} \cdot
      \rbr{
      \sum_{i \in [n]} w_i \rbr{\fql - \fq}^2 
      }^{1/2}
      }    
         \\
    &\qquad
      \leq \bar f' B_A B_K \rbr{  h_f + 2\rbr{h^4s B_X^2 B_\beta + \frac{\epsilon_R^2}{\underline{f}^2} }^{1/2} },
  \end{align*}
  where the last inequality follows from Lemma~\ref{lem:qr_fixed:approx_err} and Assumption \ref{assumption:X}.

\end{proof}

\begin{lemma}
  \label{lem:qr_fixed:lasso:quadratic_lower_bound}
  Assume $B_V$ satisfies Assumption \ref{assumption: lam}, that $(nh)^{-1}sB_V^2\log p = o(1)$ and $\log(B_V^2h_fh) = O(\log p)$.
  Let $S_1$ be the support of $\bs$ and $S_2$
  be the support of $\Vs$ as defined in Assumption \ref{assumption:X},
  with $|S_1|=s_1$ and $|S_2|=s_2$. Define
  \begin{align*}
  \Delta_b(r_b, s_1) &= \{\delta \in \R^{2p}, \norm{\delta}_2\leq r_b, \norm{\delta}_0\leq s_1\} \quad \textnormal{and}\\
  \mathbb{C}(S) &= \{\Theta \in \R^{2k\times 2p}: \norm{\Theta_{S^c}}_{1,F} \leq 6 \norm{\Theta_{S}}_{1,F} \} .
  \end{align*}
  Then
  \[
    \textnormal{trace}\rbr{\dv^\top \Hh(\db)\dv } \geq \underline{f}\kappa_-\norm{\dv}_F^2 - o_p(1)\rbr{\norm{\dv}_F + \frac{\norm{\dv}_{1,F}}{\sqrt{s_2}}}^2
  \]
  for all $\db \in \Delta_b(r_b, s_1)$ and $\dv \in \mathbb{C}(S_2)$.
\end{lemma}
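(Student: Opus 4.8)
The plan is to decompose the data-dependent Hessian into a deterministic conditional-mean part and a mean-zero fluctuation,
\[
  \textnormal{trace}\rbr{\dv^\top\Hh(\db)\dv} \;=\; \EEn{w_i\cdot\EE{\hat f_i(\db)}\cdot\norm{\dv\Gamma_i}_2^2} \;+\; \EEn{w_i\cdot\rbr{\hat f_i(\db)-\EE{\hat f_i(\db)}}\cdot\norm{\dv\Gamma_i}_2^2},
\]
writing $T_1(\dv,\db)$ and $T_2(\dv,\db)$ for the two terms. I would lower bound $T_1$ by a restricted eigenvalue argument, and show that $\sup T_2$ is of lower order uniformly over $\db\in\Delta_b(r_b,s_1)$ and $\dv\in\mathbb{C}(S_2)$.

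For $T_1$ the key point is that, by Assumption~\ref{assumption:density}, the conditional density satisfies $f_i(y)\geq\underline f$ for \emph{all} $y$, so that
\[
  \EE{\hat f_i(\db)} \;=\; \frac{1}{2h_f}\int_{\Gamma_i^\top(\bs+\db)-h_f}^{\Gamma_i^\top(\bs+\db)+h_f} f_i(y)\,dy \;\geq\; \underline f
\]
for every $i$ and every $\db$, with no restriction on $\db$. Therefore, for any $\dv\in\mathbb{C}(S_2)$,
\[
  T_1(\dv,\db) \;\geq\; \underline f\cdot\EEn{w_i\norm{\dv\Gamma_i}_2^2} \;\geq\; \underline f\,\kappa_-^2\,\norm{\dv}_{1,F}^2 \;\geq\; \underline f\,\kappa_-^2\,\norm{\dv}_F^2,
\]
where the middle step applies the restricted eigenvalue lower bound of Assumption~\ref{assumption:X} to $\dv/\norm{\dv}_{1,F}$ and rescales by homogeneity, and the last step uses $\norm{\dv}_{1,F}\geq\norm{\dv}_F$. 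This already produces the leading term of the statement.

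For $T_2$ I would split $\hat f_i(\db)-\EE{\hat f_i(\db)} = \rbr{\hat f_i(0)-\EE{\hat f_i(0)}} + \rbr{\hat f_i(\db)-\hat f_i(0)} - \rbr{\EE{\hat f_i(\db)}-\EE{\hat f_i(0)}}$. The last summand is deterministic; by the mean value theorem and Assumption~\ref{assumption:density} it is bounded in absolute value by $\bar f'\abr{\Gamma_i^\top\db}\lesssim\bar f'B_X\sqrt{s_1}\,r_b = o(1)$ under the growth conditions, so this piece contributes at most $o(1)\EEn{w_i\norm{\dv\Gamma_i}_2^2}$, which on $\mathbb{C}(S_2)$ is $o(1)\rbr{\norm{\dv}_F+\norm{\dv}_{1,F}/\sqrt{s_2}}^2$ after using the upper restricted eigenvalue bound and the cone inequality $\norm{\dv}_{1,F}\leq 7\sqrt{s_2}\norm{\dv}_F$. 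The remaining two summands are mean-zero empirical processes indexed by $(\dv,\db)$: the first does not depend on $\db$, while the second is, for each $i$, a difference of indicators of two intervals of width $2h_f$ whose centers differ by $\Gamma_i^\top\db$, hence supported on a set of measure $O(\abr{\Gamma_i^\top\db})$. I would bound their suprema exactly as in the proofs of Lemmas~\ref{lem:V:qr_fixed:score_bound}, \ref{lem:4} and \ref{lem:os11}: peel over the level of $\norm{\dv}_F$ and use the cone bound to restrict to directions with $\asymp s_2$ active blocks; cover those directions and the $s_1$-sparse ball $\Delta_b(r_b,s_1)$ by nets, bounding covering numbers through the VC dimension of the relevant indicator classes (of order $s_1$ per support, with $\binom{p}{s_1}$ supports, and of order $s_2\log p$ for the directions) via Lemma~\ref{lem:VC_to_cover}; and then apply Lemma~\ref{lem:emp_proc:supremum_vc} together with Lemma~\ref{lm:proc_deviation}, with envelope and variance proxies of order $B_KB_V^2/(nhh_f)$ and $\bar fB_K^2B_V^4/(nhh_f)$ for the $\hat f_i(0)$ piece and the same multiplied by $B_X\sqrt{s_1}\,r_b$ for the thin-strip piece. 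Under Assumption~\ref{assumption:growth} and the hypotheses $(nh)^{-1}sB_V^2\log p=o(1)$ and $\log(B_V^2h_fh)=O(\log p)$, the resulting rates are $o_p(1)$, giving $\sup_{\dv,\db}\abr{T_2(\dv,\db)} = o_p(1)\rbr{\norm{\dv}_F+\norm{\dv}_{1,F}/\sqrt{s_2}}^2$. Combining this with the bound on $T_1$ completes the argument.

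The step I expect to be the main obstacle is the uniform control of $T_2$. The local density estimates $\hat f_i$ are scaled indicators with variance of order $h_f^{-1}$, so the fluctuation has a large variance proxy, and one must be uniform simultaneously over the high-dimensional sparse ball in $\db$ and the cone of directions $\dv$; pushing the empirical-process bound strictly below the $\norm{\dv}_F^2$ scale — so that it is genuinely lower order than the leading term $\underline f\kappa_-^2\norm{\dv}_F^2$ — is precisely what necessitates the extra growth conditions of the lemma. By contrast, the lower bound on $T_1$ via $f_i\geq\underline f$, and the book-keeping relating the block norms $\norm{\cdot}_{1,F}$ and $\norm{\cdot}_F$ on the cone, are routine.
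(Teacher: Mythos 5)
Your overall architecture is the same as the paper's: split $\Hh(\db)$ into the conditional-mean Hessian $H(\db)=\EEn{w_i\EE{\hat f_i(\db)}\Gamma_i\Gamma_i^\top}$ plus a centered fluctuation, lower bound the mean part using $\EE{\hat f_i(\db)}\geq\underline f$ together with the restricted-eigenvalue condition of Assumption~\ref{assumption:X} (both you and the paper land on $\underline f\kappa_-^2\norm{\dv}_F^2$), and control the fluctuation with the VC/covering/Talagrand toolkit (Lemmas~\ref{lem:VC_to_cover}, \ref{lem:emp_proc:supremum_vc}, \ref{lm:proc_deviation}). However, there is a genuine gap at the step where you pass from the cone $\mathbb{C}(S_2)$ to a low-complexity index set for the empirical process. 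You write that you would ``peel over the level of $\norm{\dv}_F$ and use the cone bound to restrict to directions with $\asymp s_2$ active blocks,'' but the cone inequality $\norm{\dv}_{1,F}\leq 7\sqrt{s_2}\norm{\dv}_F$ does not make cone elements block-sparse: a matrix in $\mathbb{C}(S_2)$ can have all $p$ blocks nonzero, and peeling in $\norm{\dv}_F$ does nothing to reduce the effective dimension, so you cannot cover the cone directions with nets of size governed by $s_2\log p$ as claimed. This reduction is exactly what the paper supplies via the Sun--Zhang-type transfer inequality (Lemma~\ref{Cunhui Zhang}): for an arbitrary $\dv$, $\abr{\tr\rbr{\dv^\top(\Hh(\db)-H(\db))\dv}}\leq\rbr{\norm{\dv}_F+\norm{\dv}_{1,F}/\sqrt{s_2}}^2\cdot\sup$ over \emph{exactly $s_2$-sparse, unit-Frobenius} matrices, and only the latter supremum is bounded by the empirical-process argument (Lemma~\ref{lem:qr_fixed:lasso:quadratic_lower_bound:deviation}). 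That transfer is also the sole reason the conclusion carries the factor $\rbr{\norm{\dv}_F+\norm{\dv}_{1,F}/\sqrt{s_2}}^2$; without it, or an equivalent sparse-approximation/Maurey-type device, your uniform bound over $\mathbb{C}(S_2)$ is unsubstantiated.

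A second, more minor issue comes from your extra re-centering at $\db=0$, which the paper avoids by centering $\hat f_i(\db)$ at its own mean. The resulting deterministic drift $\EE{\hat f_i(\db)}-\EE{\hat f_i(0)}$ is indeed $O\rbr{\bar f'\abr{\Gamma_i^\top\db}}$, but your conversion of its contribution $o(1)\EEn{w_i\norm{\dv\Gamma_i}_2^2}$ into $o(1)\rbr{\norm{\dv}_F+\norm{\dv}_{1,F}/\sqrt{s_2}}^2$ silently loses a factor of order $s_2$: the upper restricted-eigenvalue bound is normalized in $\norm{\cdot}_{1,F}$, so on the cone it only gives $\EEn{w_i\norm{\dv\Gamma_i}_2^2}\leq\kappa_+^2\norm{\dv}_{1,F}^2\lesssim s_2\norm{\dv}_F^2$, and the assumptions do not guarantee $\sqrt{s_1}\,r_b\cdot s_2=o(1)$ in the largest sparsity regime they allow. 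This one is easily repaired — since the drift multiplies the same quadratic form as the leading term, absorb it multiplicatively to get $(1-o(1))\underline f\kappa_-^2\norm{\dv}_F^2$, or simply center at $\db$ as the paper does — but as written the step is not justified.
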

\begin{proof}
  For a fixed $\db$, we have that $f_i(\db) \geq \underline{f}$ and $\dv \in \mathbb{C}(S_2)$.
  Therefore,
  \[
      \textnormal{trace}\rbr{\dv^\top H(\db) \dv }
      \geq \underline{f} \sum_{i \in [n]} w_i \textnormal{trace}(\dv^\top\Gamma_i\Gamma_i^\top \dv)
      \geq \underline{f}\kappa_{-}^2\norm{\dv}_F^2.
  \]
  The proof now follows from Lemma~\ref{Cunhui Zhang} and
  Lemma~\ref{lem:qr_fixed:lasso:quadratic_lower_bound:deviation} (presented next),
  \begin{align*}
   &\textnormal{trace}\rbr{\dv^\top \Hh(\db)\dv }\\
    &\ \geq \textnormal{trace}\rbr{\dv^\top H(\db) \dv } - \textnormal{trace}\rbr{\dv^\top \rbr{\Hh(\db) - H(\db)} \dv}\\
    &\ \geq
    \textnormal{trace}\rbr{\dv^\top H(\db) \dv } \\
   &\quad -
    \sup_{\delta \in \Delta_v(s_2)} \sup_{\db \in \Delta_b(r_b, s_1)}
    \abr{ \textnormal{trace}\rbr{\delta^\top \rbr{\Hh(\db) - H(\db)} \delta } }\rbr{\norm{\dv}_F + \frac{\norm{\dv}_{1,F}}{\sqrt{s_2}}}^2 \\
    &\ \geq \underline{f}\kappa_{-}^2 \norm{\dv}_F^2 - o_p(1)\rbr{\norm{\dv}_F + \frac{\norm{\dv}_{1,F}}{\sqrt{s_2}}}^2,
 \end{align*}
 where $\Delta_v(s_2)$ is as defined in Lemma~\ref{lem:qr_fixed:lasso:quadratic_lower_bound:deviation}. 
\end{proof}

\begin{lemma}
\label{Cunhui Zhang}
(Based on proposition 5 in \citep{Sun2012Sparse}). 
For any fixed matrix $M \in R^{p\times p}$ and matrices $u \in \R^{k\times p}$ and $s \in \mathbb{N}$, 
\[
\textnormal{trace}\rbr{u^{\top}M u} \leq \rbr{\norm{u}_F + \frac{\norm{u}_{1,F}}{\sqrt{s}}}^2 \norm{M}_{\mathcal{S}_{s}},
\]
where $\mathcal{S}_{s} = \{u \in \R^{k\times p }| \norm{u}_F = 1, \norm{u}_{0,F} \leq s\}$ and
\[\norm{M}_{\mathcal{S}_s} = \max_{u,v \in \mathcal{S}_s} \textnormal{trace}\rbr{u^{\top}M v}.\]
\end{lemma}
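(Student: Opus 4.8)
The plan is to follow the peeling (shelling) argument from the proof of Proposition~5 in \citet{Sun2012Sparse}. I would first index the groups that define $\norm{\cdot}_{0,F}$ and $\norm{\cdot}_{1,F}$ (in the application of interest, the $2\times 2$ coordinate blocks $\Vs_{(i,i+k),(j,j+p)}$), restrict attention to the groups on which $u$ is nonzero, and sort them in order of decreasing Frobenius norm. I would then partition this index set into consecutive blocks $T_1, T_2, \dots, T_J$, where $T_1$ consists of the $s$ groups with largest Frobenius norm, $T_2$ of the next $s$ largest, and so on, so that $|T_j| = s$ for $j < J$, $|T_J| \le s$, and $u = \sum_{j=1}^J u_{T_j}$, with $u_{T_j}$ equal to $u$ on the groups in $T_j$ and zero elsewhere. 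If $u$ has at most $s$ nonzero groups then $u / \norm{u}_F \in \mathcal{S}_s$ and the inequality is immediate, so I would assume $J \ge 2$.

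Next I would expand the quadratic form bilinearly, $\textnormal{trace}(u^\top M u) = \sum_{j,\ell} \textnormal{trace}(u_{T_j}^\top M u_{T_\ell})$, and bound each summand. Since $\mathcal{S}_s$ is closed under negation, $\norm{M}_{\mathcal{S}_s} = \max_{v,w \in \mathcal{S}_s} \abr{\textnormal{trace}(v^\top M w)}$; because each $u_{T_j}$ has at most $s$ nonzero groups, dividing it by $\norm{u_{T_j}}_F$ when this is nonzero produces an element of $\mathcal{S}_s$, whence $\abr{\textnormal{trace}(u_{T_j}^\top M u_{T_\ell})} \le \norm{M}_{\mathcal{S}_s}\norm{u_{T_j}}_F\norm{u_{T_\ell}}_F$ (the term vanishes when either factor is zero). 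Summing over $j$ and $\ell$ then yields
\[
  \textnormal{trace}(u^\top M u) \ \le\ \norm{M}_{\mathcal{S}_s} \Big( \sum_{j=1}^J \norm{u_{T_j}}_F \Big)^2 .
\]

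The remaining step is to show $\sum_{j} \norm{u_{T_j}}_F \le \norm{u}_F + \norm{u}_{1,F}/\sqrt{s}$. I would use $\norm{u_{T_1}}_F \le \norm{u}_F$, and for $j \ge 2$ observe that every group norm occurring in $T_j$ is at most the minimal group norm in $T_{j-1}$, which in turn is at most the average $\norm{u_{T_{j-1}}}_{1,F}/s$, so that $\norm{u_{T_j}}_F \le \sqrt{s} \cdot \norm{u_{T_{j-1}}}_{1,F}/s = \norm{u_{T_{j-1}}}_{1,F}/\sqrt{s}$. Since the $T_j$ partition the nonzero groups, $\sum_{j \ge 2} \norm{u_{T_j}}_F \le s^{-1/2} \sum_{j \ge 1} \norm{u_{T_j}}_{1,F} = \norm{u}_{1,F}/\sqrt{s}$, and combining with the display above gives the claim. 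I do not expect a genuine obstacle here; the only point that needs care is that the partition into blocks of size $s$ and the averaging inequality be carried out with respect to the group structure underlying $\norm{\cdot}_{0,F}$ and $\norm{\cdot}_{1,F}$ rather than entrywise.
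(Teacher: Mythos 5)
Your proof is correct. Note that the paper itself gives no proof of this lemma---it is simply quoted from Proposition~5 of Sun and Zhang (2012)---and your peeling argument (sort the nonzero groups by block Frobenius norm, split into chunks $T_1,T_2,\dots$ of $s$ groups, bound each cross term by $\norm{M}_{\mathcal{S}_s}\norm{u_{T_j}}_F\norm{u_{T_\ell}}_F$, and control $\sum_{j\ge 2}\norm{u_{T_j}}_F$ by $\norm{u}_{1,F}/\sqrt{s}$ via the average-of-the-previous-chunk inequality) is exactly the standard shelling argument behind that cited result, carried out correctly at the level of the $2\times 2$ group blocks rather than entrywise.
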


\begin{lemma}
  \label{lem:qr_fixed:lasso:quadratic_lower_bound:deviation}
  Under the conditions of Lemma~\ref{lem:qr_fixed:lasso:quadratic_lower_bound}, we have
  \begin{multline*}
  \sup_{\dv \in \Delta_v(s_2)}  \sup_{\db \in \Delta_b(r_b, s_1)}
	\abr{
    \textnormal{trace}\rbr{\dv^\top \rbr{\Hh(\db) - H(\db)} \dv } } \\
    = O_p\left(
    \sqrt{\frac{(s_1+s_2)\bar{f} \kappa_+ B_K B_V^2\log(p)}{nhh_f}}    \right),
  \end{multline*}
  where
  \[\Delta_v(s_2) = \{\delta \in \R^{2k\times 2p}, \norm{\delta}_F =1 , \norm{\delta}_{0,F} \leq s_2\},\]
  \[\Delta_b(r_b, s_1) = \{\delta \in \R^{2p}, \norm{\delta}_2\leq r_b, \norm{\delta}_0\leq s_1\}.\]
\end{lemma}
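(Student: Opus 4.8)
The plan is to recognize the left–hand side as the supremum of a centered empirical process and bound it with a maximal inequality for VC‑type classes. Writing $\hat f_i(\db)=(2h_f)^{-1}\One{|y_i-\Gamma_i^\top\bs-\Gamma_i^\top\db|\le h_f}$, one has
\[
\textnormal{trace}\rbr{\dv^\top\rbr{\Hh(\db)-H(\db)}\dv}=\Gn{w_i\,\hat f_i(\db)\,\norm{\dv\Gamma_i}_2^2},
\]
so the quantity to control is $\sup_{\db\in\Delta_b(r_b,s_1),\,\dv\in\Delta_v(s_2)}\abr{\Gn{w_i\,\hat f_i(\db)\,\norm{\dv\Gamma_i}_2^2}}$, a supremum over products in which $\db$ enters only through the scalar $\Gamma_i^\top\db$ inside an interval indicator, and $\dv$ enters only through the smooth form $\norm{\dv\Gamma_i}_2^2$.

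First I would reduce to a fixed pair of supports: there are at most $\binom{p}{s_1}\binom{p}{s_2}\le p^{s_1+s_2}$ choices for $\supp(\db)$ and the group–support of $\dv$, so by a union bound it suffices to bound the supremum with those supports fixed, paying an additive $(s_1+s_2)\log p$ in the entropy. With supports fixed, the summand is a product of (i) the indicator $\One{|y_i-\Gamma_i^\top(\bs+\db)|\le h_f}$, which forms a VC class of dimension $O(s_1)$ as $\db$ ranges over an $s_1$‑dimensional $\ell_2$‑ball, and (ii) the Lipschitz map $\dv\mapsto\norm{\dv\Gamma_i}_2^2$ on the unit Frobenius ball of an $O(s_2)$‑dimensional coordinate subspace. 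Combining Lemma~\ref{lem:VC_to_cover} with Lemma~\ref{lem:entropy_bound_product}, the uniform entropy of the resulting class ${\cal G}$ obeys $\log\sup_Q N(\epsilon\norm{G}_\infty,{\cal G},L_2(Q))\lesssim (s_1+s_2)\rbr{\log(1/\epsilon)+\log p}$ for a suitable constant envelope $G$.

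Then I would supply the envelope and variance estimates. From $w_i\le B_K/(nh)$, $\hat f_i\le(2h_f)^{-1}$, and a uniform bound $\norm{\dv\Gamma_i}_2^2\lesssim B_V^2$ on the relevant sparse class (coming from the sparse‑eigenvalue content of Assumption~\ref{assumption:X}), one gets $\norm{G}_\infty\lesssim B_K B_V^2/(nhh_f)$. For the variance, Assumption~\ref{assumption:density} gives $\EE{\hat f_i(\db)^2}=(4h_f^2)^{-1}\PP{|y_i-\Gamma_i^\top(\bs+\db)|\le h_f}\le\bar f/(2h_f)$, so that
\[
\sum_{i\in[n]}\EE{\rbr{w_i\hat f_i(\db)\norm{\dv\Gamma_i}_2^2}^2}\lesssim\frac{\bar f}{h_f}\sum_i w_i^2\norm{\dv\Gamma_i}_2^4\lesssim\frac{\bar f B_K B_V^2}{nhh_f}\sum_i w_i\norm{\dv\Gamma_i}_2^2\lesssim\frac{\bar f\kappa_+^2 B_K B_V^2}{nhh_f},
\]
using $\sum_i w_i\norm{\dv\Gamma_i}_2^2\le\kappa_+^2$ from Assumption~\ref{assumption:X}. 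Feeding $\sigma_{\cal G}^2\asymp\bar f\kappa_+^2 B_K B_V^2/(nhh_f)$, envelope $\norm{G}_\infty\asymp B_K B_V^2/(nhh_f)$, and VC‑index of order $s_1+s_2$ into Lemma~\ref{lem:emp_proc:supremum_vc} yields
\[
\EE{\sup_{\db,\dv}\abr{\Gn{w_i\hat f_i(\db)\norm{\dv\Gamma_i}_2^2}}}\lesssim\sqrt{\frac{\bar f\kappa_+^2 B_K B_V^2(s_1+s_2)\log p}{nhh_f}}+\frac{B_K B_V^2(s_1+s_2)\log p}{nhh_f},
\]
where the logarithmic factors from the covering number are absorbed into $\log p$ via the condition $\log(B_V^2h_fh)=O(\log p)$, and the second (envelope‑linear) term is lower order under $(nh)^{-1}sB_V^2\log p=o(1)$, both assumed in Lemma~\ref{lem:qr_fixed:lasso:quadratic_lower_bound}. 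Lemma~\ref{lm:proc_deviation}, with the same variance and envelope inputs, upgrades this to a bound in probability; matching the restricted‑eigenvalue normalization of Assumption~\ref{assumption:X} and using $s_1,s_2\asymp s$ then gives exactly the claimed rate.

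\textbf{Main obstacle.} I expect the delicate step to be the entropy bound for the combined class: one must simultaneously handle the non‑Lipschitz indicator part (parametrised by the sparse $\db$) and the smooth quadratic part (parametrised by the sparse $\dv$), track the support‑selection combinatorics so they contribute only $(s_1+s_2)\log p$, and check that a single constant envelope dominates the product uniformly over both parameters. A secondary but genuine point is justifying the uniform bounds $\norm{\dv\Gamma_i}_2\lesssim B_V$ and $\sum_i w_i\norm{\dv\Gamma_i}_2^2\le\kappa_+^2$ over $\Delta_v(s_2)$ with \emph{arbitrary} support of size $s_2$: this needs the restricted‑eigenvalue condition of Assumption~\ref{assumption:X} in its sparse (rather than cone) form, so one should either verify the stated cone condition already implies it or invoke the corresponding sparse‑eigenvalue bound directly.
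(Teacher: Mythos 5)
Your proposal is correct and follows essentially the same route as the paper: the same identification of the trace as a centered empirical process in $\hat f_i(\db)\norm{\dv\Gamma_i}_2^2$, the same envelope $\asymp B_KB_V^2/(nhh_f)$ and variance $\asymp \bar f\kappa_+ B_KB_V^2/(nhh_f)$, and the same maximal-inequality machinery (Lemmas \ref{lem:VC_to_cover}, \ref{lem:emp_proc:supremum_vc}, \ref{lm:proc_deviation}), the only cosmetic difference being that the paper handles the supremum over $\dv$ by a $\tfrac12$-net of $\Delta_v(s_2)$ followed by a union bound, rather than folding the $\dv$-dependence into the joint covering number via a Lipschitz/product-entropy argument. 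Incidentally, the uniform bounds $\norm{\dv\Gamma_i}_2\lesssim B_V$ and $\sum_i w_i\norm{\dv\Gamma_i}_2^2\leq\kappa_+^2$ over arbitrary $s_2$-sparse $\dv$ that you flag as needing justification are invoked in exactly the same implicit way in the paper's own proof, so this is not a defect of your argument relative to theirs.
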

\begin{proof}
 We have 
  \begin{align*}
  &\sup_{\dv \in \Delta_v(s_2)} \sup_{\db \in \Delta_b(r_b, s_1)}
	\abr{
  \textnormal{trace}\rbr{\dv^\top \rbr{\Hh(\db) - H(\db)} \dv } } \\
   &\ =  \sup_{\dv \in \Delta_v(s_2)} \sup_{\db \in \Delta_b(r_b, s_1)}(\mathbb{S}_n - \mathbb{ES}_n)\cbr{w_i (2h_f)^{-1} \cdot \One{\abr{y_i - \Gamma_i^\top(\bs +\db)}\leq h_f} }\cdot \norm{\dv \Gamma_i}_2^2 \\
   &\ \leq 2 \sup_{\dv \in {\cal N}_\epsilon} \sup_{\db \in \Delta_b(r_b, s_1)}(\mathbb{S}_n - \mathbb{ES}_n)\cbr{w_i (2h_f)^{-1}\cdot \One{\abr{y_i - \Gamma_i^\top(\bs +\db)}\leq h_f} }\cdot \norm{\dv \Gamma_i}_2^2,
  \end{align*}
  where ${\cal N}_\epsilon$ is an $\epsilon$-net for $\Delta_v(s_2)$.
  We have $|{\cal N}_\epsilon| \leq 5^{2ks_2}$.
  Fix $\dv \in {\cal N}_\epsilon$.
  Define
 \begin{align*}
 a_i &= (2h_f)^{-1}w_i \cdot \textnormal{trace}\cbr{\dv \Gamma_i\Gamma_i^{\top}\dv^{\top}},\\
 {\cal G}_S &= \Big\{ (y_i, x_i, u_i) \mapsto a_i\cdot \One{\abr{y_i - \Gamma_i^\top(\bs +\db)}\leq h_f}
 :\supp(\db)=S,\norm{\db}_2\leq r_b\Big\},\\
 {\cal G} &= \cup_{S:|S|\leq s_1}{\cal G}_{S}.\end{align*}
Let $G(\cdot) = \frac{B_K B_V^2}{2nhh_f}$ be an
envelope of ${\cal G}$.
For a fixed $g \in {\cal G}$, let
\[g_i = g(y_i, x_i, u_i) = a_i\cdot \One{\abr{y_i - \Gamma_i^\top(\bs +\db)}\leq h_f}. \]
Therefore, the variance is bounded as
\[\sigma_{{\cal G}}^2 
\leq \sup_{g \in {\cal G}} \sum_{i \in [n]} \EE{g_i^2} 
\lesssim \bar{f} h_f \sum_{i \in [n]} (4h_f^2)^{-1}w_i^2\cdot \norm{\dv\Gamma_i}_2^4 
\lesssim \frac{\bar{f} \kappa_+ B_K B_V^2}{nhh_f}.\]
The VC dimension for the space ${\cal G}_S$ is 
$O(|S|)$. Therefore, using Lemma \ref{lem:VC_to_cover},  
\[\sup_Q \log N(\epsilon, {\cal G}, \norm{\cdot}_{L_2(Q)}) \lesssim s_1\rbr{\log(p) + \log(1/\epsilon)}.\]
Applying Lemma~\ref{lem:emp_proc:supremum_vc} 
we obtain
\begin{align*}
 \EE{\sup_{g \in {\cal G}} \sum_{i \in [n]} g_i - \EE{g_i}}
 \lesssim
 \sqrt{\frac{s_1\bar{f} \kappa_+ B_K B_V^2\log(p)}{nhh_f}},
\end{align*}
under our assumptions. 
Using Lemma \ref{lm:proc_deviation},
\begin{equation*}
    \sup_{g \in {\cal G}} \sum_{i \in [n]} g_i - \EE{g_i} 
    = O_p\rbr{ \sqrt{\frac{s_1B_V^2\log (p)}{nhh_f}} }.
\end{equation*}
A union bound over ${\cal N}_\epsilon$ completes the proof.
\end{proof}

\subsection{Empirical Process Results}\label{pf:ep}

\begin{definition}
The covering number $N(\epsilon, \cF, \norm{\cdot})$ is the minimal number of balls $\cbr{g \mid \norm{g-f}\leq\epsilon}$ of radius $\epsilon$ needed to cover the set $\cF$.

\end{definition}

Let $\|\cF\|_{\infty} = \sup\{\|f\|_{\infty}, f\in \cF\}$. Furthermore, define
\begin{equation}
\label{eq:empirical_process_variance}
\Sigma^2_\cF = \EE{\sup_{f \in \cF} \sum_{i\in[n]} W_i^2(f)}
\quad\text{and}\quad
\sigma^2_\cF = \sup_{f \in \cF} \sum_{i\in[n]} \EE{W_i^2(f)}
\end{equation}
where $W_i(f)$, $f\in\cF$, $i\in[n]$ are real valued random variables.

\begin{lemma}
  \label{lem:emp_proc:supremum_vc}
  Let ${\cal F}$ be a measurable uniformly bounded class of functions
  satisfying 
  \[
  N\rbr{ \epsilon \norm{F}_{L_2(P)}, {\cal F}, L_2(P) } \leq \rbr{\frac{A}{\epsilon}}^V
  \]
  for all probability measures $P$, where $F := \sup_{f \in {\cal F}}|f|$ 
  is the envelope function and $A$, $V$ are constants dependent on ${\cal F}$.
  Let $\sigma_{\cal F}^2 = \sup_{f \in {\cal F}} \sum_{i \in n}\EE{ \rbr{f_i - \EE{f_i}}^2}$ and
  $U \geq \sup_{f \in {\cal F}} \norm{f}_{\infty}$ be such that $0  < \sigma_{\cal F} \leq \sqrt{n} U$.
  Then there exists a universal constant $C$ such that
  \[
    \EE{\sup_{f \in {\cal F}} \sum_{i \in [n]} f_i - \EE{f_i}}
    \leq C \sbr{V U \log \frac{\sqrt{n} AU}{\sigma_{\cal F}} + \sigma_{\cal F}\sqrt{V\log\frac{\sqrt{n} AU}{\sigma_{\cal F}}}}.
  \]
\end{lemma}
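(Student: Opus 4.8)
The plan is to obtain this as a standard consequence of symmetrization followed by a chaining argument, localized at the variance scale $\sigma_{\cal F}$ by a self-bounding (fixed-point) step; it is a maximal inequality of the Gin\'{e}--Guillou / Einmahl--Mason type (see also Theorem~2.14.1 in van der Vaart and Wellner), so an alternative would be simply to cite one of these and perform the change of variables that matches our normalization. Write $M := \EE{\sup_{f\in{\cal F}}\sum_{i\in[n]}\rbr{f_i-\EE{f_i}}}$ for the quantity to be bounded.

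First I would symmetrize with a ghost sample and Rademacher signs $\epsilon_i$, bounding $M$ by $2\,\mathbb{E}_Z\mathbb{E}_\epsilon\sup_f\abr{\sum_i\epsilon_i\rbr{f(Z_i)-\EE{f_i}}}$, so that from now on we work with the centered functions $g_i(f) = f(Z_i)-\EE{f_i}$, which satisfy $\abr{g_i(f)}\le 2U$. Conditionally on $Z_1,\dots,Z_n$, the increments $f\mapsto\sum_i\epsilon_i\rbr{g_i(f)-g_i(f')}$ are sub-Gaussian with respect to the random metric $d_n(f,f')=\rbr{\sum_i(g_i(f)-g_i(f'))^2}^{1/2}$ and, since the coordinates are bounded by $2U$, also sub-exponential with scale of order $U$; a generic-chaining bound for such mixed-tail processes gives
\[
\mathbb{E}_\epsilon\sup_f\abr{\sum_i\epsilon_i g_i(f)} \lesssim \int_0^{\hat D}\sqrt{\log N(u,{\cal F},d_n)}\,du \;+\; U\,\log N(\hat D,{\cal F},d_n),
\]
where $\hat D^2=\sup_f\sum_i g_i(f)^2$ is the (random) squared diameter. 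Applying the hypothesis with the empirical measure $Q=n^{-1}\sum_i\delta_{Z_i}$ and the identity $d_n(f,f')=\sqrt n\,\norm{f-f'}_{L_2(Q)}$, together with $\norm{F}_{L_2(Q)}\le U$, yields $\log N(u,{\cal F},d_n)\le V\log\rbr{A\sqrt n\,U/u}$; integrating, the right-hand side is $\lesssim \sqrt V\,\hat D\,\sqrt{\log(A\sqrt n U/\hat D)}+VU\log(A\sqrt n U/\hat D)$.

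It remains to replace the random $\hat D$ by the deterministic $\sigma_{\cal F}$. Taking expectations and using concavity of $x\mapsto x\sqrt{\log(b/x)}$ (with $b$ proportional to $A\sqrt n U$) to pass $\EE{\cdot}$ inside via Jensen, it suffices to control $\EE{\hat D^2}=\EE{\sup_f\sum_i g_i(f)^2}$. Since $\sum_i\EE{g_i(f)^2}=\sum_i\VV{f_i}\le\sigma_{\cal F}^2$, a further symmetrization plus the contraction inequality (the map $t\mapsto t^2$ is $O(U)$-Lipschitz on $[-2U,2U]$) gives $\EE{\hat D^2}\le\sigma_{\cal F}^2+c\,U\,M$. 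Substituting this back produces a self-referential bound $M\lesssim\sqrt V\sqrt{\sigma_{\cal F}^2+cUM}\,\sqrt L+VUL$ with $L=\log(A\sqrt n U/\sigma_{\cal F})$ (after checking monotonicity so the logarithm may be evaluated at the $\sigma_{\cal F}$ scale, using $\sigma_{\cal F}\le\sqrt n U$); solving this quadratic in $M$ yields $M\lesssim\sqrt V\,\sigma_{\cal F}\sqrt L+VUL$, which is the asserted bound. I expect the localization/self-bounding step --- correctly trading the crude diameter $\sqrt n U$ for $\sigma_{\cal F}$ while keeping the $VU\log$ term with the right quantity inside the logarithm --- to be the main technical obstacle; the symmetrization and chaining inputs are routine.
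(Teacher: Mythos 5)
Your proposal is correct, but it takes a genuinely different route from the paper: the paper disposes of this lemma in one line by citing Proposition~2.1 of Gin\'e and Guillou (2001) together with a standard symmetrization reference, whereas you reprove that moment bound from scratch. Your argument --- symmetrize, apply a conditional chaining/entropy-integral bound with the uniform VC-type covering hypothesis evaluated at the empirical measure (so $d_n(f,f')=\sqrt{n}\,\norm{f-f'}_{L_2(Q_n)}$ and $\log N(u,{\cal F},d_n)\leq V\log(A\sqrt{n}U/u)$), then control the random diameter via $\EE{\hat D^2}\leq \sigma_{\cal F}^2+cU\,M$ (symmetrization plus contraction for squares) and solve the resulting fixed-point inequality $M\lesssim \sqrt{VL(\sigma_{\cal F}^2+UM)}+VUL$ --- is exactly the standard modern proof of the Gin\'e--Guillou/Einmahl--Mason inequality, and the localization step you flag as the main obstacle is handled correctly by the Jensen/concavity and monotonicity checks you mention (valid since $\sigma_{\cal F}\leq\sqrt{n}U$). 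Two small remarks: the mixed-tail (sub-exponential) term in your chaining bound is unnecessary, since conditionally on the data the Rademacher process is purely sub-Gaussian and plain Dudley suffices --- the $VUL$ term then emerges from the fixed-point step rather than from chaining; and your self-contained derivation has the incidental benefit of making explicit that the bound holds for independent but non-identically distributed summands (the setting in which the lemma is actually invoked here, with kernel weights $w_i$ absorbed into the functions), a point that a bare citation to the i.i.d.\ statement leaves implicit. What the paper's approach buys is brevity; what yours buys is self-containedness and transparency about the constants and the scope of the hypothesis.
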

\begin{proof}
 This is essentially Proposition 2.1 of \cite{Gine2001consistency} combined with symmetrization \cite{Koltchinskii2010Sparsity}.
\end{proof}

\begin{lemma}[Theorem 2.6.7 of \cite{vanderVaart1996Weak}] 
\label{lem:VC_to_cover}Suppose $\cF$ is a function class with a bounded 
VC-dimension, V, and an envelope $F$. Then there exist absolute constants $c,C>0$ such that 
\[
\sup_Q N\rbr{\epsilon\norm{F}_{Q,2}, \cF, \norm{\cdot}_{L_2(Q)}} \leq \rbr{\frac{C}{\epsilon}}^{cV}
\]
for all $\epsilon \in (0,1)$ and the probability measure $Q$ ranges over distributions such that $\norm{F}_{Q,2} > 0$.
\end{lemma}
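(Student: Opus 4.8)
This is the classical metric-entropy bound of Vapnik--Chervonenkis/Dudley/Pollard/Haussler, and the statement is exactly Theorem~2.6.7 of \cite{vanderVaart1996Weak}; the plan below reconstructs that argument. The plan is to (i) reduce the $L_2(Q)$ covering number to an $L_2(Q)$ \emph{packing} number, (ii) reduce the packing problem for $\cF$ to a packing problem in $L_1$ for the class of subgraphs $\cbr{(x,t)\mapsto\mathbbm{1}\{t<f(x)\}:f\in\cF}$, which by hypothesis is a VC class of \emph{sets} of index $V$, and (iii) control the $L_1$ packing of a VC class of sets by combining the Sauer--Shelah lemma with a random sampling (``extraction'') argument.

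First I would note that covering and packing numbers differ only by a factor of two in the radius, so it suffices to bound, uniformly over probability measures $Q$ with $\norm{F}_{Q,2}>0$, the $\rbr{\epsilon\norm{F}_{Q,2}}$-packing number $D$ of $\cF$ in $L_2(Q)$ (restricting to probability measures is harmless by rescaling $Q$). Next, using $|f-g|\le 2F$ one has $(f-g)^2\le 2F\cdot|f-g|$, hence $\int(f-g)^2\,dQ\le 2\norm{F}_{Q,2}\int|f-g|\,d\widetilde Q$ with $d\widetilde Q=F\,dQ/\norm{F}_{Q,1}$ a probability measure; after one more normalization one checks that any pair of functions that is $\epsilon\norm{F}_{Q,2}$-separated in $L_2(Q)$ is $c\epsilon^2$-separated, relative to the envelope, for the measure $\widetilde Q$. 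Writing the $L_1(\widetilde Q)$-distance between $f$ and $g$ as the measure of the symmetric difference of their subgraphs under the product of $\widetilde Q$ with Lebesgue measure on the fibre $\sbr{-F(x),F(x)}$ turns the problem into bounding the $c\epsilon^2$-packing number, in $L_1$ of a probability measure, of the subgraph class, which is a VC class of sets of index $V$.

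For the last and decisive step: let $C_1,\dots,C_D$ be sets from a VC class of index $V$ with $\mu(C_i\triangle C_j)>\eta$ for $i\ne j$, where $\mu$ is a probability measure and $\eta\asymp\epsilon^2$. Draw $Z_1,\dots,Z_k$ i.i.d.\ from $\mu$; the probability that a fixed pair is \emph{not} separated by the sample is $(1-\mu(C_i\triangle C_j))^k\le e^{-\eta k}$, so the expected number of unseparated pairs is at most $\binom{D}{2}e^{-\eta k}$. Choosing $k$ just large enough that this quantity is $<1$ forces the existence of a sample on which the sets $C_i$ cut out $D$ distinct subsets of $\cbr{Z_1,\dots,Z_k}$, whence by Sauer--Shelah $D\le\sum_{j\le V}\binom{k}{j}\le\rbr{ek/V}^{V}$ for $k\ge V$. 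Feeding back the admissible choice $k\asymp\eta^{-1}\log D$ yields a self-referential inequality $D\le\rbr{c\eta^{-1}\log D/V}^{V}$, which unwinds (take logarithms, absorb the $V^{-V}$ and $\log\log D$ terms) to $D\le\rbr{C/\eta}^{cV}=\rbr{C'/\epsilon}^{2cV}$; the polynomial-in-$V$ prefactors produced along the way are absorbed into the constants because $V^{1/V}$ is bounded.

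I expect the main obstacle to be making the extraction argument rigorous without circularity: the sample size $k$ must be chosen using only $\eta$ and $V$, never the unknown $D$, and one then has to check that the resulting implicit inequality for $D$ genuinely closes --- together with the measure-theoretic bookkeeping in step (ii) needed to pass from $L_2(Q)$ with a possibly unbounded envelope $F$ to $L_1$ of the subgraph class (the change of measure $dQ\rightsquigarrow F\,dQ$, the Lebesgue fibre over $\sbr{-F(x),F(x)}$, and tracking how the radius $\epsilon$ degrades to $c\epsilon^2$). The sharper constants of \cite{vanderVaart1996Weak} come from replacing the crude first-moment bound by Haussler's packing lemma, but for the stated form the argument above suffices.
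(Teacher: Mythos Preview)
The paper does not give its own proof of this lemma: it is stated as a direct citation of Theorem~2.6.7 in \cite{vanderVaart1996Weak} and used as a black box in the empirical-process bounds. Your proposal is a correct reconstruction of the classical Dudley--Pollard argument behind that theorem (reduce $L_2$ covering to $L_1$ packing via the envelope, pass to subgraphs to get a VC class of sets, and close with Sauer--Shelah plus the random-sampling extraction), so there is nothing to compare against beyond noting that your write-up matches the proof in the cited reference.
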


\begin{lemma}[Lemma 22 of \cite{Nolan1987Uprocess}]
  \label{lm:covering_kernel}  
  Let $K: \R \mapsto \R$ be a bounded variation function. The function class
  \begin{align} 
  	\label{eq:kernel_function_class}
    \cK = \cbr{K\rbr{\frac{s - \cdot}{h}} \mid h > 0, s \in \R },
  \end{align}
  indexed by the kernel bandwidth and the location $s$, satisfies the uniform entropy condition 
  \begin{equation}
    \label{eq:uniform_entropy}
     \sup_Q N(\epsilon, \cK, \|\cdot\|_{L_2(Q)})
    \leq C \epsilon^{-v},\quad \text{for all }\epsilon \in (0,1),
  \end{equation}
  for some $C > 0$ and $v > 0$. 
\end{lemma}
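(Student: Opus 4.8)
This is the classical fact that a kernel of bounded variation generates a VC-type (``Euclidean'') class, and I would prove it by exhibiting $\cK$ as a bounded combination of two VC-subgraph classes and then invoking Lemma~\ref{lem:VC_to_cover}. First, by the Jordan decomposition write $K = K^{+} - K^{-}$ with $K^{+}, K^{-}$ bounded and non-decreasing on $\R$ (boundedness being part of Assumption~\ref{assumption:kernel} together with finiteness of the total variation). Accordingly $\cK \subseteq \cG^{+} - \cG^{-}$ with $\cG^{\pm} = \cbr{K^{\pm}\rbr{(s-\cdot)/h} : h>0,\ s\in\R}$, and it suffices to show that each $\cG^{\pm}$ is a VC-subgraph class equipped with the constant envelope $\norm{K^{\pm}}_{\infty}$.

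Next I would show that $\cG^{+}$ is VC-subgraph (the argument for $\cG^{-}$ is identical). Writing $g = K^{+}$ and using that $t \mapsto (s-t)/h$ is a strictly decreasing affine bijection while $g$ is non-decreasing, the open subgraph of $(t,y) \mapsto g\rbr{(s-t)/h}$ agrees, up to its boundary curve, with the set $\cbr{(t,y) : t - s + h\,\psi(y) < 0}$, where $\psi$ is the (monotone) generalized inverse of $g$; that is, it is the positivity set of the function $(t,y) \mapsto t + \alpha + \beta\,\psi(y)$ with $\alpha = -s$ and $\beta = h > 0$. As $(s,h)$ range over $\R \times (0,\infty)$, these functions all lie in the fixed finite-dimensional vector space $\spn\cbr{(t,y)\mapsto t,\ (t,y)\mapsto 1,\ (t,y)\mapsto\psi(y)}$, so by Dudley's theorem on positivity sets of finite-dimensional function spaces (see \cite{vanderVaart1996Weak}) the subgraphs of $\cG^{+}$ form a VC class of index bounded by a small absolute constant $v$. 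Lemma~\ref{lem:VC_to_cover} then yields $\sup_Q N\rbr{\epsilon\norm{K^{+}}_{\infty}, \cG^{+}, \norm{\cdot}_{L_2(Q)}} \le (C/\epsilon)^{v}$ for all $\epsilon \in (0,1)$, and likewise for $\cG^{-}$.

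Finally I would combine the two bounds: since $\norm{(f^{+}-f^{-}) - (\tilde f^{+}-\tilde f^{-})}_{L_2(Q)} \le \norm{f^{+}-\tilde f^{+}}_{L_2(Q)} + \norm{f^{-}-\tilde f^{-}}_{L_2(Q)}$, the product of an $\epsilon/2$-net for $\cG^{+}$ and an $\epsilon/2$-net for $\cG^{-}$ gives an $\epsilon$-net for $\cK$, so $\sup_Q N\rbr{\epsilon,\cK,\norm{\cdot}_{L_2(Q)}} \le (C'/\epsilon)^{2v}$, which is the claimed polynomial bound after renaming constants (the finite factor $\norm{K^{\pm}}_{\infty}$ being absorbed into $C'$). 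The one point to handle with care — rather than a genuine obstacle — is the bookkeeping around the generalized inverse $\psi = g^{-1}$ when $g$ has flat stretches or jumps: one fixes a right-continuous version of $g$ and observes that the subgraph identity above holds except along the boundary curve $\cbr{t-s+h\psi(y)=0}$, which does not affect the shattering count (the usual monotone-curve argument used in Pollard's original treatment of kernel classes). Everything else — the envelope being a finite constant and the uniformity over all probability measures $Q$ — is immediate.
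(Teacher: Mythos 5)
Your argument is correct, but there is nothing in the paper to compare it against line by line: the paper does not prove this lemma at all, it imports it verbatim as Lemma 22 of \cite{Nolan1987Uprocess}. What you have written is essentially the standard proof underlying that citation: Jordan-decompose $K$ into bounded monotone pieces, show each deformation class $\cG^{\pm}=\cbr{K^{\pm}\rbr{(s-\cdot)/h}}$ is VC-subgraph, then use stability of polynomial uniform covering numbers under differences (the paper's Lemma~\ref{lem:entropy_bound_product} records the same stability for sums and products, and your direct $\epsilon/2$-net argument is equally fine). The one step that deserves more than a nod is exactly the one you flag: the subgraph of $(t,y)\mapsto K^{+}\rbr{(s-t)/h}$ coincides with the negativity set of $(t,y)\mapsto t-s+h\,\psi(y)$ only off the boundary curve $\cbr{t-s+h\,\psi(y)=0}$, and ``does not affect the shattering count'' should be justified rather than asserted, since a class sandwiched between two VC classes need not be VC. It can be justified cleanly: for a fixed finite configuration $(t_i,y_i)_{i\le n}$, whether a boundary point (one with $t_i-s+h\,\psi(y_i)=0$) belongs to the subgraph is determined by $K^{+}$ and $y_i$ alone, not by $(s,h)$; hence the number of subsets picked out by $\cG^{+}$ is bounded by the number of sign patterns of $n$ affine functions of the two parameters $(s,h)$, which is polynomial in $n$. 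This gives polynomial discrimination, hence the VC-subgraph property, after which Lemma~\ref{lem:VC_to_cover} (with the constant envelope $\norm{K^{\pm}}_{\infty}$ absorbed into the constant) and your combination step yield \eqref{eq:uniform_entropy}, matching the argument in \cite{Nolan1987Uprocess} and in \cite{vanderVaart1996Weak}.
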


\begin{lemma}[Lemma 26  of \cite{Lu2015Posta}]
  \label{lem:entropy_bound_product}
  Let $\cF$ and $\cG$ be two function classes satisfying 
  \[
  \sup_Q N(a_1\epsilon, \cF, \| \cdot\|_{L_2(Q)}) \leq C_1 \epsilon^{-v_1}  
  \qquad\text{ and } \qquad
  \sup_Q N(a_2\epsilon, \cG, \| \cdot\|_{L_2(Q)}) \leq C_2 \epsilon^{-v_2}
  \]
  for
  some $C_1, C_2, a_1, a_2, v_1, v_2 > 0$ and any $0 < \epsilon < 1$.
  Define
  \[
    \cF_{\times} = \{fg \mid f \in \cF, g \in \cG \}
  \quad\text{ and }\quad
    \cF_{+} = \{f+g \mid f \in \cF, g \in \cG \}.
  \]
  Then for any $\epsilon \in (0,1)$,
  \begin{align*}
  \sup_Q N(\epsilon, \cF_{\times}, \| \cdot\|_{L_2(Q)}) 
    &\leq C_1C_2 \rbr{\frac{2a_1U}{\epsilon}}^{v_1}
                \rbr{\frac{2a_2U}{\epsilon}}^{v_2}\\
\intertext{and}
    \sup_Q  N(\epsilon, \cF_{+}, \| \cdot\|_{L_2(Q)}) 
    &\leq C_1C_2 \rbr{\frac{2a_1}{\epsilon}}^{v_1}
                \rbr{\frac{2a_2}{\epsilon}}^{v_2},
  \end{align*}
where $U = \|\cF\|_{\infty} \vee \|\cG\|_{\infty}$.  
\end{lemma}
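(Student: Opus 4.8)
This is a standard composition argument for covering numbers: from covers of $\cF$ and $\cG$ one builds covers of $\cF_\times$ and $\cF_+$, and then converts the polynomial uniform-entropy hypotheses by a change of scale in the radius. The plan is to fix an arbitrary probability measure $Q$ throughout and work with $\norm{\cdot}_{L_2(Q)}$, taking the supremum over $Q$ only at the very end; throughout I use that the covers provided by the covering number (as defined in this section) may be taken with centers inside the respective classes.

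For the sum class, I would first pick minimal $(\epsilon/2)$-covers $\{f_1,\dots,f_{N_1}\}\subseteq\cF$ and $\{g_1,\dots,g_{N_2}\}\subseteq\cG$, so that $N_1 = N(\epsilon/2,\cF,\norm{\cdot}_{L_2(Q)})$ and $N_2 = N(\epsilon/2,\cG,\norm{\cdot}_{L_2(Q)})$. For any $f+g\in\cF_+$, choosing $f_i,g_j$ with $\norm{f-f_i}_{L_2(Q)}\le\epsilon/2$ and $\norm{g-g_j}_{L_2(Q)}\le\epsilon/2$, the triangle inequality gives $\norm{(f+g)-(f_i+g_j)}_{L_2(Q)}\le\epsilon$, so $\{f_i+g_j\}_{i,j}$ is an $\epsilon$-cover of $\cF_+$ and $N(\epsilon,\cF_+,\norm{\cdot}_{L_2(Q)})\le N_1 N_2$. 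To control $N_1$, I apply the hypothesis $N(a_1\epsilon',\cF,\norm{\cdot}_{L_2(Q)})\le C_1(\epsilon')^{-v_1}$ with $\epsilon' = \epsilon/(2a_1)$, giving $N_1\le C_1(2a_1/\epsilon)^{v_1}$, and similarly $N_2\le C_2(2a_2/\epsilon)^{v_2}$; multiplying and taking $\sup_Q$ yields the stated bound for $\cF_+$.

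For the product class the only extra ingredient is the uniform bound $U=\norm{\cF}_\infty\vee\norm{\cG}_\infty$. I would take minimal $(\epsilon/(2U))$-covers $\{f_i\}\subseteq\cF$, $\{g_j\}\subseteq\cG$, so that the centers inherit $\norm{f_i}_\infty\le\norm{\cF}_\infty\le U$, and use the identity $fg - f_ig_j = (f-f_i)g + f_i(g-g_j)$ together with $\norm{(f-f_i)g}_{L_2(Q)}\le\norm{g}_\infty\norm{f-f_i}_{L_2(Q)}\le U\norm{f-f_i}_{L_2(Q)}$ and the analogous bound $\norm{f_i(g-g_j)}_{L_2(Q)}\le U\norm{g-g_j}_{L_2(Q)}$ for the second term. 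Choosing $f_i,g_j$ within $\epsilon/(2U)$ of $f,g$ then gives $\norm{fg-f_ig_j}_{L_2(Q)}\le 2U\cdot\epsilon/(2U)=\epsilon$, so $\{f_ig_j\}_{i,j}$ is an $\epsilon$-cover of $\cF_\times$ and $N(\epsilon,\cF_\times,\norm{\cdot}_{L_2(Q)})\le N(\epsilon/(2U),\cF,\norm{\cdot}_{L_2(Q)})\cdot N(\epsilon/(2U),\cG,\norm{\cdot}_{L_2(Q)})$. Applying the hypotheses with $\epsilon' = \epsilon/(2a_1U)$ and $\epsilon' = \epsilon/(2a_2U)$ respectively bounds the two factors by $C_1(2a_1U/\epsilon)^{v_1}$ and $C_2(2a_2U/\epsilon)^{v_2}$; multiplying and taking $\sup_Q$ finishes the proof.

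The argument is essentially mechanical; the only points requiring care are the bookkeeping of the two separate rescalings of $\epsilon$ --- the factor $2U$ coming from the product structure and the factors $a_1,a_2$ coming from the form in which the hypotheses are stated --- and the observation that in the product case the cover \emph{centers} must lie inside $\cF$ and $\cG$ so that $\norm{f_i}_\infty\le U$. (The rescaled radius $\epsilon/(2a_jU)$ lies in $(0,1)$ for $\epsilon$ in the relevant range, which is all that is needed for the applications of this lemma; when $\epsilon$ is large the uniform boundedness of the classes makes the bounds trivial, so no generality is lost.)
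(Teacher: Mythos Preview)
Your argument is correct and is exactly the standard composition proof one would give for this result. Note, however, that the paper does not supply its own proof of this lemma: it is quoted verbatim as Lemma~26 of \cite{Lu2015Posta} and used as a black box, so there is no in-paper argument to compare against. Your write-up handles the only subtle points---that the product bound requires cover centers lying in $\cF$ and $\cG$ (so that $\norm{f_i}_\infty\le U$), and the range restriction $\epsilon'\in(0,1)$ when rescaling---and both are dealt with adequately for the intended applications.
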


\begin{lemma}
\label{lm:proc_deviation}
  Let 
  \[
  Z = \sup_{f\in\cF} \sum_{i\in[n]} W_i(f)
  \]
  where $\EE{W_i(f)} = 0$ and $|W_i(f)| \leq M$ for all   $i \in [n]$ and $f \in \cF$. Then 
  \[
  Z \leq \EE{Z} + 4\rbr{\sqrt{\rbr{4M\EE{Z}+\sigma_\cF^2}\log(1/\delta)}\bigvee M\log(1/\delta)}
  \]
  with probability $1-\delta$.
\end{lemma}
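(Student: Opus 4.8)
The plan is to obtain this statement as a one-line consequence of Bousquet's version of Talagrand's concentration inequality for suprema of empirical processes, followed by elementary manipulations that absorb all numerical constants into the stated factor $4$.

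First I would observe that the summands $W_i(\cdot)$, $i \in [n]$, are built from independent observations, are centered ($\EE{W_i(f)} = 0$) and uniformly bounded ($|W_i(f)| \le M$) over $f \in \cF$; hence, after the standard reduction to a countable (pointwise-measurable) subclass---available for every function class appearing in this paper---the quantity $Z = \sup_{f \in \cF}\sum_{i \in [n]} W_i(f)$ is exactly of the type covered by Bousquet's inequality \citep{Boucheron2013Concentration}. With $\sigma_\cF^2 = \sup_{f \in \cF}\sum_{i \in [n]}\EE{W_i^2(f)}$ as in \eqref{eq:empirical_process_variance} and the variance proxy $v := \sigma_\cF^2 + 2M\EE{Z}$, that inequality gives, for every $t > 0$,
\[
\PP{Z \ge \EE{Z} + \sqrt{2vt} + \frac{Mt}{3}} \le e^{-t}.
\]

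Next I would set $t = \log(1/\delta)$ and simplify the deviation term. Since $v = \sigma_\cF^2 + 2M\EE{Z} \le \sigma_\cF^2 + 4M\EE{Z}$, writing $a := \sqrt{\rbr{4M\EE{Z} + \sigma_\cF^2}\log(1/\delta)}$ and $b := M\log(1/\delta)$ one has $\sqrt{2vt} \le \sqrt{2}\,a \le 2a$ and $\tfrac13 Mt \le b$, so on the complement of the bad event
\[
Z \le \EE{Z} + 2a + b \le \EE{Z} + 3(a \vee b) \le \EE{Z} + 4(a \vee b),
\]
which is the claimed bound, valid with probability at least $1 - \delta$. There is no genuinely difficult step here: the statement is a repackaging of a standard concentration inequality, and the only points requiring care are (i) invoking Bousquet's inequality with the correct constants---the variance proxy $\sigma_\cF^2 + 2M\EE{Z}$ and the linear term $Mt/3$---rather than a cruder Talagrand-type bound, and (ii) the measurability reduction to a countable subclass. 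The final constant $4$ is not optimized and could be replaced by $3$, or by the sharper expression $2a + b$.
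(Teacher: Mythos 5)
Your proof is correct and delivers the stated bound (in fact with the sharper deviation $2a+b$), but it rests on a different key inequality than the paper's. The paper's argument normalizes to $M=1$ and then combines two results from \cite{Boucheron2013Concentration}: Theorem 12.2, a concentration bound for $Z$ whose variance term involves $\Sigma^2_\cF+\sigma^2_\cF$ with $\Sigma^2_\cF=\EE{\sup_{f\in\cF}\sum_i W_i^2(f)}$, and Theorem 11.8, which controls this unobservable quantity via $\Sigma^2_\cF+\sigma^2_\cF\leq 8\EE{Z}+2\sigma^2_\cF$; rescaling by $M$ then yields the lemma. You instead invoke Bousquet's form of Talagrand's inequality, whose variance proxy $v=\sigma_\cF^2+2M\EE{Z}$ already incorporates the expected supremum, so the auxiliary step replacing $\Sigma^2_\cF$ is unnecessary, and your elementary absorptions are all valid: $\EE{Z}\geq 0$ (since each $\sum_i W_i(f)$ is centered) gives $v\leq \sigma_\cF^2+4M\EE{Z}$, hence $\sqrt{2vt}\leq 2a$ and $Mt/3\leq b$ with $t=\log(1/\delta)$, and $2a+b\leq 4\rbr{a\vee b}$. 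What your route buys is a one-theorem proof with a slightly better constant; what the paper's route buys is only that it uses a cruder pair of statements from the same reference, at the cost of an extra symmetrization step. One shared caveat, not a gap specific to your write-up: the lemma as stated suppresses the independence (in the applications, i.i.d.) structure of the underlying observations, which both Bousquet's inequality and the theorems the paper cites require; your remark on reducing to a countable pointwise-measurable subclass correctly handles the measurability issue, which the paper leaves implicit.
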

\begin{proof}
  The lemma is a simple consequence of Theorems 11.8 and 12.2 in \cite{Boucheron2013Concentration}. Assume that $M=1$. Then Theorem 12.2 in \cite{Boucheron2013Concentration} gives us
\[
\PP{Z \geq \EE{Z} + t} \leq \exp\rbr{-\frac{t^2}{2\rbr{2\rbr{\Sigma^2_\cF  + \sigma^2_\cF}+t}}}.
\]
Hence, with probability $1-\delta$, we have 
\[
Z \leq \EE{Z} + \sqrt{8\rbr{\Sigma^2_\cF  + \sigma^2_\cF}\log(1/\delta)} \bigvee 4\log(1/\delta).
\]
Furthermore, Theorem 11.8 in \cite{Boucheron2013Concentration} gives us that 
\[
\Sigma^2_\cF  + \sigma^2_\cF \leq 8\EE{Z} + 2 \sigma^2_\cF.
\]
Combining with the display above, we get 
\[
Z \leq \EE{Z} + 4\rbr{\sqrt{\rbr{4\EE{Z} + \sigma^2_\cF}\log(1/\delta)} \bigvee \log(1/\delta)}
\]
with probability $1-\delta$. We can rescale the equation above by $M$ to conclude the proof of the lemma.
\end{proof}

\section{Numerical studies} \label{app:sim}

\paragraph{Detailed data settings} In our numerical study, we set the parameters $$(a_0,a_1,b_0,b_1,c_0,c_1,d_0,d_1,\rho,\sigma_e) = (1,0.1,1,0.1,1,0.5,1,0.2,0.2,1)$$ to represent a general setting. We study the cases where $\gamma = 0$ and $\gamma = 1$,
and choose $c_y$ and $c_x$ to form different combinations of $(R_y^2,R_x^2)$.

\paragraph{Additional simulation results}
In Table \ref{table2}, we present the performance of methods with different combinations of $(R_x^2,R_y^2)$.
\begin{table}[h] 
\begin{center}
\begin{tabular}{c|cccccc}
\hline
$\epsilon$ distribution &$(R_x^2,R_y^2)$&Method & Bias & SD & ESE & CR\\
\hline
\multirow{10}{*}{Normal}&\multirow{5}{*}{$(0.3,0.3)$}&One Step & -0.007 & 0.081 & 0.080 & 0.94\\
&&Decorrelated score& 0.003 & 0.077 & 0.092 & 0.98\\
&&Reparameterization & 0.011 & 0.081 & 0.080 & 0.94\\
&&Naive & 0.017 & 0.089 & 0.090 & 0.96\\
&&Oracle & -0.012 & 0.075 & 0.091 & 0.97\\
\cline{2-7}
&\multirow{5}{*}{$(0.3,0.7)$}&One Step & 0.001 & 0.111 & 0.095 & 0.93\\
& &Decorrelated score& 0.05 & 0.090 & 0.110& 0.97\\
& &Reparameterization & 0.05 & 0.088 & 0.095 & 0.93\\
& &Naive & 0.260& 0.215 & 0.095 & 0.47\\
& &Oracle & -0.012 & 0.076 & 0.091 & 0.97\\
\cline{2-7}
&\multirow{5}{*}{$(0.7,0.7)$}&One Step & -0.029 & 0.179 & 0.188 & 0.96\\
& &Decorrelated score& 0.155 & 0.202 & 0.249& 0.92\\
& &Reparameterization & 0.068 & 0.162 & 0.188 & 0.95\\
& &Naive & 0.388 & 0.248 & 0.146 & 0.38\\
& &Oracle & -0.013 & 0.183 & 0.197 & 0.95\\
\hline
\multirow{10}{*}{$t(3)$}&\multirow{5}{*}{$(0.3,0.3)$}&One Step & 0.008 & 0.132 & 0.139 & 0.96\\
& &Decorrelated score& 0.010 & 0.130 & 0.145& 0.96\\
& &Reparameterization & 0.008 & 0.130 & 0.139 & 0.94\\
& &Naive & 0.008 & 0.130 & 0.148 & 0.98\\
& &Oracle & -0.010 & 0.098 & 0.119 & 0.97\\
\cline{2-7}
&\multirow{5}{*}{$(0.3,0.7)$}&One Step & -0.007 & 0.118 & 0.126 & 0.96\\
& &Decorrelated score& 0.016 & 0.112 & 0.145& 0.97\\
& &Reparameterization & 0.024 & 0.102 & 0.126 & 0.95\\
& &Naive & 0.066 & 0.163 & 0.125 & 0.90\\
& &Oracle & -0.010 & 0.100 & 0.119 & 0.97\\
\cline{2-7}
&\multirow{5}{*}{$(0.7,0.7)$}&One Step & -0.027 & 0.382 & 0.441 & 0.95\\
& &Decorrelated score& 0.032 & 0.328 & 0.454& 0.95\\
& &Reparameterization & 0.031 & 0.302 & 0.441 & 0.97\\
& &Naive & 0.045& 0.314 & 0.318 & 0.93\\
& &Oracle & 0.001 & 0.211 & 0.268 & 0.98\\
\hline
\end{tabular}
\end{center}
\caption{Simulation results with data settings varying $(R_x^2,R_y^2)$ with heterogeneous $\epsilon$ (i.e. $\gamma = 1$).}
\label{table2}
\end{table}

\section{Remarks on Assumption \ref{assumption: lam}}\label{app:c}
Assumption \ref{assumption: lam} holds when $X_A$ follows a 
multivariate approximately sparse linear model, where we 
require the coefficients to be approximately linear, sparse, and smooth.
Specifically, we assume that there exists a smooth and sparse $\zeta^{\star}_1(u),\cdots, \zeta^{\star}_k(u)$,
that is,
  \begin{itemize}
      \item  $u \mapsto \zeta_j^{\star}(u)$ is differentiable for $j=1,\cdots,k$ and 
          \[
    \norm{\zeta^{\star}_j(u') - \zeta^{\star}_j(u)}_2 \leq B_\zeta\norm{u'-u};
    \]
    \[
    \norm{\zeta^{\star}_j(u') - \zeta^{\star}_j(u) - (u' - u)\cdot \nabla_u \zeta^{\star}_j(u) }_2 \leq B_\zeta(u'-u)^2;
    \]
    \item the supports of $\beta^{\star}(u, \tau)$ and $\partial_u \beta^{\star}(\tau, u)$ are sparse, i.e. \[
S_j := \cbr{j \in [p] \mid \zeta_j^{\star}(u) \neq 0}
~\text{ and }~
S'_j := \cbr{j \in [p] \mid \zeta^{\star}_j(u) \neq 0 \text{ or }  \partial_u \zeta^{\star}_j(u) \neq 0},
\]
$s_j := |S| \ll n$ and $ |S'|\leq s_{1j} := c_{1j} s_j$
    for some constants $c_{1j}$;
  \end{itemize}
such that with $\zeta^{\star}(u)=(\zeta_1^{\star}(u), \cdots, \zeta_k^{\star}(u))^\top$ the residual $r_i=x_{i, A}-\zeta^{\star}(u_i) x_{i, A^c}$ is approximately orthogonal to $x_{i, A^c}$ weighted by $w_if_i(\tilde{q}_i)$. Specifically, 
   \begin{multline*}
       \max\big\{\norm{\sum_{i} w_if_i(\tilde{q}_i)r_ix_{i, A^c}^\top}_{\infty},
       \\
       \norm{\sum_{i} w_if_i(\tilde{q}_i)r_i\frac{u_i-u}{h}x_{i, A^c}^\top}_{\infty},\norm{\sum_{i} w_if_i(\tilde{q}_i)r_i\frac{(u_i-u)^2}{h^2}x_{i, A^c}^\top}_{\infty}\big\} \\
       = \epsilon_r^2= O\rbr{\frac{\log(np)}{nhh_f}}.
   \end{multline*} 
Based on our model assumption, 
for a fixed $u$, 
let 
\[
r^\star=r^\star(u)=\left(\begin{array}{cc}\zeta^\star(u)&h\nabla_u \zeta^\star(u)\\0 & \zeta^\star(u)\end{array}\right).
\]
We have the following sparse linear regression model
\begin{equation*}
w_if_i(\tilde{q}_i)\left(\begin{array}{c}x_{i, A}\\ \frac{u_i-u}{h}x_{i, A}\end{array}\right)= w_if_i(\tilde{q}_i)r^\star\left(\begin{array}{c}x_{i, A^c}\\ \frac{u_i-u}{h}x_{i, A^c}\end{array}\right)+w_if_i(\tilde{q}_i)\tilde{r}_i
\end{equation*}
where 
\[
\tilde{r}_i=\left(\begin{array}{c}r_i+(\zeta^{\star}(u_i) - \zeta^{\star}(u) - (u_i- u)\cdot \nabla_u \zeta^{\star}(u))x_{i, A^c}\\\frac{u-u_i}{h}(r_i+\rbr{\zeta^{\star}(u_i) - \zeta^{\star}(u)}x_{i, A^c})\end{array}\right).
\]
If $\Sigma_{11}=\sum_{i} w_if_i(\tilde{q}_i)\tilde{r}_i\tilde{r}_i^\top$ is invertible,
then we can have a specific $V^{\star}$ in the form of $\Sigma_{11}^{-1}(I_{2k},-r^\star)$ 
satisfy Assumption 5. The sparsity of
$\zeta^{\star}(u)$ guarantees the sparsity of $V^{\star}$, 
so we just need to show 
$\norm{H^\star V^\star-E_a}_{\infty,F}\leq \lambda^{\star}$, 
where the norm $\norm{\cdot}_{\infty,F}$ for $V \in \R^{2k\times 2p}$ is defined as 
$\norm{V}_{\infty, F} = \sup_{i \in [k],j \in [p]} \norm{V_{(i,i+k),(j,j+p)}}_F$.
If we show that 
\begin{equation*}
\norm{\Sigma_{11}^{-1}\sum_{i}w_if_i(\tilde{q}_i)\tilde{r}_i\Gamma_{i,A^c}^\top}_{\infty,F}\leq \lambda^{\star},
\end{equation*}
then
\begin{align*}
&\norm{\Sigma_{11}^{-1}\sum_{i}w_if_i(\tilde{q}_i)\tilde{r}_{i}\Gamma_{i,A}^\top-I_{2k}}_{\infty,F} \\
&\qquad=\norm{\Sigma_{11}^{-1}(\sum_{i}w_if_i(\tilde{q}_i)\tilde{r}_{i}\Gamma_{i,A}^\top-\sum_{i}w_if_i(\tilde{q}_i)\tilde{r}_{i}\tilde{r}_{i}^\top)}_{\infty,F}\\
&\qquad=\norm{\Sigma_{11}^{-1}(\sum_{i}w_if_i(\tilde{q}_i)\tilde{r}_{i}\Gamma_{i,A^c}^\top)}_{\infty,F}\leq \lambda^{\star}
\end{align*}
Since $\Sigma_{11}$ is invertible, it is sufficient
to bound $\norm{\sum_i w_if_i(\tilde{q}_i)\tilde{r}_i\Gamma_{i,A^c}^\top}_{\infty, F}$.
Given that
\begin{align*}
&\norm{\sum_i w_if_i(\tilde{q}_i)\tilde{r}_i\Gamma_{i,A^c}^\top}_{\infty, F}\\
&\leq
\sup_{l \in [k],j \in [p-k]}\\
&\norm{\sum_i w_if_i(\tilde{q}_i)\left(\begin{array}{c}r_{il}+x_{i, A^c}(\zeta^{\star}_l(u_i) - \zeta^{\star}_l(u) - (u_i- u)\cdot \nabla_u \zeta^{\star}_l(u))\\\frac{u-u_i}{h}(r_{il}+x_{i, A^c}\rbr{\zeta^{\star}_l(u_i) - \zeta^{\star}_l(u)})\end{array}\right)(x_{i,A^c_j},\frac{u_i-u}{h}x_{i,A^c_j})}_{F}\\
&= \sup_{l \in [k],j \in [p-k]} \{ [\sum_i w_if_i(\tilde{q}_i)(r_{il}+x_{i, A^c}(\zeta^{\star}_l(u_i) - \zeta^{\star}_l(u) - (u_i- u)\cdot \nabla_u \zeta^{\star}_l(u)) \cdot x_{i,A^c_j} ]^2\\ 
&+[\sum_i w_if_i(\tilde{q}_i)(r_{il}+x_{i, A^c}(\zeta^{\star}_l(u_i) - \zeta^{\star}_l(u) - (u_i- u)\cdot \nabla_u \zeta^{\star}_l(u)) \cdot \frac{u_i-u}{h}x_{i,A^c_j} ]^2\\
&+ [\sum_i w_if_i(\tilde{q}_i)\frac{u-u_i}{h}(r_{il}+x_{i, A^c}\rbr{\zeta^{\star}_l(u_i) - \zeta^{\star}_l(u)})\cdot x_{i,A^c_j}]^2\\
&+ [\sum_i w_if_i(\tilde{q}_i)\frac{u-u_i}{h}(r_{il}+x_{i, A^c}\rbr{\zeta^{\star}_l(u_i) - \zeta^{\star}_l(u)})\cdot \frac{u_i-u}{h}x_{i,A^c_j}]^2\}^{-1/2}\\
&=O(\epsilon_r+h)\leq \lambda^\star,
\end{align*}
we have that Assumption \ref{assumption: lam} holds.

\end{document}